\newtheorem{theorem}{Theorem}[section]
\newtheorem{lemma}[theorem]{Lemma}           
\newtheorem{cor}[theorem]{Corollary}
\theoremstyle{definition}
\newtheorem{definition}[theorem]{Definition}
\theoremstyle{remark}
\numberwithin{equation}{section}
\subjclass[2000]{Primary~81U40, Secondary~47A40}
\keywords{Schr{\"o}dinger operator, lattice, scattering theory}
\title[Continuum limit for lattice Schr{\"o}dinger operators]
{Continuum limit for lattice Schr{\"o}dinger operators}
\author{Hiroshi ISOZAKI}
\address{Graduate School of Pure and Applied Sciences,
University of Tsukuba, 
Tsukuba, 305-8571, Japan}
\email{isozakih@math.tsukuba.ac.jp}
\author{Arne Jensen}
\address{Department of Mathematical Sciences, Aalborg University, Skjernvej 4A, 9220 Aalborg \O, Denmark}
\email{matarne@math.aau.dk}
\begin{document}
\baselineskip 15pt
\maketitle

\begin{abstract}
We study the behavior of  solutions of the Helmholtz equation 
$(- \Delta_{disc,h} - E)u_h = f_h$ on a periodic lattice as the mesh size $h$ tends to 0.  
Projecting to the eigenspace of a characteristic root $\lambda_h(\xi)$ and using a gauge transformation associated with the Dirac point, we show that the gauge transformed solution $u_h$ converges to that for the equation $(P(D_x) - E)v = g$ for a continuous model on ${\bf R}^d$, where $\lambda_h(\xi) \to P(\xi)$. For the case of the hexagonal and related lattices, {in a suitable energy region}, it converges to that for the Dirac equation. For the case of the square lattice,  triangular lattice, {hexagonal lattice (in another energy region)} and subdivision of a square lattice, one can add a scalar potential, and the solution of the lattice Schr{\"o}dinger  equation  $( - \Delta_{disc,h} +V_{disc,h} - E)u_h = f_h$  converges to that of the continuum Schr{\"o}dinger equation $(P(D_x) + V(x) -E)u = f$.
 \end{abstract}

\section{Introduction}
The lattice is a standard model to describe wave motions on periodic structures. The associated Laplacian $ \Delta_{\Gamma_h}$ is a difference operator. When the mesh size tends to 0, (a part of)  $H_{disc,h} = \frac{1}{h^{\nu}}\big(- \Delta_{\Gamma_h} - E_0\big)$ with a suitable scale factor $h^{\nu}$ and a reference energy $E_0$ has a formal limit $H_{cont}$
 as a (pseudo) differential operator $P(D_x)$, and one expects the convergence of solutions of the equation $(H_{disc,h} - E)u_h = f_h$ to those for the continuous model with Hamiltonian $H_{cont}$. The aim of this paper is to study this continuum limit of discrete periodic systems. We are mainly interested in solutions representing the scattering wave, i.e.  $u_{\pm,h} = (H_{disc,h}- E \mp i0)^{-1}f_h$, where $E \in \sigma_{cont}(H_{disc,h})$. We show that these scattering solutions of the lattice system converge to those for the continuous model as $h \to 0$, namely, {given a suitable relatively compact interval $I \subset {\bf R}$, }
\begin{equation}
\mathsf{J}_h(H_{disc,h} - E \mp i0)^{-1}{\mathsf P}_h \to (H_{cont}- E \mp i0)^{-1}{\mathsf P},
\label{IntroConvergence1}
\end{equation}
for all $E \in I$
in the strong sense in $L^{2,-s}$, $s > 1/2$, (see (\ref{DefineL2sRd})), 
 where ${\mathsf J}_h$ and  ${\mathsf P}_h$, $\mathsf P$ are suitable embedding and localization operators. This then yields
\begin{equation}
{\mathsf J}_he^{-itH_{disc,h}}{\mathsf E}_h(I){\mathsf P}_h \to e^{-itH_{cont}}{\mathsf P},
\label{IntroConvergence2}
\end{equation}
where ${\mathsf E}_h(\cdot)$ is 
is the spectral decomposition of $H_{disc,h}$.
Hence for any $\varphi \in C^{\infty}_0({\bf R})$, one can show the  convergence of the function of Hamiltonian:
\begin{equation}
{\mathsf J}_h\varphi(H_{disc,h}){\mathsf P}_h \to \varphi(H_{cont}){\mathsf P}.
\label{IntroConvergence3}
\end{equation}
Our method is also able to deal with a complex energy parameter $E$. In this case, one can derive (\ref{IntroConvergence3}), hence (\ref{IntroConvergence2}), even if one cannot show (\ref{IntroConvergence1}). On some lattices, one can add a potential to $H_{disc,h}$, in which case,  one can argue the convergence of observables in scattering phenomena, e.g. the S-matrix.

We pick up a characteristic root of $\frac{1}{h^{\nu}}\big( - \Delta_{\Gamma_h}- E_0\big)$ and pass to the gauge transformation to derive the convergence of characteristic root $\lambda_h(\xi)$ to $P(\xi)$. By projecting to the associated eigenspace, one derives the desired convergence. 
  The gauge transformation is inspired by the expansion around Dirac points for the hexagonal lattice, hence our method covers carbonic lattices like graphene, graphite and the Kagome lattice. In this case, the associated continuous system is the two-dimensional massless Dirac operator. Discrete systems related to the square lattice, e.g. ladders or subdivisions, are also dealt with. In particular, for the case of square and triangular lattices, also for ladders, one can add compactly supported potentials. 

Given an $h$-independent lattice Hamiltonian $\mathcal L(S)$, where $S = (S_1,\cdots,S_d)$ is a shift operator (see \S \ref{S2Preliminaries}), one first chooses a reference energy $E_0$, and consider the scaled Hamiltonian
$$
\mathcal L_h(S_h) = \frac{1}{h^{\nu}}\big(\mathcal L(S_h) - E_0\big).
$$
One should note that the scaling order $\nu$ depends on the energy region. For example, in the case of the hexagonal lattice, 
$\nu$'s are different near the middle of the spectrum and  near the end points of the spectrum. This is due to the behavior of characteristic roots near the local extremal points.

The proof is based on the compactness argument in elementary topology: A precompact sequence $\{y_i\}$ in a complete metric space $Y$ having a unique accumulation point is convergent in $Y$. This basic argument has been used very often in the study of the continuous spectrum of Schr{\"o}dinger operators in $L^2({\bf R}^d)$.
Let $H = - \Delta + V(x)$, and put $R(z) = (H - z)^{-1}$. To study the continuous spectrum of $H$, a first important step is the limiting absorption principle (LAP), i.e. the existence of the limit 
\begin{equation}
R(E \pm i0) = \lim_{\epsilon\to0}R(E \pm i\epsilon) : X \to Y, \quad E \in \sigma_c(H)
\nonumber
\end{equation}
for suitable Banach spaces $X, Y$ rigging $L^2({\bf R}^d)$, i.e. $X \subset L^2({\bf R}^d) \subset Y$. The classical work of Eidus \cite{Eidus} proved the LAP by using the above compactness argument, and the uniqueness of solutions for Schr{\"o}dinger equations satisfying the radiation condition played an important role. 
The LAP has been extended to more general differential operators by Agmon \cite{Agmon75}, Kato-Kuroda \cite{KatoKuroda71}, \cite{Kuroda}, J{\"a}ger \cite{Jager}, Ikebe-Saito \cite{IkebeSaito}, Agmon-H{\"o}rmander \cite{AgHo76} by using Fourier analysis,  abstract operator theory, or integration by parts machinery. The commutator method of  Mourre \cite{Mourre} is apparently different, however, it can be rewritten into the above mentioned form.
The LAP is also valid  for discrete Schr{\"o}dinger operators. We can derive uniform estimates with respect to $0 < h < h_0$ of $u_h$ and the radiation condition for the discrete equation \cite{AndIsoMor1}.
We shall use this argument also in the passage from  discrete  to continuous. We define $\widetilde u_h(x)$ by (\ref{DefineTdh}). 
Using  uniform estimates, we can show that $\{\widetilde u_h(x)\}$ has the unique accumulation point as $h \to 0$, which guarantees that $\widetilde u_h(x)$ itself converges to the unique solution $\widetilde u$ to $(P(D_x) - E)\widetilde u = f$.

In \S 2, we review basic assumptions for the lattices studied in 
\cite{AndIsoMor1}. In \S 3, we summarize the conditions on the 
 characteristic roots of a discrete system
needed for the passage to continuous system. In \S 4, 
we study the free system (the case without potential) in a general form. The main results are Theorems \ref{TheoremwidetildeuhtowidetildeufordiscreteCase1} and  \ref{TheoremwidetildeuhtowidetildeufordiscreteCase2}.  In \S 5, we study the case with potential also in a general form. In \S 6, we study the complex energy case. In  \S 7 we apply these theorems to the square and triangular lattices, and also to the ladder and subdivision of square lattices. For these cases, one can add a potential, hence the solutions to lattice Schr{\"o}dinger equations converge to those for the continuum Schr{\"o}dinger operator $- \Delta + V(x)$. We then prove that the S-matrix for the continuum model is approximated by that for the discrete model.  
 In  \S 8, we consider the hexagonal lattice, the graphite lattice and the Kagome lattice. Here, we derive the Dirac equation as well as the Schr{\"o}dinger equation. Some technical lemmas are proved in the Appendix. 

Ignat-Zuazua \cite{IgZua09} proved that for the time-dependent Schr{\"o}dinger equation, the usual approximation scheme does not converge in some topologies and proposed a new approximation scheme. This scheme they then used to obtain fundamental estimates used in the study of the continuum limit of non-linear Schr\"{o}dinger equations. The continuum limit of non-linear discrete Schr\"{o}dinger operators was studied by Hong and Yang~\cite{HoYa19} employing mesh-uniform Strichartz estimates. See this paper for further references to the continuum limit of non-linear discrete Schr\"{o}dinger operators.   Nakamura-Tadano \cite{NakaTada} proved the norm convergence of resolvents of Schr{\"o}dinger operators for complex energies. 
Dirac operators are often used to study the spectral structure of graphene. See e.g. \cite{GueSieden}, \cite{BarCornStock}, \cite{BarCornZalc}. The resolvent of a discrete Schr\"{o}dinger operator may have singularities at interior points of the continuous spectrum. For the discrete Laplacian on the square lattice the singularity structure was investigated in \cite{Ito-Jensen}.

We use a standard notation. 
For $a \in {\bf R}$, $[a]$ denotes the greatest integer not exceeding $a$. For Banach spaces $X$, $Y$, ${\bf B}(X,Y)$ is the space of all bounded operators from $X$ to $Y$. For $x \in {\bf R}^d$, $\langle x\rangle = (1 + |x|^2)^{1/2}$, where $|x| = (\sum_{i=1}^dx_i^2)^{1/2}$ for $x = (x_1,...,x_d) \in {\bf R}^d$.
 For an interval $I \subset {\bf R}$ and a Hilbert space ${\bf h}$, $L^2(I,{\bf h},\rho(t)dt)$ denotes the set of ${\bf h}$-valued $L^2$-functions on $I$ with respect to the measure $\rho(t)dt$. The notation $P(x,D_x)$ denotes the pseudo-differential operator
$$
P(x,D_x) u(x) = (2\pi)^{-d/2}\int_{{\bf R}^d}e^{ix\cdot\xi}P(x,\xi)
(\mathcal F_{cont}u)(\xi)d\xi,
$$
where $\mathcal F_{cont}$ is defined in (\ref{DefmathcalF}).


\section{Preliminaries}
\label{S2Preliminaries}


\subsection{Fourier transform}
We consider the square lattice with mesh size $h$, i.e. 
\begin{equation}
{\bf Z}^d_h = \{hn\, ; \, n \in {\bf Z}^d\}.
\nonumber
\end{equation}
We equip $L^2({\bf Z}^d_h)$ with norm
\begin{equation}
\|a\|_{L^2({\bf Z}^d_h)} = h^{d/2}\Big(\sum_{n \in {\bf Z}^d}|a_n|^2\Big)^{1/2}.
\label{DefineL2Zdhnorm}
\end{equation}
Put
\begin{equation}
I_{hn} = hn + \Big[-\frac{h}{2},\frac{h}{2}\Big]^d.
\nonumber
\end{equation}
For $a = \big(a_n\big) \in L^2({\bf Z}^d_h)$, we define $f_a(x) \in L^2({\bf R}^d)$ by
\begin{equation}
f_a(x) = a_n, \quad {\rm if} \quad x \in I_{hn}.
\nonumber
\end{equation}
Then by the definition (\ref{DefineL2Zdhnorm}) we have
\begin{equation}
\int_{{\bf R}^d}|f_a(x)|^2dx = \|a\|^2_{L^2({\bf Z}^d_h)}.
\nonumber
\end{equation}
In the following $C$'s denote constants independent of $0 < h < h_0$, where $h_0 > 0$ is a suitable fixed small constant.

Let $\mathcal S({\bf Z}_h^d)$ be the space of rapidly decreasing sequences on ${\bf Z}_h^d$:
\begin{equation}
\mathcal S({\bf Z}^d_h) \ni a = (a_n)_{n \in {\bf Z}^d} 
\Longleftrightarrow 
|a_n| \leq C_k \langle n\rangle^{-k}, \quad \forall n \in {\bf Z}^d, \quad \forall k \geq 0.
\nonumber
\end{equation}
Its dual space is denoted by $\mathcal S'({\bf Z}^d_h)$. It is embedded in $\mathcal S'({\bf R}^d)$ by
\begin{equation}
 \mathcal S'({\bf Z}^d_h) \ni a \to 
\sum_{n \in {\bf Z}^d}a_n\delta (x - hn) \in \mathcal S'({\bf R}^d),
\nonumber
\end{equation}
where $\delta(x)$ denotes the Dirac measure supported at $0 \in {\bf R}^d$.
Let ${\bf  T}^d_h$ be the $d$-dimensional torus of size $2\pi/h$:
\begin{equation}
{\bf T}^d_{h} = \big(S^1_h\big)^d = \Big[-\frac{\pi}{h},\frac{\pi}{h}\Big]^d, \quad S^1_h = \Big\{e^{ih\theta}\, ; 
\, -\frac{\pi}{h} \leq \theta  \leq \frac{\pi}{h}\Big\}.
\nonumber
\end{equation}
We  use the following three symbols to denote functions on ${\bf Z}^d_h$, ${\bf T}^d_h$ and ${\bf R}^d$:
\begin{equation}
u_h = \big(u_h(n)\big) \quad {\rm on} \quad {\bf Z}^d_h,
\label{Defineuh(n)}
\end{equation}
\begin{equation}
\widehat u_h(\xi) = \big(\dfrac{h}{2\pi}\big)^{d/2}\sum_{n\in{\bf Z}^d}e^{-ihn\cdot\xi}u_h(n) \quad {\rm on} \quad {\bf T}^d_h,
\label{Definewidehatuh(xi)}
\end{equation}
\begin{equation}
\widetilde u_h(x) = \big(\dfrac{h}{2\pi}\big)^{d/2}\int_{{\bf T}^d_h}e^{ix\cdot\xi}\widehat u_h(\xi)d\xi \quad {\rm on} \quad{\bf R}^d.
\label{DefineTdh}
\end{equation}
For $f \in \mathcal S({\bf R}^d)$ consider the series of transformations:
$$
f \Longrightarrow f_h \Longrightarrow \widehat f_h \Longrightarrow \widetilde f_h,
$$
where $f_h(n) = f(hn)$. 
Then $\widetilde f_h$ is an interpolation of $f_h$, and $\widetilde f_h \to f$ as $h \to 0$. In fact, since
$$
\widetilde f_h(x) = (2\pi)^{-d}\int_{{\bf T}^d_h}
\Big(\sum_ne^{-ihn\cdot\xi}f(hn)h^d\Big)d\xi,
$$
it converges to 
$$
(2\pi)^{-d/2}\int_{{\bf R}^d}e^{ix\cdot\xi}\big(\mathcal F_{cont}f\big)(\xi)d\xi
$$
in the sense of distribution, where 
 $\mathcal F_{cont}$ is the Fourier transformation on ${\bf R}^d$:
\begin{equation}
(\mathcal F_{cont}f)(\xi) =  (2\pi)^{-d/2}\int_{{\bf R}^d}e^{-ix\cdot\xi}f(x)dx.
\label{DefmathcalF}
\end{equation}
Note that 
\begin{equation}
\widetilde f_h(x) = \sum_{n\in{\bf Z}^d}\left(\prod_{j=1}^d\frac{\sin \frac{\pi}{h}(x_j - hn_j)}{\frac{\pi}{h}(x_j - hn_j)}\right)f(hn),
\nonumber
\end{equation}
and the right-hand side is called the cardinal series (see \cite{Schoenberg}). 

Define the discrete Fourier transform of $a_h = (a_{h,n}) \in \mathcal S({\bf Z}^d_h)$ by
\begin{equation}
\big(\mathcal F_{disc,h} a_h\big) (\xi) = \widehat a_h(\xi) = \big(\frac{h}{2\pi}\big)^{d/2}\sum_{n \in {\bf Z}^d}a_{h,n}e^{-ihn\cdot\xi}.
\label{DiscreteFourierTransf}
\end{equation}
The inverse Fourier transform is
\begin{equation}
a_h = \big(a_{h,n}\big) = (\mathcal F_{disc,h})^{-1}\widehat a_h, \quad  a_{h,n} = \big(\frac{h}{2\pi}\big)^{d/2}\int_{{\bf T}^d_h}e^{ihn\cdot\xi}\widehat a_h(\xi)d\xi
\nonumber
\end{equation}
Since $\Big\{\big(\dfrac{h}{2\pi}\big)^{d/2} e^{-ihn\cdot\xi}\, ; \ n \in {\bf Z}^d\Big\}$ is an orthonormal basis of $L^2({\bf T}^d_h)$, we have the following lemma.

\begin{lemma} The discrete Fourier transform
\begin{equation}
\mathcal F_{disc,h} : L^2({\bf Z}^d_h) \ni a_h \to  \widehat a_h = \big(\frac{h}{2\pi}\big)^{d/2}\sum_{n \in {\bf Z}^d}a_{h,n}e^{-ihn\cdot\xi} 
\in L^2({\bf T}^d_h)
\label{DiscreteFouriertransf}
\end{equation}
is a bijection, in particular, it is isometric in the following sense:
\begin{equation}
h^{-d}\|a_h\|^2_{L^2({\bf Z}^d_h)} = \sum_{n \in {\bf Z}^d}|a_{h,n}|^2 = \|\widehat a_h\|^2_{L^2({\bf T}^d_h)} =  
h^{-d}\|\widetilde a_h\|^2_{L^2({\bf R}^d)}.
\label{h-ddL2=suman2=intaxidxi}
\end{equation}
\end{lemma}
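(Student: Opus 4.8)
The plan is to verify the three equalities in (\ref{h-ddL2=suman2=intaxidxi}) in turn, the middle one being the only substantive point, and to read off the bijectivity of $\mathcal F_{disc,h}$ along the way. The leftmost equality is immediate from the definition (\ref{DefineL2Zdhnorm}): since $\|a_h\|^2_{L^2({\bf Z}^d_h)} = h^d\sum_n|a_{h,n}|^2$, dividing by $h^d$ gives $\sum_n|a_{h,n}|^2$. This is exactly where the factors $h^{-d}$ in (\ref{h-ddL2=suman2=intaxidxi}) come from: the $L^2({\bf Z}^d_h)$- and $L^2({\bf R}^d)$-norms carry a built-in $h^{d/2}$, whereas the $L^2({\bf T}^d_h)$-norm does not.

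For the middle equality I would first record orthonormality of the family $e_n(\xi) := (h/2\pi)^{d/2}e^{-ihn\cdot\xi}$, $n\in{\bf Z}^d$, in $L^2({\bf T}^d_h)$: the integral $\int_{{\bf T}^d_h}e_n\overline{e_m}\,d\xi$ factors over coordinates into one-dimensional integrals $\tfrac{h}{2\pi}\int_{-\pi/h}^{\pi/h}e^{ih(m_j-n_j)\xi_j}\,d\xi_j$, each of which is $\delta_{n_jm_j}$ by direct computation, so the product is $\delta_{nm}$. Completeness of $\{e_n\}$ is the classical theory of Fourier series on the torus, rescaled to side length $2\pi/h$ (e.g.\ density of trigonometric polynomials via Stone--Weierstrass). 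Since for $a_h = (a_{h,n}) \in L^2({\bf Z}^d_h)$ the series $\widehat a_h = \sum_n a_{h,n}e_n$ is precisely the expansion of $\widehat a_h$ along this orthonormal basis with $n$-th coefficient $a_{h,n}$ (for $a_h\in\mathcal S({\bf Z}^d_h)$ the interchange of sum and integral is trivially justified by absolute convergence, and the general case follows by density), Parseval's identity gives $\|\widehat a_h\|^2_{L^2({\bf T}^d_h)} = \sum_n|a_{h,n}|^2$ and, simultaneously, that $\mathcal F_{disc,h}$ is a bijection of $L^2({\bf Z}^d_h)$ onto $L^2({\bf T}^d_h)$ whose inverse is the map sending a function to its Fourier coefficients against the $e_n$.

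For the rightmost equality I would extend $\widehat a_h$ by zero to all of ${\bf R}^d$, obtaining a compactly supported $g\in L^1({\bf R}^d)\cap L^2({\bf R}^d)$, and observe that (\ref{DefineTdh}) reads $\widetilde a_h(x) = (h/2\pi)^{d/2}\int_{{\bf R}^d}e^{ix\cdot\xi}g(\xi)\,d\xi = h^{d/2}(\mathcal F_{cont}^{-1}g)(x)$, where $\mathcal F_{cont}^{-1}$ is the inverse of the transform in (\ref{DefmathcalF}). Plancherel's theorem on ${\bf R}^d$ then yields $\|\widetilde a_h\|^2_{L^2({\bf R}^d)} = h^d\|\mathcal F_{cont}^{-1}g\|^2_{L^2({\bf R}^d)} = h^d\|g\|^2_{L^2({\bf R}^d)} = h^d\|\widehat a_h\|^2_{L^2({\bf T}^d_h)}$, which is the claimed identity after dividing by $h^d$.

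There is no genuine obstacle: the one nontrivial input is completeness of the exponential system on ${\bf T}^d_h$, which is classical, and the remainder is bookkeeping of the normalization constants $(h/2\pi)^{d/2}$ against the $(2\pi)^{-d/2}$ of (\ref{DefmathcalF}). The only point requiring a little care is to keep the conventions straight so that the powers of $h$ land consistently across the three terms of (\ref{h-ddL2=suman2=intaxidxi}).
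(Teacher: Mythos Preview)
Your proof is correct and follows the same approach as the paper, which simply records (just before the lemma) that $\{(h/2\pi)^{d/2}e^{-ihn\cdot\xi}\}_{n\in{\bf Z}^d}$ is an orthonormal basis of $L^2({\bf T}^d_h)$ and leaves the rest implicit. You have spelled out the Parseval and Plancherel steps and tracked the $h^{d/2}$ normalizations explicitly, which is more than the paper does but entirely in the same spirit.
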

In particular, we have
\begin{equation}
\|a_h\|_{L^2({\bf Z}^d_h)} = \|\widetilde a_h\|_{L^2({\bf R}^d)}.
\label{S2amnorm=widetildeahnorm}
\end{equation}

The shift operators $S_{h,j}$ on $\mathcal S'({\bf R}^d)$ and  $\mathcal S'({\bf Z}^d_h)$ are defined by
\begin{equation}
\big(S_{h,j} f\big)(x) = f(x - h{\bf e}_j),
\label{S2SchiftOP1}
\end{equation}
\begin{equation}
\big(S_{h,j}u_h\big)(n) = u_h(n - {\bf e}_j),
\label{S2ShiftOP2}
\end{equation}
where
\begin{equation}
{\bf e}_1 = (1,0,\cdots,0),  \cdots,  {\bf e}_d = (0, \cdots, 0,1)
\nonumber
\end{equation}
is the standard basis of ${\bf R}^d$.
We put 
\begin{equation}
S_h = (S_{h,1},\cdots,S_{h,d}).
\label{S2ShiftOP3}
\end{equation}
Then we have
\begin{equation}
\mathcal F_{cont} S_{h,j} = e^{-ih\xi_j}\mathcal F_{cont},
\nonumber
\end{equation}
\begin{equation}
\mathcal F_{disc,h} S_{h,j} = e^{-ih\xi_j}\mathcal F_{disc,h}.
\nonumber
\end{equation}

\subsection{Function spaces}
First we recall the Agmon-H{\"o}rmander space (\cite{AgHo76}, \cite{HoVol2}), which is a Besov space. Let
\begin{equation}
\Omega_0 = \{|x| < 1\}, \quad 
\Omega_j = \{2^{j-1} < |x| < 2^j\}, \quad j \geq 1, 
\nonumber
\end{equation}
\begin{equation}
 \mathcal B({\bf R}^d) \ni u \Longleftrightarrow \sum_{j=0}^{\infty}2^{j/2}\|u\|_{L^2(\Omega_j)} < \infty, 
 \nonumber
\end{equation}
\begin{equation}
\mathcal B^{\ast}({\bf R}^d) \ni u \Longleftrightarrow \sup_{R>1}\frac{1}{R}\int_{|x|<R}|u(x)|^2 dx < \infty.
\nonumber
\end{equation}
For $f, g \in {\mathcal B}({\bf R}^d)$, we define
\begin{equation}
f \simeq g \Longleftrightarrow \lim_{R\to\infty}
\frac{1}{R}\int_{|x|<R}|\mathcal F_{cont}f(\xi) - \mathcal F_{cont}g(\xi)|^2d\xi = 0.
\nonumber
\end{equation}
Finally, we define for $s \in {\bf R}$
\begin{equation}
L^{2,s}({\bf R}^d) \ni u \Longleftrightarrow \langle x\rangle^su \in L^2({\bf R}^d),
\label{DefineL2sRd}
\end{equation}
\begin{equation}
H^{m}({\bf R}^d) \ni u \Longleftrightarrow  \langle\xi\rangle^m\mathcal F_{cont}u(\xi) \in L^2({\bf R}^d), 
\nonumber
\end{equation}
\begin{equation}
H^{m,s}({\bf R}^d) \ni u \Longleftrightarrow \langle x\rangle^s u \in H^m({\bf R}^d).
\label{S2DefineHms}
\end{equation}

We consider the discrete analogues of these spaces.
For $s \in {\bf R}$, we define
\begin{equation}
L^{2,s}({\bf Z}^d_h) \ni u \Longleftrightarrow \|u\|_{L^{2,s}({\bf Z}^d_h)} = \Bigl(h^d\sum_{n \in {\bf Z}^d}\langle hn\rangle^{2s}|u(n)|^2\Bigr)^{1/2}.
\nonumber
\end{equation}
Define 
\begin{eqnarray}
\Omega_{h,0}
&=&\{n \in {\bf Z}^d\, ; \, |hn| \leq 1\}, \nonumber \\
\Omega_{h,\ell} &=& \{n \in {\bf Z}^d\, ; \, 2^{\ell-1} < |hn| \leq 2^{\ell}\}, \quad \ell \geq 1.
\nonumber
\end{eqnarray}
Define $\mathcal B({\bf Z}^d_h)$ by
\begin{equation}
\mathcal B({\bf Z}^d_h) \ni u(n) \Longleftrightarrow 
\sum_{\ell = 0}^{\infty}2^{\ell/2}\|u\|_{L^2(\Omega_{h,\ell})}< \infty,
\nonumber
\end{equation}
where
$
\|u\|_{\Omega_{h,\ell}} = \Big(h^d\sum_{n \in \Omega_{h,\ell}}|u(n)|^2\Big)^{1/2}.
$
{The norm of the dual spece of $\mathcal B({\bf Z}^d_h)$ should be $\sup_{\ell \geq 0}2^{-\ell/2}\|u\|_{L^2(\Omega_{h,\ell})}$. However, 
by Lemmas \ref{alphabetalemma} and \ref{AB2ALemma}, there exists a constant $C > 0$ independent of $0 < h < 1$ such that
$$
C\Big(
\sup_{R>1} \frac{h^d}{R}\sum_{|hn| \leq R}|u(n)|^2
\Big)^{1/2} \leq \sup_{\ell \geq 0}2^{-\ell/2}\|u\|_{L^2(\Omega_{h,\ell})} 
\leq C^{-1}\Big(
\sup_{R>1} \frac{h^d}{R}\sum_{|hn| \leq R}|u(n)|^2
\Big)^{1/2}.
$$
Therefore we employ
\begin{equation}
\|u\|_{{\mathcal B}_h^{\ast}({\bf Z}^d)}= 
\Bigl(
\sup_{R>1} \frac{h^d}{R}\sum_{|hn| \leq R}|u(n)|^2
\Bigr)^{1/2}
\nonumber
\end{equation}
as the norm of ${\mathcal B}^{\ast}({\bf Z}^d_h)$.}
We define
\begin{equation}
\mathcal B({\bf T}^d_h) \ni u \Longleftrightarrow \big(\mathcal F_{disc,h}\big)^{-1}u \in {\mathcal B}({\bf Z}^d_h),
\nonumber
\end{equation}
\begin{equation}
\mathcal B^{\ast}({\bf T}^d_h) \ni u \Longleftrightarrow \big(\mathcal F_{disc,h}\big)^{-1}u \in {\mathcal B}^{\ast}({\bf Z}^d_h).
\nonumber
\end{equation}

An invariant way of defining the Besov spaces $\mathcal B$ and $\mathcal B^{\ast}$ on ${\bf T}^d_h$ is to use the Laplacian $-\Delta_{\xi} = -\sum_{i=1}^d\big(\partial/\partial \xi_i\big)^2$ on ${\bf T}^d_h$, which has eigenvalues $(hn)^2$ and eigenvectors  $\big(\frac{h}{2\pi}\big)^{d/2} e^{-ihn\cdot\xi}$, $n \in {\bf Z}^d$. We then define
\begin{equation}
\mathcal B({\bf T}^d_h) \ni u(\xi) \Longleftrightarrow 
\sum_{\ell \geq 0}2^{\ell/2}h^{d/2}\|\chi_{\ell}(\sqrt{- \Delta_{\xi}})u\| < \infty,
\nonumber
\end{equation}
where $\chi_{\ell}$ is the characteristic function of the interval $(c_{\ell-1},c_{\ell})$,
$
c_{-1}=0, \ c_{\ell} = 2^{\ell}, \ \ell \geq 0,
$
and 
\begin{equation}
\mathcal B^{\ast}({\bf T}^d_h) \ni u(\xi) \Longleftrightarrow 
\sup_{R>1}\frac{h^d}{R}\|\chi_R(\sqrt{-\Delta_{\xi}})u\|^2 < \infty,
\nonumber
\end{equation}
where $\chi_{R}$ is the characteristic functions of the interval $(0,R)$.

\subsection{Lattice Schr{\"o}dinger operators}
We introduce the mesh size parameter $h > 0$ in the definition of the lattice in \cite{AndIsoMor1}. 
Let ${\bf L}_{h}$ be a lattice of rank $d$ in ${\bf R}^d$ $(d \geq 2)$ with basis ${\bf v}_j, j = 1, \cdots, d$, and mesh size $h$, i.e.
$$
{\bf L}_{h} = \big\{{\bf v}(hn)\, ; \, n \in {\bf Z}^d\big\}, \quad 
{\bf v}(hn) = \sum_{j=1}^dhn_j {\bf v}_j, \quad 
n = (n_1,\cdots,n_d) \in {\bf Z}^d.
$$
For $h=1$, ${\bf L}_{h}$ is denoted by ${\bf L}$. 
Take points $p_j \in {\bf R}^d$, $j = 1, \cdots, \mathsf s$, satisfying
$$
p_i - p_j \not\in {\bf L}, \quad {\rm if} \quad i \neq j,
$$
and define the vertex set ${\bf V}_h$ by
\begin{equation}
{\bf V}_h = {\mathop\bigcup_{j=1}^{\mathsf s}}\big(hp_j + {\bf L}_h\big).
\nonumber
\end{equation}
For $h =1$, ${\bf V}_1$ is denoted by ${\bf V}$.
There exists a bijection ${\bf V} \ni a \to (j(a),n(a)) \in \{1,\cdots,{\mathsf s}\} \times {\bf Z}^d$ such that 
$$
a = p_{j(a)} + {\bf v}(n(a)).
$$
The group ${\bf Z}^d$ acts on ${\bf V}_{h}$ as follows:
$$
{\bf Z}^d\times {\bf V}_{h} \ni (m,a) \to m\odot a := hp_{j(a)} + h{\bf v}(m + n(a)) \in {\bf V}_{h}.
$$
The edge set ${\bf E}_{h}$ is a subset of ${\bf L}_{h}\times {\bf L}_{h}$  having the property
$$
{\bf E}_{h} \ni (a,b) \to (m\odot a, m\odot b) \in {\bf E}_{h}, \quad \forall m \in {\bf Z}^d.
$$
Then the triple ${\bf \Gamma}_{h} = \{{\bf  L}_{h}, {\bf V}_{h},{\bf  E}_{h}\}$ 
is a periodic graph in ${\bf R}^d$ with mesh size $h$. 
As above,  for $h = 1$,  $\Gamma_h$ is denoted by $\Gamma = \{{\bf L}, {\bf V}, {\bf E}\}$.
For $a, b \in {\bf V}_{h}$, $a \sim b$ means that they are adjacent, i.e. they are on the mutually opposite end points of an edge in ${\bf E}_{h}$, and ${\rm deg}\,(v)$ of $v \in {\bf V}_{h}$ is the number of vertices adjacent to $v$.
Then ${\rm deg}(p_i + {\bf v}(n))$ depends only on $i$, which is denoted by  ${\rm deg}(i)$. In the following we assume that
\begin{equation}
 {\rm deg}(i) \  do es\ not \ depend \ on \ i,  \ and  \ is  \ denoted \  by \  d_g.
\nonumber
\end{equation}
 
 Any function $f$ on ${\bf V}_{h}$ is written as $f(n) = (f_1(n),\cdots,f_{\mathsf s}(n))$, $n \in {\bf Z}^d$, where $ f_j(n)$ is identified with a function on $p_j + \mathcal L_0$.  Hence, $L^2({\bf V}_{h})$ is the Hilbert space with inner product
$$
(f,g)_{L^2({\bf V}_{h})} = 
\sum_{j=1}^{\mathsf s}(f_j,g_j)_{d_g},
$$
where 
\begin{equation}
(f_j,g_j)_{d_g} = d_g(f_j,g_j)_{L^2({\bf L})}.
\label{Defineinnerproductwithdeg}
\end{equation}

The Laplacian $\Delta_{{\bf \Gamma}_{h}}$ on ${\bf \Gamma}_{h}$ is defined by the mean over the adjacent vertices
\begin{equation}
\big(\Delta_{{\bf \Gamma}_{h}}f\big)_j(n) = \frac{1}{d_g}\sum_{b\sim p_j + {\bf v}(n)}
f_{i(b)}(n(b)).
\label{DefineDeltaGamma0hj}
\end{equation}
We then define a unitary operator $\mathcal U_{\bf\Gamma_{h}}\colon L^2({\bf V}_{h}) \to 
L^2({\bf T}^d_h)^{\mathsf s}$ by
\begin{equation}
\big(\mathcal U_{{\bf \Gamma}_{h}}f)_j = \sqrt{d_g}\mathcal F_{disc,h}f_j,
\nonumber
\end{equation}
where $L^2({\bf T}^d_h)^{\mathsf s}$ is equipped with the inner product
$$
(f,g)_{L^2({\bf T}^d_h)} = \sum_{j=1}^{\mathsf s}\int_{{\bf T}^d_h}f_j(\xi)\overline{g_j(\xi)}d\xi.
$$
Note that the Laplacian is written as
\begin{equation}
- \Delta_{{\bf \Gamma}_{h}} = {\mathcal L}(S_h,S_h^{\ast}),
\nonumber
\end{equation}
where ${\mathcal L}(z,w)$ is a matrix whose entries are polynomials in $z, w \in {\bf C}^d$.
Passing to the Fourier series, it is transferred to 
\begin{equation}
\mathcal U_{{\bf \Gamma}_{h}}{\mathcal L}(S_h,S_h^{\ast})\big(\mathcal U_{{\bf \Gamma}_{h}}\big)^{-1}
= {\mathcal L}(e^{-ih\xi},e^{ih\xi}),
\label{DefineWidehatDh}
\end{equation}
where we use the notation 
$$
e^{\pm ih\xi} = 
(e^{\pm ih\xi_1},\cdots,e^{\pm ih\xi_d}).
$$
 Then the spectrum of the operator $H_{h} = - \Delta_{{\bf \Gamma}_{h}}$ is equal to that
of the operator of multiplication by the function ${\mathcal  L}(e^{-ih\xi},e^{ih\xi})$ on $L^2({\bf T}^d_h)^{\mathsf s}$. 
Let us call ${\mathcal L}(e^{-ih\xi},e^{ih\xi})$ the symbol of $- \Delta_{{\bf \Gamma}_h}$. The characteristic root of ${\mathcal L}(e^{-ih\xi},e^{ih\xi})$ is a root $\lambda = \lambda_h(\xi)$ of the equation $\det\big(\mathcal L(e^{-ih\xi},e^{ih\xi}) - \lambda\big)=0$, and the characteristic surface is the set  $\{\xi \,; \, \det\big(\mathcal L(e^{-ih\xi},e^{ih\xi}) - \lambda\big)=0\}$.

We recall the assumptions on the lattice
in \cite{AndIsoMor1}. We have only to state them for the case $h = 1$, i.e. for $\Gamma$. Let 
 $\lambda_1\xi) \leq \lambda_2(\xi) \leq \cdots \leq \lambda_{\mathsf{s}}(\xi)$ denote the eigenvalues of ${\mathcal L}(e^{-i\xi},e^{i\xi})$.
  We put
\begin{equation}
{\bf T}^d = {\bf R}^d/(2\pi{\bf Z})^d, \quad 
{\bf T}^d_{\bf C} = {\bf C}^d/(2\pi{\bf Z})^d,
\nonumber
\end{equation}
\begin{equation}
p(\xi,\lambda) = \det\big({\mathcal L}(e^{-i\xi},e^{i\xi}) - \lambda\big).
\nonumber
\end{equation}
\begin{equation}
M_{\lambda} = \big\{\xi \in {\bf T}^d\, ; \, p(\xi,\lambda) = 0\},
\nonumber
\end{equation}
\begin{equation}
M_{\lambda,j} = \big\{\xi \in {\bf T}^d\, ; \, \lambda_j(\xi) = \lambda\},
\nonumber
\end{equation}
\begin{equation}
M_{\lambda}^{{\bf C}} = \big\{z \in {\bf T}^d_{\bf C}\, ; \, p(z,\lambda) = 0\},
\nonumber
\end{equation}
\begin{equation}
M_{\lambda,reg}^{\bf C} = \big\{z \in M^{\bf C}_{\lambda}\, ; \, \nabla_zp(z,\lambda) \neq 0\big\},
\nonumber
\end{equation}
\begin{equation}
M_{\lambda,sng}^{\bf C} = \big\{z \in M^{\bf C}_{\lambda}\, ; \, \nabla_zp(z,\lambda) = 0\big\},
\nonumber
\end{equation}
\begin{equation}
\widetilde{\mathcal T} = \big\{\lambda \in \sigma(H)\, ; \, M^{\bf C}_{\lambda,sng}\cap {\bf T}^d \neq 
\emptyset\big\}.
\nonumber
\end{equation}
The assumptions are as follows. Let $H = - \Delta_{{\bf \Gamma}}$.

\medskip
\noindent
({\bf A-1}) {\it There exists a subset ${\mathcal T}_1 \subset \sigma (H)$ such that for $\lambda \in \sigma(H)\setminus \mathcal T_1$:

\smallskip
({\bf A-1-1}) $M_{\lambda,sng}^{\bf C}$ is discrete.

\smallskip
({\bf A-1-2}) Each connected component of $M^{\bf C}_{\lambda,reg}$ intersects with ${\bf T}^d$ and the intersection is a $(d-1)$-dimensional analytic submanifold of ${\bf T}^d$.}

\medskip
\noindent
{\it 
({\bf A-2}) There exists a finite set $\mathcal T_0 \subset \sigma(H)$ such that}
$$
M_{\lambda,i}\cap M_{\lambda,j} = \emptyset, \quad {if} \quad i \neq j, \quad \lambda \in \sigma(H_0)\setminus\mathcal T_0.
$$
({\bf A-3}) $\ \nabla_{\xi}p(\xi,\lambda) \neq 0$ on $M_{\lambda}$ for $\lambda \in \sigma(H)\setminus \mathcal T_0$.

\medskip
\noindent
({\bf A-4}) {\it The unique continuation property holds for $\Delta_{{\bf \Gamma}}$ in ${\bf V}$.
}

\medskip
By the unique continuation property we mean the following assertion:
 
\smallskip
{\it Assume $u$ satisfies $(- \Delta_{{\bf \Gamma}}- \lambda)u = 0$ on ${\bf V}$ for some constant $\lambda \in {\bf C}$. If there exists $R_0 > 0$ such that $u = 0$ for $|n| > R$, then $u = 0$ on $\bf V$.
}

\smallskip

The assumption on the potential is as follows:

\medskip
\noindent
({\bf V-1}) {\it The potential $V_{disc,h}$ is a scalar multiplication operator defined by 
$$
\big(V_{disc,h}u\big)(n) = V(hn)u(n),
$$
where  $V(x)$ is a real-valued compactly supported continuous function on ${\bf R}^d$}.

\medskip
Under these assumptions, for any $0 < h < h_0$, we have the limiting absorption principle,  spectral representation, completeness of wave operators, unitarity of the S-matrix for $- \Delta_{{\bf \Gamma}_{h}} + V_{disc,h}$. Moreover, we can solve  inverse scattering problems (see \cite{AndIsoMor2}).


\section{Expansion at local extremal points}

\subsection{Local extremals}
In the sequel we use results from perturbation theory for matrices depending on several complex parameters. See~\cite{Ba} for these results.

For a subset $K \subset {\bf R}^d$, we put 
 $$
 K/h = \{\xi/h\, ; \, \xi \in K\}.
 $$
 
Let $\Delta_{{\bf \Gamma}}$ be  the Laplacian in the previous section, 
and take $E_0 \in \widetilde{\mathcal T}$. Then at $\lambda = E_0$, 
$p(\xi,\lambda) = \det\big({\mathcal L}(e^{-i\xi},e^{i\xi}) - \lambda\big)$ may have multiple roots or, some simple characteristic root $\lambda_j(\xi)$ satisfies $\nabla_{\xi}\lambda_{j}(\xi) = 0$ at some point in the characteristic surface $\{\xi\,  ; \lambda_j(\xi) = E_0\}$.
We consider the behavior of solutions of the equation 
$\big(- \frac{1}{h^{\nu}}\Delta_{{\bf \Gamma}_h} + V_{disc,h} - \lambda)u_h = f_h$ as $h \to 0$, when $\lambda$ is close to $E_0/h^{\nu}$.  We call $E_0$ the {\it reference energy}. 
For the sake of simplicity of notation, we shift the Hamiltonian and consider
\begin{equation}
\Delta_{disc,h} = \frac{1}{h^{\nu}}\big(\Delta_{{\bf\Gamma}_h} + E_0)
\label{DefineDeltadisch}
\end{equation}
instead of $\Delta_{{\bf \Gamma}_h}$. 
Then the symbol of $- \Delta_{disc,h} $ has the following form
\begin{equation}
\mathcal L_h(z,w) = \frac{1}{h^{\nu}}\Big(\mathcal L(z,w) - E_0I\Big),
\label{Lh=hnuLzw}
\end{equation}
$I$ being the ${\mathsf s} \times {\mathsf s}$ identity matrix.

We use the abbreviation :
 $$
 {\mathcal L}_{h}(e^{-ih\eta}) := {\mathcal  L}_h(e^{-ih\eta},e^{ih\eta}).
 $$
 For $\eta \in {\bf R}^d$ let 
 $$
 \lambda_{1,h}(\eta) \leq \cdots \leq \lambda_{{\mathsf s},h}(\eta)
 $$
 be the characteristic roots of ${\mathcal  L}_h(e^{-ih\eta},e^{ih\eta})$.  
 They are rewritten as
 \begin{equation}
 \lambda_{j,h}(\eta) = \frac{1}{h^{\nu}}\lambda_j(e^{-ih\eta}),
 \label{Definelambdajheta}
 \end{equation}
 where $\lambda_j(z)$ is the characteristic root of $\mathcal L(z,\overline{z}) - E_0I$. Their behavior at local extremal points plays an important role.  We consider only the minimum case, since the maximum case can be dealt with similarly.  We restrict ourselves to two cases: (1) simple isolated roots and (2) double roots.

\medskip
We assume for some $j$, $1 \leq j \leq {\mathsf s}$, there exists an $h$-independent open set $\mathcal K_0$  in ${\bf T}^d$ with the following properties. 

\medskip
\noindent
({\bf B-1}) {\it \ $\lambda_j(e^{-i\eta}) \geq 0$  on $\mathcal K_0$,  and there exists a unique  $d_1 \in \mathcal K_0$ such that $\lambda_{j}(e^{-id_1}) = 0$.}

\medskip
We assume one of the following conditions (B-2-1) or (B-2-2):

\medskip
\noindent
({\bf B-2-1}) \ \ {\it There exist  constants $\epsilon_1, \epsilon_2 > 0$ such that}
\begin{equation}
\lambda_{j-1}(e^{-i\eta}) < -\epsilon_1 < \lambda_j(e^{-i\eta}) < \epsilon_2 < \lambda_{j+1}(e^{-i\eta}), \quad \forall \eta \in \overline{\mathcal K_0}.
\nonumber
\end{equation}
\noindent
({\bf B-2-2}) \ \ {\it  $\lambda_{j-1}(e^{-i\eta}) = - \lambda_j(e^{-i\eta})$ on $\mathcal K_0$ and there exist constants $\epsilon_1, \epsilon_2 > 0$ such that
\begin{equation}
\lambda_{j-2}(e^{-i\eta}) <- \epsilon_1 <  \lambda_{j-1}(e^{-i\eta}) \leq \lambda_j(e^{-i\eta}) < \epsilon_2 <  \lambda_{j+1}(e^{-i\eta}), \quad \forall 
\eta \in \overline{\mathcal K_0}.
\nonumber
\end{equation}
}

\medskip
In both cases we put
\begin{equation}
\mathcal K = \mathcal K_0 - d_1 = \{\eta - d_1\, ; \, \eta \in \mathcal K_0\},
\nonumber
\end{equation}
\begin{equation}
P_h(\xi) = \lambda_{j,h}(\xi + d_h), \quad d_h = d_1/h,
\label{S3DefinePhxi}
\end{equation}
and assume

\smallskip
\noindent
({\bf B-3}) {\it For $0 \neq \xi \in \mathcal K/h$ the limit
\begin{equation}
P_h(\xi) \to P(\xi), 
\label{S3DefinePxi}
\end{equation}
together with that of all of its derivatives,
exists 
as $h \to 0$, where $P(\xi)$ is $C^{\infty}$ for $\xi \neq 0$, homogeneous of degree $\gamma > 0$ and 
\begin{equation}
CP(\xi) \leq P_h(\xi) \leq C^{-1}P(\xi), \quad 
{\it on} \quad \mathcal K/h
\nonumber
\end{equation}
for a constant $C > 0$.}

\medskip
Let $\Pi_h(\xi)$ be the eigenprojection associated with the eigenvalue $\lambda_{j,h}(\xi + d_h)$ for the case (B-2-1), and 
the sum of eigenprojections associated with $\lambda_{j-1,h}(\xi + d_h)$ and $\lambda_{j,h}(\xi + d_h)$ for the case (B-2-2). 
Since $\lambda_{j,h}$ (or the pair $\{\lambda_{j-1,h},  \lambda_{j,h}\}$) is isolated from the other eigenvalues, $\Pi_h(\xi)$ is smooth 
for $\xi \neq 0$. We assume:

\medskip
\noindent
({\bf B-4}) {\it For $0 \neq \xi \in \mathcal K/h$ there exists a projection $\Pi_0(\xi)$ such that 
\begin{equation}
\Pi_h(\xi) \to \Pi_0(\xi), 
\nonumber
\end{equation}
as $h \to 0$, together with all derivatives.
}

 In our applications, the resolvent of the matrix $\mathcal L_h(e^{- i(\xi + d_h)})$ converges together with its derivatives. Therefore, by using Risez' formula
$$
\frac{1}{2\pi i}\int_{C}\big(z - \mathcal L_h(e^{-i(\xi + d_h)}\big)^{-1}dz
$$
for the projection associated with (a group of) eigenvalues, where $C$ is the contour enclosing (the group of) eigenvalues in question, the assumption (B-4) is verified. See also \cite{Kato}, pp. 568-569.

\medskip
By (B-2-1) and (B-2-2) there exist constants $0 < C_0 < C_1$ and $0 < C_0' < C_1'$ such that 
\begin{equation}
\sup_{\xi \in \mathcal K}\lambda_{j,h}(\xi) \leq \frac{C_0}{h^{\nu}} < \frac{C_1}{h^{\nu}} < \inf_{\xi \in \mathcal K}
\lambda_{k,h}(\xi), \quad k \geq j+1,
\nonumber
\end{equation}
for both cases (B-2-1) and (B-2-2), and also 
\begin{equation}
\sup_{\xi \in \mathcal K}\lambda_{j,h}(\xi) \leq \frac{C'_0}{h^{\nu}} < \frac{C'_1}{h^{\nu}} < \inf_{\xi \in \mathcal K}
\big|\lambda_{k,h}(\xi)\big|, 
\nonumber
\end{equation}
for $k \leq j-1$ for the case (B-2-1), and $k \leq  j-2$ for the case (B-2-2).

Take an $h$-independent open interval $I  \subset\subset (0,\infty)$\footnote{By $A \subset\subset \mathcal O$ for an open set $\mathcal O$, we mean $\overline{A} \subset \mathcal O$ {and compact}.}, and $h_0$ small enough so that 
$I \subset\subset (0, C/h^{\nu})$, for $0 < h < h_0$, where $C = \min\{C_0,C_0'\}$. 
Then there exists a constant $C' > 0$ such that
\begin{equation}
\big|\lambda_{k,h}(\xi) - \lambda\big| \geq \frac{C'}{h^{\nu}}, \quad \forall \lambda \in I
\nonumber
\end{equation}
for $k \neq j$ in the case (B-2-1), and for $k \neq j, j-1$ in the case (B-2-2). This implies that
\begin{equation}
\big|\big(\mathcal L_h(e^{-ih(\xi +d_h)}) - z\big)^{-1}\big(1 - \Pi_h(\xi)\big)\big| \leq Ch^{\nu},
\label{S3Lh-z-11-Ph(xi)leqChnu}
\end{equation}
where $\big| \cdot \big|$ denotes the matrix norm, 
for ${\rm Re}\, z \in I$ and $\xi \in {\mathcal K}/h$. 


\begin{lemma}
\label{BlockdiagonalLemma}
For any $0 \neq \xi^{(0)} \in \mathcal K/h$, there exists an $h$-independent neighborhood $U \subset \big({\mathcal K}/h\big)\setminus\{0\}$ of $\xi^{(0)}$ and 
 a unitary matrix $A_h(\xi) \in C^{\infty}(U)$ such that 
\begin{equation}
A_h(\xi)^{\ast}\mathcal L_h(e^{-ih(\xi + d_h)})A_h(\xi) = 
\left(
\begin{array}{cc}
\lambda_j(e^{-ih(\xi + d_h)}) & 0 \\
0 & L_h(\xi)
\end{array}
\right), \quad \forall \xi \in U
\nonumber
\end{equation}
for the case (B-2-1), and 
\begin{equation}
A_h(\xi)^{\ast}\mathcal L_h(e^{-ih(\xi + d_h)})A_h(\xi) = 
\left(
\begin{array}{ccc}
\lambda_j(e^{-ih(\xi + d_h)}) & 0 & 0 \\
0 &- \lambda_j(e^{-i(\xi + d_h)}) & 0 \\
0 & 0 & L_h(\xi)
\end{array}
\right), \quad \forall \xi \in U
\nonumber
\end{equation}
for the case (B-2-2), where  $\mathcal L_h(\xi)$ is smooth with respect to $\xi \in U$ in both cases.
\end{lemma}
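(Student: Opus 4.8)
The plan is to realise $A_h(\xi)$ as a product of unitaries manufactured from the eigenprojection $\Pi_h(\xi)$: a Kato--Nagy transformation function that straightens $\mathrm{Ran}\,\Pi_h(\xi)$ onto a fixed coordinate subspace, followed, in the case (B-2-2), by a further smooth diagonalisation carried out inside the resulting $2\times2$ block. First fix $0\neq\xi^{(0)}\in\mathcal K/h$ and an $h$-independent open ball $U\ni\xi^{(0)}$ with $\overline U$ compact and $0\notin\overline U$; for $h$ small we then have $\overline U\subset(\mathcal K/h)\setminus\{0\}$. Write $k=1$ in the case (B-2-1) and $k=2$ in the case (B-2-2), so that $\Pi_h(\xi)$ (the projection introduced just before (B-4)) has constant rank $k$. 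By (B-4) and the Riesz-integral representation recalled there, $\Pi_h\in C^{\infty}(U)$ and $\Pi_h\to\Pi_0$ in $C^{\infty}(U)$, and, shrinking $U$ if necessary (uniformly in the small parameter $h$, using the $h$-uniform $C^{1}$ bound on $\Pi_h$), we may assume $\sup_{U}\|\Pi_h(\xi)-\Pi_h(\xi^{(0)})\|<1$. Since $\mathcal L_h(e^{-ih(\xi+d_h)})$ is Hermitian and commutes with its spectral projection $\Pi_h(\xi)$, conjugating by any unitary carrying $\mathrm{Ran}\,\Pi_h(\xi)$ to a fixed coordinate subspace block-diagonalises it.

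Concretely, following Kato--Nagy (cf.\ \cite{Kato}), I would put
\begin{equation}
W_h(\xi)=\Bigl(1-\bigl(\Pi_h(\xi)-\Pi_h(\xi^{(0)})\bigr)^{2}\Bigr)^{-1/2}\Bigl(\Pi_h(\xi)\Pi_h(\xi^{(0)})+\bigl(1-\Pi_h(\xi)\bigr)\bigl(1-\Pi_h(\xi^{(0)})\bigr)\Bigr),
\nonumber
\end{equation}
which is unitary, lies in $C^{\infty}(U)$, equals the identity at $\xi^{(0)}$, and satisfies $W_h(\xi)\Pi_h(\xi^{(0)})W_h(\xi)^{\ast}=\Pi_h(\xi)$. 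Choosing a constant unitary $A_h^{(0)}$ with $A_h^{(0)}\bigl(\mathrm{span}\{\mathbf e_1,\dots,\mathbf e_k\}\bigr)=\mathrm{Ran}\,\Pi_h(\xi^{(0)})$ and setting $B_h(\xi)=W_h(\xi)A_h^{(0)}$, the matrix $B_h(\xi)$ intertwines the coordinate projection onto $\mathrm{span}\{\mathbf e_1,\dots,\mathbf e_k\}$ with $\Pi_h(\xi)$, so $B_h(\xi)^{\ast}\mathcal L_h(e^{-ih(\xi+d_h)})B_h(\xi)$ is block diagonal with a Hermitian $k\times k$ block $M_h(\xi)\in C^{\infty}(U)$, whose eigenvalues are the designated characteristic root(s), and an $(\mathsf s-k)\times(\mathsf s-k)$ block $L_h(\xi)\in C^{\infty}(U)$.

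It then remains to treat $M_h(\xi)$. In the case (B-2-1), $k=1$, so $M_h(\xi)$ is the scalar $\lambda_j(e^{-ih(\xi+d_h)})$ and $A_h:=B_h$ already has the required form. In the case (B-2-2), $k=2$ and the two eigenvalues of $M_h(\xi)$ are $\lambda_j(e^{-ih(\xi+d_h)})$ and, by (B-2-2), $-\lambda_j(e^{-i(\xi+d_h)})$; by (B-1) and (B-3) they are separated by the gap $2P_h(\xi)\geq 2C\inf_{\overline U}P>0$, uniformly for $\xi\in\overline U$ and small $h$, because $\xi^{(0)}\neq0$ and $P$ is homogeneous of positive degree. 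Hence the spectral projection of $M_h(\xi)$ onto its larger eigenvalue is given by a Riesz integral over a fixed contour, lies in $C^{\infty}(U)$, and diagonalising $M_h(\xi)$ by a smooth unitary $V_h\in C^{\infty}(U)$ built from its two eigenprojections gives $V_h(\xi)^{\ast}M_h(\xi)V_h(\xi)=\mathrm{diag}\bigl(\lambda_j(e^{-ih(\xi+d_h)}),-\lambda_j(e^{-i(\xi+d_h)})\bigr)$. Then $A_h(\xi):=B_h(\xi)\bigl(V_h(\xi)\oplus\mathrm{Id}_{\mathsf s-k}\bigr)$ is the desired unitary, and the last block is still $L_h(\xi)$.

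The point that needs care is the $h$-uniformity: that $U$ can be taken independent of $h$ and that the unitaries above carry $h$-uniform $C^{m}$ bounds for every $m$. For the outer block this rests on (B-4) together with the fact that, owing to the scaling $1/h^{\nu}$ and conditions (B-2-1)/(B-2-2) (equivalently, the bound $\|(\mathcal L_h(e^{-ih(\xi+d_h)})-z)^{-1}(1-\Pi_h(\xi))\|\leq Ch^{\nu}$ established just before the lemma), the designated root(s) are separated from the remaining spectrum of $\mathcal L_h(e^{-ih(\xi+d_h)})$ by a gap of order $h^{-\nu}\to\infty$, so one fixed contour encloses exactly them uniformly for $\xi\in U$; for the inner $2\times2$ block in the case (B-2-2) it rests on the uniform positive lower bound $2C\inf_{\overline U}P$ on the inner gap, which is available precisely because $\xi^{(0)}\neq0$.
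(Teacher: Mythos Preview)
Your argument is correct and in fact more complete than the paper's. The paper proceeds more elementarily: it fixes an orthonormal basis $v_1,\dots,v_{\mathsf s}$ adapted to the splitting $\mathrm{Ran}\,\Pi_0(\xi^{(0)})\oplus\mathrm{Ran}\,(1-\Pi_0(\xi^{(0)}))$, writes $v_i=\Pi_0(\xi^{(0)})w_i$ (respectively $v_i=(1-\Pi_0(\xi^{(0)}))w_i$), and takes the columns of $A_h(\xi)$ to be $\Pi_h(\xi)w_i$ and $(1-\Pi_h(\xi))w_i$. This is the ``project a fixed frame and (implicitly) re-orthonormalise'' method, whereas you use the Kato--Nagy transformation function to produce the intertwining unitary directly. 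Both rest on the same input---smoothness and $h$-convergence of $\Pi_h$ from (B-4)---and give smooth frames of $\mathrm{Ran}\,\Pi_h(\xi)$ near $\xi^{(0)}$; your route has the advantage that unitarity is built in without a separate Gram--Schmidt step, and your formulas make the $h$-uniform $C^m$ bounds transparent.

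Two points where your write-up goes beyond the paper's sketch are worth noting. First, for case (B-2-2) the paper's construction, taken literally, only block-diagonalises into a $2\times2$ block on $\mathrm{Ran}\,\Pi_h$ and an $(\mathsf s-2)\times(\mathsf s-2)$ block; it does not spell out the further step of separating $\lambda_j$ from $-\lambda_j$. Your use of the inner gap $2P_h(\xi)\geq 2C\inf_{\overline U}P>0$ (available because $0\notin\overline U$ and $P$ is homogeneous of positive degree) to build the smooth $2\times2$ diagonaliser $V_h$ is exactly what is needed, and is consistent with how the paper later splits $\Pi_h=\Pi_h^{(+)}+\Pi_h^{(-)}$ in \S\ref{S4.5Dirac}. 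Second, your explicit discussion of why $U$ can be chosen independently of $h$---via the $h$-uniform $C^1$ control on $\Pi_h$ coming from (B-4) and the fixed-contour Riesz representation---fills in a point the paper leaves tacit.
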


\begin{proof}
We consider the case (B-2-2). 
Take an orthonormal basis $v_1, \cdots, v_{\mathsf s}$ of ${\bf R}^{\mathsf s}$ such that $v_1, v_2 \in {\rm Ran}\,\Pi_0(\xi^{(0)})$, 
$v_3,\cdots,v_{\mathsf s} \in {\rm Ran}\, (1 - \Pi_0(\xi^{(0)}))$, and choose $w_i \in {\bf R}^{\mathsf s}$ such that 
$v_i = \Pi_0(\xi^{(0)})w_i$, $i = 1, 2$, $v_i = (1 - \Pi_0(\xi^{(0)}))v_i$, $i = 3, \cdots, {\mathsf s}$. Let $A_{hi}(\xi) = \Pi_h(\xi)w_i$, $i = 1, 2$, 
$A_{hi}(\xi) = (1 - \Pi_h(\xi))w_i$, $i = 3, \cdots, {\mathsf s}$. Letting $A_h(\xi)$ be the unitary matrix with column vectors $A_{hi}(\xi)$ and taking 
a sufficiently small neighborhood of $\xi^{(0)}$, we obtain the lemma.
\end{proof}

\subsection{Gauge transformation}
\label{subsec3.2GaugeTransform}
We define the gauge transformation $\mathcal G_h$ by
\begin{equation}
\big(\mathcal G_ha\big)(n) = e^{ihn\cdot d_h}a(n).
\nonumber
\end{equation}
Then we have
$$
{\mathcal F_{disc,h}}\mathcal G_h\mathcal L_h(S_h){\mathcal G_h}^{\ast}{\mathcal F_{disc,h}}^{-1}  = \mathcal L_h(e^{-ih(\xi + d_h)}).
$$
Multiplying by $\Pi_h(\xi)$ the problem is reduced to  the multiplication operator $\lambda_{j,h}(\xi+d_h) = P_h(\xi)$. 
  We show that, as $h \to 0$, it converges to the (pseudo-differential) operator $P(-i\partial_x)$ in an appropriate sense.


\section{The free equation}
\label{Section4Freeequation}


\subsection{Characteristic surfaces}
We use the notation introduced in the previous section. Recall that
$P_h(\xi) = P(\xi) + O(h)$.
For $0 \neq E \in I$ the hypersurfaces
\begin{equation}
M_{E,h} = \{\xi \in {\bf R}^d\, ; \, P_h(\xi) = E\}, \quad
M_{E} = \{\xi \in {\bf R}^d\, ; \, P(\xi) = E\},
\nonumber
\end{equation}
are compact. 
By (B-3)  there exist  constants $C_0, h_0 > 0$ such that for all $E \in I$
\begin{equation}
M_{E,h} \subset \{\xi \in {\bf R}^d\, ; \, C_0 < |\xi| < C_0^{-1}\}, \quad 0 < \forall h < h_0.
\label{MEhisC0xi}
\end{equation}
 Let $\epsilon_d > 0$ be such that 
$\{|\xi| < 2\epsilon_d \} \subset \mathcal K$. Take $\chi_d \in C_0^{\infty}({\bf R}^d)$ such that $\chi_d(\xi) = 1$ on $|\xi| < \epsilon_d$ and $\chi_{d}(\xi) = 0$ on $|\xi| > 2 \epsilon_d$.
Note that
\begin{equation}
\chi_d(h\xi) = \chi_d(h\xi)\chi_{{\bf T}^d}(h\xi),
\label{chid=chidchTd}
\end{equation}
where
\begin{equation}
\chi_{{\bf T}^d}(\xi) = \left\{
\begin{split}
&1, \quad {\rm if} \quad \xi \in [-\pi,\pi]^d, \\ 
& 0, \quad {\rm if} \quad \xi \not\in [-\pi,\pi]^d.
\end{split}
\right.
\nonumber
\end{equation}
Define $f_h'$ by
\begin{equation}
f_h' = \mathcal F_{disc,h}^{-1}\big(\chi_d(h\xi)\widehat f_h(\xi)\big),
\label{Definefhprime}
\end{equation}
where $f_h(n) = f(hn)$, and consider the equation
\begin{equation}
\big(- \Delta_{disc,h} - z\big)v_h' = \mathcal G_hf_h',
\label{S3Equationforvhprime}
\end{equation}
 where $\mathcal G_h$ is defined in Subsection \ref{subsec3.2GaugeTransform}.
Then $v_h = \mathcal G_h^{\ast}v_h'$ satisfies the gauge transformed equation
\begin{equation}
\mathcal G_h^{\ast}\big(\mathcal L_h(S_h) - z\big)\mathcal G_hv_h = f'_h.
\nonumber
\end{equation}
Assuming that $f \in L^2({\bf R}^d)\cap C({\bf R}^d)$ and $\big(f(hn)\big)_{n\in{\bf Z}^d} \in L^1({\bf Z}^d)$, we can solve it when $z \not\in {\bf R}$, i.e. we have
\begin{equation}
\widehat v_h(\xi,z) = \big(\mathcal L_h(e^{-ih(\xi + d_h)}) - z\big)^{-1}\chi_d(h\xi)\widehat f_h(\xi)  \in L^2({\bf T}^d_h),
\label{uh=fracfhPh-z}
\end{equation}
\begin{equation}
\widehat f_h(\xi) = \big(\frac{h}{2\pi}\big)^{d/2}\sum_{n\in{\bf Z}^d}f(hn)e^{-ihn\cdot \xi} \in L^2({\bf T}^d_h)\cap C({\bf T}^d_h).
\label{S4widehatfhxidefine}
\end{equation}
Note that $\big(f(hn)\big)_{n\in{\bf Z}^d} \in L^2({\bf Z}^d)$. 

By (\ref{S3Lh-z-11-Ph(xi)leqChnu}) and (\ref{AppendhdfhL2bfTdleqCfalphabetah<h0}), we have for $m > \big[d/2\big], s > d/2$,
\begin{equation}
\|\big(1 - \Pi_h(\xi)\big)\widehat v_h(\xi,z)\|_{L^2({\bf T}^d_h)} \leq Ch^{\nu}\|\widehat f_h\|_{L^2({\bf T}^d_h)} \leq Ch^{\nu/2}\|f\|_{m,s}.
\label{S4-PhxivhCh-nux}
\end{equation}

Therefore the part $\big(1- \Pi_h(\xi)\big)\widehat v_h(\xi, z)$ disappears as $h \to 0$. 
We consider the part $\Pi_h(\xi)\widehat v_h(\xi, z)$. 
Let us first consider the case (B-2-1). 
We put
\begin{equation}
\begin{split}
u_h(n,z) & = \big(\mathcal F_{disc,h}\big)^{-1}\chi_d(h\xi)\frac{\Pi_h(\xi){\widehat f_h(\xi)}}{P_h(\xi)-z} \\
& = \big(\frac{h}{2\pi}\big)^{d/2}\int_{{\bf R}^d}e^{ihn\cdot\xi}\chi_{{\bf T}^d}(h\xi)\chi_d(h\xi)\frac{\Pi_h(\xi){\widehat f_h(\xi)}}{P_h(\xi)-z} d\xi.
\end{split}
\nonumber
\end{equation}
A natural interpolation of $\big(u_h(n,z)\big)_{n\in{\bf Z}^d}$ on ${\bf R}^d$ is 
\begin{equation}
\begin{split}
\widetilde u_h(x,z) & = 
h^{d/2}\mathcal F_{cont}^{-1}\Big(\chi_{{\bf T}^d}(h\xi)\chi_d(h\xi)\frac{\Pi_h(\xi)\widehat f_h(\xi)}{P_h(\xi)-z}\Big) \\
& = \Big(\frac{h}{2\pi}\Big)^{d/2}\int_{{\bf R}^d}e^{ix\cdot\xi}\chi_{{\bf T}^d}(h\xi)\chi_d(h\xi)
\frac{\Pi_h(\xi)\widehat f_{h}(\xi)}{P_h(\xi)-z} d\xi. 
\end{split}
\label{hd/2F-1chi0hxiuhxiz}
\end{equation}
Take $\chi_1(\xi) \in C^{\infty}({\bf R}^d)$ such that
\begin{equation}
\chi_1(\xi)  = \left\{
\begin{split}
&1, \quad {\rm if} \quad |\xi| \leq C_0/3 \ \   {\rm or} \ \ 3/C_0 \leq |\xi|, \\
&0, \quad {\rm if} \quad 2C_0/3 \leq |\xi| \leq  2/C_0,
\end{split}
\right.
\label{Definechi1(xi)}
\end{equation}
where $C_0$ is from (\ref{MEhisC0xi}), 
and put
\begin{equation}
\chi_2(\xi) = 1 - \chi_1(\xi).
\nonumber
\end{equation}
Recalling that $P(\xi)$ is homogeneous of degree $\gamma$ by (B-3),  on the support of $\chi_1(\xi)$
\begin{equation}
\big|P_h(\xi) - z\big| \geq C(1 + |\xi|^{\gamma}), \quad {\rm for} \quad {\rm Re}\, z \in I \quad {\rm and} \quad \xi \in \mathcal K/h .
\label{Ph(xi)geqxi2}
\end{equation}
In view of (\ref{hd/2F-1chi0hxiuhxiz}) 
we put
\begin{equation}
\widetilde u_h^{(i)}(x,z) = h^{d/2}\mathcal F_{cont}^{-1}
\Big(\chi_{{\bf T}^d}(h\xi)\chi_i(\xi)\chi_d(h\xi)\frac{\Pi_h(\xi)\widehat f_h(\xi)}{P_h(\xi) - z}\Big) \in L^2({\bf R}^d), \quad i = 1, 2, 
\label{S4Definewidetildeuhi(x,z)}
\end{equation}
\begin{equation}
\widehat u^{(i)}_{h}(\xi,z) = \chi_{{\bf T}^d}(h\xi) \chi_i(\xi)\chi_d(h\xi)\frac{\Pi_h(\xi)\widehat f_h(\xi)}{P_h(\xi) - z} \in L^2({\bf T}^d_h), \quad 
i = 1, 2.
\label{Defineu(1)h(xi)}
\end{equation}


\subsection{Convergence outside $M_{E,h}$}

For $E\in I$ (\ref{Ph(xi)geqxi2}) implies 
\begin{equation}
\lim_{\epsilon\to0}\widehat u^{(1)}_{h}(\xi,E + i\epsilon) =
\widehat u^{(1)}_{h}(\xi,E + i0) = \chi_{{\bf T}^d}(h\xi)\chi_1(\xi)\chi_d(h\xi)\frac{\Pi_h(\xi)\widehat f_h(\xi)}{P_h(\xi) - E},
\end{equation}
since $P_h(\xi) - E \neq 0$. 
Let 
\begin{equation}
\widehat u^{(1)}(\xi,E+i0) =  \chi_1(\xi)\Pi_0(\xi)\frac{(\mathcal F_{cont}f)(\xi)}{P(\xi) - E} \in L^2({\bf R}^d),
\nonumber
\end{equation}
\begin{equation}
\widetilde u^{(1)}(x,E+i0) = \mathcal F_{cont}^{-1}\big(\widehat u^{(1)}(\xi,E+i0)\big) \in L^2({\bf R}^d).
\nonumber
\end{equation}


\begin{theorem}
\label{TheoremCovergenceoutsideMEh}
Assume that $f \in H^{m,s}({\bf R}^d)$ for some $m > \big[d/2\big]+1$, $s> d$. Then as $h\to0$
 $$
 \widetilde u^{(1)}_h(x,E + i0) \to \widetilde u^{(1)}(x, E + i0) \quad {\it in} \quad L^{2,-\delta}({\bf R}^d)
 $$
  for any $\delta > 0$.
\end{theorem}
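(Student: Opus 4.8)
The plan is to carry out the entire argument on the Fourier side. By (\ref{S4Definewidetildeuhi(x,z)}) and the definition of $\widetilde u^{(1)}$ we have
$$
\widetilde u^{(1)}_h(x,E+i0) = \mathcal F_{cont}^{-1}\big(h^{d/2}\widehat u^{(1)}_h(\cdot,E+i0)\big), \qquad
\widetilde u^{(1)}(x,E+i0) = \mathcal F_{cont}^{-1}\big(\widehat u^{(1)}(\cdot,E+i0)\big),
$$
and since $\mathcal F_{cont}^{-1}$ is unitary on $L^2({\bf R}^d)$ and $L^2({\bf R}^d)\hookrightarrow L^{2,-\delta}({\bf R}^d)$ for every $\delta\geq0$, it suffices to prove the formally stronger statement that $h^{d/2}\widehat u^{(1)}_h(\cdot,E+i0)\to\widehat u^{(1)}(\cdot,E+i0)$ in $L^2({\bf R}^d)$. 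By (\ref{Defineu(1)h(xi)}) the first of these equals $m_h(\xi)\big(h^{d/2}\widehat f_h(\xi)\big)$ with multiplier
$$
m_h(\xi) = \chi_{{\bf T}^d}(h\xi)\,\chi_1(\xi)\,\chi_d(h\xi)\,\frac{\Pi_h(\xi)}{P_h(\xi)-E},
$$
while $\widehat u^{(1)}(\cdot,E+i0) = m(\xi)\,(\mathcal F_{cont}f)(\xi)$ with $m(\xi) = \chi_1(\xi)\Pi_0(\xi)/(P(\xi)-E)$. On ${\rm supp}\,\chi_1$ there is no pole, by (\ref{Ph(xi)geqxi2}) and the analogous bound for $P$, so the boundary values at $E+i0$ are genuine values and no $\epsilon\to0$ limit is involved here.

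Next I would record the convergence of the two factors. For the data factor, $h^{d/2}\widehat f_h(\xi) = (2\pi)^{-d/2}h^d\sum_{n\in{\bf Z}^d}f(hn)e^{-ihn\cdot\xi}$ is a Riemann sum for $(\mathcal F_{cont}f)(\xi)$; the hypotheses $f\in H^{m,s}({\bf R}^d)$ with $m>[d/2]+1$ and $s>d$, together with (\ref{AppendhdfhL2bfTdleqCfalphabetah<h0}), are what is needed to conclude that
$$
\chi_{{\bf T}^d}(h\xi)\,h^{d/2}\widehat f_h \longrightarrow \mathcal F_{cont}f \quad {\rm in}\ L^2({\bf R}^d), \qquad \sup_{0<h<h_0}\big\|\langle\xi\rangle^m\,\chi_{{\bf T}^d}(h\xi)\,h^{d/2}\widehat f_h\big\|_{L^2({\bf R}^d)}<\infty,
$$
the point being that the decay and smoothness of $f$ control the aliasing of the sampled transform (by Poisson summation the error from $\mathcal F_{cont}f$ involves only $(\mathcal F_{cont}f)(\cdot+2\pi k/h)$, $k\neq0$, i.e.\ values at frequencies tending to $\infty$, which contribute an $L^2$-tail). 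For the multiplier factor: $\chi_{{\bf T}^d}(h\xi)\to1$ and $\chi_d(h\xi)\to\chi_d(0)=1$ pointwise; $\Pi_h(\xi)\to\Pi_0(\xi)$ pointwise by (B-4); $P_h(\xi)\to P(\xi)$ pointwise by (B-3); and by (\ref{Ph(xi)geqxi2}) one has $|P_h(\xi)-E|\geq C(1+|\xi|^{\gamma})$ on ${\rm supp}\,\chi_1\cap(\mathcal K/h)$ uniformly in $0<h<h_0$. Since the $\Pi_h(\xi)$ are orthogonal projections, of norm $\leq1$, this gives the uniform bound $|m_h(\xi)|\leq C(1+|\xi|^{\gamma})^{-1}\leq C$ and $m_h(\xi)\to m(\xi)$ for a.e.\ $\xi$.

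Finally I would combine the two factors via the elementary lemma: if $g_h\to g$ in $L^2({\bf R}^d)$ and $m_h\to m$ a.e.\ with $\sup_h\|m_h\|_{L^\infty}\leq C$, then $m_hg_h\to mg$ in $L^2({\bf R}^d)$, since $\|m_hg_h-mg\|_{L^2}\leq C\|g_h-g\|_{L^2}+\|(m_h-m)g\|_{L^2}$, the first term vanishing by the previous step and the second by dominated convergence ($|(m_h-m)g|\leq2C|g|\in L^2$ and $\to0$ a.e.). Taking $g_h=\chi_{{\bf T}^d}(h\xi)h^{d/2}\widehat f_h$, $g=\mathcal F_{cont}f$, and $m_h,m$ as above, yields $h^{d/2}\widehat u^{(1)}_h(\cdot,E+i0)\to\widehat u^{(1)}(\cdot,E+i0)$ in $L^2({\bf R}^d)$, and applying $\mathcal F_{cont}^{-1}$ finishes the proof; in particular the convergence holds in $L^2({\bf R}^d)$, hence a fortiori in $L^{2,-\delta}({\bf R}^d)$ for every $\delta>0$.

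The step carrying the real analytic content is the second one: the uniform $L^2$ bound on, and the $L^2$-convergence of, the sampled Fourier transform $\chi_{{\bf T}^d}(h\xi)h^{d/2}\widehat f_h$ toward $\mathcal F_{cont}f$. This is where one must quantify the gap between a Riemann sum and its integral --- equivalently, the Poisson-summation aliasing tail on the growing torus ${\bf T}^d_h$ --- and check that $m>[d/2]+1$ and $s>d$ make it vanish in $L^2$ uniformly in $h$; this is precisely the content of the Appendix estimate (\ref{AppendhdfhL2bfTdleqCfalphabetah<h0}). Everything else --- pointwise convergence of the symbols $P_h$, $\Pi_h$ and of the cutoffs, and the product lemma --- is soft.
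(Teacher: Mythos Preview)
Your argument is correct and in fact proves the stronger statement that $\widetilde u^{(1)}_h\to\widetilde u^{(1)}$ in $L^2({\bf R}^d)$, not merely in $L^{2,-\delta}$. The route, however, is genuinely different from the paper's. The paper argues in two stages: first it establishes convergence in $\mathcal S'({\bf R}^d)$ via the pointwise limit $h^{d/2}\widehat f_h(\xi)\to(\mathcal F_{cont}f)(\xi)$ together with (B-4); then it proves a uniform $H^{\gamma}$ bound on $\widetilde u^{(1)}_h$ from (\ref{Ph(xi)geqxi2}) and Lemma~\ref{AppendPointwiseConv}(1), and upgrades the distributional convergence to $L^{2,-\delta}$ by the Rellich compactness/unique-accumulation-point scheme that is the leitmotif of the whole paper. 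You bypass compactness entirely, working purely on the Fourier side with a bounded-multiplier/dominated-convergence lemma. What your approach buys is a cleaner, more elementary proof of a sharper conclusion; what the paper's approach buys is methodological consistency with the later, harder sections (near $M_{E,h}$ and with potentials), where the compactness template is essential and no direct $L^2$ argument is available.

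One remark on attribution: you say the $L^2$-convergence $\chi_{{\bf T}^d}(h\xi)h^{d/2}\widehat f_h\to\mathcal F_{cont}f$ is ``precisely the content'' of (\ref{AppendhdfhL2bfTdleqCfalphabetah<h0}), but that reference gives only the uniform $L^2$ bound. The $L^2$-convergence itself needs the Poisson-summation aliasing estimate you sketch (the tail $\sum_{k\neq0}\widehat f(\cdot+2\pi k/h)$ has $L^2({\bf T}^d_h)$-norm $O(h^{\alpha})$ for any $d/2<\alpha<m$), which is correct under the hypotheses $m>[d/2]+1$, $s>d$ but is not stated in the Appendix. If you present this proof, that step should be written out, since it is where the regularity assumption on $f$ is actually consumed.
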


\begin{proof}
We first show that for $h\to0$
\begin{equation}
 \widetilde u^{(1)}_h(x,E + i0) \to \widetilde u^{(1)}(x, E + i0) \quad {\it in} \quad \mathcal S'({\bf R}^d).
\label{S4ConvinSprime}
\end{equation}
In fact, by Lemma \ref{AppendPointwiseConv},
\begin{equation}
\widehat f_h(\xi)h^{d/2} = (2\pi)^{-d/2}\sum_{n\in{\bf Z}^d}f(hn)e^{-ihn\cdot\xi}h^d \to 
\big(\mathcal F_{cont}f\big)(\xi).
\nonumber
\end{equation}
The assumption (B-4) then yields (\ref{S4ConvinSprime}). 

Next we prove the convergence in $L^{2,-\delta}({\bf R}^d)$ by showing the following two facts:
\begin{enumerate}
\item
$\{\widetilde u^{(1)}_h(x, E + i0)\}_{0<h<h_0}$ is bounded in $H^{\gamma}({\bf R}^d)$. 
\item 
Compactness and uniqueness of accumulation point.
\end{enumerate}

We show the assertion (1). Using (\ref{Ph(xi)geqxi2})  and (\ref{S4Definewidetildeuhi(x,z)}), we have
\begin{equation}
\begin{split}
\|\widetilde u_h^{(1)}(x)\|^2_{H^{\gamma}({\bf R}^d)} & \leq Ch^d\|\widehat f_h\|^2_{L^2({\bf T}^d_h)} 
\leq C\|f\|_{H^{m,s}({\bf R}^d)}^2
\end{split}
\nonumber
\end{equation}
from (\ref{AppendhdfhL2bfTdleqCfalphabetah<h0}) in Lemma \ref{AppendPointwiseConv}. 
This proves (1).

Omitting $E + i0$, take any sequence $\widetilde u^{(1)}_{h_i}(x), \ i = 1, 2, \cdots$, $h_i \to 0$. 
For $\delta > 0$ the result (1) implies that
there exists a constant $C > 0$ independent of $h_i$ such that
$$
\int_{|x|>R}\langle x\rangle^{-\delta}|\widetilde u^{(1)}_{h_i}(x)|^2dx \leq CR^{- \delta}.
$$
In the bounded domain $\{|x| < R\}$,  applying Rellich's selection theorem,
one can choose a subsequence $\{\widetilde u^{(1)}_{h'_i}(x)\}$ of $\{\widetilde u^{(1)}_{h_i}(x)\}$ which converges to some $w$ in $L^{2,-s}({\bf R}^d)$. Note by Plancherel's formula, letting $(\, , \, )$ be the $L^2({\bf R}^d)$-inner product,
$$
(\widetilde u^{(1)}_{h'_i},\varphi) = (\widehat u^{(1)}_{h'_i},\mathcal F_{cont}\varphi), \quad \forall \mathcal F_{cont}\varphi \in C_0^{\infty}({\bf R}^d),
$$
Letting $h'_i \to 0$, we have $(w,\varphi) = (\widehat u^{(1)},\mathcal F_{cont}\varphi)$. This proves that the limit $w$ of the subsequence 
$\widetilde u^{(1)}_{h'_i}(x)$ does not depend on this subsequence, which proves the uniqueness of the accumulation point of $\{\widetilde u_h(x)\}$.  Then  $\widetilde u^{(1)}_{h}(x)$ itself converges to $\widetilde u^{(1)}(x)$ in $L^{2,-\delta}({\bf R}^d)$.
 \end{proof}


\subsection{Uniform estimates near $M_{E,h}$}
Let $\chi_1(\xi)$ be as in (\ref{Definechi1(xi)}), 
$\chi_2(\xi) = 1 - \chi_1(\xi)$, and 
$$
\Xi_2 = {\rm supp}\, \chi_2(\xi).
$$
There exists a constant $C > 0$ such that 
\begin{equation}
C|\xi|^{\gamma} \leq P_h(\xi) \leq C^{-1}|\xi|^{\gamma} \quad {\rm on} \quad \Xi_2, 
\nonumber
\end{equation}
moreover, as $h\to0$
$P_h(\xi) \to P(\xi)$. Since $P(\xi)$ is homogeneous, $\nabla_{\xi}P(\xi) \neq 0$ on $\Xi_2$. Therefore $\nabla P_h(\xi) \neq 0$ on $\Xi_2$.
Take $\xi^{(0)} = (\xi^{(0)}_1,\cdots,\xi^{(0)}_d) \in \Xi_2$ arbitrarily. Around $\xi^{(0)}$ 
we make a linear change of variables, $\xi \to \eta$, hence $\xi^{(0)} \to \eta^{(0)}$, 
so that in the $\eta$-coordinates $(1,0)$ is an outward transversal direction to $M_{E,h}$ at $\eta^{(0)}$, and the following factorization holds near $\eta^{(0)}$:
\begin{equation}
P_h(\xi(\eta)) - E - i\epsilon = \big(\eta_1 - p_h(\eta',E+ i\epsilon)\big)q_h(\eta,E+i\epsilon),
\nonumber
\end{equation}
\begin{equation}
{\rm Im}\, p_h(\eta',E + i\epsilon) \geq 0, \quad q_h(\eta,E+i\epsilon) \neq 0.
\nonumber
\end{equation}
For the simplicity of notation, we write $\xi$ instead of $\eta$.  Take small $\delta > 0$ so that 
$\xi_1 \neq 0$ for $|\xi - \xi^{(0)}| < \delta$.  Let 
$$
z = E + i\epsilon,
$$
and assume that ${\rm Re}\, z \in I$ and ${\rm Im}\, z > 0$. 
 Take $\chi(\xi) \in C_0^{\infty}({\bf R}^d)$ such that 
\begin{equation}
\chi(\xi) = 
\left\{
\begin{split}
& 0 \quad {\rm for} \quad |\xi - \xi^{(0)}| \geq 2\delta/3, \\
& 1 \quad {\rm on} \quad |\xi - \xi^{(0)}| \leq \delta/3,
\end{split}
\right.
\nonumber
\end{equation}
and put
\begin{equation}
\widehat g_h(\xi,z) = \frac{\chi(\xi)}{q_h(\xi,z)}\chi_{{\bf T}^d}(h\xi)\chi_2(\xi)\chi_d(h\xi)\Pi_h(\xi)\widehat f_h(\xi),
\label{defineghxiz}
\end{equation}
\begin{equation}
\widehat v_h(\xi,z) = \frac{\widehat g_h(\xi,z)}{\xi_1 - p_h(\xi',z)},
\label{DefineWidehatvh(xiz)=chqfh}
\end{equation}
\begin{equation}
\widetilde v_h(x,z) = h^{d/2}\mathcal F_{cont}^{-1}\big(\widehat v_h(\xi,z)\big).
\label{DefineWidetildevh(xiz)}
\end{equation}
Note that $\frac{\chi(\xi)}{q_h(\xi,z)}\chi_{{\bf T}^d}(h\xi)\chi_2(\xi)\chi_d(h\xi)\Pi_h(\xi) = \chi(\xi)\Pi_h(\xi)$, hence
$$
\Big|\partial_{\xi}^{\alpha}\frac{\chi(\xi)}{q_h(\xi,z)}\chi_{{\bf T}^d}(h\xi)\chi_2(\xi)\chi_d(h\xi)\Pi_h(\xi)\Big| \leq C_{\alpha}, \quad \forall \alpha
$$
for a constant $C_{\alpha}$ independent of $0 < h < h_0$ and $z$, ${\rm Re}\, z \in I$.


\begin{lemma} 
\label{Lemmavh(xi,z)limit}
The function $\widetilde v_h(x,z)$ defined by (\ref{DefineWidetildevh(xiz)}) has the following properties.

\noindent
(1) For $m > \big[d/2\big], s > d/2 + 1$, there exists a constant $C > 0$ such that
\begin{equation}
\|\widetilde v_h(x,z)\|_{{\mathcal B}^{\ast}({\bf R}^d)} \leq C\|f\|_{m,s}, \quad 0 < h < h_0.
\label{Uniforminepsilonbound}
\end{equation}
(2) The limit $\lim_{\epsilon\to0}\widehat v_h(\xi,E + i\epsilon)  = \widehat v_h(\xi,E + i0)$ exists in the weak-$\ast$ sense, i.e. 
\begin{equation}
(\widetilde v_h(x,E + i\epsilon), \varphi(x)) \to 
(\widetilde v_h(x,E + i0),\varphi(x)), \quad 
\forall \varphi(x) \in \mathcal B({\bf R}^d).
\nonumber
\end{equation}
(3) Moreover, $\widetilde v_h(x,E + i0)$ is an $L^2({\bf R}^{d-1})$-valued bounded function of $x_1$, and
\begin{equation}
\|\widetilde v_h(x_1,\cdot,E + i0)\|_{L^2({\bf R}^{d-1})} \to 0, \quad 
{\rm as} \quad x_1 \to - \infty.
\label{vhlimitx1to-infty}
\nonumber
\end{equation}
\end{lemma}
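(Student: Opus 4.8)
The plan is to analyze the function $\widetilde{v}_h(x,z)$ through its Fourier representation, exploiting the factorization $P_h(\xi) - z = (\xi_1 - p_h(\xi',z))q_h(\xi,z)$ and the one-dimensional structure in the $\xi_1$-variable. The key observation is that $\widehat{v}_h(\xi,z) = \widehat{g}_h(\xi,z)/(\xi_1 - p_h(\xi',z))$ where $\widehat{g}_h$ is a smooth compactly supported function with derivative bounds uniform in $h$ and $z$ (as recorded just before the lemma, using the assumptions $m > [d/2]$, $s > d/2$ from Lemma~\ref{AppendPointwiseConv} to bound $h^{d/2}\|\widehat{f}_h\|$). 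So after taking the partial Fourier transform only in $\xi_1$, one reduces to a one-dimensional Cauchy-type integral, and the $\mathcal{B}^*$-bound becomes a statement about a resolvent-type operator $(\xi_1 - p_h(\xi',z))^{-1}$ acting on a nice family of functions.

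For part (1), I would proceed as follows. Write $\widetilde{v}_h(x,z) = (2\pi)^{-d/2}h^{d/2}\int e^{ix\cdot\xi}\widehat{g}_h(\xi,z)(\xi_1 - p_h(\xi',z))^{-1}d\xi$. Perform the $\xi_1$-integration first. Since $\operatorname{Im} p_h(\xi',z) \geq 0$ and (for $\operatorname{Im} z > 0$) in fact $\operatorname{Im} p_h > 0$, the integral $\int e^{ix_1\xi_1}\widehat{g}_h(\xi,z)(\xi_1 - p_h(\xi',z))^{-1}d\xi_1$ is, up to smooth remainders coming from differentiating the cutoff $\widehat{g}_h$, governed by the residue at $\xi_1 = p_h(\xi',z)$ and an integration-by-parts argument in $\xi_1$ that gains decay in $x_1$. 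The standard $\mathcal{B}^*$-estimate for such ``outgoing'' Fourier integrals --- essentially the Agmon-H\"ormander estimate for the free resolvent, adapted here with all constants uniform in $h$ because the symbol bounds on $\widehat{g}_h$ and the lower bound $|q_h| \geq C$ are uniform --- gives $\sup_{R>1}R^{-1}\int_{|x|<R}|\widetilde{v}_h|^2\,dx \leq C\|f\|_{m,s}^2$. I would likely cite or mimic the analogous discrete estimate from \cite{AndIsoMor1} or the classical continuous one; the point is that uniformity in $h$ follows from uniformity of the symbol estimates already established.

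For parts (2) and (3), the limit $\epsilon \to 0$ is handled by the same machinery: as $\epsilon \downarrow 0$, $p_h(\xi',E+i\epsilon) \to p_h(\xi',E+i0)$ with $\operatorname{Im} p_h(\xi',E+i0) \geq 0$, and the Fourier integral converges in the weak-$\ast$ sense against test functions in $\mathcal{B}({\bf R}^d)$ by dominated convergence combined with the uniform $\mathcal{B}^*$-bound from (1) (a bounded family in $\mathcal{B}^*$ converging in $\mathcal{S}'$ converges weak-$\ast$). For the behavior as $x_1 \to -\infty$: after the $\xi_1$-contour argument, the part of $\widetilde{v}_h(x,E+i0)$ that does not decay in $x_1$ comes only from the pole contribution at $\xi_1 = p_h(\xi',E+i0)$, which for $x_1 < 0$ is closed in the lower half-plane and hence vanishes (since $p_h$ sits in the closed upper half-plane), leaving only the contribution of the smooth off-surface part, which decays because repeated integration by parts in $\xi_1$ produces arbitrary negative powers of $x_1$. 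This gives both the $L^2({\bf R}^{d-1})$-boundedness in $x_1$ (from the residue term, whose $\xi'$-integral is an $L^2$ function by the compact support and smoothness of $\widehat{g}_h$) and the decay \eqref{vhlimitx1to-infty}.

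The main obstacle will be part (3), specifically justifying rigorously that $\widetilde{v}_h(x,E+i0)$ is genuinely an $L^2({\bf R}^{d-1})$-valued \emph{continuous} (hence pointwise-defined) function of $x_1$ rather than merely a distribution, and controlling the limit $x_1 \to -\infty$ uniformly. This requires carefully separating the singular part (the residue at the pole, which is the ``outgoing wave'' and carries the $x_1$-independent $L^2({\bf R}^{d-1})$ mass) from the regular part (which decays rapidly), and checking that the sign of $\operatorname{Im} p_h$ forces the pole contribution to the correct half-line --- the orientation of the transversal direction $(1,0)$ chosen so that it points outward relative to $M_{E,h}$ is exactly what makes $x_1 \to -\infty$ the ``incoming'' direction where the wave must vanish. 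Getting the contour-deformation bookkeeping right, with the cutoff $\chi(\xi)$ destroying analyticity, is the delicate part; one handles this by writing $\widehat{g}_h$ as (value on the pole locus) plus (a remainder vanishing on $\{\xi_1 = p_h(\xi',z)\}$), the latter being divisible by $\xi_1 - p_h$ and hence contributing a smooth, rapidly $x_1$-decaying term.
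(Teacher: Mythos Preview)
Your approach is correct in spirit and closely related to the paper's, but the paper takes a more direct route that avoids the contour-deformation bookkeeping you flag as the main obstacle. Instead of separating pole and remainder in the $\xi_1$-integral, the paper simply writes down the explicit Duhamel-type solution of the first-order equation $(\xi_1 - p_h(\xi',z))\widehat v_h = \widehat g_h$: after inverse Fourier transform in $\xi_1$ only, one has
\[
v_{h1}(x_1,\xi',z) = i\int_{-\infty}^{x_1} e^{i(x_1-y_1)p_h(\xi',z)}\, g_{h1}(y_1,\xi',z)\,dy_1,
\]
which is exactly the convolution with the fundamental solution $i\,1_{\{x_1>0\}}e^{ix_1 p_h}$ of $(D_{x_1}-p_h)$ for $\operatorname{Im}p_h>0$. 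This formula immediately gives $\|\widetilde v_h(x_1,\cdot)\|_{L^2({\bf R}^{d-1})} \leq \int_{-\infty}^{x_1}\|\widetilde g_h(y_1,\cdot)\|_{L^2({\bf R}^{d-1})}\,dy_1$, from which the $\mathcal B^\ast$-bound follows via the Agmon--H\"ormander Besov-space inequality $\|u\|_{\mathcal B^\ast} \leq \sqrt{2}\sup_{x_1}\|u(x_1,\cdot)\|_{L^2({\bf R}^{d-1})}$ (Lemma~\ref{Lemmafx1L2d-1leqfBnorm}), together with $\int\|\widetilde g_h(y_1,\cdot)\|\,dy_1 \leq \sqrt{2}\|\widetilde g_h\|_{\mathcal B}$ and then $\|\widetilde g_h\|_{\mathcal B} \leq C\|\widetilde g_h\|_{L^{2,1}} \leq C h^{d/2}\|\widehat f_h\|_{H^1({\bf T}^d_h)} \leq C\|f\|_{m,s}$ for $s>d/2+1$.

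The payoff is that parts (2) and (3), which you anticipated as the delicate steps, become one-liners: the explicit integral formula passes to the limit $\epsilon\to 0$ by dominated convergence, and the $\int_{-\infty}^{x_1}$ structure makes $\|\widetilde v_h(x_1,\cdot)\|_{L^2({\bf R}^{d-1})}\to 0$ as $x_1\to -\infty$ immediate, since $\|\widetilde g_h(y_1,\cdot)\|_{L^2({\bf R}^{d-1})}$ is integrable in $y_1$. Your pole/remainder decomposition would get there too, but it is not needed; the non-analyticity of $\widehat g_h$ never enters once you work with the Duhamel formula rather than trying to close a contour.
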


\begin{proof} 
(1) We only have to consider the case  $f \in C_0^{\infty}({\bf R}^d)$.
By definition,
\begin{equation}
\big(\xi_1 - p_h(\xi',z)\big)\widehat v_h(\xi,z) =  \widehat g_h(\xi,z).
\label{Eq:xi1-phxi,z)vh=ghg}
\end{equation}
We pass to the inverse Fourier transform with respect to $\xi_1$. Letting
$$
g_{h1}(y_1,\xi',z) = (2\pi)^{-1/2}\int_{-\infty}^{\infty}e^{iy_1\xi_1}\widehat g_h(\xi_1,\xi',z)d\xi_1,
$$
\begin{equation}
v_{h1}(x_1,\xi',z) = i\int_{-\infty}^{x_1}e^{i(x_1 - y_1)p_h(\xi',z)}g_{h1}(y_1,\xi',z)dy_1,
\label{vhlimitx1toplusinfty}
\end{equation}
we see that the Fourier transform with respect to $x_1$ of $v_{h1}(x_1,\xi',z)$ solves (\ref{Eq:xi1-phxi,z)vh=ghg}). 
Multiplying by $h^{d/2}$, and letting $\mathcal F_{x' \to \xi'}$ be the Fourier transform with respect to $x'$, we then 
have by (\ref{DefineWidetildevh(xiz)})
\begin{equation}
\widetilde v_h(x_1,x',z) = ih^{d/2}\int_{-\infty}^{x_1}\big(\mathcal F_{x' \to \xi'}\big)^{-1}
\left(e^{i(x_1 - y_1)p_h(\xi',z)}g_{h1}(y_1,\xi',z)\right)dy_1.
\label{vh(x)=inteix-ylambda+i0)gh(y)dy}
\end{equation}
Letting $\widetilde g_h(y_1,x',z) = h^{d/2}\big(\mathcal F_{x'\to\xi'}\big)^{-1}\big(g_{h1}(y_1,\xi',z)\big)$, we have
\begin{equation}
\|\widetilde v_h(x_1,\cdot,z)\|_{L^2({\bf R}^{d-1})} \leq \int_{-\infty}^{x_1}\|\widetilde g_h(y_1,\cdot,z)\|_{L^2({\bf R}^{d-1})}dy_1.
\label{widetildevhL2d-1isestimatedfromabovegh}
\end{equation}
Lemma \ref{Lemmafx1L2d-1leqfBnorm} then yields
\begin{equation}
\begin{split}
\|\widetilde v_h\|_{{\mathcal B}^{\ast}} & \leq \sqrt{2}\sup_{x_1\in{\bf R}}\|\widetilde v_h(x_1,\cdot,z)\|_{L^2({\bf R}^{d-1})} \\
& \leq \sqrt{2}\int_{-\infty}^{\infty}\|\widetilde g_h(y_1,\cdot,z)\|_{L^2({\bf R}^{d-1})}dy_1 \leq 2 \|\widetilde g_h\|_{{\mathcal B}({\bf R}^d)}.
\label{widetildvhestimaedfromaboveLemma4.2}
\end{split}
\end{equation}
 Note that
\begin{equation}
\|\widetilde g_h\|^2_{\mathcal B({\bf R}^d)} \leq 
C\|\widetilde g_h\|^2_{L^{2,1}({\bf R}^d)} \leq
Ch^d\|\widehat g_h(\xi)\|^2_{H^{1}({\bf R}^d)} 
\leq Ch^d\|\widehat f_h\|^2_{H^1({\bf T}^d_h)}.
\nonumber
\end{equation}
By Sobolev's inequality
$$
|f(x)| \leq C\|f\|_{H^{m,s}({\bf R}^d)}\langle x\rangle^{-s}, \quad 
m > \big[\frac{d}{2}\big], \quad s > 0.
$$
Then we have
\begin{equation}
\begin{split}
h^d\|\widehat f_h\|^2_{H^1({\bf T}^d_h)} &= h^d\sum_n|hn|^2|f(hn)|^2 \leq
C\|f\|^2_{H^{m,s}({\bf R}^d)}h^d\sum_n|hn|^2(1 + |hn|)^{-2s}\\
& \leq C\|f\|^2_{H^{m,s}({\bf R}^d)}\int_{{\bf R}^d}|x|^2(1 + |x|)^{-2s}dx.
\end{split}
\nonumber
\end{equation}
Therefore the right-hand side of (\ref{widetildvhestimaedfromaboveLemma4.2}) is dominated from above by $C\|f\|_{H^{m,s}({\bf R}^d)}$ if $s > d/2 +1$.
We have thus proven (\ref{Uniforminepsilonbound}).

The assertion (2) is a consequence of (\ref{vh(x)=inteix-ylambda+i0)gh(y)dy}), and we have
\begin{equation}
\widetilde v_h(x,E + i0) = i \int_{-\infty}^{x_1}
e^{i(x_1 - y_1)p_h(\xi',z)}\widetilde g_{h1}(y_1,x',E)dy_1.
\label{vhintegralepsiklon=0case}
\end{equation} 
The inequality (\ref{widetildevhL2d-1isestimatedfromabovegh}) implies (\ref{vhlimitx1to-infty}).
\end{proof}


\begin{definition}
For $u(\xi) \in S'({\bf R}^d)$, the wave front set $WF^{\ast}(u)$ is defined as follows. For $(\omega, \xi_0) \in S^{d-1}\times{\bf R}^d$, $(\omega,\xi_0) \not\in WF^{\ast}(u)$ if there exist $0 < \delta < 1$ and $\chi(\xi) \in C_0^{\infty}({\bf R}^d)$ 
such that $\chi(\xi_0)=1$ and
\begin{equation}
\lim_{R\to\infty}\frac{1}{R}\int_{|x|<R}\big|C_{\omega,\delta}(x)\mathcal F_{cont}^{-1}\big(\chi u\big)(x)|^2dx = 0,
\label{OutgoingradCond}
\end{equation}
where $C_{\omega,\delta}(x)$ is the characteristic function of the cone $\{x \in {\bf R}^d\, ; \, \omega\cdot x > \delta|x|\}$.
\end{definition}


\begin{definition}
\label{DefinitionPsiDORadCond}
Let $P_h(\xi)$ and $P(\xi)$ be as in (\ref{S3DefinePhxi}) and (\ref{S3DefinePxi}).
Let $\mathcal P^{(h)}_-$ be the set of $\Psi$DO's whose symbol $p_{-}(x,\xi)$ satisfies the following conditions:

\medskip
\noindent
(1)  $\big|\partial_{x}^{\alpha}\partial_{\xi}^{\beta}p_-(x,\xi)\big| \leq C_{\alpha\beta}\langle x\rangle^{-|\alpha|}$, \ $\forall \alpha, \beta$, 

\smallskip
\noindent
(2) There exist constants $0 < a < b<\infty$ and $-1 < \delta < 1$ such that
\begin{equation}
p_-(x,\xi) = 0 \quad {\rm for} \quad |\xi| \not\in (a,b),
\nonumber
\end{equation}
\begin{equation}
p_-(x,\xi) = 0 \quad {\rm for} \quad \frac{x}{|x|}\cdot\frac{\nabla_{\xi}P_h(\xi)}{|\nabla_{\xi}P_h(\xi)|} > \delta, \quad |x| > 1.
\label{Conditionp-=0ifxxi>delta}
\end{equation}
We define $\mathcal P_-$ in the same way with $P_h(\xi)$ replaced by $P(\xi)$.
\end{definition}

Let us now recall our notation. For $\widehat f_h(\xi)$ defined by (\ref{S4widehatfhxidefine}), we define
\begin{equation}
\widehat F_h(\xi) = \chi_{{\bf T}^d}(h\xi)\chi_d(h\xi)\Pi_h(\xi)\widehat f_h(\xi),
\nonumber
\end{equation}
\begin{equation}
\widehat U_h(\xi,z) = \widehat u_h^{(1)}(\xi,z) + \widehat u^{(2)}_h(\xi,z).
\label{S4SplitUhxiz}
\end{equation}
This is a solution to the equation
\begin{equation}
\big(P_h(\xi) - z\big)\widehat U_h(\xi,z) = \widehat F_h(\xi).
\nonumber
\end{equation}
Hence 
\begin{equation}
\widetilde U_h(x,z) = h^{d/2}\mathcal F_{cont}^{-1}\big(\widehat U_h(\xi,z)\big)
\label{S4DefineUhxz}
\end{equation}
converges as $z \to E + i0 \in I$.


\begin{lemma}
Assume that $m > \big[d/2\big], s > \max\{d/2 + 1, 5/2\}$. 
Let $\widehat U_h(\xi,E + i0)$ be as above. Then
\begin{equation}
WF^{\ast}(\widehat U_h) \subset \big\{\big(\omega_h(\xi), \xi\big)\, ; \, 
\xi \in M_{E,h}\big\},
\label{vhwavefrontset}
\end{equation}
where $\omega_h(\xi) = \nabla_{\xi}P_h(\xi)/|\nabla_{\xi}P_h(\xi)|$.
Moreover, for any $p_-(x,D_x) \in \mathcal P_-$ and  any $0 < \alpha < 1/2$, there exist constants $C > 0$, $h' > 0$ such that
\begin{equation}
\|p_-(x,D_x)\widetilde U_h\|_{L^{2,-\alpha}({\bf R}^d)} \leq C
\|f\|_{m,s}, \quad 0 < \forall h < h'.
\label{P-suhL2s-1leqfL2s}
\end{equation}
\end{lemma}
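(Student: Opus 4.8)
The plan is to prove the two assertions separately, both by localizing $\widehat U_h$ near the compact hypersurface $M_{E,h}$ and reusing the factorization and estimates from Lemma \ref{Lemmavh(xi,z)limit}. For the wave front statement \eqref{vhwavefrontset}, first I would observe that on the support of $\chi_1(\xi)$ (i.e.\ away from $M_{E,h}$, by \eqref{Ph(xi)geqxi2}) the resolvent $\left(P_h(\xi)-z\right)^{-1}$ is smooth and bounded uniformly in $z$ with ${\rm Re}\,z\in I$, so $\widehat u_h^{(1)}(\xi,E+i0)$ is a compactly supported $H^\gamma$ function; its inverse Fourier transform decays and contributes nothing to $WF^\ast$. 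The essential contribution is $\widehat u_h^{(2)}$, which is supported in $\Xi_2$ where $\nabla P_h\neq 0$. Covering $\Xi_2$ by finitely many neighborhoods $|\xi-\xi^{(0)}|<\delta$ of the type used before Lemma \ref{Lemmavh(xi,z)limit}, in each such neighborhood (in the rotated coordinates) we have the factorization $P_h(\xi)-E-i\epsilon=\left(\xi_1-p_h(\xi',E+i\epsilon)\right)q_h$ with ${\rm Im}\,p_h\geq 0$, and the local piece is exactly $\widetilde v_h$ of \eqref{DefineWidetildevh(xiz)}. The explicit representation \eqref{vhintegralepsiklon=0case} shows $\widetilde v_h(x,E+i0)=i\int_{-\infty}^{x_1}e^{i(x_1-y_1)p_h(\xi',E)}\widetilde g_{h1}(y_1,x',E)\,dy_1$, i.e.\ it is an outgoing solution: the $y_1$-integration runs only to the left, so by a stationary/non-stationary phase argument the only directions $\omega$ in which $\mathcal F_{cont}^{-1}(\chi \widehat v_h)$ fails to be $\mathcal B^\ast$-negligible in the cone $\omega\cdot x>\delta|x|$ are those with $\omega=\nabla_\xi P_h(\xi)/|\nabla_\xi P_h(\xi)|$ for $\xi\in M_{E,h}\cap\mathrm{supp}\,\chi$. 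Summing over the finite cover gives \eqref{vhwavefrontset}.

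For the uniform bound \eqref{P-suhL2s-1leqfL2s}, the idea is that $p_-(x,D_x)$ is an ``incoming cutoff'': by condition \eqref{Conditionp-=0ifxxi>delta} its symbol vanishes on the cone where $x/|x|$ is close to the outgoing codirection $\nabla_\xi P_h(\xi)/|\nabla_\xi P_h(\xi)|$, which by the wave front inclusion is precisely where $\widehat U_h$ concentrates. So $p_-(x,D_x)\widetilde U_h$ should gain decay. Concretely, again reduce to the local piece $\widetilde v_h$; apply $p_-(x,D_x)$ to the representation \eqref{vh(x)=inteix-ylambda+i0)gh(y)dy} and integrate by parts in the oscillatory integral defining the $\Psi$DO action. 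Each integration by parts in the $\xi$ (or $\eta$) variables produces a factor $\langle x\rangle^{-1}$ from condition (1) of Definition \ref{DefinitionPsiDORadCond}, while the support condition \eqref{Conditionp-=0ifxxi>delta} prevents the phase $x\cdot\xi+(x_1-y_1)p_h(\xi',E)$ from being stationary: on the support of $p_-$ the gradient in $\xi$ of this phase is bounded below by $c|x|$ for $|x|$ large (since $x$ is bounded away from the outgoing direction $\nabla P_h$). Hence repeated integration by parts yields arbitrary polynomial decay $\langle x\rangle^{-N}$ modulo the $L^2$-bounded part, and in particular $\|p_-(x,D_x)\widetilde v_h\|_{L^{2,-\alpha}}\leq\|p_-(x,D_x)\widetilde v_h\|_{L^{2,1/2-}}\leq C\|f\|_{m,s}$, using the $\mathcal B^\ast$-bound \eqref{Uniforminepsilonbound} (note $\mathcal B^\ast\subset L^{2,-\alpha}$ for $\alpha>1/2$, which is why $0<\alpha<1/2$ forces us to genuinely use the extra decay from the non-stationary phase, not just \eqref{Uniforminepsilonbound}). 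Summing the finite cover and adding the harmless $\widehat u_h^{(1)}$ contribution finishes the proof.

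The main obstacle I anticipate is making the integration-by-parts argument genuinely uniform in $h$. The symbols $p_h,q_h$, the cutoffs, and the eigenprojection $\Pi_h$ all depend on $h$, but by (B-3), (B-4) and the remark after \eqref{defineghxiz} all of their $\xi$-derivatives are bounded uniformly in $0<h<h_0$; one must check that the lower bound $|\nabla_\xi(\text{phase})|\geq c|x|$ on $\mathrm{supp}\,p_-$ holds with $c$ independent of $h$, which follows from \eqref{MEhisC0xi} (so $M_{E,h}$ stays in a fixed annulus) together with the uniform non-vanishing of $\nabla_\xi P_h$ on $\Xi_2$. A secondary technical point is the bookkeeping at the (finitely many, $h$-independent) junctions of the covering partition of unity of $\Xi_2$ and the matching with the wave-front directions there; and one should be slightly careful that the ``$\alpha<1/2$'' range is exactly what the argument delivers — the gain of even a single $\langle x\rangle^{-1}$ beyond \eqref{Uniforminepsilonbound} suffices, and more integrations by parts give more, so the restriction is not sharp but is all that is needed downstream.
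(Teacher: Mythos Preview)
Your outline is essentially correct, but the paper takes a different and somewhat slicker route for the key estimate \eqref{P-suhL2s-1leqfL2s}. Instead of working with the spatial representation \eqref{vh(x)=inteix-ylambda+i0)gh(y)dy} in rotated local coordinates, the paper writes the resolvent as a time integral,
\[
\big(P_h(\xi)-E-i\epsilon\big)^{-1}=i\int_0^\infty e^{-it(P_h(\xi)-E-i\epsilon)}\,dt,
\]
so that $p_-(x,D_x)\widetilde u_h^{(2)}$ becomes $\int_0^\infty e^{itE}p_-(x,D_x)e^{-itP_h(D_x)}\widetilde G_h\,dt$. For each $t>0$ this is a single oscillatory integral with phase $x\cdot\xi-tP_h(\xi)$, and on $\mathrm{supp}\,p_-$ the defining cone condition \eqref{Conditionp-=0ifxxi>delta} gives directly
\[
|\nabla_\xi(x\cdot\xi-tP_h(\xi))|^2\geq (1-\delta)\big(|x|^2+t^2|\nabla_\xi P_h(\xi)|^2\big).
\]
Two integrations by parts then yield a symbol bounded by $C(1+t+|x|)^{-2}\langle x\rangle^{-\alpha}$, hence $\|p_-(x,D_x)e^{-itP_h(D_x)}\widetilde G_h\|_{-\alpha}\leq C(1+t)^{-1-\epsilon_0}\|f\|_{m,s}$, and integrating in $t$ gives \eqref{P-suhL2s-1leqfL2s}. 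The wave front inclusion \eqref{vhwavefrontset} is then read off from this estimate rather than argued separately.

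Your approach via the factorization and the $y_1$-integral is in fact the same argument in disguise: the substitution $s=x_1-y_1$ in \eqref{vh(x)=inteix-ylambda+i0)gh(y)dy} turns it into $\int_0^\infty e^{-is(\xi_1-p_h(\xi'))}\,(\cdots)\,ds$, i.e.\ a one-dimensional version of the paper's time integral, and the phase bound you need on $\mathrm{supp}\,p_-$ reduces to exactly the same quadratic inequality. So your argument works, but it carries the overhead of the finite cover and rotated coordinates, whereas the paper's global phase $x\cdot\xi-tP_h(\xi)$ needs no factorization at all. One small point you should make explicit: the paper first proves the estimate for $p_-\in\mathcal P_-^{(h)}$ (condition \eqref{Conditionp-=0ifxxi>delta} stated with $P_h$), and the passage to $p_-\in\mathcal P_-$ uses that $\nabla P_h/|\nabla P_h|\to\nabla P/|\nabla P|$ uniformly on $\Xi_2$, so any fixed $p_-\in\mathcal P_-$ with parameter $\delta$ lies in $\mathcal P_-^{(h)}$ with a slightly larger $\delta'<1$ for all $h<h'$.
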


\begin{proof}
By (\ref{S4SplitUhxiz}), (\ref{S4DefineUhxz}), $\widetilde U_h(x,z)$ is split into 
$\widetilde U_h(x,z) = \widetilde u^{(1)}_h(x,z) + \widetilde u^{(2)}_h(x,z)$. In the proof of Theorem \ref{TheoremCovergenceoutsideMEh}, we showed that $\|\widetilde u_h^{(1)}(x,E)\|_{L^2({\bf R}^d)} \leq C\|f\|_{m,s}$. Therefore we prove the lemma for $\widetilde u_h^{(2)}(x,E+i0)$.

If $\xi \not\in M_{E,h}$ we have $\widehat U_h(\xi,E) = \widehat F_h(\xi)/(P_h(\xi) - E)$, which implies that
 for any $\omega \in S^{d-1}$, $(\omega,\xi) \not\in WF^{\ast}(\widehat u_h^{(2)})$ if $\xi \not\in M_{E,h}$. Take any $\xi^{(0)} \in M_{E,h}$ and $\epsilon > 0$, and $p_-(x,\xi) \in \mathcal P_-^{(h)}$ satisfying ${\rm supp}_{\xi} p_-(x,\xi) \subset 
 \{|\xi - \xi^{(0)}| < \epsilon\}$. By virtue of (\ref{Conditionp-=0ifxxi>delta}), for $t > 0$, we have on the support of $p_-(x,\xi)$ 
 \begin{equation}
 \begin{split}
 |\nabla_{\xi}\big(x\cdot\xi - tP_h(\xi)\big)|^2 &\geq |x|^2 - 2\delta t|x||\nabla_{\xi}P_h(\xi)| + t^2|\nabla_{\xi}P_h(\xi)|^2\\
 &\geq  (1 - \delta)\big(|x|^2 + t^2|\nabla_{\xi}P_h(\xi)|^2\big).
  \end{split}
  \nonumber
 \end{equation}
 Let $\widehat G_h (\xi) = \chi_2(\xi)\widehat F_h(\xi)$ and $\widetilde G_h = h^{d/2}\mathcal F_{cont}^{-1}\widehat G_h$.  Using the relation
$$
e^{i(x\cdot\xi - tP_h(\xi))} =-i \frac{\nabla_{\xi}\big(x\cdot\xi - tP_h(\xi)\big)}{|\nabla_{\xi}\big(x\cdot\xi - tP_h(\xi)\big)|^2}\cdot e^{i(x\cdot\xi - tP_h(\xi))},
$$
we have by integration by parts
\begin{equation}
\|\langle x\rangle^{-\alpha}p_-(x,D_x)e^{-itP_h(D_x)}\widetilde G_h\| \leq C\|q_h(t,x,D_x)e^{-itP_h(D_x)}\widetilde G_h\|,
\nonumber
\end{equation}
where $q_h(t,x,D_x)$ is a $\Psi$DO with symbol $q_h(t,x,\xi)$ satisfying
$$
|q_h(t,x,\xi)| \leq C\big(1 + t + |x|\big)^{-2}\langle x \rangle^{-\alpha}
$$
together with its derivatives. This implies
$$
\|p_-(x,D_x)e^{-itP_h(D_x)}\widetilde G_h\|_{-\alpha} \leq 
C(1 + t)^{-1-\epsilon_0}\|f\|_{m,s}
$$
for some $\epsilon_0 > 0$. Noting
$$
\int_0^{\infty}e^{-it(P_h(\xi) - E - i\epsilon)} = -i \big(P_h(\xi) - E - i\epsilon\big)^{-1}, \quad \epsilon > 0,
$$
and letting $\epsilon \to 0$, we obtain (\ref{P-suhL2s-1leqfL2s}) for $p_- \in \mathcal P_-^{(h)}$. This implies (\ref{vhwavefrontset}).
\end{proof}

We say that the solution of the equation $(P(D_x) - E)u = f$ satisfies the outgoing radiation condition if $p_-(x,D_x) u \in B^{\ast}_0({\bf R}^d)$ for any $p_- \in \mathcal P_-$. We call $u$  an outgoing solution.


\subsection{Convergence near $M_{E,h}$}
We have now constructed a solution to the discrete Schr{\"o}dinger equation
\begin{equation}
\big( - P_h(\xi) - E\big)\widehat U_h(\xi,E + i0) = \widehat F_h(\xi)
\nonumber
\end{equation}
having uniform estimates with respect to $0 < h < h_0$. Assume that
\begin{equation}
f \in H^{m,s}({\bf R}^d), \quad s > d + 1, \quad 
m > \frac{d}{2} + 1.
\nonumber
\end{equation}
Then as $h \to 0$ 
\begin{equation}
\widehat f_h(\xi) \to \widehat f(\xi) \quad {\rm pointwise}.
\nonumber
\end{equation}
In Theorem \ref{TheoremCovergenceoutsideMEh} the part outside of $M_{E,h}$ was shown to converge to the free Schr{\"o}dinger equation
$$
(- P(\xi) - E)U^{(1)} = \chi_1(\xi)\Pi_0(\xi)\widehat f(\xi),
$$
$ \widehat f(\xi)$ being the Fourier transform of $f$. 
We consider the part near $M_{E,h}$, i.e. $\widehat v_h(\xi, E + i0)$ constructed in the previous section.

In view of (\ref{vhlimitx1toplusinfty}) and the fact that 
$P_h(\xi) \to P(\xi)$, we then see that $\widetilde v_h(x)$ converges pointwise to some $\widetilde u(x)$, which satisfies the Schr{\"o}dinger equation
\begin{equation}
(P(D_x) - E)u = f_2,
\nonumber
\end{equation}
where $\widehat f_2(\xi) = \chi_2(\xi)\Pi_0(\xi)\widehat f(\xi)$. 
We have proven that $\widetilde v_h(x,E + i0)$ has the uniform estimates
\begin{equation}
\|\widetilde v_h\|_{\mathcal B^{\ast}({\bf R}^d)} < C,
\nonumber
\end{equation}
and satisfies the outgoing radiation condition uniformly in $0 < h < h_0$, i. e. (\ref{OutgoingradCond}) is satisfied uniformly in $0 < h < h_0$. Since 
$$
C^{-1}|\xi|^{\gamma} \leq P_h(\xi) \leq C|\xi|^{\gamma}
$$
holds, $\widetilde v_h$ is in $H^{\gamma}_{loc}({\bf R}^d)$ uniformly in $0 < h < h_0$. Then one can select a subsequence $h_j \to 0$ such that $\widetilde v_{h_j}$ converges to $v \in L^2_{loc}$ and $v$ satisfies 
\begin{itemize}
\item the Schr{\"o}dinger equation 
$
\big(P(D_x) - E\big)v = f_2,
$
\item 
$
v \in \mathcal B^{\ast}({\bf R}^d),
$
\item 
 the radiation condition. 
\end{itemize}
Here, we introduce a new assumption.

 \medskip
 \noindent
({\bf U-1}). {\it The solution of the equation
 \begin{equation}
 (P(D_x) - E)u = f \in \mathcal B
 \noindent
 \end{equation}
 satisfying $u \in \mathcal B^{\ast}$ and the radiation condition is unique. }
 
 \medskip
  Then $\{v_h\}_{0<h<h_0}$ converges as $h \to 0$. We have thus proven the following theorem.


\begin{theorem}
\label{TheoremwidetildeuhtowidetildeufordiscreteCase1}
Assume that $f \in H^{m,s}({\bf R}^d)$ for some $s > d + 1$ and $m > \big[d/2\big] + 1$. Assume (B-1), (B-2-1), (B-3), (B-4) and (U-1). 
Let $u_h(n,E + i0)$ be an outgoing solution to the gauge transformed equation
$$
(- \mathcal G_h^{\ast}\Delta_{disc,h}\mathcal G_h - E)u_h = f_h \quad {\it on} \quad {\bf Z}^d,
$$
where $f_h(n) = f(hn)$. 
 We put $\widehat u_h(\xi,E+i0) = \mathcal F_{disc,h}u_h$, and
\begin{equation}
\widehat v_h(\xi, E + i0) = \chi_d(h\xi)\Pi_h(\xi)\widehat u_h(\xi, E + i0).
\nonumber
\end{equation}
\begin{equation}
\widetilde v_h(x,E+i0) = \big(\frac{h}{2\pi}\big)^{d/2}\int_{{\bf T}^d_h}e^{ix\cdot\xi}\widehat u_h(\xi,E+i0)d\xi.
\nonumber
\end{equation}
Then the strong limit 
$$
\lim_{h\to0}\widetilde v_h(x,E+i\epsilon) = \widetilde v(x,E+i0) \quad \text{exists in} \quad L^{2, - 1/2 -\epsilon}({\bf R}^d), \quad \epsilon > 0,
$$
and $\widetilde v(x,E+i0)$ is the unique outgoing solution to the Schr{\"o}dinger equation 
$$
(P(D_x) - E)\widetilde v = g \quad {\it on} \quad {\bf R}^d,
$$
where $(\mathcal F_{cont} g)(\xi) = \Pi_0(\xi)(\mathcal F_{cont} f)(\xi)$, and $\widetilde v$ satisfies the radiation condition
$$
p_-(x,D_x)\widetilde v \in L^{2,- 1/2 + \epsilon}({\bf R}^d), \quad p_- \in \mathcal P_-.
$$
\end{theorem}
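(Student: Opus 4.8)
The plan is to assemble the theorem from the three ingredients already developed in this section: the off-eigenspace estimate, the convergence outside the characteristic surface, and the uniform estimates near it. First I would record that, by the block-diagonalization in Lemma~\ref{BlockdiagonalLemma} and the gauge transformation in \S\ref{subsec3.2GaugeTransform}, the Fourier transform of any outgoing solution $u_h$ decomposes as $\widehat u_h = \Pi_h(\xi)\widehat u_h + (1-\Pi_h(\xi))\widehat u_h$, and the bound \eqref{S3Lh-z-11-Ph(xi)leqChnu}, together with the interpolation estimate from Lemma~\ref{AppendPointwiseConv}, shows the component $(1-\Pi_h(\xi))\widehat u_h$ is $O(h^{\nu/2})$ in $L^2$, hence irrelevant in the limit. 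So the whole analysis reduces to $\widehat v_h(\xi,E+i0)=\chi_d(h\xi)\Pi_h(\xi)\widehat u_h(\xi,E+i0)$, which solves the scalar equation $(P_h(\xi)-E)\widehat v_h=\widehat F_h$ with $\widehat F_h=\chi_{{\bf T}^d}(h\xi)\chi_d(h\xi)\Pi_h(\xi)\widehat f_h$.

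Next I would split $\widehat v_h = \widehat u_h^{(1)} + (\text{near-}M_{E,h}\text{ part})$ using the cutoffs $\chi_1,\chi_2$ of \eqref{Definechi1(xi)}. For the first piece, Theorem~\ref{TheoremCovergenceoutsideMEh} gives convergence in $L^{2,-\delta}({\bf R}^d)$ for every $\delta>0$ to $\widetilde u^{(1)}$, which by construction solves $(P(D_x)-E)u^{(1)}=f_1$ with $\widehat f_1=\chi_1\Pi_0\widehat f$. For the second piece, I would invoke Lemma~\ref{Lemmavh(xi,z)limit} for the ${\mathcal B}^{\ast}$-bound \eqref{Uniforminepsilonbound} and the weak-$\ast$ limit as $\epsilon\to0$, and the subsequent lemma for the uniform radiation-condition estimate \eqref{P-suhL2s-1leqfL2s} together with the wave-front inclusion \eqref{vhwavefrontset}. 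Since $C^{-1}|\xi|^\gamma\le P_h(\xi)\le C|\xi|^\gamma$ on $\Xi_2$, the family $\widetilde v_h$ is bounded in $H^\gamma_{loc}$, so by Rellich one extracts a subsequence $h_j\to0$ converging in $L^2_{loc}$ to some $v$, and passing to the limit in the equation (using $P_h\to P$, $\Pi_h\to\Pi_0$, $\widehat f_h\to\widehat f$ pointwise from Lemma~\ref{AppendPointwiseConv}), in the ${\mathcal B}^{\ast}$-bound, and in the radiation condition \eqref{OutgoingradCond} (which holds uniformly in $h$), one sees that $v$ is an outgoing ${\mathcal B}^{\ast}$-solution of $(P(D_x)-E)v=f_2$, $\widehat f_2=\chi_2\Pi_0\widehat f$.

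The crux is then the uniqueness assumption (U-1): any two outgoing ${\mathcal B}^{\ast}$-solutions of $(P(D_x)-E)u=f$ with $f\in{\mathcal B}$ coincide, so the accumulation point $v$ is independent of the subsequence, and the compactness argument from elementary topology (a precompact sequence in a complete metric space with a unique accumulation point converges) upgrades this to convergence of the full family $\widetilde v_h$. Adding back the $\widehat u_h^{(1)}$ piece, $\widetilde v_h(x,E+i0)\to\widetilde v(x,E+i0)$, the unique outgoing solution of $(P(D_x)-E)\widetilde v=g$ with $(\mathcal F_{cont}g)(\xi)=\Pi_0(\xi)(\mathcal F_{cont}f)(\xi)$; the strong topology is $L^{2,-1/2-\epsilon}$ because the ${\mathcal B}^{\ast}$-bound combined with the radiation-condition estimate \eqref{P-suhL2s-1leqfL2s} (valid for $0<\alpha<1/2$) localizes the limit in that weighted space, and the radiation condition $p_-(x,D_x)\widetilde v\in L^{2,-1/2+\epsilon}$ passes from the uniform estimate. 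The main obstacle I anticipate is the bookkeeping in passing the radiation condition to the limit: one must ensure that the $\Psi$DO classes $\mathcal P_-^{(h)}$ (depending on $P_h$) and $\mathcal P_-$ (depending on $P$) are comparable as $h\to0$—concretely, that a fixed $p_-\in\mathcal P_-$ is, for $h$ small, approximable by elements of $\mathcal P_-^{(h)}$ since $\nabla_\xi P_h\to\nabla_\xi P$ uniformly on $\Xi_2$—so that the uniform bound \eqref{P-suhL2s-1leqfL2s} genuinely controls $p_-(x,D_x)v$ for the limit operator $P(D_x)$; everything else is routine once the three lemmas of this section are in hand.
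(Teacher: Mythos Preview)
Your proposal is correct and follows essentially the same route as the paper: the decomposition via $\Pi_h$ and $\chi_1,\chi_2$, Theorem~\ref{TheoremCovergenceoutsideMEh} for the far piece, Lemma~\ref{Lemmavh(xi,z)limit} and the wave-front lemma for the near piece, then Rellich compactness plus (U-1) to upgrade subsequential to full convergence. The obstacle you flag about $\mathcal P_-^{(h)}$ versus $\mathcal P_-$ is genuine and the paper itself glosses over it; your remedy (uniform convergence of $\nabla_\xi P_h\to\nabla_\xi P$ on $\Xi_2$, so that a fixed $p_-\in\mathcal P_-$ lies in $\mathcal P_-^{(h)}$ for $h$ small) is the correct one.
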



\subsection{Dirac equation}
\label{S4.5Dirac}
We next consider the case (B-2-2). Take $\xi^{(0)} \in \big(\mathcal K/h\big)\setminus\{0\}$ arbitraily. 
By Lemma \ref{BlockdiagonalLemma} there exists a neighborhood $\mathcal U$ of $\xi^{(0)}$ on which 
$\Pi_h(\xi)$ is into a sum 
$$
\Pi_h(\xi) = \Pi_h^{(+)}(\xi) + \Pi_h^{(-)}(\xi),
$$
where $\Pi_h^{(\pm)}(\xi)$ is the projection associated with the characteristic root $\pm P_h(\xi)$. Thus
\begin{equation}
\begin{split}
& \big(\mathcal L_h(e^{-ih(\xi + d_h)}) - z)^{-1} - z\big)^{-1}\Pi_h(\xi)  \\
& = 
\big(P_h(\xi)  - z\big)^{-1}\Pi_h^{(+)}(\xi) + \big(-P_h(\xi)  - z\big)^{-1}\Pi_h^{(-)}(\xi).
\end{split}
\nonumber
\end{equation}
The 1st term of the right-hand side is treated in the same way as in the previous subsection, and the 2nd term is 
easier to deal with since $- P_h(\xi) - E \neq 0$. We have thus obtained the following theorem.


\begin{theorem}
\label{TheoremwidetildeuhtowidetildeufordiscreteCase2}
Assume that $f \in H^{m,s}({\bf R}^d)$ for some $s > d + 1$ and $m > [d/2] + 1$. Assume (B-1), (B-2-2), (B-3), (B-4) and (U-1). 
Let $u_h(n,E + i0)$ be an outgoing solution to the gauge transformed equation
$$
(- \mathcal G_h^{\ast}\Delta_{disc,h}\mathcal G_h - E)u_h = f_h \quad {\it on} \quad {\bf Z}^d,
$$
where $f_h(n) = f(hn)$. 
 We put $\widehat u_h(\xi,E+i0) = \mathcal F_{disc,h}u_h$, and
\begin{equation}
\widehat v_h(\xi, E + i0) = \chi_d(h\xi)\Pi_h(\xi)\widehat u_h(\xi, E + i0).
\nonumber
\end{equation}
\begin{equation}
\widetilde v_h(x,E+i0) = \big(\frac{h}{2\pi}\big)^{d/2}\int_{{\bf T}^d_h}e^{ix\cdot\xi}\widehat u_h(\xi,E+i0)d\xi,
\nonumber
\end{equation}
\begin{equation}
(\mathcal F_{cont} g^{(+)})(\xi) = \Pi_0^{(+)}(\xi)(\mathcal F_{cont} f)(\xi).
\nonumber
\end{equation}
Then the strong limit
$$
\lim_{h\to0}\widetilde v_h(x,E+i\epsilon) = \widetilde v(x,E+i0) \quad \text{exists in} \quad L^{2, - 1/2 -\epsilon}({\bf R}^d), \quad \epsilon > 0.
$$
Here $\widetilde v(x,E+i0)$ is split into two parts
$$
\widetilde v(x,E+i0) = \widetilde v^{(+)}(x,E+i0) + \widetilde v^{(-)}(x,E+i0),
$$
 $\widetilde v^{(+)}(x,E+i0)$  being the unique solution to the Schr{\"o}dinger equation 
$$
(P(D_x) - E)\widetilde v^{(+)} = g^{(+)} \quad {\it on} \quad {\bf R}^d,
$$
 satisfying the outgoing radiation condition
$$
p_-(x,D_x)\widetilde v^{(+)} \in L^{2,- 1/2 + \epsilon}({\bf R}^d), \quad p_- \in \mathcal P_-,
$$
and $\widetilde v^{(-)}(x,E+i0)$  is the unique $L^2$-solution to the Schr{\"o}dinger equation 
$$
(-P(D_x) - E)\widetilde v^{(-)} = g^{(-)} \quad {\it on} \quad {\bf R}^d.
$$
\end{theorem}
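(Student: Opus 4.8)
The plan is to carry out the decomposition sketched just before the statement, handling the $(+)$-component by an appeal to Theorem~\ref{TheoremwidetildeuhtowidetildeufordiscreteCase1} and the $(-)$-component by the elementary argument behind Theorem~\ref{TheoremCovergenceoutsideMEh}. First I would strip off the inessential parts of $\widehat u_h$: writing $\widehat u_h = \Pi_h(\xi)\widehat u_h + \big(1 - \Pi_h(\xi)\big)\widehat u_h$ and inserting the cut-offs $\chi_d(h\xi)$, $\chi_{{\bf T}^d}(h\xi)$, the term $\big(1 - \Pi_h(\xi)\big)\widehat v_h$ is $O(h^{\nu})$ in $L^2({\bf T}^d_h)$ by (\ref{S3Lh-z-11-Ph(xi)leqChnu})--(\ref{S4-PhxivhCh-nux}), and the term carrying $1 - \chi_d(h\xi)$ is supported in $|\xi| \gtrsim 1/h$; both are negligible in every fixed $L^{2,-\delta}({\bf R}^d)$, since $\widehat f_h$ is uniformly bounded in a weighted $L^2({\bf T}^d_h)$ by Lemma~\ref{AppendPointwiseConv}. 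It then remains to analyse $\chi_d(h\xi)\Pi_h(\xi)\big(\mathcal L_h(e^{-ih(\xi + d_h)}) - z\big)^{-1}\chi_d(h\xi)\widehat f_h(\xi)$.

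Next I would use that, by (B-1) and (B-3), $P_h(\xi) > 0$ for $0 \neq \xi \in \mathcal K/h$, so the characteristic roots $\pm P_h(\xi)$ of $\mathcal L_h$ coincide only at the Dirac point $\xi = 0$; away from it the eigenprojections $\Pi_h^{(\pm)}(\xi)$ are smooth, $\Pi_h = \Pi_h^{(+)} + \Pi_h^{(-)}$, and applying the Riesz integral around $+P_h$ alone (isolated from $-P_h$ away from $\xi=0$ and from the rest by (B-2-2)) together with (B-4) gives their convergence to $\Pi_0^{(\pm)}$ with all derivatives, uniformly on compact subsets of $(\mathcal K/h)\setminus\{0\}$. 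I would then split with $\chi_1, \chi_2$ as in (\ref{Definechi1(xi)}). On ${\rm supp}\,\chi_1$, which contains a neighbourhood of $\xi = 0$ but avoids $M_{E,h}$, one has for ${\rm Re}\, z \in I$ both $|P_h(\xi) - z| \geq C(1 + |\xi|^{\gamma})$ by (\ref{Ph(xi)geqxi2}) and $|-P_h(\xi) - z| \geq P_h(\xi) + {\rm Re}\, z \geq C(1 + |\xi|^{\gamma})$, the latter being easier because $-P_h(\xi) - z$ stays in a half-plane bounded away from $0$; near $\xi = 0$ one keeps only the smooth projection $\Pi_h$ and passes to $\Pi_h^{(\pm)}$ only where this is legitimate. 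Repeating verbatim the $H^{\gamma}$-boundedness plus compactness-and-unique-limit argument from the proof of Theorem~\ref{TheoremCovergenceoutsideMEh}, the $\chi_1$-part converges in $L^{2,-\delta}({\bf R}^d)$, $\delta > 0$, to $\mathcal F_{cont}^{-1}$ of $\chi_1(\xi)\big((P(\xi) - E)^{-1}\Pi_0^{(+)}(\xi) + (-P(\xi) - E)^{-1}\Pi_0^{(-)}(\xi)\big)\widehat f(\xi)$.

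For the $\chi_2$-part, supported near $M_{E,h}$ and bounded away from both $\xi = 0$ and $\xi = \infty$, I would use the decomposition
$$
\chi_2(\xi)\big(\mathcal L_h(e^{-ih(\xi + d_h)}) - z\big)^{-1}\Pi_h(\xi)\widehat f_h(\xi) = \frac{\chi_2(\xi)\Pi_h^{(+)}(\xi)\widehat f_h(\xi)}{P_h(\xi) - z} + \frac{\chi_2(\xi)\Pi_h^{(-)}(\xi)\widehat f_h(\xi)}{- P_h(\xi) - z},
$$
legitimate by Lemma~\ref{BlockdiagonalLemma} and the smoothness of $\Pi_h^{(\pm)}$ on ${\rm supp}\,\chi_2$. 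The first term is of the same form as $\widehat u_h^{(2)}$ in (\ref{Defineu(1)h(xi)}) with $\Pi_h$ replaced by the equally smooth $\Pi_h^{(+)}$: Lemma~\ref{Lemmavh(xi,z)limit} supplies the uniform ${\mathcal B}^{\ast}$-bound, the subsequent wave-front lemma supplies the uniform outgoing radiation condition and the uniform $H^{\gamma}_{loc}$-bound, and then compactness together with (U-1) forces convergence in $L^{2,-1/2-\epsilon}({\bf R}^d)$ to the unique outgoing solution $\widetilde v^{(+)}$ of $(P(D_x) - E)\widetilde v^{(+)} = g^{(+)}$, with $p_-(x,D_x)\widetilde v^{(+)} \in L^{2,-1/2+\epsilon}({\bf R}^d)$ for $p_- \in \mathcal P_-$. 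The second term has $|-P_h(\xi) - z| = P_h(\xi) + {\rm Re}\, z \geq \inf I > 0$ uniformly on ${\rm supp}\,\chi_2$, so $\lim_{\epsilon \to 0}$ is immediate and, exactly as in Theorem~\ref{TheoremCovergenceoutsideMEh}, the term converges in $L^2({\bf R}^d)$ to the unique $L^2$-solution $\widetilde v^{(-)}$ of $(-P(D_x) - E)\widetilde v^{(-)} = g^{(-)}$; uniqueness is trivial since $-P(\xi) - E$ never vanishes, so $-P(D_x) - E$ is injective on $L^2$. Summing, the $\big(1 - \Pi_h\big)$- and $\big(1 - \chi_d\big)$-pieces vanish, the $\chi_1$-piece and the $\chi_2$-$(-)$-piece converge in $L^2({\bf R}^d) \hookrightarrow L^{2,-1/2-\epsilon}({\bf R}^d)$, the $\chi_2$-$(+)$-piece converges in $L^{2,-1/2-\epsilon}({\bf R}^d)$, the smooth non-resonant remainders produced when the global $\Pi_h^{(\pm)}$ are cut by $\chi_2$ are absorbed into the $L^2$-convergent part, and since $g^{(+)} + g^{(-)}$ has Fourier transform $\big(\Pi_0^{(+)} + \Pi_0^{(-)}\big)\widehat f = \Pi_0\widehat f$, the limit is the asserted $\widetilde v^{(+)} + \widetilde v^{(-)}$.

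I expect the main difficulty to be organisational rather than analytic: matching the local block-diagonalisation of Lemma~\ref{BlockdiagonalLemma} with a global partition of unity along the compact surface $M_{E,h}$, verifying that $\Pi_h^{(\pm)}$ are smooth and convergent (with all derivatives, uniformly in $h$) precisely where they are invoked — namely on ${\rm supp}\,\chi_2$, away from the Dirac point — and checking that every remainder generated by the various cut-offs is non-resonant and hence converges in $L^2$. Once this bookkeeping is in place, the $(+)$-part is Theorem~\ref{TheoremwidetildeuhtowidetildeufordiscreteCase1} applied verbatim and the $(-)$-part is strictly easier than Theorem~\ref{TheoremCovergenceoutsideMEh}.
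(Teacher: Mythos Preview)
Your proposal is correct and follows essentially the same approach as the paper: the paper's own argument in \S\ref{S4.5Dirac} is a two-sentence sketch that splits $\Pi_h = \Pi_h^{(+)} + \Pi_h^{(-)}$ via Lemma~\ref{BlockdiagonalLemma}, treats the $(+)$-term verbatim as in the (B-2-1) case leading to Theorem~\ref{TheoremwidetildeuhtowidetildeufordiscreteCase1}, and dismisses the $(-)$-term as ``easier to deal with since $-P_h(\xi) - E \neq 0$''. Your write-up simply fills in the bookkeeping the paper leaves implicit---the $(1-\Pi_h)$ and $(1-\chi_d)$ remainders, the $\chi_1/\chi_2$ split, the Riesz-integral justification for the smoothness and convergence of $\Pi_h^{(\pm)}$ away from the Dirac point, and the care needed near $\xi=0$ where only the full $\Pi_h$ is smooth---so there is no substantive difference.
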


In the application, we  encounter the case in which $d = 2$ and
\begin{equation}
P(\xi) = \Big(\sum_{i,j=1}^2a_{ij}\xi_i\xi_j\Big)^{1/2},
\nonumber
\end{equation}
where $\big(a_{ij}\big)$ is a positive definite symmetric matrix.
We can make a linear transformation $\xi \to \eta$ so that
$$
P(\xi) = \big(\eta_1^2 + \eta_2^2\big)^{1/2}.
$$
Let us note the following equality for $w \in {\bf C}$
\begin{equation}
\left(
\begin{array}{cc}
0& \overline{w} \\
w & 0
\end{array}
\right) = P^{\ast}\left(
\begin{array}{cc}
|w|& 0 \\
0 & - |w|
\end{array}
\right)P,
\nonumber
\end{equation}
\begin{equation}
P = \frac{1}{\sqrt2}\left(
\begin{array}{cc}
1& \frac{|w|}{w} \\
-1 & \frac{|w|}{w}
\end{array}
\right).
\nonumber
\end{equation}
Letting $w =  \eta_1 + i\eta_2$, we have
\begin{equation}
\left(
\begin{array}{cc}
0 & \overline{w} \\
w & 0
\end{array}
\right) = 
\left(
\begin{array}{cc}
0 & 1 \\
1 & 0
\end{array}
\right)\eta_1 +
\left(
\begin{array}{cc}
0 & -i \\
i & 0
\end{array}
\right)\eta_2.
\nonumber 
\end{equation}
Replacing $\eta_1$ by $\frac{1}{i}\frac{\partial}{\partial y_1}$, and $\eta_2$ by $\frac{1}{i}\frac{\partial}{\partial y_2}$, where
$x \to y$ is a linear transformation such that $x\cdot\xi = y\cdot\eta$, we have thus seen that the lattice Hamiltonian $- \Delta_{\Gamma_h}^{(0)}$ converges to the 2-dimensional massless Dirac operator.


\subsection{On the assumption (U-1)}
The assumption (U-1) holds true for the above type of $P(D_x)$. Assume that
$$
P(\xi) = \Big(\sum_{i,j=1}^da_{ij}\xi_i\xi_j\Big)^{1/2},
$$
where $\big(a_{ij}\big)$ is a positive definite symmetric matrix. 
It is well-known that if $V(x)$ is a real-valued function decaying to 0 at infinity, e.g. there exists $\epsilon>0$ such that $\partial_x^{\alpha}V(x) = O(|x|^{-|\alpha| - \epsilon_0})$ for all $\alpha$, then (U-1) holds for the equation 
$(P(D_x)^2 + V(x) -E)u = f$, where $E > 0$.

\begin{lemma}
Let $u \in \mathcal B^{\ast}({\bf R}^d)$ be a solution of the equation $(P(D_x) - E)u = 0$ for some $E > 0$. If $u$ satisfies the radiation condition, then $u = 0$.
\end{lemma}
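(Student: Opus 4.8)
The plan is to show that a radiating $\mathcal B^*$-solution of $(P(D_x)-E)u=0$ with $P(\xi)=(\sum a_{ij}\xi_i\xi_j)^{1/2}$ must vanish, and the natural route is to reduce to the classical uniqueness theorem for the Helmholtz equation. First I would perform the linear change of variables (already recorded in \S\ref{S4.5Dirac}) that turns $P(\xi)$ into $|\xi|$; this is a bijective affine substitution, so it preserves the class $\mathcal B^*$, the membership of $f$ in $\mathcal B$, and (up to a fixed linear distortion of the cone directions, hence of the set $\mathcal P_-$) the radiation condition. Thus it suffices to treat $P(\xi)=|\xi|$, i.e. $(\sqrt{-\Delta}-E)u=0$.

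Next I would exploit that $\sqrt{-\Delta}$ and $-\Delta$ have the same eigenspaces: if $(\sqrt{-\Delta}-E)u=0$ with $E>0$, then applying $\sqrt{-\Delta}+E$ (a bounded-below operator with symbol $|\xi|+E>0$, which acts continuously on $\mathcal S'$ and preserves the relevant microlocal structure) gives $(-\Delta-E^2)u=0$. So $u$ is a $\mathcal B^*$-solution of the ordinary Helmholtz equation at energy $E^2>0$. The key point to check is that the radiation condition in the sense of Definition~\ref{DefinitionPsiDORadCond}, written for $P(\xi)=|\xi|$, coincides with the Sommerfeld–type (incoming/outgoing) radiation condition for $-\Delta-E^2$: indeed $\nabla_\xi P(\xi)=\xi/|\xi|$ points in the same direction as $\nabla_\xi|\xi|^2=2\xi$, so the cones appearing in \eqref{Conditionp-=0ifxxi>delta} are exactly the ones used in the microlocal formulation of the outgoing condition for the Helmholtz operator. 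Hence $u$ is an outgoing $\mathcal B^*$-solution of $(-\Delta-E^2)u=0$.

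At this stage I would invoke the classical Rellich–Vekua uniqueness theorem (equivalently, the limiting absorption principle / uniqueness part of Agmon–Hörmander \cite{AgHo76}, \cite{HoVol2}): the only solution $u\in\mathcal B^*({\bf R}^d)$ of $(-\Delta-E^2)u=0$ satisfying the outgoing radiation condition is $u=0$. This gives $u=0$ and completes the proof. One small technical wrinkle is that the hypothesis only gives $u\in\mathcal B^*$ a priori, not $u\in L^2$; but this is precisely the generality in which the Rellich-type theorem is available, since a radiating $\mathcal B^*$-solution of Helmholtz automatically has vanishing far-field Herglotz kernel and then Rellich's lemma forces compact support, and unique continuation for $-\Delta-E^2$ (a second-order elliptic operator with smooth coefficients) does the rest.

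**The main obstacle**, I expect, is the bookkeeping in translating Definition~\ref{DefinitionPsiDORadCond} into the standard outgoing radiation condition for $-\Delta - E^2$ and checking that the linear change of variables does not disturb it — i.e. verifying that the microlocal formulation via $\Psi$DO's cut off to the ``incoming'' cone $\{x\cdot\nabla_\xi P_h(\xi)/|\nabla_\xi P_h| > \delta\}$ really is equivalent to the classical one, uniformly enough that the Agmon–Hörmander uniqueness result applies verbatim. The rest — the substitution, applying $\sqrt{-\Delta}+E$, and quoting Rellich–Vekua — is routine.
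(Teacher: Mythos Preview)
Your proposal is correct and follows essentially the same route as the paper: apply $P(D_x)+E$ to reduce $(P(D_x)-E)u=0$ to the second-order Helmholtz-type equation $(P(D_x)^2-E^2)u=0$, then invoke the classical uniqueness theorem. The paper's proof is two sentences and does not bother with the preliminary linear change of variables or the explicit check that the radiation conditions for $P$ and $P^2$ coincide---it simply cites the well-known (U-1) for $P(D_x)^2+V$ stated immediately before the lemma---but the substance is identical to what you wrote.
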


\begin{proof}
Multiplying the equation by $P(D_x) + E$, we have $(P(D_x)^2 - E^2)u = 0$. Then the lemma follows.
\end{proof}


\section{Perturbation by a potential}
\label{SectionPotentialPerturbation}
Let us consider the case where the characteristic root $\lambda_{j,h}(\eta)$, defined in (\ref{S3DefinePhxi}), (\ref{Definelambdajheta}), has a unique global minimum. 
Assume that

\medskip
\noindent
({\bf C-1})  \ {\it $\lambda_j(e^{-i\eta}) \geq 0$, and there exists a unique 
$d_1 \in {\bf T}^d$ such that $\lambda_{j}(e^{-id_1}) = 0$. }

\medskip
\noindent
({\bf C-2}) \ $
\displaystyle{
\max_{\eta \in {\bf T}^d}\lambda_{j-1}(e^{-i\eta}) < 0 < 
\min_{\eta\in {\bf T}^d}\lambda_{j+1}(e^{-i\eta}).} $

\medskip
\noindent
({\bf C-3})\  {\it Letting $P_h(\xi) = \lambda_{j,h}(\xi + d_h)$ be as in (\ref{S3DefinePhxi}), we assume that $P_h(\xi) \to P(\xi)$ on ${\bf T}^d$, where
\begin{equation}
P(\xi) = \sum_{|\alpha| = 2m}a_{\alpha}\xi^{\alpha}, 
\nonumber
 \end{equation} 
$m$ being a positive integer. }

\medskip
\noindent
We add a scalar potential $V(x)$ to $- \Delta_{\Gamma_h}$. 
Assume that 

\medskip
\noindent
({\bf V-2}) {\it \   $V(x) \in H^s({\bf R}^d)$ with $s > d/2$ and is compactly supported. }

\medskip
\noindent
We finally assume the uniqueness of solutions to the Schr{\"o}dinger equation.

\medskip
 \noindent
({\bf U-2}) {\it The solution of the equation
 \begin{equation}
 (P(D_x) + V(x) - E)u = f \in \mathcal B
 \noindent
 \end{equation}
 satisfying $u \in \mathcal B^{\ast}$ and the radiation condition is unique. }
 
\medskip
We consider
\begin{equation}
\big(-\Delta_{disc,h} + V_{disc,h} - z\big)u_h = f_h,
\label{S4-PhxivhCh-nu}
\end{equation}
where
\begin{equation}
\big(V_{disc,h}u_h\big)(n) = V(hn) u_h(n).
\nonumber
\end{equation}
Take an open interval 
$$
I \subset\subset \big(0,\min_{\eta\in {\bf T}^d}\lambda_{j+1}(e^{-i\eta})\big),
$$
and assume that ${\rm Re}\, z \in I$. As was discussed in the previous section, we prove

\begin{itemize}
\item fixed $h$ limit (Lemma \ref{Lemmaepsilonyo0witpotential}), 
\item  uniform bound (Lemma \ref{Lemma5.2}), 
\item compactness (Lemma \ref{Lemma5.3}).
\end{itemize}

In \cite{AndIsoMor1} we have proven that there exists the unique solution to the equation (\ref{S4-PhxivhCh-nu}) satisfying the outgoing radiation condition.


\begin{lemma}
\label{Lemmaepsilonyo0witpotential}
Assume that 
$f \in H^{m,s}({\bf R}^d)$ for some $s > d + 1$ and $m > \big[d/2 \big] + 1$. Assume ${\rm Re}\, z \in I$, ${\rm Im}\, z > 0$, and let $u_h(z) = \{u_h(n,z)\}_{n \in {\bf Z}^d}$ be the unique $L^2$-solution to the equation (\ref{S4-PhxivhCh-nu}). 
Then 
$$
\lim_{\epsilon\to0} u_h(E+i\epsilon) = u_h(E+i0)\quad
\text{ in \ \ $L^{2,-s}(\mathbf{Z}^d_h)$, \ \ $s>\tfrac12$},
$$
and $u_h(E+i0)$ is the unique outgoing solution to (\ref{S4-PhxivhCh-nu}) with $z = E$. 
\end{lemma}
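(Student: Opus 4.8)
The plan is to deduce the statement from the limiting absorption principle for the fixed‑mesh operator $-\Delta_{disc,h}+V_{disc,h}$ established in \cite{AndIsoMor1}, after checking that $E\in I$ is a non‑exceptional energy and that the data $f_h$ sits in a weighted space to which that principle applies. For fixed $0<h<h_0$, by (\ref{DefineDeltadisch}) the equation (\ref{S4-PhxivhCh-nu}) is equivalent to $\big(-\Delta_{{\bf \Gamma}_h}+h^{\nu}V_{disc,h}-(E_0+h^{\nu}z)\big)u_h=h^{\nu}f_h$, so that $u_h(z)=h^{\nu}\big(-\Delta_{{\bf \Gamma}_h}+h^{\nu}V_{disc,h}-(E_0+h^{\nu}z)\big)^{-1}f_h$ whenever ${\rm Im}\,z\neq0$. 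Since $V$ is real‑valued, continuous and compactly supported, $V_{disc,h}$ is a bounded (indeed finite‑rank) self‑adjoint multiplication operator, and $h^{\nu}V_{disc,h}$ is a potential of the type covered in \cite{AndIsoMor1}; the scaling constant $h^{\nu}$ is harmless for fixed $h$.

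Next I would verify that the energy is admissible. By ({\bf C-1}), ({\bf C-2}) and the choice $I\subset\subset\big(0,\min_{\eta\in{\bf T}^d}\lambda_{j+1}(e^{-i\eta})\big)$ together with $h_0$ small, the point $E_0+h^{\nu}E$ with $E\in\overline I$ lies strictly inside the $j$-th spectral band of $-\Delta_{{\bf \Gamma}_h}$ and is separated from all other bands, hence avoids the finite threshold set $\mathcal T_0$; moreover $P(\xi)$ is homogeneous of degree $2m$, so $\nabla_{\xi}P(\xi)\neq0$ for $\xi\neq0$, and therefore $\nabla_{\xi}\lambda_{j,h}(\xi+d_h)=\nabla_{\xi}P_h(\xi)\neq0$ on $M_{E,h}$ for $h$ small, which is the analogue of ({\bf A-3}) near $E$. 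By the unique continuation property ({\bf A-4}) and the argument of \cite{AndIsoMor1}, $E_0+h^{\nu}E$ is not an eigenvalue of $-\Delta_{{\bf \Gamma}_h}+h^{\nu}V_{disc,h}$. Thus $E$ is a non‑exceptional energy for the fixed‑$h$ problem, and \cite{AndIsoMor1} provides there both the limiting absorption principle and the uniqueness of the outgoing solution.

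It remains to control the decay of the right‑hand side. By the Sobolev inequality $|f(x)|\leq C\|f\|_{H^{m,s}({\bf R}^d)}\langle x\rangle^{-s}$ (valid since $m>[d/2]$) and since $s>d+1$, for any $\tau\in(1/2,\,s-d/2)$ we have
$$
\|f_h\|^2_{L^{2,\tau}({\bf Z}^d_h)}=h^d\sum_{n\in{\bf Z}^d}\langle hn\rangle^{2\tau}|f(hn)|^2\leq C\|f\|^2_{H^{m,s}({\bf R}^d)}\int_{{\bf R}^d}\langle x\rangle^{2\tau-2s}\,dx<\infty,
$$
so $f_h\in L^{2,\tau}({\bf Z}^d_h)$ with $\tau>1/2$ (the interval is nonempty because $s-d/2>1/2$). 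Applying the discrete limiting absorption principle of \cite{AndIsoMor1} at the non‑exceptional energy $E$ then yields that $u_h(E+i\epsilon)$ converges as $\epsilon\to0$ in $L^{2,-\tau}({\bf Z}^d_h)$; since $\tau$ may be taken arbitrarily close to $1/2$ and $L^{2,-\tau}\hookrightarrow L^{2,-s}$ for $s\geq\tau$, the convergence holds in $L^{2,-s}({\bf Z}^d_h)$ for every $s>1/2$. The limit $u_h(E+i0)$ solves (\ref{S4-PhxivhCh-nu}) with $z=E$, satisfies the outgoing radiation condition, and is the unique such solution, as asserted. The only genuine work is the bookkeeping of the second paragraph — confirming that, after the shift by $E_0$ and the rescaling by $h^{-\nu}$, the window $I$ stays inside the admissible set of \cite{AndIsoMor1}, away from thresholds and embedded eigenvalues; the convergence itself is then an immediate citation.
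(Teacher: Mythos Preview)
Your proposal is correct and matches the paper's approach: the paper itself gives no proof of this lemma, simply remarking just before the statement that ``In \cite{AndIsoMor1} we have proven that there exists the unique solution to the equation (\ref{S4-PhxivhCh-nu}) satisfying the outgoing radiation condition.'' Your write-up is an explicit unpacking of that citation---reducing (\ref{S4-PhxivhCh-nu}) via (\ref{DefineDeltadisch}) to the fixed-$h$ operator $-\Delta_{{\bf \Gamma}_h}+h^{\nu}V_{disc,h}$ at energy $E_0+h^{\nu}E$, checking this energy is non-exceptional (inside the $j$-th band, away from $\mathcal T_0$, with $\nabla_\xi P_h\neq0$ on $M_{E,h}$, and not an embedded eigenvalue by (A-4)), and verifying via Sobolev that $f_h\in L^{2,\tau}({\bf Z}^d_h)$ for some $\tau>1/2$ so that the LAP of \cite{AndIsoMor1} applies.
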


By (C-1) $\lambda_{j}(e^{-i\eta})$ attains a global minimum at $d_1$. Take 
$\chi_d(h\xi) = 1$ in the arguments of \S \ref{Section4Freeequation}. 
Let $u_h = u_h(E + i0)$ be as in Lemma \ref{Lemmaepsilonyo0witpotential}. Recall that 
\begin{equation}
\widehat u_h(\xi) = \mathcal F_{disc,h}u_h,
\quad
\widetilde u_h(x) = \big(\frac{h}{2\pi}\big)^{d/2}\int_{{\bf T}^d_h}e^{ix\cdot\xi}\widehat u_h(\xi,z)d\xi.
\nonumber
\end{equation}
Letting 
\begin{equation}
w_h = \mathcal G_h^{\ast}u_h, \quad 
g_h = \mathcal G_h^{\ast}f_h - V_{disc,h}w_h,
\nonumber
\end{equation}
we consider the gauge transformed equation
\begin{equation}
\big( - \mathcal G_h^{\ast}\Delta_{disc,h}\mathcal G_h - z\big) w_h = g_h.
\label{S5Gaugetransfeq}
\end{equation}
Then 
we have
\begin{equation}
{\widehat w}_h(E + i0) = \big(\mathcal L_h(e^{-ih(\xi + d_h)}) - E - i0\big)^{-1}\widehat g_h.
\nonumber
\end{equation}
Let $\Pi_0$ be the eigenprojection associated with $P_h(\xi)$. 
Arguing in the same way as in (\ref{S3Lh-z-11-Ph(xi)leqChnu}) and  using the assumption that ${\rm Re}\, z \in I$, we have the following inequality
\begin{equation}
\|\big(1 - \Pi_0\big)\widehat w_h(\xi,E+i0)\|_{L^2({\bf T}^d_h)} \leq Ch^{\nu}
\big(\|f\|_{m,s} + \|w_h\|_{-1-\epsilon}\big), 
\label{S4whleqChnu}
\end{equation}
for $\epsilon > 0$. 
Therefore we consider $v_h$, where
\begin{equation}
\widehat v_h(\xi,z) = \Pi_0\widehat w_h(\xi,z).
\nonumber
\end{equation}


\begin{lemma}
\label{Lemma5.2}
Let $u_h$ be as in Lemma \ref{Lemmaepsilonyo0witpotential}. 

\noindent
(1) For any $s$, $1/2 < s < 1$, there exists a constant $C_s > 0$ such that
\begin{equation}
\|\widetilde u_h\|_{L^{2,-s}({\bf R}^d)} \leq C \|\widetilde f_h\|_{L^{2,s}({\bf R}^d)}, \quad \text{for all \ $h$, \ \ $0<h<h_0$.}
\label{UniformestimatewidetildeuhRd}
\end{equation}
(2) Let $p_-(x,D_x) \in P_-$. Then for any $s$, $1/2 < s < 1$, there exists a constant $C_s > 0$ such that
\begin{equation}
\|p_-(x,D_x)\widetilde u_h\|_{L^{2,s-1}({\bf R}^d)} \leq C_s \|\widetilde f_h\|_{L^{2,s}({\bf R}^d)}, \ \ \text{for all \ $h$, \ \ $0<h<h_0$.}
\nonumber
\end{equation}
\end{lemma}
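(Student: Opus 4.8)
The plan is to run for the operator with potential the same limiting-absorption scheme used for the free equation in \S\ref{Section4Freeequation}, treating $V_{disc,h}$ as a relatively compact perturbation, and to obtain the uniform bound in part~(1) by a contradiction argument; part~(2) then follows at once. Following \S\ref{SectionPotentialPerturbation} I would first pass to the gauge transformed equation (\ref{S5Gaugetransfeq}) for $w_h=\mathcal G_h^{\ast}u_h$, with right-hand side $g_h=\mathcal G_h^{\ast}f_h-V_{disc,h}w_h$ (the two multiplication operators $\mathcal G_h$ and $V_{disc,h}$ commute), and split $\widehat w_h=\Pi_0\widehat w_h+(1-\Pi_0)\widehat w_h$. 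By (\ref{S4whleqChnu}) the second component is $O(h^{\nu})$ in $L^2({\bf T}^d_h)$, hence contributes only an $O(h^{\nu})\bigl(\|\widetilde f_h\|_{L^{2,s}}+\|\widetilde u_h\|_{L^{2,-s}}\bigr)$ error to every norm below, so it suffices to control $v_h$ with $\widehat v_h=\Pi_0\widehat w_h=(P_h(\xi)-E-i0)^{-1}\Pi_0\widehat g_h$, up to that error.

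The free estimates of \S\ref{Section4Freeequation}, read in mapping form, give a constant $C$ independent of $0<h<h_0$ (and of ${\rm Re}\,z\in I$, ${\rm Im}\,z\ge0$) such that $\|\widetilde v_h\|_{\mathcal B^{\ast}({\bf R}^d)}$ is bounded by the norm of $\Pi_0\widehat g_h$ measured through the chain in the proof of Lemma~\ref{Lemmavh(xi,z)limit} (ending at (\ref{widetildvhestimaedfromaboveLemma4.2})), and likewise the uniform radiation-condition bound (\ref{vhwavefrontset})--(\ref{P-suhL2s-1leqfL2s}). Since $V$ is supported in a fixed ball $\{|x|\le R\}$ and $g_h=\mathcal G_h^{\ast}f_h-V_{disc,h}w_h$, one gets for $1/2<s<1$
\[
\|\widetilde v_h\|_{\mathcal B^{\ast}}\le C\|\widetilde f_h\|_{L^{2,s}({\bf R}^d)}+C_R\|w_h\|_{L^{2,-s}({\bf Z}^d_h)}\le C\|\widetilde f_h\|_{L^{2,s}}+C_R\|\widetilde u_h\|_{L^{2,-s}({\bf R}^d)},
\]
using that $\mathcal G_h$ preserves moduli, that $V_{disc,h}$ maps $L^{2,-s}({\bf Z}^d_h)$ into $L^{2,s}({\bf Z}^d_h)$ with an $R$-dependent constant, and the Appendix lemmas comparing lattice and interpolated norms. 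Moreover, away from $M_{E,h}$ one has $|P_h(\xi)-E|\ge C(1+|\xi|^{2m})$ (as in (\ref{Ph(xi)geqxi2}), with $\gamma=2m$ by (C-3)), while the part of $\widehat v_h$ near $M_{E,h}$ is supported in the fixed annulus (\ref{MEhisC0xi}); hence $\{\widetilde v_h\}$ is bounded in $H^{\sigma}_{loc}({\bf R}^d)$ for some $\sigma>0$, uniformly in $h$. This is what makes $V_{disc,h}$ relatively compact.

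Now suppose (1) fails as $h\to0$: there exist $h_j\to0$ and data $f^{(j)}\in H^{m,s'}({\bf R}^d)$, $s'>d+1$, so that the outgoing $u_{h_j}=u_{h_j}(E+i0)$ exists by Lemma~\ref{Lemmaepsilonyo0witpotential}, with, after normalization, $\|\widetilde u_{h_j}\|_{L^{2,-s}}=1$ and $\|\widetilde f^{(j)}_{h_j}\|_{L^{2,s}}\to0$. By the previous paragraph $\{\widetilde v_{h_j}\}$ is bounded in $\mathcal B^{\ast}\cap H^{\sigma}_{loc}$; by Rellich's theorem and a diagonal extraction we may assume $\widetilde v_{h_j}\to v$ in $L^2_{loc}$ and weakly-$\ast$ in $\mathcal B^{\ast}$. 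Passing to the limit in the equation --- $\mathcal G_{h_j}^{\ast}f^{(j)}_{h_j}\to0$ in $\mathcal B$; $V_{disc,h_j}w_{h_j}\to Vv$ in $L^2_{loc}$, since $V\in H^s\subset L^{\infty}$ has compact support; $P_{h_j}\to P$ together with its first derivatives locally uniformly on the annulus (\ref{MEhisC0xi}) containing all the surfaces $M_{E,h_j}$; and any $p_-\in\mathcal P_-$ satisfies (\ref{Conditionp-=0ifxxi>delta}) for $\mathcal P^{(h_j)}_-$ with a slightly enlarged $\delta<1$ once $h_j$ is small --- one finds that $v$ solves $(P(D_x)+V(x)-E)v=0$, lies in $\mathcal B^{\ast}$, and satisfies the outgoing radiation condition (the last from the uniform bound (\ref{P-suhL2s-1leqfL2s})). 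By (U-2), $v=0$. Finally, the $\mathcal B^{\ast}$ bound (which also holds for $\widetilde u_{h_j}$ via the moduli-preserving gauge and the $O(h_j^{\nu})$ estimate for the $(1-\Pi_0)$ part) and $s>1/2$ give $\int_{|x|>R}\langle x\rangle^{-2s}|\widetilde u_{h_j}|^2\,dx\le C R^{1-2s}$ uniformly in $j$, while on $\{|x|<R\}$, since $\mathcal G_{h_j}$ preserves moduli and $\widetilde w_{h_j}=\widetilde v_{h_j}+O(h_j^{\nu})\to 0$ in $L^2_{loc}$, the corresponding local norms of $\widetilde u_{h_j}$ also tend to $0$; hence $\widetilde u_{h_j}\to0$ in $L^{2,-s}({\bf R}^d)$, contradicting $\|\widetilde u_{h_j}\|_{L^{2,-s}}=1$. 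This proves (1) for small $h$; for $h$ in a compact subinterval of $(0,h_0)$ the same extraction produces a nontrivial outgoing solution of the homogeneous equation for the operator at a positive mesh size, excluded by the limiting absorption principle of \cite{AndIsoMor1}, so the bound is uniform on $(0,h_0)$.

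With (1) established, part~(2) is a direct consequence: $\|V_{disc,h}w_h\|_{L^{2,s}({\bf Z}^d_h)}\le C_R\|\widetilde u_h\|_{L^{2,-s}}\le C\|\widetilde f_h\|_{L^{2,s}}$ shows that the source $\Pi_0\widehat g_h$ in the $v_h$-equation is bounded in the $L^{2,s}$-scale by $C\|\widetilde f_h\|_{L^{2,s}}$; feeding this into the uniform radiation-condition bound (\ref{P-suhL2s-1leqfL2s}) gives $\|p_-(x,D_x)\widetilde v_h\|_{L^{2,s-1}}\le C_s\|\widetilde f_h\|_{L^{2,s}}$ for every $p_-\in\mathcal P_-$, and transferring from $v_h$ back to $u_h$ through the gauge relation (the $(1-\Pi_0)$ term again being $O(h^{\nu})$) yields the stated estimate. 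The step I expect to be the main obstacle is the uniformity of the compactness argument as $h\to0$: one must verify that the limit $v$ genuinely solves the \emph{continuum} equation $(P(D_x)+V-E)v=0$ \emph{with the outgoing radiation condition}, so that (U-2) applies, and this rests on the joint convergences $P_h\to P$, $\Pi_h\to\Pi_0$ and $\mathcal P^{(h)}_-\to\mathcal P_-$ near the characteristic surface, together with the $h$-uniform local elliptic bounds that turn $V_{disc,h}$ into a relatively compact perturbation.
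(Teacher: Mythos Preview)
Your proposal is correct and follows essentially the same contradiction/compactness scheme as the paper: normalize, split off the $(1-\Pi_0)$ part via (\ref{S4whleqChnu}), extract a limit $v$, show it is an outgoing $\mathcal B^{\ast}$-solution of $(P(D_x)+V-E)v=0$, and invoke (U-2); part~(2) is then fed back through the free radiation-condition bound exactly as the paper does (``follows from the resolvent estimate for $-\Delta_{disc,h}$''). The only cosmetic differences are that the paper obtains local compactness from the appendix a~priori estimate (\ref{AprioriestimateinL2-s}) rather than from the \S\ref{Section4Freeequation} bounds, and it reaches the contradiction by first upgrading $L^2_{loc}$-convergence to $L^{2,-s}$-convergence via a second exponent $1/2<t<s$ (so that $\|v\|_{L^{2,-s}}=1$), whereas you first get $v=0$ and then combine local convergence with the uniform $\mathcal B^{\ast}$ tail bound; both orderings are equivalent.
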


\begin{proof} 
We prove (1). If (\ref{UniformestimatewidetildeuhRd}) does not hold then for any $n \geq 1$ there exists $u_{h_n}, f_{h_n}$  satisfying $(- \Delta_{disc,h_n} + V_{h_n} - E)u_{h_n} = f_{h_n}$ and $\|\widetilde u_{h_n}\|_{L^{2,-s}({\bf R}^d)} \geq n\|\widetilde f_{h_n}\|_{L^{2,s}({\bf R}^d)}$. Dividing by $\|\widetilde u_{h_n}\|_{L^{2,-s}({\bf R}^d)}$, there exist an outgoing $u_{h_n} \in \mathcal B^{\ast}({\bf Z}^d_{h_n})$ and $f_{h_n} \in \mathcal B({\bf Z}^d_{h_n})$ such that 
\begin{equation}
(- \Delta_{disc,h_n} + V_{h_n} - E)u_{h_n} = f_{h_n},
\nonumber
\end{equation}
\begin{equation}
\|\widetilde u_{h_n}\|_{L^{2,-s}({\bf R}^d)} = 1, \quad \|\widetilde f_{h_n}\|_{L^{2,s}({\bf R}^d)} \to 0, \quad n \to \infty.
\nonumber
\end{equation}
We put
$$
w_{h_n} = \mathcal G_{h_n}^{\ast}u_{h_n}, \quad 
{\widehat v}_{h_n}(\xi) = \Pi_0\widehat w_{h_n}(\xi),
$$
$$
\widehat v'_{h_n}(\xi) = \big(1 - \Pi_0\big)\widehat w_{h_n}(\xi).
$$
By (\ref{S4whleqChnu}), we have
\begin{equation}
\widehat v'_{h_n} \to 0, \quad {\rm in} \quad  L^{2,-s}  \quad {\rm for} \quad  s > 1/2.
\label{S5vhnprimeto0}
\end{equation}
By the a priori estimate (\ref{AprioriestimateinL2-s}), one can select a subsequence of $\{v_{h_n}\}$, which is denoted by $\{v_{h_{i}}\}$, such that $\{\widetilde v_{h_{i}}\}$ converges in $L^2_{loc}({\bf R}^d)$. Let $\widetilde u_{h_{i}}(x) \to v(x)$. One can show that 
\begin{equation}
v_{h_i}(x) \to v(x) \quad {\rm in} \quad L^{2,-s}({\bf R}^d).
\label{widetildeuhitow(x)}
\end{equation}
In fact, take $1/2 < t < s$. The resolvent estimate for $- \Delta_{disc,h}$ also holds between $L^{2,t}$ and $L^{2,-t}$. Therefore,
$$
\|\widetilde v_{h_i}\|_{L^{2,-t}({\bf R}^d)} < C,
$$
for a constant $C$ independent of $h_i$. This yields
$$
\int_{|x| > R}(1 + |x|^2)^{-s}|\widetilde v_{h_i}(x)|^2dx \to 0, \quad (R \to \infty)
$$
uniformly with respect to $h_i$. Thus we can  choose $h_i$ so that (\ref{widetildeuhitow(x)}) holds. In particular, we have
\begin{equation}
\|v\|_{L^{2,-s}({\bf R}^d)} = 1.
\nonumber
\end{equation}
Recall that
$$
\widetilde v_{h}(hn) = v_{h}(n). 
$$
Then by the definition of the Riemann integral we have
\begin{equation}
h_i^{d}\sum_{n\in {\bf Z}^d}V_{h_i}(n)v_{h_i}(n)\varphi(h_in) \to \int_{{\bf R}^d}V(x)\widetilde v(x)\varphi(x)dx, 
\quad \forall \varphi \in C_0^{\infty}({\bf R}^d).
\nonumber
\end{equation}

We can assume that $h_i \to h_{\infty} \geq 0$. 
If $h_{\infty} > 0$, we easily arrive at the contradiction, since we have the resolvent estimate for $- \Delta_{disc,h_{\infty}}$.
We  consider the case $h_{\infty} = 0$. In the equation
\begin{equation}
\big(- \Delta_{disc,h_i} + V_{h_i} - E\big)u_{h_i}(n) = f_{h_i}(n),
\nonumber
\end{equation}
we split $u_{h_n}$ into $u_{h_n} = \mathcal G_{h_n}w_{h_n} = 
\mathcal G_{h_n}(v_{h_n} + v'_{h_n})$. Recalling (\ref{S5vhnprimeto0}),  we take the inner product with $\varphi(h_i n)$ and sum up in $n$. Letting $h_i \to 0$, we have
\begin{equation}
(P(D_x) + V - E)v = 0.
\nonumber
\end{equation}
By the well-known results on resolvent estimates for Schr{\"o}dinger equations, we have
\begin{equation}
v \in {\mathcal B}^{\ast}({\bf R}^d), \quad p(x,D_x)v \in L^{2,s-1}({\bf R}^d),
\nonumber
\end{equation}
for $p(x,D_x) \in P_-$ and  $1/2 < s < 1$. Therefore $v$ is a solution to the homogeneous Schr{\"o}dinger equation satisfying the outgoing radiation condition, hence vanishes identically. 
This is a contradiction. We have thus proven (1). The assertion (2) follows from the resolvent estimate for $-\Delta_{disc,h}$. 
\end{proof}


\begin{lemma} 
\label{Lemma5.3}
Let $\widetilde u_h(x,E + i0)$ be as in Lemma \ref{Lemmaepsilonyo0witpotential}.\\
\noindent
(1) The set $\{\widetilde u_h(x,E + i0)\, ; \, 0 < h < h_0\}$ is compact in $L^2_{loc}({\bf R}^d)$. \\
\noindent
(2) The set $\{V_h\widetilde u_h(x,E + i0)\, ; \, 0 < h < h_0\}$ is compact in $L^{2,s}({\bf R}^d)$ for any $s > 0$, where
$V_h$ is defined by
$$
\big(V_hw\big)(x) = V(hn)w(x), \quad {\it if} \quad
x\in hn+[-h/2,h/2]^d.
$$
\end{lemma}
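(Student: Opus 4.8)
The plan is to prove both assertions by the standard compactness argument: derive $h$-uniform bounds in a space that embeds compactly (locally) into $L^{2}$, and then invoke Rellich's selection theorem. For Part (1), I would first pass to the gauge-transformed equation (\ref{S5Gaugetransfeq}), writing $w_{h}=\mathcal G_{h}^{\ast}u_{h}$ and splitting $\widehat w_{h}=\widehat v_{h}+\widehat v_{h}'$ with $\widehat v_{h}=\Pi_{0}\widehat w_{h}$, $\widehat v_{h}'=(1-\Pi_{0})\widehat w_{h}$. By (\ref{S4whleqChnu}) together with the uniform estimate $\|\widetilde u_{h}\|_{L^{2,-s}({\bf R}^{d})}\le C\|f\|_{H^{m,s}({\bf R}^{d})}$ of Lemma \ref{Lemma5.2}(1) (which controls $\|w_{h}\|_{-1-\epsilon}$, $\mathcal G_{h}$ being an isometry of the weighted $\ell^{2}$-spaces), the complementary part satisfies $\widetilde v_{h}'\to0$ in $L^{2}({\bf R}^{d})$. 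It therefore suffices to prove that $\{\widetilde v_{h}\}_{0<h<h_{0}}$ is precompact in $L^{2}_{loc}({\bf R}^{d})$ and then to pass back via $\widetilde u_{h}=\mathcal G_{h}(\widetilde v_{h}+\widetilde v_{h}')$; for the lattices treated in this section $d_{1}=0$, so $\mathcal G_{h}$ is the identity and $\widetilde u_{h}=\widetilde v_{h}+\widetilde v_{h}'$.

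By construction $\widetilde v_{h}$ solves $(P_{h}(D_{x})-z)\widetilde v_{h}=\Pi_{0}(D_{x})\widetilde g_{h}$, where $\widetilde g_{h}$ is the interpolant of $g_{h}=\mathcal G_{h}^{\ast}f_{h}-V_{disc,h}w_{h}$, and $\|\Pi_{0}(D_{x})\widetilde g_{h}\|_{L^{2}({\bf R}^{d})}\le\|g_{h}\|_{L^{2}({\bf Z}^{d}_{h})}$ is bounded uniformly in $h$, since $\|f_{h}\|_{L^{2}({\bf Z}^{d}_{h})}=\|\widetilde f_{h}\|_{L^{2}}\le C\|f\|_{H^{m,s}}$ by Sobolev embedding and, $V$ being compactly supported, $\|V_{disc,h}w_{h}\|_{L^{2}({\bf Z}^{d}_{h})}\le\|V\|_{\infty}\|w_{h}\|_{L^{2}({\rm supp}\,V)}$ is controlled by Lemma \ref{Lemma5.2}(1). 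By (C-3) the symbols $P_{h}$ converge to $P$ together with all derivatives and $P$ is elliptic of order $2m$; using also (C-1) and (C-2) one obtains $|P_{h}(\xi)-z|\ge c\,|\xi|^{2m}$ for $|\xi|\ge R$, with $c,R$ independent of $h$ and of $z$ satisfying ${\rm Re}\,z\in I$, $0\le{\rm Im}\,z\le1$ (the characteristic set $M_{E,h}$ lies in $\{|\xi|<R\}$ and plays no role in interior regularity). Interior elliptic regularity for the Fourier multiplier $P_{h}(D_{x})-z$ then gives, for compact sets $K\subset\subset K'$,
\[
\|\widetilde v_{h}\|_{H^{2m}(K)}\le C_{K,K'}\bigl(\|g_{h}\|_{L^{2}({\bf Z}^{d}_{h})}+\|\widetilde v_{h}\|_{L^{2}(K')}\bigr)\le C_{K,K'}\|f\|_{H^{m,s}},
\]
with $C_{K,K'}$ independent of $h$ (one applies this with $z=E+i\epsilon$ and lets $\epsilon\to0$, invoking the $h$-uniform limiting absorption bounds). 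Hence $\{\widetilde v_{h}\}$ is bounded in $H^{2m}_{loc}({\bf R}^{d})$, and Rellich's theorem yields precompactness in $L^{2}_{loc}({\bf R}^{d})$. This proves (1).

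For Part (2), since $V$ is continuous with compact support, $V_{h}\widetilde u_{h}$ is supported in the fixed compact set $K_{0}={\rm supp}\,V+[-h_{0}/2,h_{0}/2]^{d}$ for all $0<h<h_{0}$, and on functions supported in $K_{0}$ the $L^{2,s}({\bf R}^{d})$- and $L^{2}({\bf R}^{d})$-norms are equivalent; so it is enough to show that $\{V_{h}\widetilde u_{h}\}$ is precompact in $L^{2}({\bf R}^{d})$. Given any sequence $h_{n}$, part (1) yields a subsequence (not relabelled) with $\widetilde u_{h_{n}}\to u$ in $L^{2}(K_{0})$, while $|V_{h_{n}}|\le\|V\|_{\infty}$ and $V_{h_{n}}(x)\to V(x)$ for almost every $x$ if $h_{n}\to0$, because the cube containing $x$ has centre within $\sqrt d\,h_{n}/2$ of $x$ and $V$ is continuous (if $h_{n}\to h_{\infty}>0$ one gets $V_{h_{n}}\to V_{h_{\infty}}$, which is equally harmless). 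From
\[
V_{h_{n}}\widetilde u_{h_{n}}-Vu=V_{h_{n}}(\widetilde u_{h_{n}}-u)+(V_{h_{n}}-V)u,
\]
the first term tends to $0$ in $L^{2}$ by the uniform bound on $V_{h_{n}}$ and the convergence of $\widetilde u_{h_{n}}$, and the second tends to $0$ in $L^{2}$ by dominated convergence ($|(V_{h_{n}}-V)u|^{2}\le4\|V\|_{\infty}^{2}|u|^{2}\in L^{1}$). Thus $\{V_{h}\widetilde u_{h}\}$ is precompact in $L^{2}$, hence in $L^{2,s}$, proving (2).

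The step I expect to be the main obstacle is the $h$-uniform interior regularity bound in Part (1). One must gain $2m$ orders of Sobolev smoothness with constants independent of $h$, and this only becomes available after the gauge transformation and the projection onto the eigenspace of the single characteristic root: only then does the equation take the elliptic form $(P_{h}(D_{x})-z)\widetilde v_{h}=\Pi_{0}(D_{x})\widetilde g_{h}$, with all ellipticity constants uniform by (C-3), whereas the original discrete symbol $\mathcal L_{h}(e^{-ih\xi})$ is merely bounded (of order $0$) and carries no ellipticity. Keeping the constants uniform down to $z=E+i0$ then forces one to combine the elliptic estimate with the $h$-uniform limiting absorption bounds underlying Lemma \ref{Lemma5.2}.
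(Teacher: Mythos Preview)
Your approach is correct in outline but takes a considerably longer route than the paper's. For Part~(1) the paper simply invokes the a~priori estimate of Lemma~\ref{LemmaEstimatesofwidetildeuhinL2-s} in the Appendix, which yields
\[
\|\langle x\rangle^{-s}\widetilde u_h\|_{H^{2}({\bf R}^d)}
\le C_s\bigl(\|\widetilde f_h\|_{L^{2,-s}}+\|\widetilde u_h\|_{L^{2,-s}}\bigr)
\]
directly from the equation $(-\Delta_{disc,h}-z)u_h=f_h-V_{disc,h}u_h$, using only that the scalar symbol of $-\Delta_{disc,h}$ is uniformly comparable to $|\xi|^{2}$ on ${\bf T}^d_h$ (via $\bigl|\tfrac{1}{h}(1-e^{ihk})\bigr|\sim|k|$ for $|k|\le\pi/h$) together with commutator estimates of the shift operators against the weight $\langle hn\rangle^{-s}$. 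Combined with Lemma~\ref{Lemma5.2}(1) this gives a uniform bound for $\widetilde u_h$ in $H^{2}_{loc}({\bf R}^d)$, and Rellich's theorem finishes. No gauge transformation, no projection $\Pi_0$, and no interior elliptic regularity for $P_h(D_x)$ are needed: contrary to the concern you raise at the end, the ellipticity used for compactness is already present at the level of the full discrete Laplacian.

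Your route---projecting onto the $\Pi_0$-eigenspace and applying interior regularity to $(P_h(D_x)-z)\widetilde v_h=\Pi_0\widetilde g_h$---does work when $d_1=0$ (the case relevant to the applications in \S7), but note that if $d_1\neq0$ the gauge factor $e^{ix\cdot d_h}$ with $|d_h|\sim 1/h$ destroys Sobolev regularity, so precompactness of $\{\widetilde v_h\}$ would not transfer back to $\{\widetilde u_h\}$; the paper's direct weighted estimate sidesteps this issue entirely. Your argument for Part~(2) is a correct and explicit unpacking of what the paper states in one sentence.
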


\begin{proof}
The assertion (1) follows from Lemma \ref{LemmaEstimatesofwidetildeuhinL2-s}. Since $V(x)$ is compactly supported, the assertion (2) follows from (1).
\end{proof}

These preparations and the assumption (U-2) are sufficient  to show the following theorem.


\begin{theorem}
\label{MainTheoremSchroedinger}
Let $s >1/2$.
As $h \to 0$, $\widetilde u_h(x) \to u(x)$, where $u(x) \in L^{2,-s}({\bf R}^d)$ and satisfies
\begin{equation}
u = \Pi_0u, \quad 
(P(D_x) + V(x) - E)u = \Pi_0f,
\nonumber
\end{equation}
\begin{equation}
p_-(x,D_x)u \in L^{2,s-1}({\bf R}^d), \quad \forall p_-(x, D_x) \in P_-.
\nonumber
\end{equation}
\end{theorem}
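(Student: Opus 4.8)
The plan is to assemble the compactness scheme whose three ingredients are already established: the $\epsilon\to0$ limit (Lemma~\ref{Lemmaepsilonyo0witpotential}), the uniform bounds (Lemma~\ref{Lemma5.2}), the local compactness (Lemma~\ref{Lemma5.3}), together with the uniqueness assumption (U-2). \emph{Step 1 (uniform bound and extraction of a limit).} Since $f\in H^{m,s}(\mathbf R^d)$ with $s>d+1$, $m>[d/2]+1$, Sobolev's inequality gives $|f(hn)|\le C\|f\|_{m,s}\langle hn\rangle^{-s}$, hence $\widetilde f_h\to f$ and $\{\widetilde f_h\}$ is bounded in $L^{2,s'}(\mathbf R^d)$ for every $s'<s$, in particular for $s'\in(1/2,1)$. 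By Lemma~\ref{Lemma5.2}(1) the family $\{\widetilde u_h\}_{0<h<h_0}$ is then bounded in $L^{2,-s'}(\mathbf R^d)$ for every $s'>1/2$. Fix $s>1/2$ and choose $t$ with $1/2<t<s$. Given any sequence $h_i\to0$, Lemma~\ref{Lemma5.3}(1) lets us pass to a subsequence along which $\widetilde u_{h_i}$ converges in $L^2_{loc}(\mathbf R^d)$ to some $u$; the uniform $L^{2,-t}$ bound forces $\int_{|x|>R}\langle x\rangle^{-2s}|\widetilde u_{h_i}(x)|^2\,dx\to0$ as $R\to\infty$ uniformly in $i$, so in fact $\widetilde u_{h_i}\to u$ in $L^{2,-s}(\mathbf R^d)$.

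\emph{Step 2 (the limit solves the equation).} I would pass to the gauge-transformed equation (\ref{S5Gaugetransfeq}) and the splitting $w_h=v_h+v_h'$ with $\widehat v_h'=(1-\Pi_0)\widehat w_h$. By (\ref{S4whleqChnu}) together with the bound from Step~1 (applied to $\|w_h\|_{-1-\epsilon}$, which is comparable to $\|\widetilde u_h\|_{L^{2,-1-\epsilon}}$ under the discrete/interpolated identification), the $(1-\Pi_0)$-part is $O(h^{\nu})$ and disappears in the limit, so the limit lies in the range of the Fourier multiplier $\Pi_0$, i.e. $u=\Pi_0 u$. To identify the equation, I would test (\ref{S4-PhxivhCh-nu}) at $z=E$ against $\varphi(h_i n)$, $\varphi\in C_0^\infty(\mathbf R^d)$, form $h_i^d\sum_n(\cdot)$, and let $h_i\to0$, using: the pointwise convergences $\widehat f_h\to\mathcal F_{cont}f$, $P_h(\xi)\to P(\xi)$, $\Pi_h(\xi)\to\Pi_0(\xi)$ together with all derivatives; and the Riemann-sum convergence $h_i^d\sum_n V(h_in)u_{h_i}(n)\varphi(h_in)\to\int_{\mathbf R^d}V(x)u(x)\varphi(x)\,dx$, for which the \emph{strong} $L^{2,s}$-compactness of $\{V_h\widetilde u_h\}$ from Lemma~\ref{Lemma5.3}(2) is exactly what is needed. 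This yields $(P(D_x)+V(x)-E)u=\Pi_0 f$.

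\emph{Step 3 (radiation condition and uniqueness).} For $p_-(x,D_x)\in P_-$ and $1/2<s<1$, Lemma~\ref{Lemma5.2}(2) gives $\|p_-(x,D_x)\widetilde u_{h_i}\|_{L^{2,s-1}(\mathbf R^d)}\le C_s\|\widetilde f_{h_i}\|_{L^{2,s}(\mathbf R^d)}\le C$ uniformly; passing to the limit, $p_-(x,D_x)u\in L^{2,s-1}(\mathbf R^d)$, and similarly $u\in\mathcal B^\ast(\mathbf R^d)$, so $u$ is an outgoing $\mathcal B^\ast$-solution of $(P(D_x)+V(x)-E)u=\Pi_0 f$. (When invoking Lemma~\ref{Lemma5.2}(2) one replaces, if necessary, a symbol of $\mathcal P_-^{(h)}$ by one of $\mathcal P_-$ using the uniform convergence $\nabla P_h/|\nabla P_h|\to\nabla P/|\nabla P|$ on the annulus (\ref{MEhisC0xi}), shrinking $\delta$ slightly.) By (U-2) such $u$ is unique; in particular it does not depend on the sequence $h_i$ or the subsequence chosen in Step~1, so the elementary compactness principle (a precompact sequence with a single accumulation point converges) gives $\widetilde u_h\to u$ in $L^{2,-s}(\mathbf R^d)$ as $h\to0$, for every $s>1/2$.

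\emph{Expected main obstacle.} The delicate point is the passage to the limit in Step~2, and within it the potential term: since it is a product of the two $h$-dependent factors $V(hn)$ and $u_{h_i}(n)$, mere weak convergence of $\widetilde u_{h_i}$ is not enough and one genuinely needs the $L^{2,s}$-compactness of $\{V_h\widetilde u_h\}$ supplied by Lemma~\ref{Lemma5.3}(2); a secondary bookkeeping nuisance is carrying the Fourier-multiplier projection $\Pi_0$ (and the gauge factor) through the weak limit and matching $\mathcal P_-^{(h)}$ with $\mathcal P_-$ in the radiation-condition bound.
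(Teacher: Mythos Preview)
Your proposal is correct and follows exactly the compactness scheme the paper intends: the paper itself gives no explicit proof of this theorem, stating only that ``These preparations and the assumption (U-2) are sufficient,'' and your three-step argument (uniform bound plus local compactness to extract a subsequential limit, passage to the limit in the equation via the gauge splitting and Lemma~\ref{Lemma5.3}(2), radiation condition via Lemma~\ref{Lemma5.2}(2), then uniqueness from (U-2) to conclude full convergence) is precisely the template already laid out in the proof of Lemma~\ref{Lemma5.2} and in \S\ref{Section4Freeequation}. Your identification of the potential-term convergence as the delicate point, handled by the strong compactness of $\{V_h\widetilde u_h\}$, is also the key observation.
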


\begin{cor}
With the same notation as in Theorem \ref{MainTheoremSchroedinger}, 
$\widetilde u_h(x) \to u(x)$ locally uniformly on ${\bf R}^d$.
\end{cor}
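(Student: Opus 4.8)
The plan is to upgrade the $L^{2,-s}$–convergence already furnished by Theorem~\ref{MainTheoremSchroedinger} to convergence in $C(K)$ for every compact $K\subset{\bf R}^d$, via a uniform‑in‑$h$ local Sobolev bound plus the compactness/uniqueness argument used repeatedly in the paper. The reduction is this: it suffices to show that $\{\widetilde u_h\}_{0<h<h_0}$ is bounded in $H^{\sigma}_{\mathrm{loc}}({\bf R}^d)$ for some $\sigma>d/2$. Indeed, Rellich's theorem then makes the family precompact in $H^{\sigma'}_{\mathrm{loc}}$ for any $d/2<\sigma'<\sigma$, hence precompact in $C(K)$ by Sobolev embedding; and since by Theorem~\ref{MainTheoremSchroedinger} its unique $L^{2}_{\mathrm{loc}}$–accumulation point is $u$, the whole family converges to $u$ in $C(K)$, i.e. locally uniformly. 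So the entire task is the uniform local $H^{\sigma}$ bound.

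For that I would reuse the frequency decomposition of \S\ref{SectionPotentialPerturbation}. After gauge transformation and projection, $\widetilde u_h$ equals, up to the term from $(1-\Pi_0)\widehat w_h$ which by (\ref{S4whleqChnu}) tends to $0$ strongly, the inverse Fourier transform of $c\,\Pi_0\widehat g_h(\xi)\,(P_h(\xi)-E-i0)^{-1}$ on ${\bf T}^d_h$, where the appropriately $h^{d/2}$–normalised data $\widetilde g_h$ is bounded in $L^2({\bf R}^d)$ uniformly in $h$ (this uses Lemma~\ref{Lemma5.2} to bound $\widetilde u_h$, hence $V_h\widetilde u_h$, in the relevant norms, together with the compact support of $V$). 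I would cut $\widehat{\widetilde u_h}$ with fixed smooth $\xi$–cut‑offs into three pieces: (i) $|\xi|\ge R_0$; (ii) $|\xi|<R_0$ with $\mathrm{dist}(\xi,M_E)>\delta_0$; (iii) a fixed neighbourhood of $M_E$, which contains every $M_{E,h}$ for small $h$ by (\ref{MEhisC0xi}). On (i) the homogeneity in (C-3) gives $P_h(\xi)-E\ge c\langle\xi\rangle^{2m}$ on ${\bf T}^d_h\cap\{|\xi|\ge R_0\}$ uniformly in $0<h<h_0$, so this piece gains $2m$ derivatives over $\widetilde g_h$ and is bounded in $H^{2m}({\bf R}^d)$ (and in higher $H^{\sigma}$ after a bootstrap, using that the $V$–part of $\widetilde g_h$ has compact support, hence is bounded in every $H^{k}$ by Lemma~\ref{Lemma5.2}, and that the $f$–part has the regularity of $f$). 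On (ii), $|P_h(\xi)-E|\ge c>0$ uniformly, so this piece is bounded in every $H^{k}_{\mathrm{loc}}$.

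The decisive and delicate part is piece (iii). There the amplitude $\Pi_0\widehat g_h$ and the symbol $P_h$ are smooth in $\xi$ with $\xi$–derivatives bounded uniformly in $h$ on the fixed support — for $\Pi_0\widehat g_h$ this combines $\lambda_j\in C^\infty$ near $d_1$, the smooth convergence in (C-3), the Paley–Wiener smoothness of the compactly supported $V$–part (with $h$–uniform derivative bounds coming from its uniform $L^2(\mathrm{supp}\,V)$ bound via Lemma~\ref{Lemma5.2}), and the decay of $f$ for the $f$–part — and $P_h-E-i0$ vanishes simply and transversally on $M_{E,h}$. Using the local factorisation $P_h-E-i\epsilon=(\eta_1-p_h(\eta',E+i\epsilon))\,q_h(\eta,E+i\epsilon)$ and the integral representation of Lemma~\ref{Lemmavh(xi,z)limit}(3), piece (iii) of $\widetilde u_h$ is, in suitable coordinates, $i\int_{-\infty}^{x_1}e^{i(x_1-y_1)p_h(\xi',E)}\widetilde g_{h1}(y_1,x',E)\,dy_1$ with $p_h$, $q_h$, $\widetilde g_{h1}$ enjoying $h$–uniform bounds; differentiating under the integral yields $\|\text{(iii)}\|_{C^{N}(K)}\le C_{N,K}$, uniformly in $h$. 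Summing (i)–(iii) gives the uniform $H^{\sigma}_{\mathrm{loc}}$ bound and completes the argument. The main obstacle is exactly this last estimate: one must verify that the radiation sign $-i0$ renders the inverse transform of the $(P_h-E-i0)^{-1}$–factor a bounded, smooth object with $h$–uniform local $C^{N}$ norms, and one must keep careful track of the $h^{d/2}$–normalisations (the Fourier domains ${\bf T}^d_h$ grow and the lattice sums are of size $h^{-d}$) so that nothing diverges as $h\to0$; a minor, purely technical point is that in high dimensions one may have to assume slightly more decay on $f$ than $s>d+1$ in order to bound enough $\xi$–derivatives of the $f$–part of $\widehat g_h$, but this does not affect the method.
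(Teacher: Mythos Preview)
Your overall strategy --- prove a uniform-in-$h$ local Sobolev bound for $\{\widetilde u_h\}$, then use Rellich/Sobolev compactness together with the $L^{2,-s}$ convergence of Theorem~\ref{MainTheoremSchroedinger} to identify the unique accumulation point --- is exactly the paper's strategy. The paper's proof is two lines: it asserts $\sup_{0<h<h_0}\|\widetilde u_h\|_{H^{m,-s}}<\infty$ for $m>d/2$, $s>1/2$, citing ``the a priori estimate'' (Lemma~\ref{LemmaEstimatesofwidetildeuhinL2-s}) together with the uniform $L^{2,-s}$ bound of Lemma~\ref{Lemma5.2}, and then invokes the Sobolev inequality.

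Where you diverge is in how you obtain the Sobolev bound. You split in frequency into pieces (i)--(iii) and, for the piece near $M_{E,h}$, invoke the integral representation (\ref{vhintegralepsiklon=0case}) and differentiate under the integral to control $C^N(K)$ norms. This works but is far more labor than needed. Piece (iii) has Fourier support in a \emph{fixed} compact annulus independent of $h$ (by (\ref{MEhisC0xi})), so multiplication by $\xi^{\alpha}$ on the Fourier side is just multiplication by a bounded cut-off symbol; since convolution by a Schwartz kernel maps $L^{2,-s}\to L^{2,-s}$ boundedly, piece (iii) lies in $H^{m,-s}$ for every $m$ with norm controlled by its $L^{2,-s}$ norm, which is already uniformly bounded by Lemma~\ref{Lemma5.2}. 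The same remark disposes of piece (ii). Only piece (i) genuinely needs the equation, and there the elliptic gain $P_h(\xi)\sim|\xi|^{2m}$ gives the bound directly --- which is precisely what Lemma~\ref{LemmaEstimatesofwidetildeuhinL2-s} packages. In particular, your closing hedge about possibly needing extra decay on $f$ to control $\xi$-derivatives of $\widehat g_h$ is unnecessary: no $\xi$-derivatives of the data are required at all.
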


\begin{proof}
We have for $m > d/2$ and $s > 1/2$,
\begin{equation}
\sup_{0<h<h_0}\|\widetilde u_h\|_{H^{m,-s}} < \infty.
\nonumber
\end{equation}
Using the a priori estimate and the Sobolev inequality, we get the assertion.
\end{proof}

 
\section{Complex energy}
In the above arguments the radiation condition was used 
at the step of the uniqueness of solutions to the equation
$ (P(D_x) + V - E)u = f$ for $E >0$. 
Usually this fact is proven by the Rellich type theorem and the unique continuation property for the Helmholtz equation. However, on some lattices, the latter result does not hold. 
Even for this case, if $E \not\in {\bf R}$, one can employ a simpler condition for the uniqueness. For the sake of simplicity we consider here the case $P(D_x) = - \Delta_{cont}$.


\begin{lemma}
\label{L2-suniqueness}
Assume that $u \in L^{2,-s}({\bf R}^d)\cap H^2_{\rm loc}(\mathbf{R}^d)$ satisfies $ (- \Delta_{cont} + V - z)u = 0$ on ${\bf R}^d$. If $z \not\in {\bf R}$ and $0 < s \leq 1/2$, then $u = 0$. 
\end{lemma}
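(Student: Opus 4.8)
The plan is to run the classical Green-identity-with-cutoff argument and exploit ${\rm Im}\,z\neq0$ to force $u$ into $L^2$ with vanishing norm. Fix $\chi\in C_0^\infty({\bf R}^d)$, radially non-increasing, with $\chi=1$ on $\{|x|\le1\}$ and ${\rm supp}\,\chi\subset\{|x|\le2\}$, and set $\chi_R(x)=\chi(x/R)$. Pairing the equation with $\chi_R^2\bar u$ and integrating by parts — legitimate since $u\in H^2_{\mathrm{loc}}$, and with no boundary term since $\chi_R$ has compact support — gives
$$\int\chi_R^2|\nabla u|^2\,dx+2\int\chi_R\,\nabla\chi_R\cdot\bar u\,\nabla u\,dx+\int\chi_R^2 V|u|^2\,dx-z\int\chi_R^2|u|^2\,dx=0.$$
Taking imaginary parts, the first and third terms drop out because $V$ is real, and since $\int\chi_R^2|u|^2$ is real we obtain
$$|{\rm Im}\,z|\int\chi_R^2|u|^2\,dx=2\Bigl|{\rm Im}\int\chi_R\,\nabla\chi_R\cdot\bar u\,\nabla u\,dx\Bigr|\le\frac{C}{R}\int_{R\le|x|\le2R}|u|\,|\nabla u|\,dx .$$
Everything then reduces to showing the right-hand side tends to $0$ as $R\to\infty$.

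For the $L^2$-factor of $u$ on the annulus I use only $u\in L^{2,-s}$: on $\{R\le|x|\le2R\}$ one has $\langle x\rangle^{2s}\le CR^{2s}$, hence $\|u\|_{L^2(R\le|x|\le2R)}^2\le CR^{2s}\epsilon_R$ with $\epsilon_R:=\int_{|x|\ge R}\langle x\rangle^{-2s}|u|^2\,dx\to0$ as $R\to\infty$ (tail of a convergent integral). For the $L^2$-factor of $\nabla u$ I use a Caccioppoli estimate: with a second cutoff $\psi_R$ equal to $1$ on $\{R\le|x|\le2R\}$, supported in $\{R/2\le|x|\le4R\}$ and $|\nabla\psi_R|\le C/R$, taking the \emph{real} part of the analogous identity and absorbing $\frac12\int\psi_R^2|\nabla u|^2$ on the left yields $\int\psi_R^2|\nabla u|^2\le C\int|\nabla\psi_R|^2|u|^2+C_z\int_{{\rm supp}\,\psi_R}|u|^2\le C_zR^{2s}\epsilon_{R/2}$ for $R\ge1$, where $C_z$ also involves $\|V\|_\infty$, finite by (V-2). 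Combining,
$$\frac{C}{R}\int_{R\le|x|\le2R}|u|\,|\nabla u|\,dx\le\frac{C}{R}\,(CR^{2s}\epsilon_R)^{1/2}(C_zR^{2s}\epsilon_{R/2})^{1/2}\le C'\,R^{2s-1}\,\epsilon_{R/2}.$$

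Now the hypothesis $0<s\le1/2$ enters: $2s-1\le0$, so $R^{2s-1}\le1$ for $R\ge1$, and the right-hand side is $\le C'\epsilon_{R/2}\to0$. Hence $\int\chi_R^2|u|^2\to0$; since for any fixed ball $B$ we have $\chi_R=1$ on $B$ for $R$ large, $\int_B|u|^2\le\int\chi_R^2|u|^2\to0$, so $u=0$ on $B$ for every $B$, i.e. $u=0$.

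The only delicate point is the bookkeeping of powers of $R$: the cutoff derivative supplies $R^{-1}$, while each of the two annular $L^2$-norms is permitted to grow like $R^{s}$ because of the weight in $L^{2,-s}$, so the exact budget is $R^{2s-1}$. This is precisely why the borderline value $s=1/2$ is still admissible; for $s>1/2$ this elementary argument no longer closes (one would need a finer, e.g. Agmon-type, decay input). All other ingredients — the integration by parts, the Caccioppoli inequality, and $\epsilon_R\to0$ — are routine once $u\in H^2_{\mathrm{loc}}$ and $V$ real and bounded are in force.
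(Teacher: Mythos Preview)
Your proof is correct. The paper's argument follows the same underlying idea---pair the equation with $\bar u$, take imaginary parts, and use $0<s\le 1/2$ to kill the error term---but the technical implementation differs. The paper works on sharp balls $\{|x|<R\}$, so Green's identity produces a genuine surface term $\int_{|x|=R}(\partial_r u\,\bar u - u\,\overline{\partial_r u})\,dS$; it then asserts $\partial_r u\in L^{2,-s}$ (tacitly from global elliptic regularity, since $-\Delta u=(z-V)u\in L^{2,-s}$) and uses the elementary fact that an $L^1$ function of $r$ has $\liminf r\cdot(\cdot)=0$ to make the boundary term vanish along a subsequence. You instead use smooth cutoffs, which trades the surface term for the commutator $2\int\chi_R\nabla\chi_R\cdot\bar u\,\nabla u$, and you supply the gradient control on annuli by a Caccioppoli inequality rather than invoking $\nabla u\in L^{2,-s}$ globally. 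Your route is slightly longer but more self-contained; the paper's is terser but leaves the $\partial_r u\in L^{2,-s}$ step implicit. In both versions the critical balance of powers is the same $R^{2s-1}$, which is exactly why the threshold $s=1/2$ is admissible.
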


\begin{proof}
Since $u, \frac{\partial u}{\partial r} \in L^{2,-s}$, we have
$\liminf_{r\to\infty} r^{n-2s}\int_{|x|=r}\Big|\overline{u}\frac{\partial u}{\partial r}\Big|dS = 0$. 
By integration by parts, we have
\begin{equation}
\int_{|x|<R}\Big((\Delta u) \overline u - u\overline{\Delta u}\Big)dx  = \int_{|x|=R}\Big(\frac{\partial u}{\partial r}\overline u - u\overline{\frac{\partial u}{\partial r}}\Big)dS.
\nonumber
\end{equation}
We then have, by taking the imaginary part and letting $R \to \infty$ along a suitable sequence, 
${\rm Im}\, z\int_{{\bf R}^d}|u|^2 dx = 0$, which proves the lemma.
\end{proof}

Arguing as in the previous sections, 
noting that $P_h(\xi) - z \neq 0$ for $z \not\in {\bf R}$, one can derive the estimates of $\widetilde u_h(x,z)$ in $H^2({\bf R}^d)$ uniformly with respect to $0 < h < h_0$.
Then by virtue of  Lemma \ref{L2-suniqueness}, for any $0 < s < 1/2$, one can conclude the convergence of $\widetilde u_h(x,z)$ in $L^{2,-s}({\bf R}^d)$ as $h \to 0$.  In the following Theorems \ref{TheoremwidetildediscretecomplexzB} and \ref{TheoremwidetildediscretecomplexzC}, we do not assume the unique continuation property for $\Delta_{disc,h}$.


\begin{theorem}
\label{TheoremwidetildediscretecomplexzB}
Assume that $f \in H^{m,s}({\bf R}^d)$ for some $s > d + 1$ and $m > [d/2] + 1$. Assume (B-1), (B-2-1) or (B-2-2), (B-3), (B-4) and (U-1). 
Let $z \not\in {\bf R}$, and $u_h(n,z)$ be an $L^2$-solution to the gauge transformed equation
$$
(- \mathcal G_h^{\ast}\Delta_{disc,h}\mathcal G_h - E)u_h = f_h \quad {\it on} \quad {\bf Z}^d,
$$
where $f_h(n) = f(hn)$. 
Then the strong  limit
$$
\lim_{h\to0}\widetilde v_h(x,z) = \widetilde v(x,z) \quad \text{exists in} \quad L^{2, - s}({\bf R}^d), \quad 0 < s < 1/2.
$$
This convergence is locally uniform on ${\bf R}^d$.

 For the case (B-2-1),  $\widetilde v(x,E+i0)$ is the unique $L^2$-solution to the equation
$$
(P(D_x) - z)\widetilde v = g \quad {\it on} \quad {\bf R}^d,
$$
 where $g$ is defined in (\ref{defineghxiz}). 

For the case (B-2-2), $v(x,z)$ split into two parts
$$
\widetilde v(x,z) = \widetilde v^{(+)}(x,z) + \widetilde v^{(-)}(x,z),
$$
 $\widetilde v^{(\pm)}(x,z)$  being the unique solution to the Schr{\"o}dinger equation 
$$
(P(D_x) - z)\widetilde v^{(\pm)} = g^{(\pm)} \quad {\it on} \quad {\bf R}^d.
$$
\end{theorem}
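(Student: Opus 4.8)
\emph{Proof strategy.} The plan is to rerun the compactness argument of Section~\ref{Section4Freeequation}, but now the hypothesis $z\notin\mathbf{R}$ removes every difficulty tied to the characteristic surface $M_{E,h}$: since $P_h(\xi)-z$ never vanishes there is nothing to extract near $M_{E,h}$, so the factorization, the wave front set estimates and the radiation condition all drop out, and the role of (U-1) is taken over by Lemma~\ref{L2-suniqueness}. After the gauge transformation and the discrete Fourier transform one has $\widehat u_h(\xi,z)=\bigl(\mathcal L_h(e^{-ih(\xi+d_h)})-z\bigr)^{-1}\widehat f_h(\xi)$; put $\widehat v_h(\xi,z)=\chi_d(h\xi)\Pi_h(\xi)\widehat u_h(\xi,z)$. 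The complementary piece $\bigl(1-\Pi_h(\xi)\bigr)\widehat u_h(\xi,z)$ is $O(h^{\nu})$ in $L^2(\mathbf{T}^d_h)$ by (\ref{S3Lh-z-11-Ph(xi)leqChnu}) and the bound on $\widehat f_h$ from Lemma~\ref{AppendPointwiseConv}, exactly as in (\ref{S4-PhxivhCh-nux}), so it is irrelevant to the limit. On ${\rm Ran}\,\Pi_h(\xi)$ the matrix $\mathcal L_h(e^{-ih(\xi+d_h)})$ is multiplication by $\lambda_{j,h}(\xi+d_h)=P_h(\xi)$ in case (B-2-1), so $\widehat v_h(\xi,z)=\chi_d(h\xi)\Pi_h(\xi)\widehat f_h(\xi)\big/\bigl(P_h(\xi)-z\bigr)$; in case (B-2-2) one splits $\Pi_h=\Pi_h^{(+)}+\Pi_h^{(-)}$ as in Section~\ref{S4.5Dirac} (the two spectral projections for the distinct isolated roots $\pm P_h(\xi)$), picking up the additional term $\chi_d(h\xi)\Pi_h^{(-)}(\xi)\widehat f_h(\xi)\big/\bigl(-P_h(\xi)-z\bigr)$, which is even easier since $-P_h(\xi)-z$ stays bounded away from $0$.

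\emph{Uniform bounds.} Next I would establish uniform estimates. Because $P_h\ge 0$ and $z\notin\mathbf{R}$ one has $|P_h(\xi)-z|\ge|{\rm Im}\,z|$ and $|{-}P_h(\xi)-z|\ge|{\rm Im}\,z|$ everywhere, while on ${\rm supp}\,\chi_d(h\,\cdot\,)\subset\mathcal K/h$ the lower bound $CP(\xi)\le P_h(\xi)$ of (B-3) and the homogeneity of $P$ give $|P_h(\xi)-z|\ge c\bigl(|{\rm Im}\,z|+|\xi|^{\gamma}\bigr)$; hence $\langle\xi\rangle^{\gamma}\big/|P_h(\xi)-z|\le C$ uniformly on $\mathcal K/h$. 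Together with $h^{d}\|\widehat f_h\|^2_{L^2(\mathbf{T}^d_h)}=h^{d}\sum_n|f(hn)|^2\le C\|f\|_{m,s}^2$ (Sobolev's inequality, $s>d/2$) this yields $\|\widetilde v_h(\,\cdot\,,z)\|_{H^{\gamma}(\mathbf{R}^d)}\le C\|f\|_{m,s}$ uniformly in $0<h<h_0$, with $C$ depending on ${\rm dist}(z,\mathbf{R})$; applying $P(D_x)+z$ to the equation when $\gamma<2$ one gets the corresponding bound in $H^{2}(\mathbf{R}^d)$, and for more regular $f$ in higher $H^{k}$. In particular $\{\widetilde v_h(\,\cdot\,,z)\}_{0<h<h_0}$ is bounded in $L^2(\mathbf{R}^d)$ and, by Rellich's selection theorem, precompact in $L^2_{loc}(\mathbf{R}^d)$; since $\int_{|x|>R}\langle x\rangle^{-2s}|\widetilde v_h|^2dx\le R^{-2s}\sup_h\|\widetilde v_h\|_{L^2}^2$ is uniformly small, the precompactness holds in $L^{2,-s}(\mathbf{R}^d)$ for every $s>0$.

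\emph{Identification and uniqueness.} Finally, to identify the limit I would use that, by Lemma~\ref{AppendPointwiseConv}, (B-3) and (B-4), one has $h^{d/2}\widehat f_h\to\mathcal F_{cont}f$, $P_h(\xi)\to P(\xi)$ and $\Pi_h(\xi)\to\Pi_0(\xi)$ pointwise, so $h^{d/2}\widehat v_h(\xi,z)\to\Pi_0(\xi)(\mathcal F_{cont}f)(\xi)\big/\bigl(P(\xi)-z\bigr)$ pointwise (with the extra term $\Pi_0^{(-)}(\xi)(\mathcal F_{cont}f)(\xi)\big/\bigl(-P(\xi)-z\bigr)$ in case (B-2-2)). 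Hence every $L^{2,-s}$-accumulation point $\widetilde v$ of $\{\widetilde v_h(\,\cdot\,,z)\}$ lies in $L^{2,-s}(\mathbf{R}^d)$ and solves $(P(D_x)-z)\widetilde v=g$ in case (B-2-1), resp.\ splits as $\widetilde v=\widetilde v^{(+)}+\widetilde v^{(-)}$ with $(\pm P(D_x)-z)\widetilde v^{(\pm)}=g^{(\pm)}$ in case (B-2-2). Since $P(\xi)$ is real-valued and $z\notin\mathbf{R}$, Lemma~\ref{L2-suniqueness} — whose Green's-formula proof carries over verbatim to the $P(D_x)$ occurring here, while for the $-P(D_x)$ component $-P(\xi)-z$ never vanishes so uniqueness is immediate by Fourier transform — shows that such an $L^{2,-s}$-solution with $0<s<1/2$ is unique. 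Therefore the accumulation point is unique, $\widetilde v_h(\,\cdot\,,z)\to\widetilde v(\,\cdot\,,z)$ in $L^{2,-s}(\mathbf{R}^d)$ for $0<s<1/2$, and the uniform $H^{k}$ bounds together with the Sobolev inequality upgrade this to locally uniform convergence on $\mathbf{R}^d$, just as in the Corollary following Theorem~\ref{MainTheoremSchroedinger}.

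\emph{Main obstacle.} The one point that is not a plain simplification of the real-energy arguments is the uniform high-regularity estimate: one must guarantee, uniformly over the admissible lattice geometries, the lower bound $P_h(\xi)\ge c|\xi|^{\gamma}$ on $\mathcal K/h$ with $\gamma$ possibly below $2$ (for instance $\gamma=1$ in the Dirac case), which is exactly where (B-3) is used, and one should check that the Green's-formula uniqueness of Lemma~\ref{L2-suniqueness} genuinely transfers to the operators $P(D_x)$ that arise. Everything else reduces to the machinery already developed in Section~\ref{Section4Freeequation}.
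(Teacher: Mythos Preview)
Your proposal is correct and follows essentially the same route as the paper: the paper's own argument for this theorem is a one-paragraph sketch noting that $P_h(\xi)-z\neq 0$ for $z\notin\mathbf{R}$ gives uniform $H^2$ bounds as in Section~\ref{Section4Freeequation}, after which Lemma~\ref{L2-suniqueness} supplies the uniqueness needed for the compactness argument to conclude. Your write-up is considerably more detailed than the paper's sketch, and your ``main obstacle'' paragraph correctly flags the same point the paper glosses over (Lemma~\ref{L2-suniqueness} is stated only for $-\Delta_{cont}$ ``for the sake of simplicity'').
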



\begin{theorem}
\label{TheoremwidetildediscretecomplexzC}
Assume that $f \in H^{m,s}({\bf R}^d)$ for some $s > d + 1$ and $m > [d/2] + 1$. In addition to (B-1), (B-2-1) or (B-2-2), (B-3), (B-4) and (U-1), assume (C-1), (C-2), (C-3), and (U-2). Let $z \not\in {\bf R}$, and $u_h(n,z)$ be an $L^2$-solution of the equation
$$
(- \mathcal G_h^{\ast}\Delta_{disc,h}\mathcal G_h  + V_{disc,h}- z)u_h = f_h \quad {\it on} \quad {\bf Z}^d,
$$
where $f_h(n) = f(hn)$. 
Then the strong  limit
$$
\lim_{h\to0}\widetilde u_h(x,z) = \widetilde u(x,z) \quad \text{exists in} \quad L^{2, - s}({\bf R}^d), \quad 
0 < s < 1/2.
$$
The convergence is locally uniform on ${\bf R}^d$. Moreover, $u$ satisfies
\begin{equation}
u = \Pi_0u, \quad 
\big(P(D_x) + V(x) - z)u = \Pi_0f.
\nonumber
\end{equation}
\end{theorem}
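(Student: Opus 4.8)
The plan is to run the compactness scheme behind Theorem~\ref{MainTheoremSchroedinger}, replacing every use of the limiting absorption principle and of the radiation condition by the elementary fact that for $z\notin{\bf R}$ the symbols $P_h(\xi)-z$ and $\mathcal L_h(e^{-ih(\xi+d_h)})-z$ are boundedly invertible on all of the (shifted) torus. First I would settle fixed-$h$ solvability: by (V-2) the potential $V$ is bounded and real, so $-\mathcal G_h^{\ast}\Delta_{disc,h}\mathcal G_h+V_{disc,h}$ is self-adjoint on $L^2({\bf Z}^d_h)$, and for $z\notin{\bf R}$ the solution $u_h(z)$ exists, is unique, and satisfies $\|u_h(z)\|_{L^2({\bf Z}^d_h)}\le|{\rm Im}\,z|^{-1}\|f_h\|_{L^2({\bf Z}^d_h)}$; with (\ref{S2amnorm=widetildeahnorm}) and $f\in H^{m,s}$ this yields the $h$-uniform bound $\sup_{0<h<h_0}\|\widetilde u_h(\cdot,z)\|_{L^2({\bf R}^d)}<\infty$.

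Next I would split off the spectral gap and produce uniform regularity. Writing $\widehat u_h=\Pi_0\widehat u_h+(1-\Pi_0)\widehat u_h$, with $\Pi_0=\Pi_0(\xi)$ the eigenprojection attached to $P_h(\xi)=\lambda_{j,h}(\xi+d_h)$, the bound (\ref{S3Lh-z-11-Ph(xi)leqChnu}) (valid since ${\rm Re}\,z\in I$) gives, exactly as in (\ref{S4whleqChnu}), $\|(1-\Pi_0)\widehat u_h\|_{L^2({\bf T}^d_h)}\le Ch^{\nu}\big(\|f\|_{m,s}+\|u_h\|_{L^2}\big)=O(h^{\nu})$, so this component is negligible as $h\to0$. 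Since $\Pi_0(\xi)$ commutes with $\mathcal L_h(e^{-ih(\xi+d_h)})$, the surviving part $\widehat v_h=\Pi_0\widehat u_h$ solves $(P_h(\xi)-z)\widehat v_h=\Pi_0\big(\widehat f_h-\widehat{V_{disc,h}u_h}\big)$, whose right-hand side is bounded in $L^2({\bf T}^d_h)$ uniformly in $h$ by the uniform $L^2$ bound above, boundedness of $V$ and Lemma~\ref{AppendPointwiseConv}. By (B-3), (C-1) and (C-3) the limit symbol $P(\xi)=\sum_{|\alpha|=2m}a_{\alpha}\xi^{\alpha}$ is homogeneous of degree $2m$ and strictly positive off the origin — if $P(\xi_0)=0$ with $\xi_0\neq0$ then $0\le P_h(\xi_0)\le C^{-1}P(\xi_0)=0$ would force $\lambda_j$ to vanish away from $d_1$, contradicting (C-1) — hence elliptic, and with ${\rm Im}\,z\neq0$ this gives $|P_h(\xi)-z|^{-1}\le C_z\langle\xi\rangle^{-2m}$ on ${\bf T}^d_h$ uniformly in $h$. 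Consequently $\langle\xi\rangle^{2m}\widehat v_h$ is bounded in $L^2({\bf T}^d_h)$, i.e. $\{\widetilde v_h(\cdot,z)\}$ is bounded in $H^{2m}({\bf R}^d)\subset H^2({\bf R}^d)$ uniformly in $0<h<h_0$.

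Then I would extract a limit and identify the equation. From the bounds above together with $(1-\Pi_0)\widehat u_h\to0$, any sequence $h_i\to0$ has a subsequence along which $\widetilde u_{h_i}(\cdot,z)$ converges weakly in $H^{2m}_{loc}$ and strongly in $L^2_{loc}({\bf R}^d)$ by Rellich; the $h$-uniform $L^2$ bound makes $\int_{|x|>R}\langle x\rangle^{-2s}|\widetilde u_{h_i}|^2$ small uniformly in $i$, so the convergence holds strongly in $L^{2,-s}({\bf R}^d)$. Testing the equation against $\varphi(hn)$, $\varphi\in C_0^{\infty}({\bf R}^d)$, and passing to the limit as in the proof of Lemma~\ref{Lemma5.2} — using $P_h\to P$ together with all derivatives, $\Pi_h\to\Pi_0$, the pointwise convergence $h^{d/2}\widehat f_h\to\mathcal F_{cont}f$, and the Riemann-sum convergence $h^d\sum_nV(hn)u_{h_i}(n)\varphi(h_in)\to\int V\widetilde u\,\varphi$ ($V$ is continuous since $s>d/2$) — yields $u=\Pi_0u$ and $(P(D_x)+V(x)-z)u=\Pi_0f$, with $u\in H^{2m}_{loc}\cap L^{2,-s}({\bf R}^d)$. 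For uniqueness of the accumulation point, $u\in L^2({\bf R}^d)$ by the bound above, hence $u\in H^{2m}({\bf R}^d)$ by ellipticity, and a standard integration by parts using ${\rm Im}\,z\neq0$, the reality of $V$ and the self-adjointness of $P(D_x)$ shows that two such solutions coincide; alternatively one invokes the complex-energy uniqueness of Lemma~\ref{L2-suniqueness}, which is why the conclusion is phrased for $0<s<1/2$. A unique accumulation point forces $\widetilde u_h(\cdot,z)$ itself to converge, and the local uniform convergence follows from higher $h$-uniform Sobolev bounds and the Sobolev embedding, as in the corollary to Theorem~\ref{MainTheoremSchroedinger}.

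The delicate ingredient is the $h$-uniform $H^{2m}$ estimate of the second step: there one must match the scaling order $\nu$ to the vanishing order of $\lambda_j$ at $d_1$ and simultaneously use both the $O(1/h^{\nu})$ spectral gap separating $P_h$ from the other characteristic roots (to discard the $(1-\Pi_0)$ part) and the ellipticity-plus-complex-energy bound $|P_h(\xi)-z|^{-1}\le C_z\langle\xi\rangle^{-2m}$, while the $u_h$-dependence of the source $\widehat f_h-\widehat{V_{disc,h}u_h}$ must be absorbed into the uniform $L^2$ bound. Otherwise, precisely because $z\notin{\bf R}$, the entire machinery near the characteristic surface — the factorization of $P_h(\xi)-z$, the uniform radiation-condition estimates, the wave-front analysis of the earlier sections — is not needed, so no further genuine obstruction arises.
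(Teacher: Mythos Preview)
Your proposal is correct and tracks the brief sketch the paper gives before Theorems~\ref{TheoremwidetildediscretecomplexzB} and \ref{TheoremwidetildediscretecomplexzC}: run the compactness scheme of \S\ref{SectionPotentialPerturbation}, replacing the limiting-absorption and radiation-condition machinery by the elementary invertibility of $P_h(\xi)-z$ for $z\notin{\bf R}$, and use the $L^{2,-s}$ uniqueness of Lemma~\ref{L2-suniqueness} to pin down the accumulation point. Two small remarks. First, your parenthetical ``valid since ${\rm Re}\,z\in I$'' for the gap estimate is not the right justification --- the theorem makes no such hypothesis --- but the conclusion stands: under (C-2) the other characteristic roots satisfy $|\lambda_{k,h}|\ge c/h^{\nu}$ on the whole (shifted) torus, so $|\lambda_{k,h}-z|\ge c/(2h^{\nu})$ for any fixed $z$ once $h$ is small, which is what (\ref{S4whleqChnu}) actually uses. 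Second, your use of the plain resolvent bound $\|u_h\|_{L^2}\le|{\rm Im}\,z|^{-1}\|f_h\|_{L^2}$ (with Lemma~\ref{AppendPointwiseConv} controlling $\|f_h\|_{L^2({\bf Z}^d_h)}$ uniformly) is a genuine simplification over the paper's route: it yields a uniform $L^2$ bound on $\widetilde u_h$, so the limit $u$ lies in $L^2({\bf R}^d)$ and uniqueness follows immediately from self-adjointness of $P(D_x)+V$, without appealing to Lemma~\ref{L2-suniqueness}; the restriction $0<s<1/2$ in the statement then becomes immaterial (your argument gives convergence in $L^{2,-s}$ for every $s>0$). The paper presumably phrases the conclusion that way only to stay parallel with Lemma~\ref{L2-suniqueness}.
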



\section{Derivation of Schr{\"o}dinger equations}

In the remaining sections, we apply the results in the previous sections to Hamiltonians on periodic lattices appearing in material science. 
We pick up basic examples whose spectral properties are studied in \cite{AndIsoMor1}.


\subsection{Square lattice}
Define the Laplacian on the square lattice in ${\bf R}^d$  by
\begin{equation}
\Delta_{\Gamma} = \sum_{j=1}^d\big(S_{j} + S_{j}^{\ast}\big).
\nonumber
\end{equation} 
In this case, we take the reference energy to be $E_0 = - 2d$ and consider
$$
\frac{1}{h^2}\sum_{j=1}^d\big(2 - S_{h,j} - S_{h,j}^{\ast}\big) + V_{disc,h}.
 $$
The symbol of $\dfrac{1}{h^2}\sum_{j=1}^d\big(2 - S_{h,j} - S_{h,j}^{\ast}\big)$ is 
\begin{equation}
\mathcal L_h(e^{-ih\eta}) = P_h(\eta) = \frac{1}{h^2}\sum_{j=1}^d\big(2 - e^{ih\eta_j} - e^{-ih\eta_j}\big) = \frac{4}{h^2}\sum_{j=1}^d\sin^2\frac{h\eta_j}{2},
\nonumber
\end{equation}
which satisfies the assumptions in \S 4 and \S 5 with $\mathcal G_h = 1$. Therefore by Theorem \ref{MainTheoremSchroedinger}, we obtain the following theorem.


\begin{theorem} 
\label{S7Schroedingerlimit}
Assume that $V(x) \in H^s({\bf R}^d)$ with $s > d/2$ and compactly supported,  and that $f \in H^{m,s}({\bf R}^d)$ with $m > d/2 + 1$, $s > d+1$. 
Let $E>0$ and take $h_0>0$ such that $E<4/h_0^2$.
Then the solution of the discrete 
Schr{\"o}dinger equation $(\mathcal L_h(S_h) + V_{disc,h} - E)u_h = f_h$, $0<h<h_0$, satisfying the radiation condition converges to that of 
$$
( - \Delta + V(x) - E)u = f \quad {\it on} \quad {\bf R}^d
$$
as $h\to0$.
\end{theorem}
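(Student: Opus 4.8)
The plan is to recognize the scaled square-lattice Schr\"odinger operator as an instance of the abstract framework of Sections~\ref{Section4Freeequation} and~\ref{SectionPotentialPerturbation} and then to invoke Theorem~\ref{MainTheoremSchroedinger}. Here $\mathsf{s}=1$, so there is a single characteristic root; with reference energy $E_0=-2d$ and scaling exponent $\nu=2$ one has $\mathcal L_h(S_h)=-\Delta_{disc,h}$, whose (scalar) symbol is $P_h(\eta)=\frac{4}{h^2}\sum_{j=1}^d\sin^2(h\eta_j/2)$, the extremal point is $d_1=0$ so that the gauge transformation $\mathcal G_h$ is the identity, and the eigenprojection is $\Pi_h\equiv1$, hence $\Pi_0\equiv1$. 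In particular neither the gauge transformation nor a spectral projection plays any role, and $\widetilde u_h$ is simply the interpolation of the lattice solution itself.

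The first step is to check the hypotheses of Section~\ref{SectionPotentialPerturbation}. Assumption (C-1) holds since $\lambda_1(e^{-i\eta})=4\sum_j\sin^2(\eta_j/2)\ge0$ on ${\bf T}^d$ with the unique zero $d_1=0$. Assumption (C-2) is vacuous, there being no neighbouring band (equivalently $1-\Pi_h\equiv0$, so the estimate~(\ref{S4whleqChnu}) is trivial). For (C-3), a Taylor expansion of $\sin^2$ gives $P_h(\xi)\to P(\xi)=|\xi|^2=\sum_j\xi_j^2$, homogeneous of degree $\gamma=2$, together with all derivatives uniformly on compact sets, and $\frac{2}{\pi}|t|\le|\sin t|\le|t|$ on $|t|\le\pi/2$ yields the two-sided bound $\frac{4}{\pi^2}|\xi|^2\le P_h(\xi)\le|\xi|^2$ on ${\bf T}^d_h$. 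Assumption (V-2) is part of the hypothesis. Assumption (U-2) holds because here $P(D_x)=-\Delta_{cont}$, so $(P(D_x)+V(x)-E)u=f$ is the ordinary Schr\"odinger equation with a compactly supported potential $V\in H^s$, $s>d/2$, for which uniqueness of the outgoing solution in $\mathcal B^\ast({\bf R}^d)$ is the classical combination of the limiting absorption principle with the Rellich theorem and Kato's theorem on the absence of positive eigenvalues (cf.\ the discussion of (U-1) at the end of Section~\ref{Section4Freeequation}).

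The step that genuinely uses the quantitative constraint $E<4/h_0^2$ --- and which I expect to be the only real work, the rest being bookkeeping --- is the non-degeneracy of the characteristic surface, which is what makes the uniform-in-$h$ estimates of Sections~\ref{Section4Freeequation}--\ref{SectionPotentialPerturbation} available. The critical points of $P_h$ on ${\bf T}^d_h$ are exactly those with $h\eta_j\in\{0,\pm\pi\}$ for every $j$, with critical values $\{4k/h^2:k=0,1,\dots,d\}$; hence for $0<E<4/h^2$ the level set $M_{E,h}=\{\xi:P_h(\xi)=E\}$ contains no critical point and is a smooth compact hypersurface. The two-sided bound above moreover confines it to the $h$-independent annulus $\{\sqrt E\le|\xi|\le\frac{\pi}{2}\sqrt E\}$, and there $\nabla_\xi P_h(\xi)=\big(\frac{2}{h}\sin(h\xi_1),\dots,\frac{2}{h}\sin(h\xi_d)\big)$ is bounded below uniformly in $0<h<h_0$, since $|h\xi_j|\le\frac{\pi}{2}\sqrt E\,h\to0$ and $\frac{2}{h}\sin(h\xi_j)\to2\xi_j$. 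Choosing $I$ to be a small open interval around $E$ with $\overline I\subset(0,4/h_0^2)$, all of this holds uniformly for ${\rm Re}\,z\in I$ and $0<h<h_0$, so Lemmas~\ref{Lemmaepsilonyo0witpotential}, \ref{Lemma5.2}, \ref{Lemma5.3} and Theorem~\ref{MainTheoremSchroedinger} are in force.

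It then remains only to unwind the conclusion. By~\cite{AndIsoMor1} the equation $(\mathcal L_h(S_h)+V_{disc,h}-E)u_h=f_h$ has, for each $0<h<h_0$, a unique solution $u_h$ satisfying the (discrete) outgoing radiation condition, and by~(\ref{S2amnorm=widetildeahnorm}) its interpolation satisfies $\|\widetilde u_h\|_{L^2({\bf R}^d)}=\|u_h\|_{L^2({\bf Z}^d_h)}$, so convergence of $\widetilde u_h$ is the precise meaning of convergence of the lattice solution. Theorem~\ref{MainTheoremSchroedinger} gives $\widetilde u_h\to u$ in $L^{2,-s}({\bf R}^d)$, $s>1/2$, where $u=\Pi_0u=u$, $(-\Delta+V-E)u=\Pi_0f=f$, and $p_-(x,D_x)u\in L^{2,s-1}({\bf R}^d)$ for all $p_-\in\mathcal P_-$; by (U-2) this $u$ is exactly the outgoing solution of $(-\Delta+V(x)-E)u=f$ on ${\bf R}^d$, which is the asserted limit. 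Finally the corollary to Theorem~\ref{MainTheoremSchroedinger} upgrades this to locally uniform convergence, completing the proof.
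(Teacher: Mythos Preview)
Your proposal is correct and follows exactly the paper's approach: verify that the square-lattice symbol $P_h(\eta)=\frac{4}{h^2}\sum_j\sin^2(h\eta_j/2)$ satisfies the abstract hypotheses of Sections~\ref{Section4Freeequation}--\ref{SectionPotentialPerturbation} with $\mathcal G_h=1$, and then invoke Theorem~\ref{MainTheoremSchroedinger}. The paper's own proof is a single sentence to this effect; your write-up simply spells out the verification of (C-1)--(C-3), (V-2), (U-2) and the role of the threshold $4/h^2$ in more detail than the paper does.
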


Since this solution is written as $u = (- \Delta + V(x) - E - i0)^{-1}f$, we can use it to represent the $S$-matrix.


\subsection{Convergence of the S-matrix}
We recall the definition
\begin{equation*}
M_{E,h}=\{\xi\in\mathbf{T}^d_h\,|\,P_h(\xi)=E\}.
\end{equation*}
This surface is not smooth for $E=4j/h^2$, $j=1,2,\ldots,d-1$. Therefore we impose the condition $0<E<4/h^2$ in the sequel. In particular, we consider the $S$-matrix only in the energy interval $(0,4/h^2)$.

Fix $E\in(0,4/h^2)$. We recall that for such $E$ the surface $M_{E,h}$ is diffeomorphic to $S^{d-1}$. We use the parametrization
\begin{equation}\label{Generalizedeigenxij}
\xi_j=\frac{2}{h}\arcsin(\tfrac12 h\sqrt{E}\omega_j), \quad j=1,2,\ldots,d,\quad \omega\in S^{d-1}.
\end{equation}
We also recall that $\xi_j=\sqrt{E}\omega_j+O(h^2)$ as $h\to0$.

Let $dS_{E,h}$ denote the surface measure on $M_{E,h}$ induced by $d\xi$. Then
\begin{equation}
d\xi=\frac{1}{|\nabla_{\xi}P_h(\xi)|}dS_{E,h}dE.
\end{equation}
Let $L^2(M_{E,h})$ be the Hilbert space equipped with the inner product
\begin{equation*}
(\phi,\psi)_{L^2(M_{E,h})} = \int_{M_{E,h}}\phi(\xi)\overline{\psi(\xi)}dS_E.
\end{equation*}
Let
\begin{equation}
\Omega_h=\bigcup_{0<E<4/h^2}M_{E,h}.
\end{equation}
The characteristic function of this set is denoted by $\chi_{\Omega_h}$.

Let $f\in\mathcal{S}(\mathbf{R}^d)$. Put
\begin{equation}
(\mathcal{F}_{0h}(E)f)(\xi)=
\chi_{\Omega_h}(\xi)|\nabla_{\xi}P_h(\xi)|^{-1/2}
\big(\frac{h}{2\pi}\big)^{d/2}\sum_{n\in\mathbf{Z}^d}
e^{-ihn\cdot \xi}f(n),
\end{equation}
and then define
\begin{equation}
(\mathcal{F}_{0h}f)(E,\xi)=(\mathcal{F}_{0h}(E)f)(\xi).
\end{equation}
Let $E_{0h}$ denote the spectral measure of the operator 
$H_{0h} = - \Delta_{disc,h}$. Then 
\begin{equation}
\mathcal{F}_{0h}\colon E_{0h}((0,4/h^2))L^2(\mathbf{Z}^d_h)
\to L^2((0,4/h^2);L^2(M_{E,h});dE)
\end{equation}
is unitary. Note  that with the above definitions we have $\mathcal{F}_{0h}=\mathcal{F}_{0h}E_{0h}((0,4/h^2))$.

Define $H_{0h} = - \Delta_{disc,h}$ and $H_h = - \Delta_{disc,h} + V_{disc,h}$. Then the wave operators
\begin{equation}
W_h^{(\pm)} = \operatorname*{\mathrm s-lim}_{t\to\pm\infty}e^{itH_h}e^{-itH_{0h}}E_{0h}((0,4/h^2))
\nonumber
\end{equation}
exist and are asymptotically complete in the localized sense. Define the localized scattering operator by
\begin{equation}
S_h = \big(W_h^{(+)}\big)^{\ast}W_h^{(-)}.
\nonumber
\end{equation}
Then its localized Fourier transform 
$$
\widehat S_h := \mathcal F_{0h}S_h\big(\mathcal F_{0h}\big)^{\ast}
$$
has the direct integral representation
\begin{equation}
\widehat S_h = \int_0^{4/h^2}\widehat S_h(E)dE,
\nonumber
\end{equation}
where $\widehat S_h(E)$ is Heisenberg's $S$-matrix for $E\in(0,4/h^2)$, which is unitary on $L^2(M_{E,h})$. 
The scattering amplitude $A_h(E)$ is then defined by
\begin{equation}
\widehat S_h(E ) = I - 2\pi i A_h(E).
\nonumber
\end{equation}
It has the representation
\begin{equation*}
\begin{split}
A_h(E) = & \mathcal F_{0h}(E)V_h\mathcal F_{0h}(E)^{\ast} -
 \mathcal F_{0h}(E)V_hR_h(E + i0)V_h\mathcal F_{0h}(E)^{\ast}. 
\end{split}
\end{equation*}
Therefore for $E\in(0,4/h^2)$ its integral kernel is
\begin{equation*}
\begin{split}
A_h(E;\xi,\eta) &=  a_h(\xi,\eta)\big(\frac{h}{2\pi}\big)^d\sum_{n \in {\bf Z}^d}V(hn) e^{-ihn\cdot(\xi - \eta)} \\
& - a_h(\xi,\eta)\big(\frac{h}{2\pi}\big)^d\sum_{n\in{\bf Z}^d}
V(hn)e^{ihn\cdot\eta}u_h(E,\xi,n).
\end{split}
\end{equation*}
where
\begin{equation*}
u_h(E,\xi) = R_h(E + i0)\psi_h(\xi), 
\end{equation*}
$\psi_h(\xi) \in L^2({\bf Z}^d_h)$ is defined by
\begin{equation*}
\psi_h(\xi,n) = V(hn)e^{ihn\cdot\xi},
\end{equation*}
and
\begin{equation*}
a_h(\xi,\eta) = |\nabla P_h(\xi)|^{-1/2} |\nabla P_h(\eta)|^{-1/2}.
\end{equation*}
Using the parametrization $\xi = \xi_h(\omega)$ in (\ref{Generalizedeigenxij}), we can regard $A_h(E,\xi,\eta)$ as a function on $S^{d-1}\times S^{d-1}$, which we denote $A_h(E;\omega,\omega')$ for the sake of simplicity.

For the continuous case $A(E;\omega,\omega')$ is written as 
\begin{equation}
\begin{split}
A(E;\omega,\omega') &= C(E)\int_{{\bf R}^d}e^{-i\sqrt{E}(\omega - \omega')\cdot x}V(x)dx \\
& -
C(E)\int_{{\bf R}^d}e^{-i\sqrt{E}\omega\cdot x}V(x)u(E,x,\omega')dx,
\end{split}
\nonumber
\end{equation}
\begin{equation}
u(E;x,\omega') = R(E + i0)\big(V\psi(E,\omega')\big),
\nonumber
\end{equation}
\begin{equation*}
\psi(E,\omega;x) = V(x)e^{i\sqrt{E}\omega'\cdot x}
\end{equation*}


\begin{theorem}
\label{ConvergnceofS-matrix}
We have as $h \to 0$
\begin{equation*}
A_h(E;\omega,\omega') \to A(E;\omega,\omega').
\end{equation*}
\end{theorem}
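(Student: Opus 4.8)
The plan is to compare the two explicit integral-kernel formulas for $A_h(E;\omega,\omega')$ and $A(E;\omega,\omega')$ term by term, using Theorem \ref{S7Schroedingerlimit} (convergence of the perturbed resolvent applied to suitable data) together with the elementary Riemann-sum convergence lemmas from \S 2 and the Appendix. Writing $A_h = A_h^{(1)} - A_h^{(2)}$ and $A = A^{(1)} - A^{(2)}$ according to the Born term and the multiple-scattering term, I would first treat $A_h^{(1)}(E;\omega,\omega')$. Here one has $a_h(\xi,\eta)\big(\tfrac{h}{2\pi}\big)^d\sum_{n}V(hn)e^{-ihn\cdot(\xi-\eta)}$, and since $V$ is compactly supported the sum is finite; the parametrization (\ref{Generalizedeigenxij}) gives $\xi_h(\omega)=\sqrt{E}\,\omega+O(h^2)$, while $|\nabla_\xi P_h(\xi)|\to|\nabla_\xi P(\xi)|$ with $P(\xi)=|\xi|^2$ forces $a_h(\xi_h(\omega),\xi_h(\omega'))\to C(E)$. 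Combining this with $h^d\sum_n V(hn)e^{-ihn\cdot(\xi-\eta)}\to\int V(x)e^{-i\sqrt{E}(\omega-\omega')\cdot x}dx$ (dominated-convergence/Riemann-sum argument, justified because $V$ is continuous and compactly supported) yields $A_h^{(1)}\to A^{(1)}$ locally uniformly on $S^{d-1}\times S^{d-1}$.

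The substantive step is the second term. Fix $E\in(0,4/h^2)$ and $\omega'\in S^{d-1}$. One must show that $u_h(E,\xi_h(\omega'),n)=\big(R_h(E+i0)\psi_h(\xi_h(\omega'))\big)(n)$, interpolated to $\widetilde u_h$, converges to $u(E,x,\omega')=R(E+i0)\big(V\psi(E,\omega')\big)(x)$. The difficulty is that the inhomogeneous data $\psi_h(\xi,n)=V(hn)e^{ihn\cdot\xi}$ is $h$-dependent and is itself only an interpolation of the continuum datum $V(x)e^{i\sqrt{E}\omega'\cdot x}$; moreover $f_h$ in Theorem \ref{S7Schroedingerlimit} was assumed of the form $f(hn)$ with a \emph{fixed} $f\in H^{m,s}$, whereas here the phase $e^{ihn\cdot\xi}$ varies. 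So the plan is: set $g(x)=V(x)e^{i\sqrt{E}\omega'\cdot x}\in H^{m,s}(\mathbf{R}^d)$ for every $m,s$ (since $V\in H^s$ compactly supported and the exponential is smooth), note $\psi_h(\xi_h(\omega'),n)=V(hn)e^{ihn\cdot\xi_h(\omega')}$, and estimate $\|\psi_h(\xi_h(\omega'),\cdot)-g_h\|_{L^{2,s}(\mathbf{Z}^d_h)}$ where $g_h(n)=g(hn)$; since $e^{ihn\cdot\xi_h(\omega')}-e^{i\sqrt{E}hn\cdot\omega'}=O(h^2|hn|)$ on the (fixed) support of $V$, this difference is $O(h^2)$ in $L^{2,s}$. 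Feeding the genuine data $g_h$ into Theorem \ref{S7Schroedingerlimit} gives $\widetilde{(R_h(E+i0)g_h)}\to R(E+i0)g$ in $L^{2,-s}$, $s>1/2$, and the $O(h^2)$ correction is absorbed by the uniform resolvent bound of Lemma \ref{Lemma5.2}. Hence $\widetilde u_h(E,\xi_h(\omega'),\cdot)\to u(E,\cdot,\omega')$ in $L^{2,-s}$, and by the Corollary to Theorem \ref{MainTheoremSchroedinger} (locally uniform convergence) the convergence is pointwise on the support of $V$.

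With this in hand the second term is handled as the first: $a_h(\xi_h(\omega),\xi_h(\omega'))h^d\sum_n V(hn)e^{ihn\cdot\xi_h(\omega')}u_h(E,\xi_h(\omega'),n)$ is a Riemann sum over the fixed compact support of $V$ of a product of a uniformly bounded, pointwise-convergent sequence (the interpolated $\widetilde u_h$, dominated via the $H^{m,-s}$ bound and Sobolev embedding from the Corollary) with the smooth factors $V(x)e^{i\sqrt{E}\omega\cdot x}$, so it converges to $C(E)\int_{\mathbf{R}^d}e^{-i\sqrt{E}\omega\cdot x}V(x)u(E,x,\omega')dx$. Assembling the two limits gives $A_h(E;\omega,\omega')\to A(E;\omega,\omega')$. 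The main obstacle, as indicated, is the reconciliation of the $\xi$-dependent (hence $h$-varying) generalized-eigenfunction data $\psi_h(\xi,\cdot)$ with the fixed-$f$ hypothesis of the convergence theorems; once that is resolved by the $O(h^2)$ phase estimate plus the uniform resolvent bound, everything else is routine Riemann-sum bookkeeping exploiting that $V$ has compact support.
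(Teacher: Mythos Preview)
Your argument is correct and follows the same strategy as the paper's (very terse) proof: split into the Born term and the resolvent term, use Riemann-sum convergence (Lemma \ref{AppendPointwiseConv}) for both, and invoke the locally uniform convergence $\widetilde u_h\to\widetilde u$ from the Corollary to Theorem \ref{MainTheoremSchroedinger}. Your explicit handling of the $h$-dependent data $\psi_h(\xi_h(\omega'),\cdot)$ via the $O(h^2)$ phase estimate together with the uniform resolvent bound of Lemma \ref{Lemma5.2} fills in a step the paper leaves implicit; just note that $g=V e^{i\sqrt{E}\omega'\cdot x}$ inherits only the $H^m$-regularity of $V$, not membership in $H^{m,s}$ for \emph{every} $m$, so to invoke Theorem \ref{S7Schroedingerlimit} one should either assume $V\in H^m$ with $m>[d/2]+1$ or replace the appeal to Lemma \ref{AppendPointwiseConv} by a direct uniform-continuity argument exploiting the compact support of $V$.
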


\begin{proof}
By Lemma \ref{AppendPointwiseConv}
\begin{equation*}
h^d\sum_{n\in{\bf Z}^d}V(hn) e^{ihn\cdot(\xi-\eta)} \to
\int_{{\bf R}^d}V(x) e^{-ix\cdot(\xi-\eta)}dx,
\end{equation*}
\begin{equation*}
h^d\sum_{n\in{\bf Z}^d}V(hn)e^{-ihn\cdot\eta}u_h(n) \to \int_{{\bf R}^d}V(x)e^{-ix\cdot\eta}\widetilde u(x)dx,
\end{equation*}
where
$$
u_h(n) = u_h(E,\xi,n), \quad \widetilde u(x) = \widetilde u(E,x,\omega').
$$
Recall that $\widetilde u_h(x)$ converges to $\widetilde u(x)$ locally uniformly on ${\bf R}^d$, and $\widetilde u_h(hn) = u_h(n)$.
\end{proof}


\subsection{Triangular lattice}
\begin{figure}[hbtp]
\centering
\includegraphics[width=8cm,height=6cm,keepaspectratio]{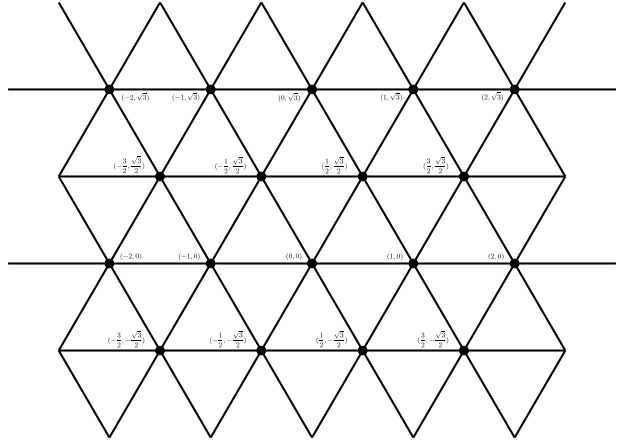}
\caption{Triangular lattice}
\label{Fig:TriangularLattice}
\end{figure}

The Laplacian for the triangular lattice is 
\begin{equation}
 \Delta_{\Gamma_h} = 
 \frac{1}{6h^2}
\big(S_{h,1} + S_{h,1}^{\ast} + S_{h,2} + S_{h,2}^{\ast} 
+ S_{h,1}S_{h,2}^{\ast} + S_{h,1}^{\ast}S_{h,2}\big).
\nonumber
\end{equation}
The reference energy is $E_0 = - 1/h^2$, and
the symbol of $- \Delta_{\Gamma_h}$  is given by
\begin{equation}
\begin{split}
P_h(\xi) & = 
\frac{1}{3h^2}\big(3 - \cos h\xi_1 - \cos h\xi_2 - \cos(h\xi_1-h\xi_2)\big) \\
& = \frac{2}{3h^2}\Big(\sin^2\frac{h\xi_1}{2} + \sin^2\frac{h\xi_2}{2} + \sin^2\frac{h(\xi_1-\xi_2)}{2}\Big).
\end{split}
\nonumber
\end{equation}
It has the unique global minimum at $\xi = 0$, and we have the asymptotic expansion
\begin{equation}
 P_h(\xi)  = \frac{1}{3}\big(\xi_1^2 - \xi_1\xi_2 + \xi_2^2\big) + O(h^2).
 \nonumber
\end{equation}
Therefore by the same argument as above, the following theorem holds.


\begin{theorem}
Assume that $V(x) \in H^s({\bf R}^2)$ with $s > 1$ and compactly supported, and that $f \in H^{m,s}({\bf R}^2)$ with $m > 2$, $s > 3$. Then
the solution of the Schr{\"o}dinger equation $\big(- \Delta_{disc,h} + V_{disc,h} - E\big) u = f$ on the triangular lattice converges to the solution of the Schr{\"o}dinger equation
$$
\Big(- \frac{1}{3}\big(\frac{\partial^2}{\partial x_1^2} - \frac{\partial^2}{\partial x_1\partial x_2} + \frac{\partial^2}{\partial x_2^2}\big) + V(x) - E\Big)u = f \quad {\it in} \quad {\bf R}^2.
$$
\end{theorem}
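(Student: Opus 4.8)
The statement is an instance of Theorem~\ref{MainTheoremSchroedinger}, exactly parallel to the square-lattice Theorem~\ref{S7Schroedingerlimit}; the work consists in verifying the hypotheses (C-1), (C-2), (C-3), (V-2), (U-2) of \S\ref{SectionPotentialPerturbation} for the triangular-lattice symbol
\[
P_h(\xi)=\frac{2}{3h^2}\Big(\sin^2\tfrac{h\xi_1}{2}+\sin^2\tfrac{h\xi_2}{2}+\sin^2\tfrac{h(\xi_1-\xi_2)}{2}\Big),
\]
together with recalling that the triangular lattice is one of the graphs of \cite{AndIsoMor1} for which (A-1)--(A-4) hold, so that the uniform-in-$h$ limiting absorption principle, the $L^{2,s}$--$L^{2,-s}$ resolvent estimates, and the radiation-condition characterisation used in the proofs of Lemmas~\ref{Lemmaepsilonyo0witpotential}--\ref{Lemma5.3} are available. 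As in the square-lattice case one fixes a relatively compact interval $I\subset\subset(0,\infty)$ and $h_0>0$ small enough that $I$ lies below the first nonzero critical value of $P_h$ for $0<h<h_0$, so that $M_{E,h}=\{P_h=E\}$ is a smooth compact hypersurface for $E\in I$.

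Since the triangular-lattice Laplacian is a scalar operator ($\mathsf s=1$), there is a single characteristic root, so the eigenprojection $\Pi_0$ is the identity, (C-2) is vacuous, and the gauge machinery collapses. For (C-1): at $h=1$ the symbol $\tfrac23(\sin^2\tfrac{\eta_1}{2}+\sin^2\tfrac{\eta_2}{2}+\sin^2\tfrac{\eta_1-\eta_2}{2})$ is nonnegative and vanishes precisely when $\eta_1,\eta_2,\eta_1-\eta_2\in2\pi{\bf Z}$, i.e. only at $\eta=0$ in ${\bf T}^2$; hence $d_1=0$, $d_h=0$ and $\mathcal G_h=1$.

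For (C-3): Taylor expanding $\sin^2\tfrac t2=\tfrac{t^2}{4}+O(t^4)$ with $t=h\xi_1,\,h\xi_2,\,h(\xi_1-\xi_2)$ gives
\[
P_h(\xi)=\tfrac16\big(\xi_1^2+\xi_2^2+(\xi_1-\xi_2)^2\big)+O(h^2)=\tfrac13\big(\xi_1^2-\xi_1\xi_2+\xi_2^2\big)+O(h^2),
\]
and since $P_h$ is a finite trigonometric polynomial in $h\xi$ this holds together with all $\xi$-derivatives, uniformly on compact sets, which is precisely (C-3) with $2m=2$. The limiting form $P(\xi)=\tfrac13(\xi_1^2-\xi_1\xi_2+\xi_2^2)$ has coefficient matrix with diagonal entries $1/3$ and off-diagonal entries $-1/6$, hence positive trace and determinant $1/12>0$; it is therefore positive definite, homogeneous of degree $\gamma=2$ and elliptic. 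Assumption (V-2) is part of the hypothesis ($V\in H^s({\bf R}^2)$, $s>1=d/2$, compactly supported). For (U-2): a linear change of variables turns $P(D_x)$ into $-\Delta$, and for $-\Delta+V(x)-E$ with compactly supported (hence decaying) $V$ and $E>0$ the uniqueness of outgoing solutions is the classical Rellich-type theorem combined with unique continuation, as recalled in \S\ref{S4.5Dirac}; thus (U-2) holds.

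With (C-1), (C-2), (C-3), (V-2), (U-2) in force, Theorem~\ref{MainTheoremSchroedinger} and its Corollary apply and give $\widetilde u_h(x)\to u(x)$ in $L^{2,-s}({\bf R}^2)$ for every $s>1/2$, locally uniformly, with $u=\Pi_0u=u$ and $(P(D_x)+V(x)-E)u=\Pi_0f=f$, i.e.
\[
\Big(-\tfrac13\big(\partial_{x_1}^2-\partial_{x_1}\partial_{x_2}+\partial_{x_2}^2\big)+V(x)-E\Big)u=f\quad\text{in }{\bf R}^2,
\]
which is the asserted equation. The only genuinely non-bookkeeping points are the derivative-uniform remainder estimate in (C-3) — where one uses that the symbol is a finite trigonometric polynomial — and checking that the limiting quadratic form is positive definite so that the reduction of (U-2) to the $-\Delta$ case is legitimate; I do not expect any obstacle beyond these routine verifications.
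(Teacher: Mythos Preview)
Your proposal is correct and follows exactly the paper's approach: the paper simply writes ``by the same argument as above, the following theorem holds,'' meaning one verifies that the triangular-lattice symbol has a unique global minimum at $\xi=0$ with the displayed Taylor expansion and then applies Theorem~\ref{MainTheoremSchroedinger} just as in the square-lattice case. Your write-up is in fact more detailed than the paper's, explicitly checking (C-1)--(C-3), (V-2), (U-2) and the positive-definiteness of the limiting quadratic form, all of which the paper leaves implicit.
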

Theorem~\ref{ConvergnceofS-matrix} also holds for this case. In this connection we note that the thresholds associated with $P_h(\xi)$ are $0$, $4/(3h^2)$, and $3/(2h^2)$. Thus in the proof of Theorem~\ref{ConvergnceofS-matrix} we have a restriction to the energy interval $(0,4/(3h^2))$. In the limit $h\to0$ this restriction disappears, as for the square lattice. The same remark is valid for the following all examples. Therefore, in the theorems for the Schr{\"o}dinger  limit as in Theorems \ref{S7Schroedingerlimit} and \ref{ConvergnceofS-matrix} to be given below, we do not mention this limitation for $E$, i.e. $0 < E < C_1(h)$, $C_1(h)$ being the first threshold in the spectrum of $\mathcal L_h(S_h)$.


\subsection{Ladder of square lattice}
\begin{figure}[hbtp]
\centering
\includegraphics[width=8cm]{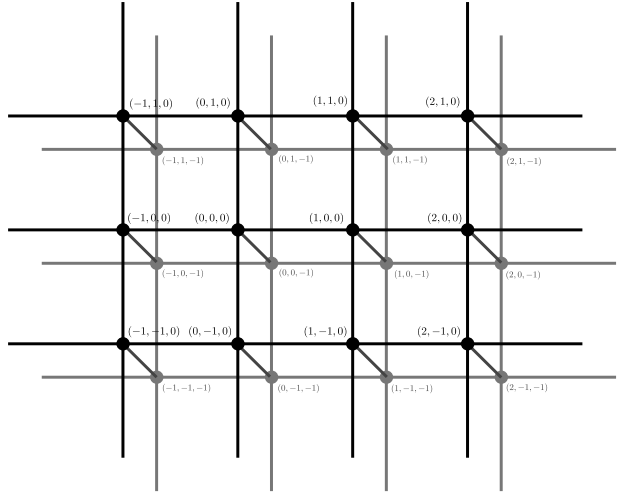}
\caption{2-dim. ladder}
\label{C3-dim.ladder}
\end{figure}
In this case the Laplacian is written as
\begin{equation}
\mathcal L_h(S_h) = \frac{1}{h^2}\mathcal L(S_h),
\nonumber
\end{equation}
\begin{equation}
\mathcal L(S_h) = - \frac{1}{2d+1}
\left(
\begin{array}{cc}
\sum_{j=1}^d\big(S_{h,j} + S_{h,j}^{\ast}\big)& 1 \\
1 & \sum_{j=1}^d\big(S_{h,j} +  S_{h,j}^{\ast}\big)
\end{array}
\right).
\nonumber
\end{equation}
Letting
\begin{equation}
T = \frac{1}{\sqrt2}\left(
\begin{array}{cc}
1 & -1  \\
1 & 1
\end{array}
\right),
\nonumber
\end{equation}
we have
\begin{equation}
T^{\ast}\mathcal L(S_h)T   = - \frac{1}{2d+1}
\left(
\begin{array}{cc}
\sum_{j=1}^d\big(S_{h,j} + S_{h,j}^{\ast}\big) + 1 & 0 \\
0 & \sum_{j=1}^d\big(S_{h,j} + S_{h,j}^{\ast})-1
\end{array}
\right).
\nonumber
\end{equation}
Then the reference energy is $E_0 = - 1/h^2$. 
We then consider the Hamiltonian
\begin{equation}
\mathcal L_h(S_h) =
\frac{1}{(2d+1)h^2}\left(
\begin{array}{cc}
2d - \sum_{j=1}^d\big(S_{h,j} + S_{h,j}^{\ast}\big) & 0 \\
0 & 2d + 2  - \sum_{j=1}^d\big(S_{h,j} + S_{h,j}^{\ast}\big)
\end{array}
\right).
\nonumber
\end{equation}
The characteristic roots are
$$
\lambda^{(+)}_h(\eta) = \frac{2d + 2 - 2\sum_{j=1}^d\cos h\eta_j}{(2d+1)h^2},
$$
$$
\lambda^{(-)}_h(\eta) = \frac{2d  - 2\sum_{j=1}^d\cos h\eta_j}{(2d+1)h^2}.
$$
Letting $P_h(\eta) = \lambda_h^{(-)}(\eta)$ we then have
$$
P_h(\eta) = \frac{4}{(2d+2)h^2}\sum_{j=1}^d\sin^2\frac{\eta_j^2}{2} \to 
\frac{1}{2d+1}\sum_{j=1}^d\eta_j^2.
$$
We then have the following theorem.


 \begin{theorem}
 Let $V(x) \in H^s({\bf R}^d)$ with $s > d/2$ and compactly supported, and $f \in H^{m,s}({\bf R}^d)$ with $m > d/2 + 1$, $s > d+1$. Then
  the solution of the equation $(- \Delta_{disc,h} + V_{disc,h} - E)u = f$ converges to the solution of 
the Schr{\"o}dinger equation
$$
\Big(- \frac{1}{2d+1}\Delta + V(x) - E\Big)u = f \quad {\it in} \quad {\bf R}^d.
$$
\end{theorem}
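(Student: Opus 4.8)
The plan is to derive this theorem from the abstract convergence result of \S\ref{SectionPotentialPerturbation}, namely Theorem~\ref{MainTheoremSchroedinger}. The $2\times 2$ matrix structure of the ladder Laplacian is already diagonalised by the \emph{constant} unitary $T$ introduced above, and $T$ commutes with the scalar multiplication operator $V_{disc,h}$; so conjugating $(-\Delta_{disc,h}+V_{disc,h}-z)u_h=f_h$ by $T$ decouples it into two scalar discrete Schr\"odinger equations, one governed by the lower characteristic root $\lambda_h^{(-)}(\xi)$ and one by the upper root $\lambda_h^{(+)}(\xi)$, each carrying the same potential $V_{disc,h}$. Here the scaling exponent is $\nu=2$, and the reference energy is chosen so that $\lambda_h^{(-)}$ has minimum $0$; since this minimum is attained only at $\xi=0$ we have $d_1=0$, the gauge transformation $\mathcal G_h$ is the identity, and the projection $\Pi_h=\Pi_0$ onto the lower band is constant (in the $T$-basis it is just projection onto one coordinate).

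First I would discard the upper band. Since $\lambda_h^{(+)}(\xi)\ge 2/((2d+1)h^2)$ uniformly in $\xi$, while $V\in H^s$ with $s>d/2$ is bounded, for $E$ in a fixed relatively compact interval $I\subset\subset(0,\infty)$ and $h$ small the upper-band operator with symbol $\lambda_h^{(+)}(\xi)$, perturbed by $V_{disc,h}$ and shifted by $-z$, is boundedly invertible with norm $O(h^2)$; hence the upper component of $u_h$ is $O(h^2)$ in $L^2$ and its interpolant tends to $0$ in every $L^{2,-s}$. This is precisely the projection estimate used in (\ref{S3Lh-z-11-Ph(xi)leqChnu}) and (\ref{S4whleqChnu}), here with the \emph{constant} $\Pi_0$, so only the lower-band component can survive in the limit.

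Next I would check that the lower-band scalar problem satisfies the hypotheses of \S\ref{SectionPotentialPerturbation}. With $P_h(\xi)=\lambda_h^{(-)}(\xi)=\frac{4}{(2d+1)h^2}\sum_{j=1}^d\sin^2\tfrac{h\xi_j}{2}$, one has $P_h\ge 0$, vanishing only at $\xi=0$, which is (C-1) with $d_h=0$; the gap $\lambda_h^{(+)}-\lambda_h^{(-)}\equiv 2/((2d+1)h^2)\to\infty$ gives (C-2) for the chosen $I$ once $h$ is small; and $P_h(\xi)\to P(\xi)=\frac{1}{2d+1}|\xi|^2$ together with all derivatives, which has the form required in (C-3) with $m=1$, so $P(D_x)=-\frac{1}{2d+1}\Delta$. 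The condition (V-2) is exactly the assumed hypothesis on $V$, and $m>d/2+1$, $s>d+1$ are the assumptions on $f$ demanded by Theorem~\ref{MainTheoremSchroedinger} (note $m>d/2+1$ implies $m>[d/2]+1$). Finally (U-2) --- uniqueness of the outgoing $\mathcal B^{\ast}$-solution of $(-\frac{1}{2d+1}\Delta+V-E)u=f$ --- is the classical combination of Rellich's uniqueness theorem with unique continuation for Schr\"odinger operators with compactly supported potential ($V\in H^s\subset L^{\infty}\subset L^{d/2}_{\mathrm{loc}}$), the same kind of argument as in the discussion of (U-1) in \S\ref{Section4Freeequation}.

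With the hypotheses in place, Theorem~\ref{MainTheoremSchroedinger} and the Corollary following it give that the lower-band interpolant converges in $L^{2,-s}$, $s>1/2$, and locally uniformly, to the unique outgoing solution of $(-\frac{1}{2d+1}\Delta+V-E)u=f$ (with the source read in the $T$-diagonalised basis, the upper component having been annihilated in the limit by the second step); undoing the conjugation by $T$ then yields the theorem. I do not anticipate a genuine obstacle here: the argument is a verification of hypotheses, the substance residing in the uniform-in-$h$ limiting absorption and radiation-condition estimates for the discrete operators established in \cite{AndIsoMor1} and in the compactness machinery of \S\ref{SectionPotentialPerturbation}. The only points meriting a word of care are that the matrix decoupling is effected by an $h$-independent unitary, so it commutes with all the $\xi$-cutoffs $\chi_d(h\xi)$, $\chi_{\mathbf T^d}(h\xi)$ used there, and that it is precisely the divergence of the spectral gap that licenses dropping the upper band.
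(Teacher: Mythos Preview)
Your proposal is correct and follows essentially the same approach as the paper: diagonalise the $2\times2$ ladder Laplacian by the constant unitary $T$ (which commutes with the scalar potential), observe that the lower characteristic root $P_h(\xi)=\lambda_h^{(-)}(\xi)$ has its unique global minimum at $\xi=0$ and converges to $\tfrac{1}{2d+1}|\xi|^2$, and then invoke Theorem~\ref{MainTheoremSchroedinger}. The paper in fact states the theorem without an explicit proof, treating it as an immediate application of the general machinery of \S\ref{SectionPotentialPerturbation}; your write-up simply spells out the verification of (C-1)--(C-3), (V-2), (U-2) and the discarding of the upper band, all of which are implicit in the paper's presentation.
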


Theorem \ref{ConvergnceofS-matrix} also holds for this case.

\section{Derivation of Dirac equations}

\subsection{Hexagonal lattice}
\begin{figure}[hbtp]
\centering
\includegraphics[width=8cm,height=6cm,keepaspectratio]{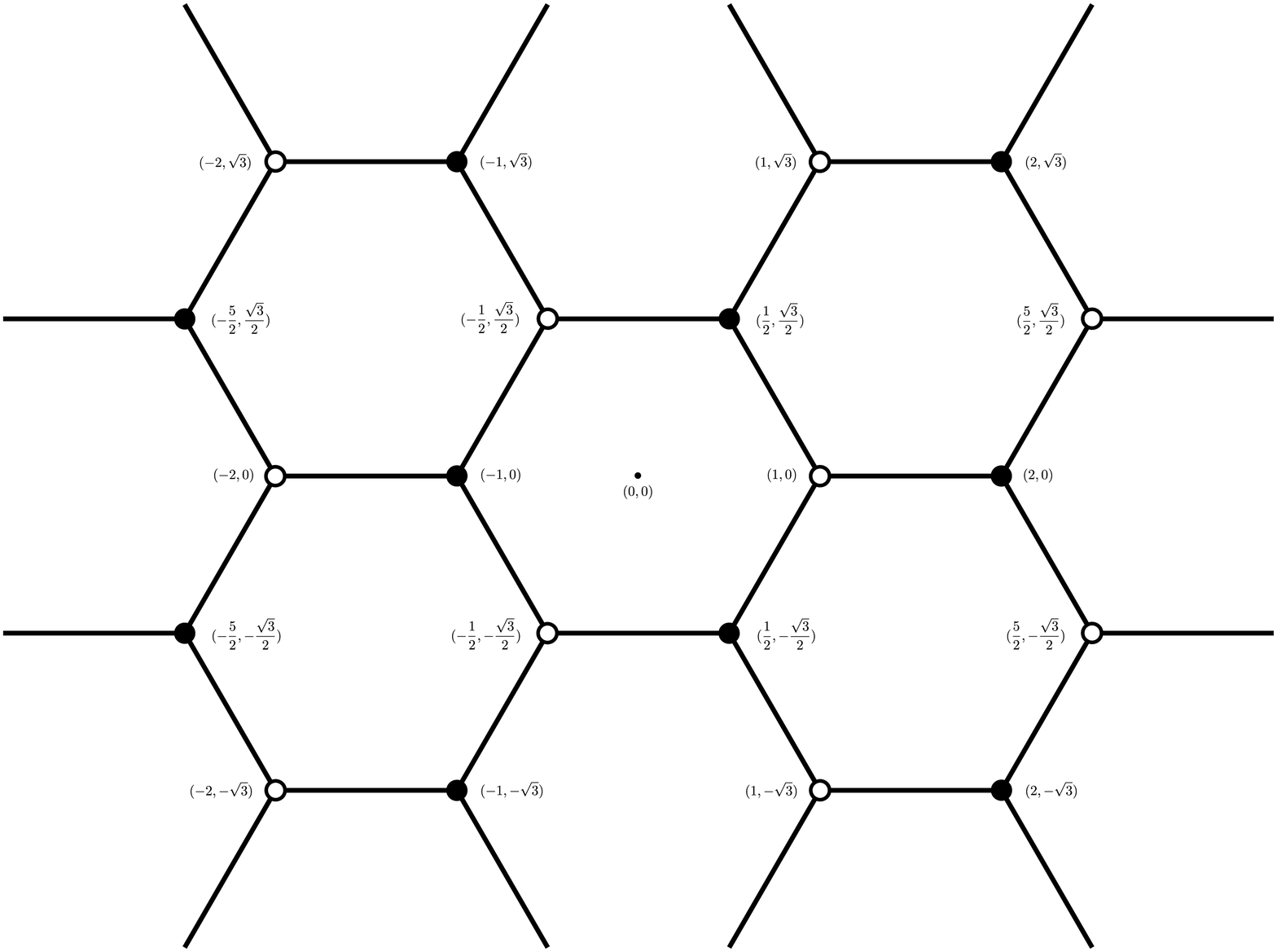}
\caption{Hexagonal lattice}
\label{Fig:HexagonalLattice}
\end{figure} 

The Laplacian on the hexagonal lattice is 
\begin{equation}
 = -
\frac{1}{3}
\left(
\begin{array}{cc}
0 & 1 + S_{1}^{\ast} + S_{2}^{\ast} \\
1 + S_{1} + S_{2}& 0
\end{array}
\right).
\label{LaplacianHexalattice}
\end{equation} 
See Figure  \ref{Fig:HexagonalLattice}.
We put
\begin{equation}
\mathcal L_h(e^{-ih\eta})
  = -
\frac{1}{3h}
\left(
\begin{array}{cc}
0 & 1 + e^{ih\eta_1} + e^{ih\eta_2} \\
 1 + e^{-ih\eta_1} + e^{-ih\eta_2} & 0
\end{array}
\right), 
\label{eigenvauesHexalattice}
\end{equation}
and let $\mathcal L(e^{-\eta}) = \mathcal L_1(e^{-i\eta})$.
Letting
\begin{equation}
b(z) = \cos z_1 + \cos z_2 + \cos(z_1 - z_2), \quad
z = (z_1, z_2) \in {\bf C}^2,
\label{S8b(z)define}
\end{equation}
we have
$$
|1 + e^{i\eta_1} + e^{i\eta_2}|^2 = 3 + 2b(\eta), \quad \eta = (\eta_1,\eta_2) \in {\bf R}^2.
$$
The characteristic roots of $\mathcal L(e^{-i\eta})$ are given by
$\displaystyle{
\lambda^{(\pm)}(\eta) = \pm \frac{\sqrt{3+2b(\eta)}}{3}.}
$
By elementary geometry, we have
\begin{equation}
\begin{split}
3 + 2b(\eta) = 0 &\Longleftrightarrow \eta = (\frac{2\pi}{3},-\frac{2\pi}{3}), \  (-\frac{2\pi}{3},\frac{2\pi}{3}), 
\\
3 + 2b(\eta) = 9 & \Longleftrightarrow \eta = (0,0).
\end{split}
\label{3+2bz=0}
\end{equation}
For the first two cases, the Hessian matrix of $3 + 2b(\eta)$ is 
 $\displaystyle{
\left(
\begin{array}{cc}
1 & -1/2 \\
- 1/2 & 1
\end{array}
\right)}$, and for the third case  $\displaystyle{
\left(
\begin{array}{cc}
1 & -1/\sqrt{2} \\
- 1/\sqrt{2} & 1
\end{array}
\right)}$. 
The characteristic roots of $\mathcal L_h(S_h)$ are
$$
\lambda_h^{(\pm)}(\eta) = \pm \frac{\sqrt{3 + 2b(h\eta)}}{3h}.
$$
The spectrum of $- \frac{1}{h}\Delta_{\Gamma_h}$ is 
$$
\sigma\big( - \frac{1}{h}\Delta_{\Gamma_h}\big) = \big[-1/h,1/h\big].
$$

\subsubsection{Expansion around the Dirac points} 
We take the reference energy $E_0 = 0$. 
We put
\begin{equation}
d_h^{(\pm)} = \pm\frac{1}{h}(\frac{2\pi}{3},-\frac{2\pi}{3}),
\label{S8Diracpoints}
\end{equation}
\begin{equation}
\mathcal K_0^{(\pm)} = \{\eta \in {\bf T}^d\, ; \, |\eta - d_1^{(\pm)}| < \frac{\pi}{3}\},
\nonumber
\end{equation}
\begin{equation}
\mathcal K_0^{(0)} = {\bf T}^d \setminus \big(\mathcal K_0^{(+)}\cup \mathcal K_0^{(-)}\big).
\nonumber
\end{equation}
Take $\chi^{(\pm)}, \chi^{(0)} \in C^{\infty}({\bf T}^d)$ such that 
\begin{equation}
\chi^{(\pm)}(\eta) = 
\left\{
\begin{split}
&1, \quad {\rm for} \quad |\eta - d_1^{(\pm)}| < \frac{\pi}{6}, \\
&0, \quad {\rm for} \quad |\eta - d_1^{(\pm)}| > \frac{\pi}{3},
\end{split}
 \right.
 \nonumber
\end{equation}
\begin{equation}
\chi^{(0)}(\eta) = 1 - \chi^{(+)}(\eta) - \chi^{(-)}(\eta).
\nonumber
\end{equation}
We consider the Schr{\"o}dinger equation
\begin{equation}
(- \Delta_{\Gamma_h}  - z)u_h = f_h
\label{S8EquationHexa}
\end{equation}
on the hexagonal lattice.
Let 
 \begin{equation}
 \begin{split}
&  u^{(\pm)}_h =  \mathcal F_{disc,h}^{-1}\big(\chi^{(\pm)}(h\eta)\widehat u_h(\eta)\big), \\
 & u^{(0)}_h =  \mathcal F_{disc,h}^{-1}\big(\chi^{(0)}(h\eta)\widehat u_h(\eta)\big), \\
 & f^{(\pm)}_h =  \mathcal F_{disc,h}^{-1}\big(\chi^{(\pm)}(h\eta)\widehat f_h(\eta)\big).
 \end{split}
  \nonumber
 \end{equation}
Take a compact interval $I \subset (0,\infty)$ and assume that ${\rm Re}\, z \in I$. 
We  consider (\ref{S8EquationHexa}) on ${\bf T}^d_h$.  
Since $\lambda^{(\pm)}(\eta) \neq 0$ on $\mathcal K_0^{(0)}$, there exists an $\epsilon_0 >0$ such that
$$
|\mathcal \lambda^{(\pm)}_h(\eta)| \geq \frac{\epsilon_0}{h}, \quad 
{\rm on} \quad \mathcal K^{(0)}_0/h.
$$
Therefore there exists $h_0 > 0$ such that for $0 < h < h_0$
$$
\big|\det\big(\mathcal L_h(e^{-h\eta}) - z\big)\big|
 \geq C/h^2 \quad {\rm on} \quad 
 {\rm supp}\, \chi^{(0)}(h\eta)
$$
for a constant $C > 0$. We then have, letting $\|\cdot\|_s = \|\cdot\|_{L^{2,s}}$, 
\begin{equation}
\|u_h^{(0)}\|_{s}\leq 
Ch\big(\|f_h\|_{s} + \|u_h\|_{-s}\big), \quad 
s > 1/2.
\label{S8chi0estimate}
\end{equation}

\medskip
We put
\begin{equation}
P_h^{(\pm)}(\xi) = \frac{\sqrt{3 + 2b\big(h(\xi + d_h^{(\pm)})\big)}}{3h},
\nonumber
\end{equation}
\begin{equation}
\mathcal K^{(\pm)} = \mathcal K_0^{(\pm)} - d_1^{(\pm)} = 
\{\xi = \eta - d_1^{(\pm)}\, ; \, 
\eta \in \mathcal K^{(\pm)}_0\}.
\nonumber
\end{equation}
In view of (\ref{S8b(z)define}), (\ref{3+2bz=0}) and Taylor expansion, we have the following lemma.


\begin{lemma}
\label{LemmaPpmhgeqCxiHexa}
On $\mathcal K^{(\pm)}/h$, $P_h^{(\pm)}(\xi)$ vanishes only at $\xi = 0$. 
Moreover,  there exists a constant $C > 0$ independent of $h$ such that
\begin{equation}
P_h^{(\pm)}(\xi) \geq C|\xi|, \quad \xi \in \mathcal K^{(\pm)}/h.
\nonumber
\end{equation}
\end{lemma}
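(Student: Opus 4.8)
The plan is to remove the $h$-dependence by the rescaling $\zeta=h\xi$, after which the claim becomes a fixed Morse-type lower bound for the real-analytic function
\begin{equation}
g(\zeta)=3+2b\big(\zeta+d_1^{(\pm)}\big), \qquad |\zeta|\le \tfrac{\pi}{3},
\nonumber
\end{equation}
on the fixed closed ball $\overline{\mathcal K^{(\pm)}}=\{|\zeta|\le \pi/3\}$ (recall $\mathcal K^{(\pm)}=\mathcal K_0^{(\pm)}-d_1^{(\pm)}$, and $\mathcal K_0^{(\pm)}$ is the ball of radius $\pi/3$ about $d_1^{(\pm)}$). Since $hd_h^{(\pm)}=d_1^{(\pm)}$ by (\ref{S8Diracpoints}), we have the exact identity $P_h^{(\pm)}(\xi)^2=g(h\xi)/(9h^2)$, and $\xi\in\mathcal K^{(\pm)}/h$ is the same as $h\xi\in\mathcal K^{(\pm)}$. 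Hence the lemma follows once we show that (i) $g\ge0$, (ii) $g$ vanishes on $\overline{\mathcal K^{(\pm)}}$ only at $\zeta=0$, and (iii) there is an $h$-independent constant $c>0$ with $g(\zeta)\ge c|\zeta|^2$ for $|\zeta|\le\pi/3$: indeed then $P_h^{(\pm)}(\xi)^2\ge (c/9)|\xi|^2$, and $P_h^{(\pm)}(\xi)=0$ forces $h\xi=0$, i.e. $\xi=0$.

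For (i) and (ii), recall that $3+2b(\eta)=|1+e^{i\eta_1}+e^{i\eta_2}|^2\ge0$, so $g\ge0$; and by (\ref{3+2bz=0}) the zero set of $3+2b$ on $\mathbf T^2$ is exactly $\{d_1^{(+)},d_1^{(-)}\}$. On the torus these two Dirac points lie at mutual distance $\tfrac{2\sqrt2}{3}\pi>\tfrac{\pi}{3}$, so $\overline{\mathcal K_0^{(\pm)}}$ contains only the single zero $d_1^{(\pm)}$; translating, $g$ vanishes on $\overline{\mathcal K^{(\pm)}}$ only at $\zeta=0$. For (iii), since $g\ge0$ and $g(0)=0$, the origin is a minimum, so $\nabla g(0)=0$ and $\nabla^2 g(0)$ is positive semidefinite; by the Hessian computation recorded in the excerpt it is in fact positive definite. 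Taylor's formula at $\zeta=0$ then gives $\delta>0$ and $c_1>0$ with $g(\zeta)\ge c_1|\zeta|^2$ for $|\zeta|\le\delta$. On the compact annulus $\{\delta\le|\zeta|\le\pi/3\}$ the continuous function $g$ has no zero by (ii), hence $g\ge m$ for some $m>0$, so $g(\zeta)\ge (9m/\pi^2)|\zeta|^2$ there. Taking $c=\min\{c_1,\,9m/\pi^2\}$ proves (iii), with $c$ depending only on $b$, and one reads off the statement with $C=\sqrt c/3$.

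There is no genuinely hard step; the only point needing a moment of care is the claimed $h$-uniformity of $C$, and it is automatic here, since after the rescaling $\zeta=h\xi$ everything reduces to the single fixed function $g$ on the single fixed compact set $\overline{\mathcal K^{(\pm)}}$. (If one prefers to avoid splitting off an annulus, one may instead observe that, because the Hessian of $g$ at $0$ is nondegenerate, $g(\zeta)/|\zeta|^2$ extends to a strictly positive continuous function on $\overline{\mathcal K^{(\pm)}}$, and take $c$ to be its minimum.)
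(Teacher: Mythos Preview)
Your proof is correct and follows essentially the same approach as the paper, which merely states ``In view of (\ref{S8b(z)define}), (\ref{3+2bz=0}) and Taylor expansion, we have the following lemma.'' You have fleshed out precisely this sketch: the rescaling $\zeta=h\xi$ reduces everything to the fixed function $g(\zeta)=3+2b(\zeta+d_1^{(\pm)})$ on the fixed ball $\{|\zeta|\le\pi/3\}$, after which nonnegativity from $3+2b(\eta)=|1+e^{i\eta_1}+e^{i\eta_2}|^2$, the zero-set identification (\ref{3+2bz=0}), the positive-definite Hessian, and compactness give the quadratic lower bound and hence the $h$-uniform constant.
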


In Subsection \ref{S4.5Dirac}, we studied this equation by projecting 
onto each characteristic root. 
Here, we deal with $\mathcal L_h(S_h)$ 
without diagonalization. For the solution of the equation
\begin{equation}
 \big(- \Delta_{disc,h} - z\big)u_h = f_h,
 \label{HexagonalLatticeEq}
\end{equation}
  we split
  $u_h = u_h^{(+)} + u_h^{(-)}$, where
 \begin{equation}
 u^{(\pm)}_h =  \mathcal F_{disc,h}^{-1}\big(\chi^{(\pm)}(h\eta)\widehat u_h(\eta)\big), \quad 
  f^{(\pm)}_h =  \mathcal F_{disc,h}^{-1}\big(\chi^{(\pm)}(h\eta)\widehat f_h(\eta)\big),
  \nonumber
 \end{equation}
 which satisfy
 \begin{equation}
 \big(\mathcal L_h(S_h) - z\big)u^{(\pm)}_h = f^{(\pm)}_h.
 \nonumber
 \end{equation}
Define the gauge transformation $\mathcal G^{(\pm)}$ by
\begin{equation}
\big(\mathcal G^{(\pm)} a\big)(n) = e^{ihn\cdot d_h^{(\pm)}}a(n) =  e^{in\cdot d_1^{(\pm)}}a(n), \quad 
a \in L^2({\bf Z}^2_h).
\nonumber
\end{equation}
We put
\begin{equation}
v^{(\pm)}_h = \big(\mathcal G^{(\pm)}\big)^{\ast}u^{(\pm)}_h,
\nonumber
\end{equation}
and consider
\begin{equation}
\big(\big(\mathcal G^{(\pm)}\big)^{\ast}\mathcal L_h(S_h)\mathcal G^{(\pm)} - z\big)v^{(\pm)}_h = \big(\mathcal G^{(\pm)}\big)^{\ast}f^{(\pm)}_h.
\nonumber
\end{equation}
Note that
\begin{equation}
\big(\mathcal F_{disc,h}\big(\mathcal G^{(\pm)}\big)^{\ast}\mathcal L_h(S_h)\mathcal G^{(\pm)} a\big)(\xi) = \mathcal L_{h}(e^{ih(\xi + d_h^{(\pm)})})\big(\mathcal F_{disc,h}a\big)(\xi).
\end{equation}
We put
$$
q(\eta) = 1 + e^{i\eta_1} + e^{i\eta_2}.
$$
In view of (\ref{eigenvauesHexalattice}), we have
\begin{equation}
\Big(\mathcal L_{h}(e^{ih(\xi + d_h^{(\pm)}}) - z\Big)^{-1} = 
\frac{1}{P_h^{(\pm)}(\xi) - z^2}\left(
\begin{array}{cc}
0 & q(h(\xi + d_h^{(\pm)}))\\
q(- h(\xi + d_h^{(\pm)})) & 0
\end{array}
\right).
\nonumber
\end{equation}
By virtue of Lemma \ref{LemmaPpmhgeqCxiHexa}, one can argue in the same way as in \S \ref{Section4Freeequation} to obtain the uniform estimates with respect to $0 < h < h_0$.
We take $\psi^{(\pm)}(\eta) \in C_0^{\infty}({\bf R}^d)$ and put
$$
f_h(x) = \big(\frac{h}{2\pi}\big)^{d/2}\int_{{\bf R}^d}
e^{ix\cdot\eta}
\left(\psi^{(+)}(\eta - d_h^{(+)}) + \psi^{(-)}(\eta - d_h^{(-)})\right)d\eta.
$$

\begin{lemma}
\label{Lemma8.2}
Let $v^{(\pm)}_h = v^{(\pm)}_h( E + i0)$ for $E > 0$. Then we have for $\epsilon > 0$
\begin{equation}
\|\widetilde v^{(\pm)}_h\|_{-1/2 - \epsilon} \leq C\|f\|_{m,s},
\nonumber
\end{equation}
\begin{equation}
\|p_-(D_x)\widetilde v_h^{(\pm)}\|_{-1/2 + \epsilon} \leq 
C\|f\|_{m,s}, \quad 
p_- \in \mathcal P_-.
\nonumber
\end{equation}
\end{lemma}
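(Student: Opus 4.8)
The plan is to reprove the two estimates by running the argument of \S\ref{Section4Freeequation} directly on the $2\times2$ system $\mathcal L_h(e^{ih(\xi+d_h^{(\pm)})})$, without the block diagonalization of Subsection~\ref{S4.5Dirac}. The algebraic point that makes this possible is that, by \eqref{S8b(z)define} and the identity $|1+e^{i\eta_1}+e^{i\eta_2}|^2 = 3+2b(\eta)$, one has $\frac{1}{3h}\big|q\big(h(\xi+d_h^{(\pm)})\big)\big| = P_h^{(\pm)}(\xi)$, so that wherever the resolvent exists
\begin{equation}
\Big(\mathcal L_h(e^{ih(\xi+d_h^{(\pm)})}) - z\Big)^{-1} = \frac{1}{P_h^{(\pm)}(\xi)-z}\cdot\frac{\mathcal L_h(e^{ih(\xi+d_h^{(\pm)})}) + z}{P_h^{(\pm)}(\xi)+z}.
\nonumber
\end{equation}
Since $P_h^{(\pm)}\ge0$ and $\mathrm{Re}\,z=E\in I\subset\subset(0,\infty)$ we have $|P_h^{(\pm)}(\xi)+z|\ge\max\{P_h^{(\pm)}(\xi),|z|\}$, hence the second factor is a matrix symbol bounded by a constant independent of $0<h<h_0$ and of $z$ with $\mathrm{Re}\,z\in I$ (its off-diagonal entries have modulus $P_h^{(\pm)}(\xi)/|P_h^{(\pm)}(\xi)+z|\le1$, and its diagonal entries equal $z/(P_h^{(\pm)}(\xi)+z)$), and on any fixed annulus $\{c\le|\xi|\le C\}$ it is smooth with all $\xi$-derivatives bounded uniformly in $h$ and convergent as $h\to0$. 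It therefore plays exactly the role of the eigenprojection $\Pi_h$ in \S\ref{Section4Freeequation}, while the only singular factor is $\big(P_h^{(\pm)}(\xi)-z\big)^{-1}$.

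Next I would record that $P_h^{(\pm)}$ satisfies the hypotheses of \S\ref{Section4Freeequation} with exponent $\gamma=1$: by Lemma~\ref{LemmaPpmhgeqCxiHexa} it vanishes on $\mathcal K^{(\pm)}/h$ only at $\xi=0$ and obeys $P_h^{(\pm)}(\xi)\ge C|\xi|$ there, Taylor expansion of $3+2b$ about the Dirac points $d_1^{(\pm)}$ (at which the gradient vanishes and the Hessian is the positive-definite matrix displayed after \eqref{3+2bz=0}) also gives $P_h^{(\pm)}(\xi)\le C^{-1}|\xi|$ on $\mathcal K^{(\pm)}/h$ and $P_h^{(\pm)}(\xi)\to P^{(\pm)}(\xi)$ together with all derivatives for $\xi\ne0$, where $P^{(\pm)}(\xi)=\big(\sum_{i,j}a^{(\pm)}_{ij}\xi_i\xi_j\big)^{1/2}$ with $(a^{(\pm)}_{ij})$ positive definite — in particular $P^{(\pm)}$ is positively homogeneous of degree $1$ and comparable to $|\xi|$. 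Consequently the characteristic surface $\{\xi:P_h^{(\pm)}(\xi)=E\}$ lies in a fixed annulus $\{C_0<|\xi|<C_0^{-1}\}$ as in \eqref{MEhisC0xi}, away from the singular point $\xi=0$, and $\nabla_\xi P_h^{(\pm)}\ne0$ on it. The construction of \S\ref{Section4Freeequation} then applies verbatim after splitting $\widehat v_h^{(\pm)}(\xi,z)$ by the cutoffs $\chi_1,\chi_2$ of \eqref{Definechi1(xi)}. On $\mathrm{supp}\,\chi_1$ one has $|P_h^{(\pm)}(\xi)-z|\ge C(1+|\xi|)$, which with the uniform sup bound on the matrix factor gives — exactly as in the proof of assertion (1) of Theorem~\ref{TheoremCovergenceoutsideMEh} — that the corresponding part of $\widetilde v_h^{(\pm)}$ is bounded in $H^1({\bf R}^d)$, hence in $L^2({\bf R}^d)$, uniformly in $h$, and its image under any $p_-(x,D_x)\in\mathcal P_-$ likewise stays bounded in $L^2({\bf R}^d)$. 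On $\mathrm{supp}\,\chi_2$, a neighbourhood of the characteristic surface on which the matrix factor is smooth with $h$-uniform seminorms, Lemma~\ref{Lemmavh(xi,z)limit} yields $\|\widetilde v_h^{(\pm)}\|_{\mathcal B^\ast({\bf R}^d)}\le C\|f\|_{m,s}$ uniformly in $h$. Using the continuous embeddings $L^2({\bf R}^d)\hookrightarrow\mathcal B^\ast({\bf R}^d)\hookrightarrow L^{2,-1/2-\epsilon}({\bf R}^d)$ we obtain the first estimate, and the second follows from the radiation-condition bound \eqref{P-suhL2s-1leqfL2s} — whose proof carries over with $P_h$ replaced by $P_h^{(\pm)}$ and $\mathcal P_-$ adapted to $\nabla_\xi P_h^{(\pm)}$ — together with the $L^2$ bound just noted for the $\chi_1$-part.

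The step I expect to be the main obstacle is the bookkeeping needed to make all of these constants \emph{uniform in $0<h<h_0$}: establishing, with $h$-independent neighbourhoods, the factorization $P_h^{(\pm)}(\xi)-E-i\epsilon=\big(\eta_1-p_h(\eta',E+i\epsilon)\big)q_h(\eta,E+i\epsilon)$ with $\mathrm{Im}\,p_h\ge0$ and $q_h\ne0$ near each point of $\{\xi:P_h^{(\pm)}(\xi)=E\}$, and bounding the symbol seminorms of the matrix factor $\big(\mathcal L_h(e^{ih(\xi+d_h^{(\pm)})})+z\big)\big(P_h^{(\pm)}(\xi)+z\big)^{-1}$ on the relevant annulus, uniformly in $h$. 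The other potential subtlety — the irregularity of this factor near $\xi=0$, where $P_h^{(\pm)}$ is merely Lipschitz — causes no trouble, since that region lies inside $\mathrm{supp}\,\chi_1$, where only the uniform sup bound of the factor enters. Once this is in place, the conclusions are precisely those of Lemma~\ref{Lemmavh(xi,z)limit} and of the lemma containing \eqref{P-suhL2s-1leqfL2s}.
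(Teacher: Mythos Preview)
Your proposal is correct and follows essentially the same approach as the paper: the paper writes the resolvent of the $2\times2$ system explicitly (as a matrix over the scalar $(P_h^{(\pm)}(\xi))^2-z^2$), observes via Lemma~\ref{LemmaPpmhgeqCxiHexa} that $P_h^{(\pm)}$ meets the hypotheses of \S\ref{Section4Freeequation} with $\gamma=1$, and then simply states that ``one can argue in the same way as in \S\ref{Section4Freeequation}.'' Your version carries out this program in more detail, the only cosmetic difference being that you factor the scalar denominator as $(P_h^{(\pm)}-z)(P_h^{(\pm)}+z)$ so as to isolate the singular scalar $(P_h^{(\pm)}-z)^{-1}$ from a uniformly bounded matrix factor; this is exactly what makes the reference to \S\ref{Section4Freeequation} legitimate, and your remark that the non-smoothness of $P_h^{(\pm)}$ at $\xi=0$ is harmless because it sits inside $\mathrm{supp}\,\chi_1$ is the right observation.
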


\medskip
Let $u_h = u_h(E + i0) = \mathcal G^{(+)}v^{(+)} + \mathcal G^{(-)}v^{(-)}$. Noting that 
$$
\widehat{\mathcal G^{(\pm)}v^{(\pm)} } = \widehat v^{(\pm)}(\xi - d_h^{(\pm)}),
$$
we have
$$
\widetilde u_h = e^{ix\cdot d_h^{(+)}}\widetilde v_h^{(+)} + e^{ix\cdot d_h^{(-)}}\widetilde v_h^{(-)}.
$$
Lemma \ref{Lemma8.2} yields
$$
\|\widetilde u_h\|_{-1/2-\epsilon} \leq C\|f\|_{m,s}.
$$
By the same argument as in \S \ref{Section4Freeequation}, we see that
$v_h^{(\pm)} \to v^{(\pm)}$, hence $\widetilde u_h$ behaves like
$$
\widetilde u_h \simeq e^{ix\cdot d_h^{(+)}}\widetilde v^{(+)} + e^{ix\cdot d_h^{(-)}}\widetilde v^{(-)}.
$$
We show that $\widetilde v^{(\pm)}$ are solutions to massless Dirac equations.

We consider the case of $v_h^{(+)}$, and  make the change of variables
$$
\eta_1 = \xi_1 + \frac{2\pi}{3h}, \quad 
\eta_2 = \xi_2 - \frac{2\pi}{3h},
$$
to obtain
\begin{equation}
\begin{split}
1 + e^{ih\eta_1} + e^{ih\eta_2} &= e^{2\pi i/3}\big(e^{ih\xi_1} - 1\big) +  e^{-2\pi i/3}\big(e^{ih\xi_2} - 1\big) \\
& \sim -hi\frac{\xi_1 + \xi_2}{2} - h\frac{\sqrt{3}(\xi_1-\xi_2)}{2}
\end{split}
\nonumber
\end{equation}
as $h \to 0$. 
We put 
$$
\zeta_1 = \frac{\sqrt{3}}{6}\big(\xi_1 - \xi_2\big), \quad \zeta_2 = - \frac{1}{6}\big(\xi_1 + \xi_2\big).
$$
Then we have
\begin{equation}
- \frac{1}{3h}\big(1 + e^{ih\eta_1} + e^{ih\eta_2}\big) \sim \zeta_1 - i\zeta_2.
\nonumber
\end{equation}
We put
$$
y_1 = \sqrt{3}\big(x_1 - x_2\big), \quad y_2 = -3\big(x_1+x_2\big).
$$
Then the map $(x,\eta) \to (y,\zeta)$ is a symplectic transformation. We then have
\begin{equation}
-\frac{1}{3h}\big(1 + e^{ih\eta_1} + e^{ih\eta_2}\big) \sim 
\zeta_1 - i\zeta_2,
\nonumber
\end{equation}
as $h \to 0$.
Similarly
\begin{equation}
-\frac{1}{3h}\big(1 + e^{-ih\eta_1} + e^{-ih\eta_2}\big) \sim 
\zeta_1 + i\zeta_2. 
\nonumber
\end{equation}
We have thus obtained
\begin{equation}
\mathcal L_h(e^{ih(\xi + h_d)}) \sim
\left(
\begin{array}{cc}
0 & \zeta_1 - i\zeta_2 \\
\zeta_1 + i\zeta_2 & 0
\end{array}
\right)
= \zeta_1\sigma_1 + \zeta_2\sigma_2,
\nonumber
\end{equation}
where $\sigma_1$, $\sigma_2$ are Pauli spin  matrices.
Therefore, $v^{(+)}$ satisfies
\begin{equation}
\Big(\sigma_1\frac{1}{i}\frac{\partial}{\partial y_1} +  \sigma_2\frac{1}{i}\frac{\partial}{\partial y_2} - E\Big)v^{(+)} = g^{(+)}.
\label{Diraceq+}
\end{equation}
Similarly $v^{(-)}$ satisfies
\begin{equation}
\Big( -\sigma_1\frac{1}{i}\frac{\partial}{\partial y_1} +  \sigma_2\frac{1}{i}\frac{\partial}{\partial y_2} - E\Big)v^{(-)} = g^{(-)},
\label{Diraceq-}
\end{equation}
where
$$
g^{(\pm)} = \mathcal F_{cont}^{-1}\, \psi^{(\pm)}.
$$
We have thus proven the following theorem.

\begin{theorem}
Assume that $f \in H^{m,s}({\bf R}^2)$ with $m > 2$, $s > 3$.  
Then for the solution $u_h$ of (\ref
{HexagonalLatticeEq}),  $\widetilde u_h$ behaves like
$$
\widetilde u_h \simeq e^{ix\cdot d_h^{(+)}}\widetilde v^{(+)} + e^{ix\cdot d_h^{(-)}}\widetilde v^{(-)}.
$$
Moreover, $\widetilde v^{(\pm)}$ satisfy the massless Dirac equation (\ref{Diraceq+}) and (\ref{Diraceq-}) after the symplectic transformation.
\end{theorem}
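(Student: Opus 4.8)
The plan is to localize near each of the two Dirac points, recenter by a gauge transformation, apply the abstract convergence theorem for case (B-2-2), and then compute the limiting symbol explicitly. First I would decompose $u_h = u_h^{(+)} + u_h^{(-)} + u_h^{(0)}$ via the cut-offs $\chi^{(\pm)}(h\eta)$ and $\chi^{(0)}(h\eta)$, so that $(\mathcal L_h(S_h) - z)u_h^{(\pm)} = f_h^{(\pm)}$ modulo the $\chi^{(0)}$-part; since $\lambda^{(\pm)}(\eta)$ is bounded away from $0$ on $\mathcal K_0^{(0)}$, the estimate (\ref{S8chi0estimate}) shows $\|u_h^{(0)}\|_s \le Ch(\|f_h\|_s + \|u_h\|_{-s})$, so $u_h^{(0)}$ is negligible as $h\to0$ and it suffices to treat the two pieces concentrated at $d_1^{(\pm)}$. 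Passing to $v_h^{(\pm)} = (\mathcal G^{(\pm)})^\ast u_h^{(\pm)}$ replaces the symbol by $\mathcal L_h(e^{ih(\xi + d_h^{(\pm)})})$, whose characteristic roots are $\pm P_h^{(\pm)}(\xi)$ and whose resolvent is the explicit matrix $(P_h^{(\pm)}(\xi)^2 - z^2)^{-1}(\mathcal L_h(e^{ih(\xi+d_h^{(\pm)})}) + z)$; cf. (\ref{eigenvauesHexalattice}).

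Using Lemma \ref{LemmaPpmhgeqCxiHexa} I would verify the hypotheses (B-1), (B-2-2), (B-3) with homogeneity degree $\gamma=1$, and (B-4) via the Riesz-projection integral as in Section 3, the limit of the projection being read off from the explicit resolvent. The uniqueness hypothesis (U-1) holds for the limiting operator: squaring the massless two-dimensional Dirac operator yields $-\Delta - E^2$, so the Rellich-type uniqueness lemma established above applies. Theorem \ref{TheoremwidetildeuhtowidetildeufordiscreteCase2} --- equivalently, the uniform bounds of Lemma \ref{Lemma8.2} together with the compactness and unique-accumulation-point argument --- then gives $\widetilde v_h^{(\pm)} \to \widetilde v^{(\pm)}$ in $L^{2,-1/2-\epsilon}({\bf R}^2)$, with $\widetilde v^{(\pm)}$ the unique outgoing solutions of the respective limiting equations.

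It remains to identify those equations. For the $+$ point, substituting $\eta_1 = \xi_1 + 2\pi/(3h)$ and $\eta_2 = \xi_2 - 2\pi/(3h)$ into $1 + e^{ih\eta_1} + e^{ih\eta_2}$ and Taylor expanding gives the leading behavior $-hi(\xi_1+\xi_2)/2 - h\sqrt3(\xi_1-\xi_2)/2$ as $h\to0$; setting $\zeta_1 = \tfrac{\sqrt3}{6}(\xi_1-\xi_2)$, $\zeta_2 = -\tfrac16(\xi_1+\xi_2)$ and the conjugate variables $y_1 = \sqrt3(x_1-x_2)$, $y_2 = -3(x_1+x_2)$ --- a symplectic change $(x,\eta)\to(y,\zeta)$ --- the off-diagonal entry tends to $\zeta_1 - i\zeta_2$, so $\mathcal L_h(e^{ih(\xi + d_h^{(+)})}) \to \zeta_1\sigma_1 + \zeta_2\sigma_2$; the analogous computation at $d_1^{(-)}$ produces $-\zeta_1\sigma_1 + \zeta_2\sigma_2$. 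Hence $\widetilde v^{(\pm)}$ solve the massless Dirac equations (\ref{Diraceq+}) and (\ref{Diraceq-}). Since $\widehat{\mathcal G^{(\pm)}v_h^{(\pm)}} = \widehat v_h^{(\pm)}(\xi - d_h^{(\pm)})$, we obtain $\widetilde u_h = e^{ix\cdot d_h^{(+)}}\widetilde v_h^{(+)} + e^{ix\cdot d_h^{(-)}}\widetilde v_h^{(-)} + \widetilde u_h^{(0)}$ with $\widetilde u_h^{(0)}\to0$, and since the $(1-\Pi_h)$-component is also $O(h)$ in $\mathcal B^\ast$, the relation $\simeq$ yields $\widetilde u_h \simeq e^{ix\cdot d_h^{(+)}}\widetilde v^{(+)} + e^{ix\cdot d_h^{(-)}}\widetilde v^{(-)}$.

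The main obstacle I anticipate is the reduction step in the second paragraph: because $\mathcal L_h(S_h)$ is handled here \emph{without} first diagonalizing onto the individual roots $\pm P_h^{(\pm)}$ (in contrast to Subsection \ref{S4.5Dirac}), one must check directly that the $2\times2$ resolvent obeys the symbol estimates, wave-front localization, and uniform outgoing-radiation bounds of Section \ref{Section4Freeequation}. This relies on the factorization $P_h^{(\pm)}(\xi)^2 - z^2 = (P_h^{(\pm)}(\xi) - z)(P_h^{(\pm)}(\xi) + z)$ and on the smoothness away from $\xi=0$ of $q(h(\xi+d_h^{(\pm)}))/(P_h^{(\pm)}(\xi)+z)$, together with the off-diagonal structure of the limiting Dirac symbol needed to invoke (U-1).
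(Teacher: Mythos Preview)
Your proposal is correct and follows essentially the same route as the paper: localize near the two Dirac points via $\chi^{(\pm)},\chi^{(0)}$, discard the $\chi^{(0)}$-piece by the elliptic estimate (\ref{S8chi0estimate}), gauge-transform by $\mathcal G^{(\pm)}$, obtain the uniform $\mathcal B^\ast$ and radiation-condition bounds of Lemma \ref{Lemma8.2}, pass to the limit by compactness and uniqueness, and then identify the limiting $2\times2$ symbol by the Taylor expansion and symplectic change $(x,\eta)\to(y,\zeta)$ exactly as you describe. The only cosmetic difference is one of packaging: you phrase the convergence step as an application of the abstract Theorem \ref{TheoremwidetildeuhtowidetildeufordiscreteCase2} after checking (B-1), (B-2-2), (B-3), (B-4), (U-1), whereas the paper, having announced that it treats the hexagonal $\mathcal L_h$ ``without diagonalization,'' works directly with the explicit $2\times2$ resolvent $(P_h^{(\pm)}(\xi)^2-z^2)^{-1}\bigl(\begin{smallmatrix}0&q\\ \bar q&0\end{smallmatrix}\bigr)$ and re-runs the Section \ref{Section4Freeequation} estimates on it to obtain Lemma \ref{Lemma8.2}. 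Your ``main obstacle'' paragraph already isolates exactly this point, and your proposed remedy---use the factorization $P_h^2-z^2=(P_h-z)(P_h+z)$ and the smoothness of $q(h(\xi+d_h^{(\pm)}))/(P_h^{(\pm)}(\xi)+z)$ away from $\xi=0$---is precisely what makes the direct $2\times2$ argument go through.
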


\subsubsection{Global minimum}
To deal with the case near the lowest energy, instead of  (\ref{eigenvauesHexalattice}),  we should consider
\begin{equation}
 -
\frac{1}{3h^2}
\left(
\begin{array}{cc}
0 & 1 + S_{h,1}^{\ast} + S_{h,2}^{\ast} \\
1 + S_{h,1} + S_{h,2}& 0
\end{array}
\right).
\nonumber
\end{equation}
The reference energy is $E_0 = - h^2/2$, and we consider
the Hamitonian 
\begin{equation}
\mathcal L_h(e^{-ih\eta}) 
  = -
\frac{1}{3h^2}
\left(
\begin{array}{cc}
-3 & 1 + e^{ih\eta_1} + e^{ih\eta_2} \\
 1 + e^{-ih\eta_1} + e^{-ih\eta_2} & -3
\end{array}
\right).
\label{S8reducedHamiltonian}
\end{equation}
Then the characteristic roots are
$$
\lambda_h^{(\pm)}(\eta) =\frac{3 \pm \sqrt{3 + 2b_2(h\eta)}}{3h^2}.
$$
The minimum is attained only at $\eta = 0$, and $\lambda_h^{(-)}(\eta)$ has a Taylor expansion
$$
\lambda_h^{(-)}(\eta) = \frac{1}{9}\left(\eta_1^2 - \eta_1\eta_2 + \eta_2^2\right) + O(h^2).
$$
Therefore we obtain the following theorem.

\begin{theorem}
Assume that $f \in H^{m,s}({\bf R}^2)$ with $m > 2$, $s > 3$.  Assume also that $V \in H^2({\bf R}^2)$ with $s > 1$, and $z \not\in {\bf R}$. 
Then the solution of the Schr{\"o}dinger equation 
$(- \Delta_{disc,h} + V_{disc,h} - z)u_h = f_h$ 
on the hexagonal lattice, where $\Delta_{disc,h}$ is the difference operator with symbol (\ref{S8reducedHamiltonian}),
converges to that for the continuum Schr{\"o}dinger equation
$$
\Big(-\frac{1}{9}\big(\frac{\partial^2}{\partial x_1^2} -\frac{ \partial^2}{\partial x_1\partial x_2} + \frac{\partial^2}{\partial x_2^2}\big)  + V(x) - z\Big)u = f, \quad {\it in} \quad {\bf R}^2.
$$
\end{theorem}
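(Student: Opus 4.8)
The strategy is to verify the hypotheses of Theorem~\ref{TheoremwidetildediscretecomplexzC} for the lattice Hamiltonian with symbol (\ref{S8reducedHamiltonian}) and then to quote that theorem. Since the spectral edge sits at $\eta=0$ we have $d_1=0$, hence $d_h=0$ and the gauge transformation is trivial, $\mathcal G_h=1$; so the equation in the statement is literally the one appearing in Theorem~\ref{TheoremwidetildediscretecomplexzC}, and $\Pi_0 f$ below coincides with $f$ on the relevant spectral component. All the real work is thus bookkeeping of the assumptions (B-1), (B-2-1), (B-3), (B-4), (U-1), (C-1), (C-2), (C-3), (U-2), (V-2) for this lattice at its lower spectral edge.

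First I would analyse the two characteristic roots $\lambda_h^{(\pm)}(\eta)=\dfrac{3\pm\sqrt{3+2b(h\eta)}}{3h^2}$ of (\ref{S8reducedHamiltonian}). By (\ref{3+2bz=0}) one has $0\le 3+2b(\eta)\le 9$ on ${\bf T}^2$, with $3+2b(\eta)=9$ only at $\eta=0$; hence $\lambda^{(-)}(e^{-i\eta})=\dfrac{3-\sqrt{3+2b(\eta)}}{3}\ge 0$ vanishing only at $d_1=0$, which gives (C-1), and (B-1) on a small neighbourhood $\mathcal K_0$ of $0$, while $\lambda^{(+)}(e^{-i\eta})=\dfrac{3+\sqrt{3+2b(\eta)}}{3}\ge 1>0$; since $\mathsf s=2$ leaves no band below $\lambda^{(-)}$, both (C-2) and (B-2-1) hold with $j=1$. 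For (B-3) and (C-3) I would Taylor expand: from $b(\eta)=\cos\eta_1+\cos\eta_2+\cos(\eta_1-\eta_2)$ the Hessian of $3+2b$ at $\eta=0$ has diagonal entries $-4$ and off-diagonal entries $2$, hence is negative definite, so $3+2b(\eta)=9-2(\eta_1^2-\eta_1\eta_2+\eta_2^2)+O(|\eta|^4)$ and therefore $\lambda_h^{(-)}(\eta)=\tfrac19(\eta_1^2-\eta_1\eta_2+\eta_2^2)+O(h^2)$. Thus $P_h(\xi)\to P(\xi)=\tfrac19(\xi_1^2-\xi_1\xi_2+\xi_2^2)$ together with all derivatives on $\mathcal K/h$; $P$ is homogeneous of degree $2m$ with $m=1$, its symmetric matrix is a positive multiple of the positive definite form $\xi_1^2-\xi_1\xi_2+\xi_2^2$, and the two-sided bound $CP(\xi)\le P_h(\xi)\le C^{-1}P(\xi)$ on $\mathcal K/h$ follows from the expansion. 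Finally (B-4), the convergence $\Pi_h(\xi)\to\Pi_0(\xi)$ with all derivatives, follows as explained after the statement of (B-4): near $\eta=0$ the roots $\lambda_h^{(\pm)}$ stay separated uniformly in $h$, so the Riesz contour integral for $\Pi_h(\xi)$ converges with its derivatives because the resolvent of $\mathcal L_h(e^{-ih\xi})$ does.

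It remains to dispose of the uniqueness conditions and of (V-2). A linear change of coordinates reducing the positive definite form $\tfrac19(\xi_1^2-\xi_1\xi_2+\xi_2^2)$ to a multiple of $|\xi|^2$ turns $P(D_x)+V(x)$ into $-c\Delta+V$ with $c>0$ and $V\in H^s({\bf R}^2)$, $s>1=d/2$, compactly supported; for such an operator (U-1) and (U-2) are the standard limiting-absorption/Rellich uniqueness facts, and since the theorem is stated for $z\notin{\bf R}$ the operative statement is in any case the $L^{2,-s}$-uniqueness of Lemma~\ref{L2-suniqueness}, carried over by the same change of variables. Hypothesis (V-2) is exactly the assumed condition on $V$, and $f\in H^{m,s}({\bf R}^2)$ with $m>2$, $s>3$ is what Theorem~\ref{TheoremwidetildediscretecomplexzC} requires in dimension $d=2$. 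Applying that theorem gives $\widetilde u_h(x,z)\to\widetilde u(x,z)$ in $L^{2,-s}({\bf R}^2)$ for $0<s<1/2$, locally uniformly, with $\widetilde u=\Pi_0\widetilde u$ and $(P(D_x)+V(x)-z)\widetilde u=\Pi_0 f=f$; transporting back to the original coordinates yields the stated continuum Schr\"odinger equation. The step I expect to demand the most care is (B-3)/(C-3): one must control $\dfrac{3-\sqrt{3+2b(h\xi)}}{3h^2}$ uniformly for $h\xi$ in a fixed neighbourhood of $0$, i.e.\ establish not merely the pointwise Taylor expansion but the convergence of $P_h$ and all its derivatives and the uniform comparison with $P$; the remaining hypotheses are either immediate from (\ref{3+2bz=0}) and the Hessian computation, or have already been handled abstractly in \S\ref{SectionPotentialPerturbation} and in the arguments preceding Theorem~\ref{TheoremwidetildediscretecomplexzC}.
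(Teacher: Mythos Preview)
Your proposal is correct and takes essentially the same approach as the paper: the paper's own ``proof'' consists of computing the characteristic roots of (\ref{S8reducedHamiltonian}), noting that the minimum is attained only at $\eta=0$, writing down the Taylor expansion $\lambda_h^{(-)}(\eta)=\tfrac{1}{9}(\eta_1^2-\eta_1\eta_2+\eta_2^2)+O(h^2)$, and then simply stating ``Therefore we obtain the following theorem.'' You have carried out in detail the verification of (B-1)--(B-4), (C-1)--(C-3), (U-1), (U-2), (V-2) that the paper leaves implicit, and correctly identified Theorem~\ref{TheoremwidetildediscretecomplexzC} as the relevant abstract result for $z\notin{\bf R}$; the only small imprecision is the remark ``$\Pi_0 f=f$,'' since $\Pi_0$ is a rank-one projection on ${\bf C}^2$-valued functions, but this is the paper's own looseness in the statement of the theorem rather than a gap in your argument.
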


\subsection{Graphite}
The Laplacian for  graphite is written as
\begin{equation}
H_{0h}(S_h,S_h^{\ast}) = \frac{1}{h}H_0(S_h,S_h^{\ast}),
\nonumber
\end{equation}
\begin{equation}
H_0(S_h,S_h^{\ast}) = - \frac{1}{4}
\left(
\begin{array}{cccc}
0 & 1 + S_{h,1} + S_{h,2} & 1 & 0 \\
1 + S_{h,1}^{\ast} + S_{h,2}^{\ast} & 0 & 0 & 1\\
1 & 0 & 0 & 1 + S_{h,1} + S_{h,2}\\
0 & 1 & 1 + S_{h,1}^{\ast} + S_{h,2}^{\ast} & 0
\end{array}
\right)
\nonumber
\end{equation}
Put
\begin{equation}
T = \frac{1}{\sqrt{2}}\left(
\begin{array}{cc}
I_2 & - I_2 \\
I_2 & I_2
\end{array}
\right), \quad I_2 = \left(
\begin{array}{cc}
1 & - 1\\
1 & 1
\end{array}
\right). 
\nonumber
\end{equation}
Then it can be block diagonalized as follows
\begin{equation}
T^{\ast}H_0(S_h,S_h^{\ast})T = - \frac{1}{4}
\left(
\begin{array}{cc}
C(S_h,S_h^{\ast}) + I_2 & 0 \\
0 & C(S_h,S_h^{\ast}) - I_2
\end{array}
\right)
\nonumber
\end{equation}
\begin{equation}
C(S_h,S_h^{\ast})  = 
\left(
\begin{array}{cc}
0 & 1 + S_{h,1} + S_{h,2} \\
 1 + S_{h,1}^{\ast} + S_{h,2}^{\ast} & 0
\end{array}
\right).
\nonumber
\end{equation}
Therefore it can be dealt with in the same way as in the hexagonal lattice.

\begin{figure}
\centering
\includegraphics[width=8cm]{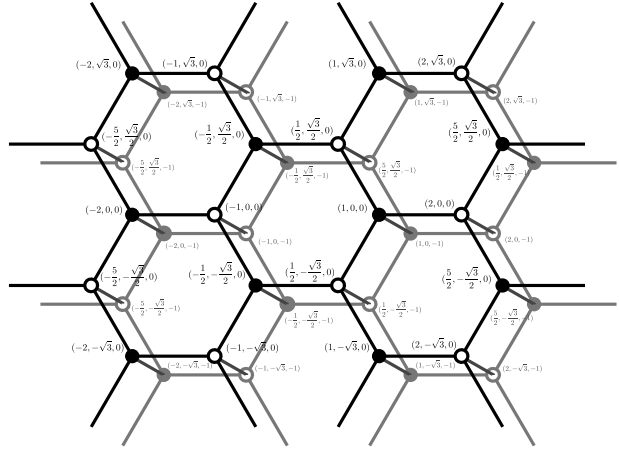}
\label{C1Graphite}
\caption{Graphite}
\end{figure}

\section{Other examples}

\subsection{Kagome lattice}\label{C1kagomeLattice}

Passing to the Fourier series, the symbol of $-\Delta_{\Gamma}$ for the Kagome lattice is written as
\begin{equation}
\mathcal L(e^{-i\eta}) =- \frac{1}{4} 
\left(
\begin{array}{ccc}
0& 1 + e^{i\eta_1}e^{-i\eta_2} &1 + e^{i\eta_1}\\
1+ e^{-i\eta_1}e^{i\eta_2}& 0& 1 + e^{i\eta_2}\\
1 + e^{-i\eta_1} &1 + e^{-i\eta_2}& 0
\end{array}
\right).
\nonumber
\end{equation}
The characteristic determinant is
\begin{equation}
p(\eta,\lambda) = \det\big(\mathcal L(e^{-i\eta})- \lambda\big) = -\big(\lambda - \frac{1}{2}\big)\big(\lambda^2 + \frac{\lambda}{2} - \frac{\beta(\eta)}{8}\big),
\nonumber
\end{equation}
\begin{equation}
\beta(\eta) = 1 +  \cos \eta_1 + \cos \eta_2 + \cos(\eta_1-\eta_2).
\nonumber
\end{equation}
The characteristic roots are
\begin{equation}
\lambda = - \frac{1}{4} \pm \frac{\sqrt{2\beta(\eta) + 1}}{4}.
\label{Kagomeroots}
\end{equation}
Then the spectrum of $- \Delta_{\Gamma}$ is  $\sigma(- \Delta_{\Gamma}) = [-1,1/2]$. 

We first take the reference energy $E_0 = - 1/h^2$, and consider 
$$
\mathcal L_h(e^{-i\eta}) = -\frac{1}{4h^2}\left(
\begin{array}{ccc}
-4& 1 + e^{ih\eta_1}e^{-ih\eta_2} &1 + e^{ih\eta_1}\\
1+ e^{-ih\eta_1}e^{ih\eta_2}& -4& 1 + e^{ih\eta_2}\\
1 + e^{-ih\eta_1} &1 + e^{-ih\eta_2}& -4
\end{array}
\right).
$$
\begin{figure}[hbtp]
\centering
\includegraphics[width=7cm]{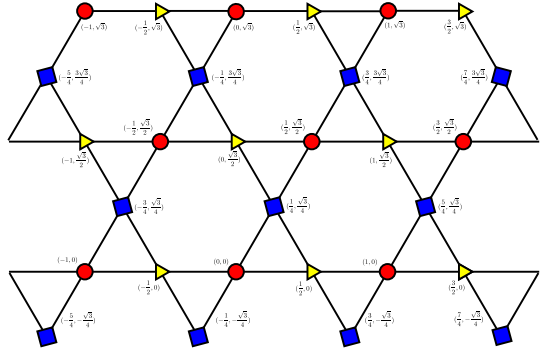}
\caption{Kagome lattice}
\label{C1KagomeLattice}
\end{figure}
The least characteristic root is then given by
$$
P_h(\eta) = \frac{3 - \sqrt{2\beta(h\eta) + 1}}{4h^2}.
 $$ 
 It is expanded as
 \begin{equation}
 \begin{split}
 P_h(\eta) &= \frac{1}{6h^2}\Big(\sin^2\frac{h\eta_1}{2} + 
 \sin^2\frac{h\eta_2}{2} + \sin^2\frac{h(\eta_1-\eta)^2}{2}\Big) \\
 &= \frac{1}{12}\Big(\eta_1^2 - \eta_1\eta_2 + \eta_2^2\Big) + O(h^2).
 \end{split}
 \nonumber
 \end{equation}
For the Kagome lattice the unique continuation theorem does not hold, and  there may be embedded eigenvalues. Therefore we consider the continuum limit for the complex energy $z \not\in {\bf R}$. 

\begin{theorem}
Assume that $f \in H^{m,s}({\bf R}^d)$ with $s >3$ and $m > 2$, and that $z \not\in {\bf R}$. 
Assume also that $V \in H^s({\bf R}^2)$ with $s > 1$.
Then  the solution of $(- \Delta_{disc,h} + V_{disc,h} -z)u = f$ converges to that for the Schr{\"o}dinger equation
$$
\Big(- \frac{1}{12}\big(\frac{\partial^2}{\partial x_1^2} - \frac{ \partial^2}{\partial x_1\partial x_2} + \frac{\partial^2}{\partial x_2^2}\big) +V(x)  - z\Big)v = g.
$$
\end{theorem}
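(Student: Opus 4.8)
The statement is an instance of Theorem~\ref{TheoremwidetildediscretecomplexzC} (the complex-energy case with a potential) in dimension $d=2$, so the plan is to identify the relevant characteristic root, verify the package of hypotheses (B-1), (B-2-1), (B-3), (B-4), (U-1), (C-1), (C-2), (C-3), (U-2) required there, and then quote its conclusion. The band to track is the least characteristic root $\lambda_{1,h}$ of $\mathcal L_h(e^{-ih\eta})$; after the shift by the reference energy its global minimum sits at $\eta=d_1=0$, so the gauge transformation $\mathcal G_h$ is the identity and the bookkeeping is lighter than in the hexagonal case.

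First I would fix the band picture. Since $\beta(\eta)=1+\cos\eta_1+\cos\eta_2+\cos(\eta_1-\eta_2)$ ranges over $[-\tfrac12,4]$, with $\beta(\eta)=4$ only at $\eta\equiv 0$ and $\beta(\eta)=-\tfrac12$ only at the two Dirac points, the shifted bottom band $P_h(\xi)=\lambda_{1,h}(\xi)=\big(3-\sqrt{2\beta(h\xi)+1}\big)/(4h^2)$ is nonnegative and vanishes only at $\xi=0$, which is (C-1) (hence (B-1)) with $d_1=0$. The remaining two bands are $\lambda_{2,h}(\xi)=\big(3+\sqrt{2\beta(h\xi)+1}\big)/(4h^2)\geq 3/(4h^2)$ and the flat band $3/(2h^2)$, so $\lambda_{1,h}$ is separated from the rest of the spectrum by a gap of order $h^{-2}$, uniformly in $0<h<h_0$; this gives (C-2) and (B-2-1), provided one shrinks the neighbourhood $\mathcal K_0$ of $0$ so that it avoids the Dirac points, where $\lambda_{1,h}$ and $\lambda_{2,h}$ meet. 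The uniform gap makes $\Pi_h(\xi)$ smooth, and, via Riesz' formula around a fixed contour enclosing only the bottom band, forces $\Pi_h(\xi)$ to converge with all derivatives to the eigenprojection $\Pi_0$ of the constant matrix $\mathcal L(1,1)$ onto its lowest eigenvalue; that eigenvalue is simple, with eigenvector $(1,1,1)$, so $\Pi_0$ is the $\xi$-independent rank-one projection onto $\mathrm{span}\{(1,1,1)\}$. This is (B-4).

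Next I would check (B-3)=(C-3). Using $\cos\theta=1-2\sin^2(\theta/2)$ one obtains the exact identity $P_h(\xi)=\Sigma(h\xi)\big/\big(h^2(3+\sqrt{9-4\Sigma(h\xi)})\big)$, where $\Sigma(\eta)=\sin^2\tfrac{\eta_1}{2}+\sin^2\tfrac{\eta_2}{2}+\sin^2\tfrac{\eta_1-\eta_2}{2}$; since $3+\sqrt{9-4\Sigma}\in[3,6]$ and $\Sigma(\eta)$ is comparable to $|\eta|^2$ on $[-\pi,\pi]^2$ (being nonnegative and vanishing, nondegenerately, only at $\eta=0$), $P_h(\xi)$ is comparable to $|\xi|^2$ on ${\bf T}^2_h$, and the Taylor expansion already recorded gives $P_h(\xi)\to P(\xi)=\tfrac1{12}(\xi_1^2-\xi_1\xi_2+\xi_2^2)$ together with all derivatives (the square root is real-analytic near $\Sigma=0$). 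Hence $P(\xi)$ is homogeneous of degree $\gamma=2$, $P(D_x)=-\tfrac1{12}(\partial_{x_1}^2-\partial_{x_1}\partial_{x_2}+\partial_{x_2}^2)$ is a positive-definite second-order operator, and the two-sided bound $CP(\xi)\leq P_h(\xi)\leq C^{-1}P(\xi)$ demanded by (B-3) holds. For (U-1) and (U-2) I would note that a linear change of variables carries $P(D_x)$ to $-\Delta$ on ${\bf R}^2$, so Rellich's uniqueness theorem together with the unique continuation property for $-\Delta+V$ with $V\in H^s$, $s>1$, compactly supported, yields uniqueness of the outgoing solutions of $(P(D_x)-E)u=f$ and of $(P(D_x)+V-E)u=f$; for $z\notin{\bf R}$ one in fact only needs the weaker Lemma~\ref{L2-suniqueness}.

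With all hypotheses in place, Theorem~\ref{TheoremwidetildediscretecomplexzC} gives the strong limit $\widetilde u_h(x,z)\to\widetilde u(x,z)$ in $L^{2,-s}({\bf R}^2)$ for $0<s<1/2$, locally uniformly on ${\bf R}^2$, with $\widetilde u=\Pi_0\widetilde u$ and $\big(P(D_x)+V-z\big)\widetilde u=\Pi_0 f$; since $\Pi_0$ is the constant rank-one projection onto $\mathrm{span}\{(1,1,1)\}$, this is the scalar Schr\"odinger equation $\big(-\tfrac1{12}(\partial_{x_1}^2-\partial_{x_1}\partial_{x_2}+\partial_{x_2}^2)+V(x)-z\big)v=g$ displayed in the statement. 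The step I expect to be most delicate — the real content beyond bookkeeping against \S\S4--6 — is (B-4): establishing the uniform $O(h^{-2})$ separation of the bottom band from the flat and middle bands, and the ensuing convergence of $\Pi_h(\xi)$ with all of its derivatives.
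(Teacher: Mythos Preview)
Your proposal is correct and follows exactly the route the paper intends: the theorem is stated in the paper without a separate proof, as a direct application of Theorem~\ref{TheoremwidetildediscretecomplexzC} once the Kagome band structure has been exhibited, and your verification of (B-1), (B-2-1), (B-3), (B-4), (C-1)--(C-3), (U-1), (U-2) is precisely the content that the paper leaves implicit. Your exact identity $P_h(\xi)=\Sigma(h\xi)/\big(h^2(3+\sqrt{9-4\Sigma(h\xi)})\big)$ is in fact sharper than the paper's displayed expansion, and your observation that $\mathcal K_0$ must avoid the Dirac points (where the bottom and middle bands touch) so that (B-2-1) holds, while (C-2) is nonetheless satisfied globally because $\min_\eta\lambda_2=3/4>0$, correctly resolves the only subtle point.
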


For $\lambda = - 1/4$, the characteristic toots (\ref{Kagomeroots}) are double. In this case, we take the reference energy $E_0 = - 1/4$ and consider

$$
\mathcal L_h(e^{-i\eta}) = -\frac{1}{4h}\left(
\begin{array}{ccc}
-1& 1 + e^{ih\eta_1}e^{-ih\eta_2} &1 + e^{ih\eta_1}\\
1+ e^{-ih\eta_1}e^{ih\eta_2}& -1& 1 + e^{ih\eta_2}\\
1 + e^{-ih\eta_1} &1 + e^{-ih\eta_2}& -1
\end{array}
\right).
$$
Then $2\beta(h\eta) + 1 = 0$ if and only if $\eta$ is the Dirac point, i.e. $\eta = d_h$, where $d_h$ is as in (\ref{S8Diracpoints}). One can then argue as in the case of the hexagonal lattice to show the following theorem.

 \begin{theorem}
Assume that $f \in H^{m,s}({\bf R}^d)$ with $s >3$ and $m > 2$, and that $z \not\in {\bf R}$.
Then  the solution of $\big(- \mathcal G_h^{\ast} \Delta_{disc,h}\mathcal G_h -z\big)u = f$ behaves like
$$
\widetilde u_h \simeq e^{ix\cdot d_h^{(+)}}\widetilde v^{(+)} + e^{ix\cdot d_h^{(-)}}\widetilde v^{(-)},
$$
and $\widetilde v^{(\pm)}$ satisfy the massless Dirac equation (\ref{Diraceq+}), (\ref{Diraceq-}).
\end{theorem}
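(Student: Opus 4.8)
The plan is to transcribe the analysis of the hexagonal lattice near its Dirac points, running the complex-energy argument underlying Theorem~\ref{TheoremwidetildediscretecomplexzB} in case (B-2-2). The first task is to verify (B-1), (B-2-2), (B-3), (B-4) and (U-1) for the Hamiltonian with reference energy $E_0=-1/4$ and scaling $\nu=1$. After the shift the three characteristic roots of $\mathcal L_h$ are, in the unscaled picture, $\pm\tfrac14\sqrt{2\beta(\eta)+1}$ together with the flat band at $3/4$. Since $2\beta(\eta)+1=3+2b(\eta)=|1+e^{i\eta_1}+e^{i\eta_2}|^2$ is exactly the function occurring for the hexagonal lattice, it vanishes precisely at the two Dirac points $d_1^{(\pm)}=\pm(\tfrac{2\pi}{3},-\tfrac{2\pi}{3})$, with vanishing gradient and the same positive definite Hessian there as in the hexagonal case, and equals $9$ only at $\eta=0$. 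Hence, on small neighbourhoods $\mathcal K_0^{(\pm)}$ of the Dirac points not containing $\eta=0$, (B-1) holds with $\lambda_j(e^{-i\eta})=\tfrac14\sqrt{2\beta(\eta)+1}$, and (B-2-2) holds with $\lambda_{j-1}=-\lambda_j$ and $\lambda_{j+1}=3/4$; here $j=2$ and ${\mathsf s}=3$, so the condition on the nonexistent $\lambda_{j-2}$ is vacuous, while $\lambda_{j-1},\lambda_j$ stay as close to $0$, and $\lambda_{j+1}=3/4$ as far from them, as needed by shrinking $\mathcal K_0^{(\pm)}$. For (B-3) one sets $P_h^{(\pm)}(\xi)=\tfrac1{4h}\sqrt{2\beta(h(\xi+d_h^{(\pm)}))+1}$, the common modulus of the two near-degenerate roots, and Taylor expands at the Dirac point to obtain $P_h^{(\pm)}(\xi)\to P(\xi)$, a positive multiple of $(\xi^{\top}H\xi)^{1/2}$ with $H$ the positive definite Hessian of $\beta$ at $d_1^{(\pm)}$, homogeneous of degree $1$, with the two-sided bound of Lemma~\ref{LemmaPpmhgeqCxiHexa}. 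Assumption (B-4), convergence with all derivatives of the spectral projection $\Pi_h(\xi)$ onto the two near-degenerate roots, follows from the Riesz representation $\Pi_h(\xi)=\frac{1}{2\pi i}\int_C(z-\mathcal L_h(e^{ih(\xi+d_h^{(\pm)})}))^{-1}dz$, the resolvent of the $3\times3$ symbol converging with its derivatives on compact subsets of $(\mathcal K^{(\pm)}/h)\setminus\{0\}$. Finally (U-1) for the limiting massless Dirac operator holds by the Lemma of \S\ref{S4.5Dirac} (multiply by $P(D_x)+z$ and quote uniqueness for $P(D_x)^2+V-z$); since $z\notin{\bf R}$ one may instead use the $L^{2,-s}$-uniqueness of Lemma~\ref{L2-suniqueness}.

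With the hypotheses established I would run the localized argument of the hexagonal subsection. Decompose $u_h=u_h^{(+)}+u_h^{(-)}+u_h^{(0)}$ with the cut-offs $\chi^{(\pm)}(h\eta),\chi^{(0)}(h\eta)$ centred at the Dirac points; off the Dirac points $\mathcal L_h(e^{-ih\eta})-z$ is invertible with inverse of norm $O(h)$ (using ${\rm Re}\,z$ in a fixed compact interval $\subset\subset(0,\infty)$), so $u_h^{(0)}=O(h)$. For the $\pm$ pieces conjugate by $\mathcal G^{(\pm)}$, $v_h^{(\pm)}=(\mathcal G^{(\pm)})^{\ast}u_h^{(\pm)}$, so that the Fourier symbol becomes $\mathcal L_h(e^{ih(\xi+d_h^{(\pm)})})$. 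Split off the range of $1-\Pi_h(\xi)$, the flat band (eigenvalue $\sim 3/(4h)$, a contribution of order $h$), and consider the $2\times2$ block $M_h(\xi)=\Pi_h(\xi)\mathcal L_h(e^{ih(\xi+d_h^{(\pm)})})\Pi_h(\xi)$ on ${\rm Ran}\,\Pi_h(\xi)$, a Hermitian matrix with eigenvalues $\pm P_h^{(\pm)}(\xi)$. Since ${\rm dist}(z,\{\pm P_h^{(\pm)}(\xi)\})\geq|{\rm Im}\,z|$, one obtains $H^2$-bounds for $\widetilde v_h^{(\pm)}(x,z)$ uniform in $0<h<h_0$; then, as in the proof of Theorem~\ref{TheoremwidetildediscretecomplexzB} (case (B-2-2)), compactness in $L^2_{\rm loc}({\bf R}^2)$ together with the uniqueness above yields the strong limit $\widetilde v_h^{(\pm)}(x,z)\to\widetilde v^{(\pm)}(x,z)$ in $L^{2,-s}({\bf R}^2)$, $0<s<1/2$, locally uniformly. (The hypotheses $m>2=[d/2]+1$ and $s>3=d+1$ on $f$, with $d=2$, are exactly those needed to run these estimates.) Reassembling, $u_h=\mathcal G^{(+)}v_h^{(+)}+\mathcal G^{(-)}v_h^{(-)}+u_h^{(0)}$ and $\widehat{\mathcal G^{(\pm)}a}(\xi)=\widehat a(\xi-d_h^{(\pm)})$ give $\widetilde u_h(x)=e^{ix\cdot d_h^{(+)}}\widetilde v_h^{(+)}(x)+e^{ix\cdot d_h^{(-)}}\widetilde v_h^{(-)}(x)+O(h)$, whence $\widetilde u_h\simeq e^{ix\cdot d_h^{(+)}}\widetilde v^{(+)}+e^{ix\cdot d_h^{(-)}}\widetilde v^{(-)}$.

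It remains to identify the limiting operators as massless Dirac operators. Expanding the off-diagonal entries $1+e^{ih\eta_1}e^{-ih\eta_2}$, $1+e^{ih\eta_1}$, $1+e^{ih\eta_2}$ of the $3\times3$ symbol at $\eta=d_h^{(\pm)}+\xi$, the relation $2\beta(d_1^{(\pm)})+1=0$ forces each entry to be $O(h)$, and $\tfrac1h$ times the leading part is linear in $\xi$; moreover $M_h(\xi)$ is traceless, by the explicit formula $\lambda=-\tfrac14\pm\tfrac14\sqrt{2\beta(\eta)+1}$. Hence the limit $M_0(\xi)$ is a traceless Hermitian $2\times2$ matrix, linear in $\xi$ and nondegenerate, since $|M_0(\xi)|=P(\xi)$ is a genuine norm by positivity of the Hessian of $\beta$ at $d_1^{(\pm)}$. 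Choosing a basis of ${\rm Ran}\,\Pi_0(\xi)$ that puts $M_0$ in the form $\left(\begin{array}{cc}0&\overline{w}\\ w&0\end{array}\right)$ with $|w|=P$, and performing a linear symplectic change of variables $(x,\eta)\mapsto(y,\zeta)$ analogous to the one used for the hexagonal lattice, the limiting block becomes $\zeta_1\sigma_1+\zeta_2\sigma_2$ at the $+$ point and $-\zeta_1\sigma_1+\zeta_2\sigma_2$ at the $-$ point. Thus $\widetilde v^{(\pm)}$ satisfy (\ref{Diraceq+}) and (\ref{Diraceq-}) with $z$ in place of $E$, as claimed.

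The step I expect to be the main obstacle is this last identification: one must confirm that, after projecting out the flat band of the $3\times3$ Kagome symbol, the surviving $2\times2$ block is to leading order precisely the off-diagonal massless Dirac matrix. The two facts needed, namely $\pm$-symmetry of the two near-degenerate roots about $0$ after the shift (tracelessness) and linear, nondegenerate dependence of the limiting block on $\xi$, both reduce to the explicit formula for the characteristic roots and to the Hessian of $\beta$ at $d_1^{(\pm)}$ coinciding with that of the hexagonal lattice; once these are in hand, the rest is a repetition of the hexagonal argument and of the proof of Theorem~\ref{TheoremwidetildediscretecomplexzB}.
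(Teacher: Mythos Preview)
Your overall strategy is exactly the one the paper intends: verify (B-1), (B-2-2), (B-3), (B-4), (U-1) for the shifted Kagome Hamiltonian with $E_0=-1/4$, $\nu=1$, then transcribe the hexagonal argument and invoke Theorem~\ref{TheoremwidetildediscretecomplexzB}. Your first two paragraphs are correct; in particular the identification $2\beta(\eta)+1=3+2b(\eta)=|1+e^{i\eta_1}+e^{i\eta_2}|^2$ is the key link to the hexagonal lattice, and it is worth recording that the limiting projection $\Pi_0$ is \emph{constant} (the eigenspace of the fixed matrix $\mathcal L(e^{-id_1^{(\pm)}})+\tfrac14 I$ for the eigenvalue $0$), which simplifies (B-4).

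There is, however, a concrete error in your identification paragraph. The off-diagonal entries $1+e^{ih\eta_1}e^{-ih\eta_2}$, $1+e^{ih\eta_1}$, $1+e^{ih\eta_2}$ of the $3\times3$ symbol are \emph{not} $O(h)$ at $\eta=d_h^{(\pm)}$; at $\xi=0$ they equal $e^{\mp i\pi/3}$, $e^{\pm i\pi/3}$, $e^{\mp i\pi/3}$, all of modulus one. The relation $2\beta(d_1^{(\pm)})+1=0$ forces the two small eigenvalues to vanish, not the matrix entries. What is true is that, writing $\mathcal L_h(e^{-ih(\xi+d_h)})=-\tfrac1{4h}\bigl(M(d_1+h\xi)-I\bigr)$ with $M$ the off-diagonal part, the matrix $M(d_1)$ has eigenvalues $1,1,-2$, so $M(d_1)-I$ vanishes on ${\rm Ran}\,\Pi_0$. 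Hence on that fixed $2$-dimensional subspace
\[
\Pi_0\,\mathcal L_h(e^{-ih(\xi+d_h)})\,\Pi_0
=-\tfrac{1}{4}\,\Pi_0\bigl(\xi\cdot\nabla M(d_1)\bigr)\Pi_0+O(h),
\]
which is the traceless, Hermitian, linear-in-$\xi$ block you want; the correction from $\Pi_h-\Pi_0=O(h)$ is handled because $\mathcal L_h\Pi_h$ has norm $P_h(\xi)=O(1)$. Alternatively, bypass this computation entirely: Theorem~\ref{TheoremwidetildediscretecomplexzB} in case (B-2-2) already gives the limit as a pair of scalar equations $(\pm P(D_x)-z)\widetilde v^{(\pm)}=g^{(\pm)}$, and \S\ref{S4.5Dirac} shows how to rewrite this pair as the massless Dirac system (\ref{Diraceq+}), (\ref{Diraceq-}). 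Either fix restores the argument; the rest of your proof stands.
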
 

\subsection{Subdivision of square lattice}

\begin{figure}[hbtp]
\centering
\includegraphics[width=6cm]{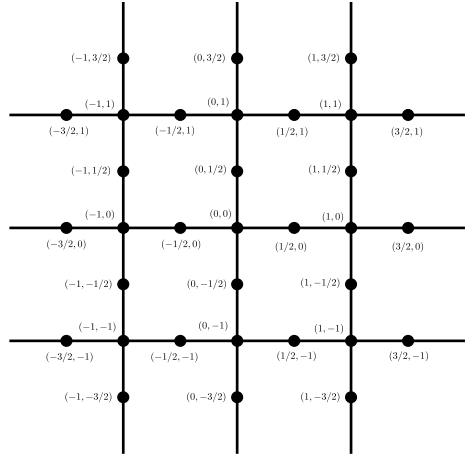}
\caption{Subdivision of $2$-dimensional square lattice}
\label{C1Subdivision}
\end{figure}

For this lattice the unique continuation property does not hold (see \cite{AndIsoMor1}). Therefore we apply Theorems \ref{TheoremwidetildediscretecomplexzB} and \ref{TheoremwidetildediscretecomplexzC}.

Passing to the Fourier series in the case $h=1$, $- \widehat\Delta_{disc}$ becomes the following matrix
\begin{equation}
\mathcal L(e^{-i\eta}) = -\frac{1}{2\sqrt{d}}
\left(
\begin{array}{cccc}
0 & 1 + e^{-i\eta_1} & \cdots & 1+e^{-i\eta_d}\\
1 + e^{i\eta_1} & 0 & \cdots & 0 \\
\vdots & \vdots & \ddots & \vdots \\
1 + e^{i\eta_d} & 0 & \cdots & 0
\end{array}
\right),
\nonumber
\end{equation}
whose determinant is computed as
\begin{equation}
\det(\mathcal L(e^{-\eta}) - \lambda) =  (-\lambda)^{d-1} \big(\lambda^2 - \frac{1}{2d}
(d + \sum_{j=1}^d\cos \eta_j)\big).
\label{Subdivisionpxlambda}
\end{equation}
Therefore the characteristic roots are
$$
\lambda^{(\pm)}(\eta) = \pm\sqrt{\frac{1}{2d}
(d + \sum_{j=1}^d\cos \eta_j)}
$$
To observe the behavior near the bottom of energies, we take the reference energy $E_0 = - 1/h^2$, and consider
\begin{equation}
\mathcal L_h(e^{-ih\eta}) = -\frac{1}{2\sqrt{d}h^2}
\left(
\begin{array}{cccc}
- 2\sqrt{d} & 1 + e^{-ih\eta_1} & \cdots & 1+e^{-ih\eta_d}\\
1 + e^{ih\eta_1} & - 2\sqrt{d}  & \cdots & 0 \\
\vdots & \vdots & \ddots & \vdots \\
1 + e^{ih\eta_d} & 0 & \cdots & - 2\sqrt{d} 
\end{array}
\nonumber
\right).
\end{equation}
We put
$$
\lambda_h^{(\pm)}(\eta) = 
\frac{1}{h^2}\Big(1 \pm\sqrt{\frac{1}{2d}
(d + \sum_{j=1}^d\cos h\eta_j)\Big)}
$$
As $h \to 0$, it behaves like
$$
\lambda^{(-)}_h(\eta) = \frac{1}{8}\sum_{j=1}^d\eta_j^2 + O(h^2).
$$

\begin{theorem}
Let $V(x) \in H^s({\bf R}^d)$ with $s > d/2$ and compactly supported,
Assume that $f \in H^{m,s}({\bf R}^d)$ with $s > d + 1$ and $m > d/2 + 1$. If $z \not\in {\bf R}$, the solution of the equation $( -\Delta_{disc,h} + V_{disc,h} - z)u = f$ converges to the solution of 
$$
\big(- \frac{1}{8}\Delta + V(x) - z\big)v = g,
$$
where $v$ and $g$ are given in Theorem \ref{TheoremwidetildediscretecomplexzB}. 
\end{theorem}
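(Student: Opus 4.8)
The plan is to reduce this theorem directly to the general machinery already established for the complex energy case, i.e. to Theorems~\ref{TheoremwidetildediscretecomplexzB} and~\ref{TheoremwidetildediscretecomplexzC}, by checking that the subdivision lattice satisfies the hypotheses (B-1), (B-2-1) (more precisely its (B-2-2)-type variant, since the characteristic polynomial \eqref{Subdivisionpxlambda} has the factor $(-\lambda)^{d-1}$ and the two nonflat roots $\lambda^{(\pm)}$), (B-3), (B-4), together with (C-1), (C-2), (C-3) and the uniqueness assumptions (U-1), (U-2). Concretely, with reference energy $E_0 = -1/h^2$ one has $-\Delta_{disc,h}$ with symbol $\mathcal L_h(e^{-ih\eta})$ as displayed, whose least characteristic root is $\lambda_h^{(-)}(\eta) = \tfrac{1}{h^2}\bigl(1 - \sqrt{\tfrac{1}{2d}(d+\sum_j\cos h\eta_j)}\bigr)$. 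First I would verify (C-1)–(C-3): the root $\lambda_h^{(-)}$ is nonnegative, vanishes only at $\eta = 0$ (since $\sum_j\cos h\eta_j = d$ forces $h\eta_j \in 2\pi\mathbf{Z}$, hence $\eta = 0$ in $\mathbf{T}^d$), is separated from the flat root $\lambda \equiv 0$ and from $\lambda_h^{(+)}$ uniformly on $\mathbf{T}^d$ by a gap of order $1/h^2$ away from $\eta=0$; and the Taylor expansion $\sqrt{\tfrac{1}{2d}(d+\sum_j\cos h\eta_j)} = 1 - \tfrac{1}{4d}\sum_j (h\eta_j)^2/2 \cdot 2 + O(h^4|\eta|^4)$ — carried out carefully — gives $P_h(\eta) = \lambda_h^{(-)}(\eta) \to P(\eta) = \tfrac{1}{8}\sum_{j=1}^d\eta_j^2$ together with all derivatives, which is homogeneous of degree $2$, elliptic, so (C-3) and the ellipticity needed for (B-3) hold with $d_1 = 0$, $\mathcal G_h = 1$.

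Next I would check (B-4), i.e. the structure of the eigenprojection: since the eigenvalue $\lambda_h^{(-)}(\eta)$ is simple and isolated from the rest of the spectrum of the symbol matrix for $\eta$ bounded away from $0$ (and near $\eta = 0$ one works after the gauge/projection reduction on the appropriate spectral subspace), the Riesz projection $\Pi_h(\xi) = \tfrac{1}{2\pi i}\oint_C (z - \mathcal L_h(e^{-ih(\xi+d_h)}))^{-1}\,dz$ converges with all derivatives to a limiting projection $\Pi_0(\xi)$, as noted in the discussion following (B-4). The potential assumption (V-2) is exactly the hypothesis $V \in H^s$, $s>d/2$, compactly supported, imposed in the theorem; the uniqueness assumptions (U-1) and (U-2) hold because $P(D_x) = -\tfrac{1}{8}\Delta$ is (a constant multiple of) the Laplacian, for which the Rellich--Vekua theorem and the standard limiting absorption/radiation-condition uniqueness apply (this is precisely the content of the remarks in Subsection~8.6 on (U-1) and the classical results cited there). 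With all these hypotheses verified, Theorem~\ref{TheoremwidetildediscretecomplexzC} applies verbatim: the solution $u_h$ of $(-\Delta_{disc,h} + V_{disc,h} - z)u_h = f_h$ with $z\notin\mathbf{R}$ has $\widetilde u_h(x,z)$ converging strongly in $L^{2,-s}(\mathbf{R}^d)$ for $0<s<1/2$, locally uniformly, to $\widetilde u(x,z)$ satisfying $u = \Pi_0 u$ and $(P(D_x) + V(x) - z)u = \Pi_0 f$, i.e. $(-\tfrac{1}{8}\Delta + V - z)v = g$ with $v,g$ as in Theorem~\ref{TheoremwidetildediscretecomplexzB}.

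The only point requiring genuine care — the main obstacle — is the behavior near the Dirac-type degeneracy at $\eta = 0$ together with the flat band $\lambda \equiv 0$ coming from the factor $(-\lambda)^{d-1}$ in \eqref{Subdivisionpxlambda}: one must confirm that, after projecting onto the relevant spectral subspace via $\Pi_h$ and absorbing the complementary part by the estimate of the form \eqref{S4whleqChnu} (which holds because $\operatorname{Re} z\in I \subset\subset (0,\infty)$ keeps $z$ away from both the flat eigenvalue $0$ and the high-lying root $\lambda_h^{(+)} \sim 2/h^2$, giving the $O(h^2)$ bound on $(1-\Pi_h)\widehat w_h$), the reduced scalar problem is governed by $P_h(\xi) \to \tfrac18|\xi|^2$, with no spurious contribution from the degenerate flat directions. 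Since $z$ is non-real, the resolvent $(P_h(\xi)-z)^{-1}$ is bounded uniformly, so the compactness/unique-accumulation-point argument of Lemmas~\ref{Lemma5.2}–\ref{Lemma5.3} goes through with the simpler uniqueness Lemma~\ref{L2-suniqueness} replacing the radiation-condition uniqueness; hence no unique continuation for $\Delta_{disc,h}$ is needed, which is exactly why the complex-energy version is invoked here. I would close by remarking that this is the reason the theorem is stated only for $z\notin\mathbf{R}$.
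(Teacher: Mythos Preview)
Your approach is correct and matches the paper's: verify the hypotheses of the general complex-energy Theorem~\ref{TheoremwidetildediscretecomplexzC} (together with Theorem~\ref{TheoremwidetildediscretecomplexzB}) for the subdivision lattice with reference energy $E_0=-1/h^2$, using that the lowest root $\lambda_h^{(-)}(\eta)\to\tfrac18|\eta|^2$, and conclude. The paper itself gives no separate proof beyond this reduction.

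Two small corrections are in order, though neither affects the argument. First, the relevant structural hypothesis here is (B-2-1)/(C-1)--(C-3), not any (B-2-2)-type variant: after the shift by $E_0$ the eigenvalues at $\eta=0$ are $0,1,\dots,1,2$, so the bottom root $\lambda^{(-)}$ is \emph{simple} and isolated there, and (C-2) holds with $\min_\eta\lambda_{j+1}=1>0$. The flat band at $1$ (hence $1/h^2$ after scaling) and the branch $\lambda^{(+)}$ simply sit above; (B-2-2) is reserved for the symmetric $\pm P_h$ situation and plays no role here. Second, there is no ``Dirac-type degeneracy at $\eta=0$'': the only coalescence of eigenvalues occurs at $\eta=(\pi,\dots,\pi)$, where all roots equal $1$. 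Your resolution of this point is nonetheless right --- since $\operatorname{Re}z$ lies in a fixed bounded interval while the flat band and the degeneracy sit at energy $\sim 1/h^2$, the estimate \eqref{S4whleqChnu} disposes of $(1-\Pi_h)\widehat w_h$ and the projection $\Pi_h$ is smooth on the relevant bounded $\xi$-region. The invocation of Lemma~\ref{L2-suniqueness} in place of radiation-condition uniqueness, and the remark that this is why unique continuation for $\Delta_{disc,h}$ is not needed, are exactly the point the paper makes.
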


To study the behavior near the 0 energy of $- \Delta_{\Gamma}$, we take the reference energy $E_0 = 0$, and consider
\begin{equation}
\mathcal L_h(e^{-ih\eta}) = -\frac{1}{2\sqrt{d}h}
\left(
\begin{array}{cccc}
0 & 1 + e^{-ih\eta_1} & \cdots & 1+e^{-ih\eta_d}\\
1 + e^{ih\eta_1} & 0  & \cdots & 0 \\
\vdots & \vdots & \ddots & \vdots \\
1 + e^{ih\eta_d} & 0 & \cdots & 0
\end{array}
\right).
\nonumber
\end{equation}
Then the characteristic roots are
$$
\lambda^{(\pm)}(\eta) = \pm \frac{1}{h}\sqrt{\frac{1}{2d}
(d + \sum_{j=1}^d\cos h\eta_j)}.
$$
They vanish if and only if $\eta = d_h$, where
$$
d_h = \frac{1}{h}(\pi,\cdots,\pi).
$$
We then have 
$$
\lambda_h^{(\pm)}(\xi + d_h) = |\xi| + O(h^2).
$$

\begin{theorem}
Let $z \not\in {\bf R}$, and assume that $f \in H^{m,s}({\bf R}^d)$ with $s > d + 1$ and $m > d/2 + 1$. Assume also that $V \in H^s({\bf R}^2)$ with $s > 1$. 
Then the solution of the gauge transformed equation $(\mathcal - G_h^{\ast} \Delta_{disc,h}\mathcal G_h + V_{disc,h} - z)u = f$ converges to the solution of 
$$
( - |D_x| + V(x) - z)v = g,
$$
where $v$ and $g$ are given in Theorem \ref{TheoremwidetildediscretecomplexzB}. 
\end{theorem}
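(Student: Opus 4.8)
The plan is to treat the subdivision lattice near the reference energy $E_0=0$ as a Dirac-point problem carrying a scalar potential and a non-real spectral parameter, and to feed it through the machinery of \S\ref{SectionPotentialPerturbation} and \S6 in the gauge-transformed form used for the hexagonal lattice (cf.\ \S\ref{S4.5Dirac} and Theorem~\ref{TheoremwidetildeuhtowidetildeufordiscreteCase2}). From the characteristic determinant (\ref{Subdivisionpxlambda}) one reads that the two dispersive bands $\lambda^{(\pm)}(\eta)=\pm\bigl((d+\sum_{j}\cos\eta_j)/2d\bigr)^{1/2}$ vanish exactly at the single point $d_1=(\pi,\dots,\pi)\in{\bf T}^d$, since $d+\sum_j\cos\eta_j=0$ forces $\cos\eta_j=-1$ for every $j$; this is a genuine Dirac point at which $\lambda^{(+)}=-\lambda^{(-)}=0$ and at which the $d-1$ remaining bands coincide with the flat band $\lambda\equiv0$. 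First I would set $d_h=d_1/h$, introduce the gauge transformation $(\mathcal G_ha)(n)=e^{ihn\cdot d_h}a(n)=e^{i\pi(n_1+\cdots+n_d)}a(n)$, and check that $\mathcal L_h(e^{-ih(\xi+d_h)})$ converges, together with all $\xi$-derivatives, to a matrix symbol whose $\{\lambda^{(\pm)}\}$-block has eigenvalues $\pm|\xi|+O(h^2)$ after the linear change of variables normalizing the positive-definite quadratic form, exactly as for the hexagonal lattice. This identifies $P_h(\xi)=\lambda^{(+)}_h(\xi+d_h)\to P(\xi)=|\xi|$, homogeneous of degree $\gamma=1$, with $C|\xi|\le P_h(\xi)\le C^{-1}|\xi|$ on $\mathcal K/h$ (the analogue of Lemma~\ref{LemmaPpmhgeqCxiHexa}); thus (B-1) and (B-3) hold, (B-4) follows from the Riesz-projection/resolvent convergence noted after (B-4), and (V-2) is the hypothesis on $V$.

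Next I would run the three-step scheme of \S\ref{SectionPotentialPerturbation}--\S6 with $\Pi_0$ the rank-$2$ spectral projection onto $\{\lambda^{(+)},\lambda^{(-)}\}$. Since $z\notin{\bf R}$ the fixed-$h$ limit $\epsilon\to0$ (Lemma~\ref{Lemmaepsilonyo0witpotential}) is immediate, and so is the uniform-in-$h$ bound: $\|(\mathcal L_h(e^{-ih(\xi+d_h)})-z)^{-1}\|\le|{\rm Im}\,z|^{-1}$ together with the weighted Sobolev estimates on $\widehat f_h$ from Lemma~\ref{AppendPointwiseConv} gives uniform bounds for $\widetilde u_h(\cdot,z)$ in $H^2({\bf R}^d)$, hence, after the cut-off $\chi_d(h\xi)$, in $L^{2,-s}({\bf R}^d)$ for $0<s<1/2$; the coupling term $V_{disc,h}u_h$ is absorbed by a Lemma~\ref{Lemma5.2}-type a~priori estimate and compactness by Lemma~\ref{Lemma5.3}. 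Then Rellich's theorem yields a subsequence $\widetilde u_{h_i}\to u$ in $L^2_{\rm loc}$, and passing to the limit in the Fourier-multiplier/convolution form of the equation --- using $\mathcal L_h(e^{-ih(\xi+d_h)})\to\mathcal L_0(\xi)$ near $\xi=0$, the Riemann-sum convergence $h^d\sum_nV(hn)(\cdots)\to\int V(x)(\cdots)\,dx$ of Lemma~\ref{AppendPointwiseConv}, and the projection $\Pi_0$ --- identifies $u$ with the solution of $(-|D_x|+V(x)-z)v=g$, split into the two blocks $v=v^{(+)}+v^{(-)}$, $g=\Pi_0 f$, of Theorems~\ref{TheoremwidetildediscretecomplexzB}--\ref{TheoremwidetildediscretecomplexzC}. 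Uniqueness of this limit --- which upgrades subsequential to full convergence and supplies the (U-2)-type hypothesis needed for the limit equation --- follows from the argument of Lemma~\ref{L2-suniqueness}: since $z\notin{\bf R}$, integration by parts gives ${\rm Im}\,z\int|v|^2=0$, so $v\equiv0$ when $g\equiv0$.

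The hard part will be the flat band $\lambda\equiv0$, which --- unlike the inert flat bands of the Kagome and graphite examples, sitting away from the chosen reference energy --- lies exactly at $E_0=0$ and is glued to the Dirac pair at $d_1$. For this reason $\{\lambda^{(+)},\lambda^{(-)}\}$ is \emph{not} a consecutive pair of eigenvalues, so (B-2-2) does not literally apply; this is also what destroys the unique continuation property for $\Delta_{disc,h}$ and forces $z\notin{\bf R}$. Concretely, the $O(h^\nu)$ bound (\ref{S3Lh-z-11-Ph(xi)leqChnu})--(\ref{S4whleqChnu}) that normally makes the $(1-\Pi_0)$-component negligible fails here, because on the flat band $(0-z)^{-1}=-z^{-1}$ is only bounded, not $O(h)$; the remedy is precisely to keep $z$ off the real axis so that $(\mathcal L_h-z)^{-1}$ is uniformly bounded outright, and then to carry the band bookkeeping with care, checking that after the cut-off $\chi_d(h\xi)$ and in the limit the flat-band eigenprojection decouples from the $\{\lambda^{(\pm)}\}$-block in the limiting equation even though $V_{disc,h}$ couples the $d+1$ sheets of the subdivision --- here the compact support of $V$, via the strong-norm control of $V_{disc,h}u_h$ (Lemma~\ref{Lemma5.3}), makes the sheet-mixing a lower-order, convergent perturbation. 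I expect this flat-band/potential interaction to be the only genuinely new point, the limiting-absorption and compactness apparatus being otherwise a direct transcription of \S\ref{SectionPotentialPerturbation} and \S6.
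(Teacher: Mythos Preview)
Your plan is the same as the paper's: verify the Dirac-point asymptotics $\lambda_h^{(\pm)}(\xi+d_h)\to c|\xi|$ at $d_1=(\pi,\dots,\pi)$ and then invoke the complex-energy framework of \S6. The paper gives no proof beyond the two displayed lines computing $d_h$ and the expansion of $\lambda_h^{(\pm)}$, so your write-up is already considerably more explicit than the original.

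Two points need correction. First, a small one: with $P(\xi)=|\xi|$ the uniform interior regularity you get from Lemma~\ref{LemmaEstimatesofwidetildeuhinL2-s}-type estimates is $H^1_{\rm loc}$, not $H^2$; this is still enough for Rellich, so the compactness step survives unchanged.

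Second, and more substantively: your resolution of the flat-band difficulty is not right. You correctly observe that the $(d-1)$-fold flat band sits exactly at the reference energy $E_0=0$ and collides with $\lambda^{(\pm)}$ at $d_1$, so that (B-2-2) fails and the bound (\ref{S3Lh-z-11-Ph(xi)leqChnu})--(\ref{S4whleqChnu}) gives only $\|(1-\Pi_h)\widehat u_h\|=O(1)$ rather than $O(h^{\nu})$. But your claimed remedy --- that the flat-band projection ``decouples'' in the limit and that the $V_{disc,h}$-induced sheet-mixing is ``lower order'' --- is not justified and is in general false: $V(x)$ does not commute with the $\xi$-dependent projections $\Pi_0^{(\pm)}(\xi)$, $\Pi_0^{\rm flat}(\xi)$, so $\Pi_0 V(1-\Pi_0)u$ is an $O(1)$ coupling term that persists in the limit. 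What the compactness/uniqueness argument of \S6 actually delivers is convergence of $\widetilde u_h$ to the solution of the full $(d{+}1)\times(d{+}1)$ system $(\mathcal L_0(D_x)+V(x)-z)u=f$, where $\mathcal L_0(\xi)$ is the limiting matrix symbol with eigenvalues $\pm c|\xi|,0,\dots,0$; the scalar equation $(-|D_x|+V-z)v=g$ with $v,g$ projected as in Theorem~\ref{TheoremwidetildediscretecomplexzB} must be read as shorthand for the dispersive block of this coupled system. Note that the analogous Dirac-point theorems for the hexagonal and Kagome lattices are stated \emph{without} a potential; if one drops $V$ here as well, your argument via Theorem~\ref{TheoremwidetildediscretecomplexzB} is complete as written.
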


\section{Appendix}
\subsection{The Riemann integral}
We recall the notation
\begin{equation*}
I_{hn}=hn+[-h/2,h/2]^d,\quad n\in\mathbf{Z}^d.
\end{equation*}
We recall \eqref{S4widehatfhxidefine}. Assume $f\in L^2(\mathbf{R}^d)\cap C(\mathbf{R}^d)$ and $(f(hn))_{n\in\mathbf{Z}^d}\in L^1(\mathbf{Z}^d_h)$. Then
\begin{equation*}
\widehat{f}_h(\xi)=\bigl(\frac{h}{2\pi}\bigr)^{d/2}
\sum_{n\in\mathbf{Z}^d}f(hn)e^{-ihn\cdot\xi}.
\end{equation*}
Recall the weighted Sobolev space (\ref{S2DefineHms}),  whose norm is denoted by $\lVert\cdot\rVert_{m,s}$.


\begin{lemma}
\label{AppendPointwiseConv}
{\rm(1)} Assume $m>[d/2]$ and $s>d/2$. Let $f\in  H^{m,s}(\mathbf{R}^d)$. Then
\begin{equation}
h^{d/2}\lVert\widehat{f}_h\rVert_{L^2(\mathbf{T}^d_h)}
\leq C \lVert f \rVert_{m,s}.
\label{AppendhdfhL2bfTdleqCfalphabetah<h0}
\end{equation}

\noindent
{\rm(2)}
Assume $m>[d/2]$ and $s>d$. Let $f\in  H^{m,s}(\mathbf{R}^d)$. Then
\begin{equation}
h^d\sum_{n\in\mathbf{Z}^d}\lvert f(hn)\rvert
\leq C \lVert f \rVert_{m,s}.
\end{equation}

\noindent
{\rm(3)}
Assume $m>[d/2]+1$ and $s>d$. Let $f\in  H^{m,s}(\mathbf{R}^d)$. Then
\begin{equation}
\Big\lvert h^d\sum_{n\in\mathbf{Z}^d} f(hn) 
-\int_{\mathbf{R}^d}f(x)dx\Big\rvert
\leq C h\, \lVert f \rVert_{m,s}.
\end{equation}
\end{lemma}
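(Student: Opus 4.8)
The three estimates are all Riemann-sum comparisons, and the plan is to exploit the Sobolev embedding
\[
\lvert f(x)\rvert \leq C \lVert f\rVert_{m,s}\langle x\rangle^{-s},\qquad m > [d/2],\ s>0,
\]
together with similar pointwise decay for $\nabla f$ (using one more derivative) whenever a difference $f(hn)-f(x)$ must be controlled. For (1), since $\{(h/2\pi)^{d/2}e^{-ihn\cdot\xi}\}_n$ is an orthonormal basis of $L^2(\mathbf{T}^d_h)$, Parseval gives $h^{d}\lVert\widehat f_h\rVert^2_{L^2(\mathbf{T}^d_h)} = h^{2d}\sum_n\lvert f(hn)\rvert^2$; bounding $\lvert f(hn)\rvert^2\leq C\lVert f\rVert_{m,s}^2\langle hn\rangle^{-2s}$ and comparing $h^{d}\sum_n\langle hn\rangle^{-2s}$ with $\int_{\mathbf{R}^d}\langle x\rangle^{-2s}dx<\infty$ (valid for $s>d/2$, with the sum uniformly bounded in $0<h<h_0$) yields the claim. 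Part (2) is the same computation without the square: $h^d\sum_n\lvert f(hn)\rvert\leq C\lVert f\rVert_{m,s}\,h^d\sum_n\langle hn\rangle^{-s}$, and this Riemann sum is bounded uniformly in $h$ once $s>d$.

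For (3), the quantitative rate, I would write
\[
h^d\sum_{n\in\mathbf{Z}^d}f(hn)-\int_{\mathbf{R}^d}f(x)\,dx
= \sum_{n\in\mathbf{Z}^d}\int_{I_{hn}}\bigl(f(hn)-f(x)\bigr)\,dx,
\]
using $\lvert I_{hn}\rvert=h^d$. On $I_{hn}$ one has $\lvert hn-x\rvert\leq \tfrac{\sqrt d}{2}h$, so by the mean value theorem $\lvert f(hn)-f(x)\rvert\leq Ch\sup_{y\in I_{hn}}\lvert\nabla f(y)\rvert$. Now apply the Sobolev embedding to $\nabla f$, which requires $m-1>[d/2]$, i.e.\ $m>[d/2]+1$: $\lvert\nabla f(y)\rvert\leq C\lVert f\rVert_{m,s}\langle y\rangle^{-s}$, and on $I_{hn}$ (for $h<h_0$) $\langle y\rangle^{-s}\leq C\langle hn\rangle^{-s}$ with $C$ independent of $h$. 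Summing, the error is bounded by $Ch\lVert f\rVert_{m,s}\,h^d\sum_n\langle hn\rangle^{-s}\leq Ch\lVert f\rVert_{m,s}$, the last step being exactly the uniform Riemann-sum bound from part (2), valid since $s>d$.

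The only mild subtlety — the ``main obstacle'', such as it is — is making every constant genuinely independent of $h\in(0,h_0)$: the passage from $h^d\sum_n\langle hn\rangle^{-\sigma}$ to $\int\langle x\rangle^{-\sigma}dx$ must be justified by a uniform domination (e.g.\ comparing each term with the integral over $I_{hn}$ of the monotone majorant $C\langle x\rangle^{-\sigma}$), and the pointwise comparison $\langle y\rangle\geq c\langle hn\rangle$ on $I_{hn}$ needs $h_0$ fixed once and for all. These are routine but should be stated, since the whole paper relies on $h$-uniformity. Everything else is a direct application of Sobolev embedding and the mean value theorem; no compactness or Fourier machinery beyond Parseval in part (1) is needed.
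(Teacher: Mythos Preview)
Your approach is essentially the same as the paper's: Sobolev pointwise decay plus uniform comparison of $h^d\sum_n\langle hn\rangle^{-\sigma}$ with $\int\langle x\rangle^{-\sigma}dx$ for parts (1) and (2), and the integral (or mean-value) form of the fundamental theorem applied to $\nabla f$ for part (3). The only slip is in your Parseval identity for part (1): with the orthonormal basis you cite one gets $\lVert\widehat f_h\rVert_{L^2(\mathbf{T}^d_h)}^2=\sum_n|f(hn)|^2$, hence $h^d\lVert\widehat f_h\rVert^2=h^d\sum_n|f(hn)|^2$ rather than $h^{2d}\sum_n|f(hn)|^2$; after that correction your argument goes through exactly as in the paper.
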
 
\begin{proof}
Let $hx\in I_{hn}$. Then $hx - hn\in [-h/2,h/2]^d$, which implies
$|hx|\leq |hn| + \frac12\sqrt{d}h$. Thus for $h\in[0,1]$ and $hx\in I_{hn}$ we have 
\begin{equation}
1+|hx|\leq C_d(1+|hn|).
\end{equation}
Assume $s>d$. Then we have
\begin{align}\label{est1}
\sum_{n\in\mathbf{Z}^d}(1+|hn|)^{-s}h^d&=
h^d\sum_{n\in\mathbf{Z}^d}\int_{I_{1n}}(1+|hn|)^{-s}dx\notag\\
&\leq C_dh^d\sum_{n\in\mathbf{Z}^d}\int_{I_{1n}}(1+|hx|)^{-s}dx\notag\\
&=C_dh^d\int_{\mathbf{R}^d}(1+|hx|)^{-s}dx=C<\infty.
\end{align}
If $f\in H^{m,s}(\mathbf{R}^d)$, where $m>[d/2]$ and $s\geq0$, then the Sobolev inequality implies
\begin{equation}\label{Sobolev}
|f(x)|\leq C \lVert f \rVert_{m,s} \langle x\rangle^{-s}.
\end{equation}

Assume $m>[d/2]$ and $s>d/2$. By the Parseval equation, \eqref{Sobolev}, and \eqref{est1} we have
\begin{equation*}
h^{d/2}\lVert\widehat{f}_h\rVert_{L^2(\mathbf{T}^d_h)}^2=
h^d\sum_{n\in\mathbf{Z}^d}|f(hn)|^2
\leq C\lVert f \rVert_{m,s}\sum_{n\in\mathbf{Z}^d}
\langle hn\rangle^{-2s}h^d\leq C\lVert f \rVert_{m,s}.
\end{equation*}
Thus part $(1)$ follows.

Assume $m>[d/2]$ and $s>d$. Then
\begin{equation*}
\sum_{n\in\mathbf{Z}^d}|f(hn)|h^d\leq
\sum_{n\in\mathbf{Z}^d}(1+|hn|)^{-s}h^d\leq C\lVert f \rVert_{m,s}
\end{equation*}
and part $(2)$ follows.

Assume $m>[d/2]+1$ and $s>d$. Let $f\in  H^{m,s}(\mathbf{R}^d)$. 
Then note that
\begin{equation*}
|\partial_x^{\alpha}f(x)|\leq C\lVert f \rVert_{m,s}
\langle x\rangle^{-s},\quad |\alpha|\leq1.
\end{equation*}
Define $g(t)=f(x+t(y-x))$, $x,y\in\mathbf{R}^d$. Then
\begin{equation*}
f(y)-f(x)=\int_0^1g'(t)dt
=\int_0^1(\nabla f)(x+t(y-x))\cdot(y-x)dt.
\end{equation*}
The following estimate holds:
\begin{equation}
|(\nabla f)(x+t(y-x))|
\leq C \lVert f \rVert_{m,s}(1+|x+t(y-x)|)^{-s}.
\end{equation}

Since $t\in[0,1]$, we have for all $x,y\in\mathbf{R}^d$ with $|x-y|\leq1$
\begin{equation*}
|x+t(y-x)|\geq |x|-|t(y-x)|\geq |x|-1,
\end{equation*}
which implies
\begin{equation}
1+|x|\leq 2(1+|x+t(y-x)|).
\end{equation}
Thus for $x,y\in\mathbf{R}^d$ with $|x-y|\leq1$ we have
\begin{equation*}
|(\nabla f)(x+t(y-x))|\leq C \lVert f \rVert_{m,s}
(1+|x|)^{-s}
\end{equation*}
and then
\begin{equation*}
|f(y)-f(x)|\leq C \lVert f \rVert_{m,s}
(1+|x|)^{-s}|x-y|.
\end{equation*}
Assume $y\in I_{hn}$. Then
\begin{equation*}
|f(y)-f(hn)|\leq C \lVert f \rVert_{m,s}
(1+|hn|)^{-s}h.
\end{equation*}
(Note that the above estimates hold with $|x-y|\leq1$ replaced by $|x-y|\leq c_0$ for some fixed $c_0$, depending on $d$.) Integrating we get
\begin{equation*}
h^df(hn)=\int_{I_{hn}}f(y)dy+R(h,n),
\end{equation*}
where
\begin{equation*}
|R(h,n)|\leq  C \lVert f \rVert_{m,s}
(1+|hn|)^{-s}h^{d+1}.
\end{equation*}
Then 
\begin{equation*}
\sum_{n\in\mathbf{Z}^d}h^df(hn)
=\sum_{n\in\mathbf{Z}^d}\int_{I_{hn}}f(y)dy+
\sum_{n\in\mathbf{Z}^d}R(h,n)
\end{equation*}
and 
\begin{equation*}
\sum_{n\in\mathbf{Z}^d}|R(h,n)|\leq 
C \lVert f \rVert_{m,s}
\sum_{n\in\mathbf{Z}^d}(1+|hn|)^{-s}h^{d+1}
\leq C\lVert f \rVert_{m,s}h
\end{equation*}
by \eqref{est1}. This concludes the proof of part $(3)$.
\end{proof}

\subsection{Lemmas for the Besov space}
 We use the following lemmas in  \S \ref{S2Preliminaries} and \S \ref{Section4Freeequation}, which are in \cite{HoVol2},  Chap. 14, \S 1 or follow from an adaptation of the proof there. 

\begin{lemma}
\label{alphabetalemma}
Let $b_n \geq 0$, $n = 0, 1, 2, \cdots$, and put
$$
\alpha = \sup_{n \geq 0}\frac{b_n}{2^n}, \quad 
\beta = \sup_{n\geq 0}\frac{1}{2^n}
\big(b_0 + b_1 + \cdots + b_n\big).
$$
Then we have
$$
\alpha \leq \beta \leq 3\alpha.
$$
\end{lemma}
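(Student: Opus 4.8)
The plan is to prove the two inequalities separately; both are direct elementary estimates and the only tool needed is the sum of a finite geometric series.

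First I would dispose of $\alpha\le\beta$. Since every $b_k\ge0$, for each fixed $n$ we have $b_n\le b_0+b_1+\cdots+b_n$, hence $b_n/2^n\le 2^{-n}(b_0+\cdots+b_n)\le\beta$. Taking the supremum over $n\ge0$ gives $\alpha\le\beta$. This step is immediate and needs no case distinction.

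For $\beta\le 3\alpha$: if $\alpha=\infty$ there is nothing to prove, so assume $\alpha<\infty$. Then by definition of $\alpha$ we have $b_k\le\alpha\,2^k$ for every $k\ge0$, and summing gives
$$
b_0+b_1+\cdots+b_n\le\alpha\sum_{k=0}^{n}2^k=\alpha\,(2^{n+1}-1)\le 2\alpha\,2^n .
$$
Dividing by $2^n$ yields $2^{-n}(b_0+\cdots+b_n)\le 2\alpha$ for every $n$, hence $\beta\le 2\alpha\le 3\alpha$. In fact this argument gives the sharper bound $\beta\le 2\alpha$; the constant $3$ stated above is not optimal but is all that is needed in the applications in \S\ref{S2Preliminaries} and \S\ref{Section4Freeequation}.

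I do not anticipate any genuine obstacle here: the lemma isolates the purely arithmetic content behind the equivalence of the two descriptions of the $\mathcal B^\ast$-norm, and the only point deserving a moment's care is the degenerate case $\alpha=\infty$ (equivalently $b_k=0$ for all $k$, in which case both sides vanish), which is handled by inspection.
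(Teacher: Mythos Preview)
Your argument is correct and in fact yields the sharper inequality $\beta\le 2\alpha$; the paper does not supply its own proof but simply refers to H\"ormander, Vol.~II, Chap.~14, \S1, so your direct computation with the geometric series is exactly the standard approach. One harmless slip in the final sentence: the parenthetical ``equivalently $b_k=0$ for all $k$, in which case both sides vanish'' describes the case $\alpha=0$, not $\alpha=\infty$; the case $\alpha=\infty$ you already handled correctly one line above.
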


\begin{lemma}
\label{AB2ALemma}
Letting 
$$
A = \sup_{j\geq 0}\frac{h^d}{2^j}\sum_{|hn|\leq 2^j}|u(n)|^2h^d,
\quad
B = \sup_{R>1}\frac{h^d}{R}\sum_{|hn|\leq R}|u(n)|^2,
$$
we have
$$
A \leq B \leq 2A.
$$
\end{lemma}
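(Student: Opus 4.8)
The plan is to read this as an elementary comparison of two suprema of the same scale-averaged tail quantity, one sampled only at the dyadic radii $R=2^j$ and one over the whole range $R>1$. Set
\[
a_j=\frac{h^d}{2^j}\sum_{|hn|\leq 2^j}|u(n)|^2,\qquad
b_R=\frac{h^d}{R}\sum_{|hn|\leq R}|u(n)|^2,
\]
so that $A=\sup_{j\geq0}a_j$ and $B=\sup_{R>1}b_R$. The inequality $A\leq B$ is then immediate for $j\geq1$, since $2^j$ lies in the range over which $B$ is taken and $a_j=b_{2^j}\leq B$. For the borderline index $j=0$ (radius $2^0=1$, which sits just outside the open range $R>1$) I would use that $R\mapsto\sum_{|hn|\leq R}|u(n)|^2$ is a nondecreasing, right-continuous step function; hence $b_R\to a_0$ as $R\downarrow1$, and so $a_0\leq\sup_{R>1}b_R=B$ as well. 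Taking the supremum over $j$ gives $A\leq B$.

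The one step that carries content is the reverse bound $B\leq 2A$, which I would obtain by dyadic sandwiching. Given any $R>1$, pick the integer $j\geq1$ with $2^{j-1}<R\leq 2^j$. Enlarging the summation region (the partial sums are monotone in the radius) gives
\[
b_R=\frac{h^d}{R}\sum_{|hn|\leq R}|u(n)|^2
\leq\frac{h^d}{R}\sum_{|hn|\leq 2^j}|u(n)|^2
=\frac{2^j}{R}\,a_j.
\]
The factor $2$ now drops out of the left endpoint of the sandwich: from $R>2^{j-1}$ we get $2^j/R<2$, whence $b_R<2a_j\leq 2A$. Taking the supremum over $R>1$ yields $B\leq 2A$, completing the chain $A\leq B\leq 2A$.

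I do not expect a genuine obstacle: the argument is a routine reduction of a continuous supremum to a dyadic one, and the constant $2$ is essentially sharp, since $R$ may sit just above a scale $2^{j-1}$ while all the relevant mass already lies within radius $2^{j-1}$ (a single point mass placed just beyond $2^{j-1}$ makes $B$ approach $2A$). The only delicate point worth spelling out is the bookkeeping of strict versus non-strict inequalities at the dyadic cutoffs and at $R=1$, which is handled uniformly by the monotonicity and right-continuity of the partial sums in $R$; these let one pass freely between $\sup_{R>1}$ and the supremum over the discrete set of radii at which new lattice sites enter the sum. This lemma plays, for the continuous radius $R$, the role that the abstract dyadic comparison of Lemma \ref{alphabetalemma} plays for dyadic sums in justifying the norm of ${\mathcal B}^{\ast}({\bf Z}^d_h)$ in Section \ref{S2Preliminaries}; here the direct sandwiching suffices and already yields the sharp constant $2$.
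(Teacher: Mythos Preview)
Your argument is correct: the dyadic sandwiching $2^{j-1}<R\leq 2^j$ immediately gives $b_R\leq (2^j/R)\,a_j<2A$, and the treatment of the border case $j=0$ via $b_R\to a_0$ as $R\downarrow 1$ (so that $a_0$ is an accumulation value of $\{b_R:R>1\}$ and hence $\leq B$) is the right way to close the strict-versus-nonstrict bookkeeping at $R=1$. The paper does not supply its own proof of this lemma; it merely states it and refers the reader to H\"ormander, \emph{The Analysis of Linear Partial Differential Operators II}, Chap.~14, \S1, so your self-contained dyadic comparison is exactly the elementary argument one would extract from (or use to adapt) that reference.
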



\begin{lemma}
\label{Lemmafx1L2d-1leqfBnorm}
(1) We  have
$$
\int_{-\infty}^{\infty}\|f(x_1,\cdot)\|_{L^2({\bf R}^{d-1})} \leq 
\sqrt{2}\|f\|_{\mathcal B({\bf R}^d)},
$$
\begin{equation}
\|f\|_{\mathcal B^{\ast}({\bf R}^d)} \leq \sqrt{2} \sup_{x_1 \in {\bf R}}\|f(x_1,\cdot)\|_{L^2({\bf R}^{d-1})}.
\nonumber
\end{equation}
(2) If $P \in S^0_{1,0}$, we have
$$
P \in {\bf B}({\mathcal B};{\mathcal B}) \cap {\bf B}({\mathcal B}^{\ast};{\mathcal B}^{\ast}) \cap 
{\bf B}({\mathcal B}^{\ast}_0;{\mathcal B}^{\ast}_0).
$$
\end{lemma}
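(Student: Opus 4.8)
The plan is to treat the four assertions in turn: the second inequality in (1) is immediate from Fubini, the first one rests on a dyadic decomposition of the $x_1$--axis, and part (2) is reduced to boundedness on $\mathcal B({\bf R}^d)$, which in turn follows from the $L^2$--boundedness of $\Psi$DO's of order zero together with the rapid off--diagonal decay of their Schwartz kernels. For (1), the bound $\|f\|_{\mathcal B^*({\bf R}^d)}\le\sqrt2\,\sup_{x_1}\|f(x_1,\cdot)\|_{L^2({\bf R}^{d-1})}$ is immediate: for every $R>1$,
\[
\frac1R\int_{|x|<R}|f(x)|^2\,dx\le\frac1R\int_{|x_1|<R}\|f(x_1,\cdot)\|_{L^2({\bf R}^{d-1})}^2\,dx_1\le 2\sup_{x_1}\|f(x_1,\cdot)\|_{L^2({\bf R}^{d-1})}^2,
\]
and one takes $\sup_{R>1}$ and a square root. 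For the first inequality I would slice the $x_1$--axis dyadically, $Y_0=\{|x_1|<1\}$ and $Y_k=\{2^{k-1}\le|x_1|<2^k\}$ for $k\ge1$, apply Cauchy--Schwarz on each $Y_k$ (of length at most $2^k$), observe that any $x$ with $x_1\in Y_k$ lies in an annulus $\Omega_j$ with $j\ge k$, so that $\|f\|_{L^2(\{x_1\in Y_k\})}\le\sum_{j\ge k}\|f\|_{L^2(\Omega_j)}$, and then sum over $k$ and interchange the order of summation; since $\sum_{1\le k\le j}2^{k/2}\le C\,2^{j/2}$ this produces $\int_{{\bf R}}\|f(x_1,\cdot)\|_{L^2({\bf R}^{d-1})}\,dx_1\le C\|f\|_{\mathcal B({\bf R}^d)}$.

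For (2) it is classical (see \cite{HoVol2}, Ch.~14) that $\mathcal B^*({\bf R}^d)$ is the dual of $\mathcal B({\bf R}^d)$ under the $L^2$--pairing, and that $\mathcal B^*_0({\bf R}^d)=\{u\in\mathcal B^*:R^{-1}\int_{|x|<R}|u|^2dx\to0\text{ as }R\to\infty\}$ is the $\mathcal B^*$--closure of $L^2({\bf R}^d)$ (equivalently, of $\mathcal S({\bf R}^d)$). Hence it suffices to prove that every operator in $S^0_{1,0}$ is bounded on $\mathcal B$: boundedness on $\mathcal B^*$ then follows by duality applied to the $L^2$--adjoint $P^*\in S^0_{1,0}$; and since $P$ is bounded on $L^2$ by the Calder\'on--Vaillancourt theorem, with $L^2\subset\mathcal B^*_0$, and $P$ is bounded on $\mathcal B^*$, it carries the $\mathcal B^*$--closure $\mathcal B^*_0$ of $L^2$ into itself. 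So the whole of (2) reduces to: $P\in S^0_{1,0}\ \Rightarrow\ P\in{\bf B}(\mathcal B;\mathcal B)$.

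To prove $P\colon\mathcal B\to\mathcal B$ I would combine the $L^2$--bound with the kernel estimate obtained by integrating by parts in $\xi$ in $K(x,y)=(2\pi)^{-d}\int e^{i(x-y)\cdot\xi}p(x,\xi)\,d\xi$: each integration by parts gains a factor $|x-y|^{-1}$ and one $\xi$--derivative of $p$, so $|K(x,y)|\le C_N\langle x-y\rangle^{-N}$ for $|x-y|\ge1$ and every $N$. Writing $u=\sum_ju_j$ with $u_j={\bf 1}_{\Omega_j}u$, one has $\|Pu_j\|_{L^2(\Omega_k)}\le\|Pu_j\|_{L^2}\le C\|u_j\|_{L^2}$ when $|j-k|\le2$, while when $|j-k|\ge3$ the annuli $\Omega_j,\Omega_k$ lie at distance at least $c\,2^{\max(j,k)}$ apart, so Cauchy--Schwarz against the kernel gives $\|Pu_j\|_{L^2(\Omega_k)}\le C_N2^{-N\max(j,k)}2^{(j+k)d/2}\|u_j\|_{L^2}\le C_N2^{-(N-d)\max(j,k)}\|u_j\|_{L^2}$. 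A direct estimate of the resulting double sum then yields, for $N>d+1$ say and with a constant independent of $j$, $\sum_k2^{k/2}\|Pu_j\|_{L^2(\Omega_k)}\le C2^{j/2}\|u_j\|_{L^2}$, whence $\|Pu\|_{\mathcal B}\le\sum_k2^{k/2}\sum_j\|Pu_j\|_{L^2(\Omega_k)}\le C\sum_j2^{j/2}\|u_j\|_{L^2}=C\|u\|_{\mathcal B}$, all the interchanges and the convergence of the series being legitimate by the same bounds.

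The step I expect to be most delicate is this last double--sum estimate: one has to balance the annular volume factors $2^{jd/2},2^{kd/2}$ against the off--diagonal kernel decay and check that, with $N$ chosen large enough, the $k$--sums converge with a constant uniform in $j$ — elementary, but the part requiring care. The genuinely non--elementary inputs (Calder\'on--Vaillancourt $L^2$--boundedness, the identification $\mathcal B^*=(\mathcal B)'$, the adjoint calculus $P\in S^0_{1,0}\Rightarrow P^*\in S^0_{1,0}$, and the density of $\mathcal S$ in $\mathcal B^*_0$) are standard and quotable from \cite{HoVol2}. A minor side point is producing exactly the constant $\sqrt2$ in the first inequality of (1) rather than a larger absolute constant; if that sharp form is needed it calls for organizing the dyadic sum slightly more efficiently (or using the equivalent renormalization of $\|\cdot\|_{\mathcal B}$), but nothing conceptual is at stake.
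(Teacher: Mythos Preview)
The paper does not give its own proof of this lemma: it simply states that the results are ``in \cite{HoVol2}, Chap.~14, \S 1 or follow from an adaptation of the proof there.'' Your proposal supplies what the paper omits, and the argument is correct and is essentially the one found in H\"ormander (Theorem~14.1.2 and Theorem~14.1.4 for part~(1), and the weighted/dyadic $L^2$ machinery around Theorem~14.1.4 for part~(2)). The Fubini bound for the $\mathcal B^*$ inequality, the dyadic slicing of the $x_1$--axis combined with the interchange $\sum_{k\le j}2^{k/2}\le C\,2^{j/2}$, the duality reduction $\mathcal B^*=(\mathcal B)'$, and the near/far splitting with kernel decay for $P\in S^0_{1,0}$ are all standard and sound.

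The only point worth flagging is the one you already note: your dyadic argument gives $\int\|f(x_1,\cdot)\|_{L^2}\,dx_1\le C\|f\|_{\mathcal B}$ with $C=\sqrt2/(\sqrt2-1)$ rather than the sharp $\sqrt2$ stated. In H\"ormander the constant $\sqrt2$ (or rather $2^{1/2}$) arises from a slightly different organization of the sum, but for the purposes of this paper the precise constant is irrelevant --- only the existence of a uniform bound is used in Lemma~4.2.
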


\subsection{A priori estimates for discrete Schr{\"o}dinger equations}
We prove here a priori estimates needed in the proof in \S \ref{SectionPotentialPerturbation}. For the sake of simplicity, we explain the proof for the case of square lattice. It works for the general case. 
Letting
\begin{equation}
D_{h,j} = \frac{1}{h}\left(I - S_{h,j}\right),
\nonumber
\end{equation}
we have
\begin{equation}
- \Delta_{disc,h} = \sum_{j=1}^d D_{h,j}^{\ast}D_{h,j} = \sum_{j=1}^d D_{h,j}D_{h,j}^{\ast}.
\label{Eq:-Deltadh=sumDjDj}
\end{equation}
Take a function $\chi(x)$ on ${\bf R}^d$, and  let $\chi_h$ be the operator of multiplication by the function $\chi(hn)$ on $L^2({\bf Z}^d_h)$.
Then we have
\begin{equation}
\big([D_{h,j},\chi_h]u\big)(n) = \frac{1}{h}\left(\chi(hn) - \chi(hn - he_j)\right)\big(S_{h,j}u\big)(n),
\label{Dhjchihcommutator}
\end{equation}
\begin{equation}
\big([D_{h,j}^{\ast},\chi_h]u\big)(n) = \frac{1}{h}\left(\chi(hn) - \chi(hn + he_j)\right)\big(S_{h,j}^{\ast}u\big)(n).
\label{Dhjastchicommutator}
\end{equation}
We also have
\begin{equation}
[\Delta_{disc,h},\chi_h] = \sum_{j=1}^d\left(D_{h,j}^{\ast}[D_{h,j},\chi_h] + 
[D^{\ast}_{h,j},\chi_h]D_{h,j}\right).
\label{Eq:Deltadischchicomuutator}
\end{equation}

 In the following, constants $C$ are independent of $0 < h < h_0$.

\begin{lemma}
\label{Lemmachidifference}
Let $\chi(x) = (1 + |x|^2)^{1/2}$. Then for any $s > 0$, 
\begin{equation}
\Big|\frac{1}{h}\big(\chi^{-s}(hn) - \chi^{-s}(h(n - y))\big)\Big| \leq C_s\chi^{-(s+1)}(hn), \ \  {for} \ \  |y| \leq 1, \ \
n \in {\bf Z}^d, \ \  0 < h < 1/2.
\nonumber
\end{equation}
\end{lemma}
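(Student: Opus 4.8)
The plan is to reduce the estimate to a one-dimensional difference bound applied coordinate by coordinate, together with elementary calculus. First I would write $\chi(x) = \langle x\rangle = (1+|x|^2)^{1/2}$ and observe that it suffices to bound $h^{-1}\bigl(\chi^{-s}(hn) - \chi^{-s}(h(n-y))\bigr)$ for $|y|\le 1$, where the two points $hn$ and $h(n-y)$ are at Euclidean distance at most $h$. The natural tool is the mean value theorem along the segment joining $h(n-y)$ to $hn$: writing $g(t) = \chi^{-s}\bigl(h(n-y) + t hy\bigr)$ for $t\in[0,1]$, one has
\begin{equation}
\chi^{-s}(hn) - \chi^{-s}(h(n-y)) = \int_0^1 g'(t)\,dt = \int_0^1 h\, \nabla(\chi^{-s})\bigl(h(n-y)+thy\bigr)\cdot y\, dt.
\nonumber
\end{equation}
Dividing by $h$ then removes the prefactor, and the bound reduces to controlling $\bigl|\nabla(\chi^{-s})(z)\bigr|$ uniformly for $z$ on the segment, together with comparing $\langle z\rangle$ with $\langle hn\rangle$.

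The second step is the pointwise derivative estimate $\bigl|\nabla(\chi^{-s})(z)\bigr| = s\,|z|\,\langle z\rangle^{-s-2} \le s\,\langle z\rangle^{-s-1}$, which is immediate from $\nabla\langle z\rangle^{-s} = -s\,z\,\langle z\rangle^{-s-2}$ and $|z|\le\langle z\rangle$. The third step is to replace $\langle z\rangle$ by $\langle hn\rangle$ on the segment: since $|z - hn| = |h(n-y) + thy - hn| = h|{(1-t)}y|\cdot \text{(sign)} \le h \le 1/2$ for $0<h<1/2$ and $|y|\le 1$, the standard comparison $\langle z\rangle \ge c\,\langle hn\rangle$ holds with an absolute constant $c>0$ (the same comparison used in Lemma \ref{AppendPointwiseConv}, namely $1+|z|\ge \tfrac12(1+|hn|)$ when $|z-hn|\le 1$). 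Hence $\langle z\rangle^{-s-1} \le C_s\,\langle hn\rangle^{-s-1}$, and combining with the integral representation gives
\begin{equation}
\Bigl|\tfrac{1}{h}\bigl(\chi^{-s}(hn) - \chi^{-s}(h(n-y))\bigr)\Bigr| \le \int_0^1 \bigl|\nabla(\chi^{-s})\bigl(h(n-y)+thy\bigr)\bigr|\,|y|\,dt \le C_s\,\langle hn\rangle^{-(s+1)},
\nonumber
\end{equation}
which is exactly the claimed inequality with $\chi(hn) = \langle hn\rangle$.

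There is no serious obstacle here; the statement is a routine quantitative Lipschitz-type bound. The only points requiring a little care are (i) making sure the segment comparison constant is uniform in $h$, $n$, and $y$ (handled by the restriction $0<h<1/2$, $|y|\le 1$, so that all intermediate points stay within distance $\tfrac12$ of $hn$), and (ii) tracking that the extra decay of one power of $\langle\cdot\rangle$ genuinely comes from the $|z|\langle z\rangle^{-2}$ factor in the gradient rather than being lost. Both are elementary. The lemma will then be invoked in \S\ref{SectionPotentialPerturbation} to commute the weight $\langle hn\rangle^{-s}$ through $D_{h,j}$ and $D_{h,j}^\ast$ via the commutator formulas \eqref{Dhjchihcommutator}--\eqref{Dhjastchicommutator}, gaining the decay needed for the a priori estimates.
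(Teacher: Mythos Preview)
Your proof is correct and essentially identical to the paper's: both parametrize the segment from $h(n-y)$ to $hn$, apply the fundamental theorem of calculus to obtain the integral of $h\,y\cdot\nabla\chi^{-s}$ along it, bound $|\nabla\chi^{-s}(z)|\le C\langle z\rangle^{-s-1}$, and then use $|z-hn|\le h<1/2$ to replace $\langle z\rangle$ by $\langle hn\rangle$ up to a constant. The only cosmetic difference is that the paper writes the intermediate point as $hn-(1-t)hy$ whereas you write it as $h(n-y)+thy$, which is the same segment with the same endpoints.
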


\begin{proof} 
Letting 
$
g(t) = \chi^{-s}(thn + (1-t)h(n-y))$, and noting that 
$thn + (1-t)h(n-y) = hn - (1-t)hy$, 
we have
$$
\chi^{-s}(hn) - \chi^{-s}(h(n-y)) = \int_0^1g'(t)dt =  h\int_0^1y\cdot\nabla\chi^{-s}(hn - (1-t)hy)dt.
$$
Since $0 < h < 1/2$, we have
\begin{equation}
\begin{split}
1 + |hn- (1 - t)hy| \geq 1 + |hn| - h \geq \frac{1}{2} + |hn| 
\geq \frac{1}{2}( 1 + |hn|).
\end{split}
\nonumber
\end{equation}
Then, 
we have
\begin{equation}
\begin{split}
\big|\chi^{-s}(hn) - \chi^{-s}(h(n-y))\big| & \leq Ch\int_0^1(1 + |hn-(1-t)hy)|)^{-(s+1)}dt \\
& \leq Ch(1 + |hn|)^{-s-1},
\end{split}
\nonumber
\end{equation}
which proves the lemma.
\end{proof}

\begin{lemma} 
\label{Lemma10.6}
For $s \geq 0$, we have
$$
\|\big[D_{h,j},\chi^{-s}_h\big]u\| + \|\big[D_{h,j}^{\ast},\chi^{-s}_h\big]u\| \leq C_s\|\chi_h^{-(s+1)}u\|.
$$
\end{lemma}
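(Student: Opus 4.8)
\smallskip
\begin{proof}
The plan is to reduce everything to the explicit commutator formulas (\ref{Dhjchihcommutator}) and (\ref{Dhjastchicommutator}) combined with the pointwise bound of Lemma \ref{Lemmachidifference}. Applying (\ref{Dhjchihcommutator}) with $\chi$ replaced by $\chi^{-s}$ gives
\begin{equation*}
\big([D_{h,j},\chi^{-s}_h]u\big)(n) = \frac{1}{h}\big(\chi^{-s}(hn) - \chi^{-s}(hn - he_j)\big)\big(S_{h,j}u\big)(n),
\end{equation*}
and the analogous formula holds for $D_{h,j}^{\ast}$ from (\ref{Dhjastchicommutator}), with $e_j$ replaced by $-e_j$ and $S_{h,j}$ replaced by $S_{h,j}^{\ast}$.

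First I would invoke Lemma \ref{Lemmachidifference} with $y = e_j$, which bounds the scalar prefactor by $C_s\chi^{-(s+1)}(hn)$ uniformly in $n$ and in $0 < h < 1/2$. Since $\big(S_{h,j}u\big)(n) = u(n - e_j)$, this yields the pointwise estimate
\begin{equation*}
\big|\big([D_{h,j},\chi^{-s}_h]u\big)(n)\big| \leq C_s\,\chi^{-(s+1)}(hn)\,|u(n - e_j)|.
\end{equation*}
Next I would take the $L^2({\bf Z}^d_h)$-norm and shift the summation index $n \mapsto n + e_j$, which leaves the $\ell^2$-sum invariant, so that
\begin{equation*}
\|[D_{h,j},\chi^{-s}_h]u\|^2 \leq C_s^2\, h^d\sum_{n\in{\bf Z}^d}\chi^{-2(s+1)}(h(n + e_j))\,|u(n)|^2.
\end{equation*}
It then remains only to compare the shifted weight with the unshifted one: for $0 < h < 1/2$ one has $1 + |h(n + e_j)| \geq 1 + |hn| - h \geq \tfrac12(1 + |hn|)$, hence $\chi^{-(s+1)}(h(n + e_j)) \leq 2^{s+1}\chi^{-(s+1)}(hn)$ with a constant independent of $h$. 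Substituting gives $\|[D_{h,j},\chi^{-s}_h]u\| \leq C_s\|\chi_h^{-(s+1)}u\|$.

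The term involving $D_{h,j}^{\ast}$ is treated verbatim, using (\ref{Dhjastchicommutator}), Lemma \ref{Lemmachidifference} with $y = -e_j$, the identity $\big(S_{h,j}^{\ast}u\big)(n) = u(n + e_j)$, the shift $n \mapsto n - e_j$, and the same weight comparison; adding the two bounds yields the claim. There is no genuine obstacle here: the only point requiring attention is that all constants --- both in Lemma \ref{Lemmachidifference} and in the weight-comparison inequality --- are uniform in the mesh size $0 < h < h_0$, which is precisely the content of the elementary estimate $1 + |hn| \leq C_d(1 + |h(n \pm e_j)|)$ of the type already used in the Appendix.
\end{proof}
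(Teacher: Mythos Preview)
Your proof is correct and follows exactly the approach the paper indicates: the paper's own proof is the single line ``Use (\ref{Dhjchihcommutator}), (\ref{Dhjastchicommutator}) and Lemma \ref{Lemmachidifference},'' and you have simply (and correctly) unpacked the implicit index shift and weight comparison that this entails. The only cosmetic point is that $\chi(x) = (1+|x|^2)^{1/2}$ rather than $1+|x|$, so your inequality $\chi^{-(s+1)}(h(n+e_j)) \leq 2^{s+1}\chi^{-(s+1)}(hn)$ should carry a slightly different constant, but since $\langle x\rangle$ and $1+|x|$ are uniformly comparable this does not affect the argument.
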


\begin{proof}
Use (\ref{Dhjchihcommutator}), (\ref{Dhjastchicommutator}) and Lemma \ref{Lemmachidifference}.
\end{proof}

 By virtue of  Lemma \ref{Lemma10.6},
$[D_{h,j},\chi_h^{-s}]\chi_h^{s+1}$ and $[D_{h,j}^{\ast}\chi_h^{-s}]\chi_h^{s+1}$
are uniformly bounded with respect to $0 < h < h_0$ in $L^2({\bf Z}^d)$.

\begin{lemma}
\label{LemmaEstimatesofwidetildeuhinL2-s}
Let $u_h$ be a solution to the equation
\begin{equation}
(- \Delta_{disc,h} - z)u_h= f_h.
\label{Lemma10.7equation}
\end{equation}
Then for any $s \geq 0$,
\begin{equation}
\|\chi(x)^{-s}\widetilde u_h(x)\|_{H^2({\bf R}^d)} \leq 
C_s\big(\|\widetilde f_h\|_{L^{2,-s}({\bf R}^d)} + \|\widetilde u_h\|_{L^{2,-s}({\bf R}^d)}\big).
\label{AprioriestimateinL2-s}
\end{equation}
\end{lemma}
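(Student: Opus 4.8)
The plan is to prove this in two stages: first transfer the claimed continuum inequality to a purely discrete one on ${\bf Z}^d_h$, and then establish the discrete inequality by a weighted energy/commutator argument built on Lemmas \ref{Lemmachidifference} and \ref{Lemma10.6}. For Stage~1, write $\chi_h$ for multiplication by $\chi(hn)=\langle hn\rangle$ on $L^2({\bf Z}^d_h)$. Both $\widetilde u_h$ and $\widetilde f_h$ are band-limited to ${\bf T}^d_h$, and on ${\bf T}^d_h$ one has the two-sided bound $1+|\xi|^2\asymp 1+\tfrac4{h^2}\sum_j\sin^2\tfrac{h\xi_j}{2}$ with $h$-independent constants, the right-hand side being the symbol of $1-\Delta_{disc,h}$. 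Combining this with the $L^2$-isometry (\ref{S2amnorm=widetildeahnorm}) and the $S^0_{1,0}$-calculus — which lets multiplication by $\langle x\rangle^{-s}$ be commuted past a band-limited Fourier multiplier of the type $\mu(h\xi)$, $\mu\in C^\infty([-\pi,\pi]^d)$, up to an operator bounded on $L^{2,-s}({\bf R}^d)$ uniformly in $h$ — yields the $h$-uniform equivalences
\begin{equation*}
\|\widetilde f_h\|_{L^{2,-s}({\bf R}^d)}\asymp\|\chi_h^{-s}f_h\|_{L^2({\bf Z}^d_h)},
\end{equation*}
\begin{equation*}
\begin{split}
\|\chi^{-s}\widetilde u_h\|_{H^2({\bf R}^d)}\asymp{}&\|\chi_h^{-s}u_h\|_{L^2({\bf Z}^d_h)}+\sum_{j=1}^d\|\chi_h^{-s}D_{h,j}u_h\|_{L^2({\bf Z}^d_h)}\\
&{}+\|\chi_h^{-s}\Delta_{disc,h}u_h\|_{L^2({\bf Z}^d_h)},
\end{split}
\end{equation*}
where I have used $\tfrac2h\sin\tfrac{h\xi_j}{2}=\xi_j\,\mathrm{sinc}\tfrac{h\xi_j}{2}$ with $1/\mathrm{sinc}$ smooth on $[-\pi/2,\pi/2]$, the exact identity that the sum of the squared symbols of the second differences $D_{h,j}D_{h,k}$ equals the squared symbol of $-\Delta_{disc,h}$, and the definition (\ref{S2DefineHms}) with $\|\chi^{-s}\widetilde u_h\|_{H^2}=\|\widetilde u_h\|_{H^{2,-s}}$. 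It therefore suffices to bound the right-hand side of the second equivalence by $\|\chi_h^{-s}f_h\|_{L^2({\bf Z}^d_h)}+\|\chi_h^{-s}u_h\|_{L^2({\bf Z}^d_h)}$.

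\emph{Stage 2, first-order bound.} I would first prove $\sum_j\|\chi_h^{-s}D_{h,j}u_h\|^2\le C\bigl(\|\chi_h^{-s}f_h\|^2+\|\chi_h^{-s}u_h\|^2\bigr)$. Pair equation (\ref{Lemma10.7equation}) with $\chi_h^{-2s}u_h$ in $L^2({\bf Z}^d_h)$ and use (\ref{Eq:-Deltadh=sumDjDj}); moving one factor $\chi_h^{-s}$ through each $D_{h,j}$ produces $\sum_j\|\chi_h^{-s}D_{h,j}u_h\|^2$ on the left, plus error terms of the shape $(\chi_h^{-s}D_{h,j}u_h,\,B_h\,\chi_h^{-s}u_h)$. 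The crucial point is that by (\ref{Dhjchihcommutator}), (\ref{Dhjastchicommutator}) and Lemma \ref{Lemmachidifference} the commutator $[D_{h,j},\chi_h^{-2s}]$ factors as $\chi_h^{-s}B_h\chi_h^{-s}$ with $B_h$ uniformly bounded on $L^2({\bf Z}^d_h)$, because each difference quotient of the weight absorbs one power of $\chi_h^{-1}$. A Cauchy--Schwarz/Young absorption, together with the elementary contributions of $|z|\,\|\chi_h^{-s}u_h\|^2$ and $(\chi_h^{-s}f_h,\chi_h^{-s}u_h)$, then closes the estimate (here $z$ ranges over a fixed compact set, so $|z|$ is harmless).

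\emph{Stage 2, second-order bound.} Given the first-order bound, estimate $\|\chi_h^{-s}\Delta_{disc,h}u_h\|$ via
\begin{equation*}
\Delta_{disc,h}(\chi_h^{-s}u_h)=-\chi_h^{-s}(f_h+zu_h)+[\Delta_{disc,h},\chi_h^{-s}]u_h,
\end{equation*}
expanding $[\Delta_{disc,h},\chi_h^{-s}]$ by (\ref{Eq:Deltadischchicomuutator}). The term $[D_{h,j}^{\ast},\chi_h^{-s}]D_{h,j}u_h$ is $\le C\|\chi_h^{-s}D_{h,j}u_h\|$ by (\ref{Dhjastchicommutator}) and Lemma \ref{Lemmachidifference}, hence controlled by the first-order bound. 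The remaining term $D_{h,j}^{\ast}[D_{h,j},\chi_h^{-s}]u_h=D_{h,j}^{\ast}(\psi_{h,j}S_{h,j}u_h)$, with $\psi_{h,j}(n)=\tfrac1h(\chi^{-s}(hn)-\chi^{-s}(h(n-e_j)))$, is rewritten using the algebraic identity $D_{h,j}^{\ast}S_{h,j}=-D_{h,j}$ as $-\psi_{h,j}D_{h,j}u_h+[D_{h,j}^{\ast},\psi_{h,j}]S_{h,j}u_h$; Lemma \ref{Lemmachidifference} (applied once, resp.\ iterated once) gives $|\psi_{h,j}|\le C\chi_h^{-(s+1)}$ and the same bound for the difference quotients of $\psi_{h,j}$, so both summands are $\le C(\|\chi_h^{-s}D_{h,j}u_h\|+\|\chi_h^{-s}u_h\|)$ with no loss of powers of $1/h$. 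Collecting the pieces bounds the right-hand side of the Stage~1 equivalence, and undoing Stage~1 yields (\ref{AprioriestimateinL2-s}), with $C_s$ depending on $s$ and on the compact set of admissible $z$.

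\emph{Main obstacle.} The only genuine difficulty is that the backward difference $D_{h,j}^{\ast}=\tfrac1h(I-S_{h,j}^{\ast})$ has operator norm $\sim 1/h$, so the commutator term $D_{h,j}^{\ast}[D_{h,j},\chi_h^{-s}]u_h$ cannot be estimated naively without destroying $h$-uniformity; it is tamed by the two structural facts used above — the exact relation $D_{h,j}^{\ast}S_{h,j}=-D_{h,j}$, and the content of Lemma \ref{Lemmachidifference} that successive difference quotients of the weight remain pointwise dominated by negative powers of $\chi_h$ uniformly in $0<h<h_0$ — which together convert the bad derivative into the already-established weighted first-order quantity. A secondary, purely bookkeeping, point is making the interpolation-versus-lattice norm equivalences of Stage~1 uniform in $h$; this is routine in view of the band-limitedness of $\widetilde u_h,\widetilde f_h$ and the $S^0_{1,0}$ calculus.
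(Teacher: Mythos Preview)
Your argument is correct and follows the same commutator/reduction-to-$s=0$ strategy as the paper, but you spell out steps the paper compresses. The paper establishes $s=0$ via the symbol comparison $|\xi|^2\asymp\tfrac{4}{h^2}\sum_j\sin^2\tfrac{h\xi_j}{2}$ on ${\bf T}^d_h$, then for $s>0$ sets $v_s=\chi_h^{-s}u_h$, writes $(-\Delta_{disc,h}-z)v_s=\chi_h^{-s}f_h-[\Delta_{disc,h},\chi_h^{-s}]u_h$, and applies the $s=0$ case after asserting that the commutator is $O(\|\chi_h^{-s}f_h\|+\|\chi_h^{-s}u_h\|)$. Your Stage~2a weighted first-order energy estimate and the Stage~2b rewriting via $D_{h,j}^{\ast}S_{h,j}=-D_{h,j}$ (which converts the dangerous $D_{h,j}^{\ast}[D_{h,j},\chi_h^{-s}]u_h$ into already-controlled weighted first-order quantities, using Lemma~\ref{Lemmachidifference} twice) are precisely what that terse assertion hides: the commutator genuinely contains weighted first derivatives of $u_h$, so your intermediate energy step is the detail the paper suppresses.

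One small simplification: with your Stage~1 equivalence as stated, the term $\|\chi_h^{-s}\Delta_{disc,h}u_h\|$ is immediate from the equation, since $\chi_h^{-s}\Delta_{disc,h}u_h=-\chi_h^{-s}(f_h+zu_h)$; the full commutator analysis in Stage~2b is only needed if you instead route through $\|\Delta_{disc,h}(\chi_h^{-s}u_h)\|$ (i.e.\ through $\|\widetilde{v_s}\|_{H^2}$), which is the paper's formulation. Either way the substance is the same, and the genuinely new ingredient over the paper's sketch is your Stage~2a.
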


\begin{proof}
Noting that
$$
C^{-1}|k|\leq \left| \frac{1}{h}\big(1 - e^{ihk}\big)\right| \leq C|k|,
$$
$$
C^{-1}|k|^2\leq \left| \frac{1}{h^2}\big(1 - \cos(hk)\big)\right| \leq C|k|^2,
$$
for $k \in {\bf R}, \ |k| \leq \pi/h$, we have

\begin{equation}
C^{-1}\|\Delta \widetilde u_h\|_{L^2({\bf R}^d)} \leq 
\|\Delta_{disc,h}u\|_{L^2({\bf Z}^d_h)} \leq C\|\Delta \widetilde u_h\|_{L^2({\bf R}^d)}.
\nonumber
\end{equation}
This and (\ref{Lemma10.7equation}) imply  (\ref{AprioriestimateinL2-s}) for $s = 0$. 

Note that we also obtain
\begin{equation}
C\|\xi\widehat u_h(\xi)\|_{L^2({\bf T}^d_h)} \leq \|D_{h,j}u_h\|_{L^2({\bf Z}^d_h)} \leq C^{-1}\|\xi\widehat u_h(\xi)\|_{L^2({\bf T}^d_h)}.
\end{equation}

Letting $g_s(n) = \chi^{-s}_h(n)f_h(n)$ and $v_s = \chi_h^{-s}u_h$, we have
\begin{equation}
( - \Delta_{disc,h} - z)v_s = g_s - \big[\Delta_{disc,h},\chi_h^{-s}\big]u.
\nonumber
\end{equation}
By (\ref{Dhjchihcommutator}) and (\ref{Dhjastchicommutator}), the $L^2({\bf Z}^d_h)$ norm of the right-hand side is estimated from above by
$\|g_s\|_{L^{2}({\bf Z}^d_h)} + \|v_s\|_{L^2({\bf Z}^d_h)}$.
Then  (\ref{AprioriestimateinL2-s}) for $s > 0$ follows from the case $s = 0$.
\end{proof}

\subsection*{Acknowledgements}
The authors were partially supported by the Danish Council for Independent Research $|$ Natural Sciences, Grant
DFF--8021-0084B,  Grants-in-Aid for
Scientific Research (S) 15H05740, and Grants-in-Aid for
Scientific Research (C) 20K03667, Japan Society for the Promotion of Science.


\begin{thebibliography}{99}

\bibitem{Agmon75}
S. Agmon, \textit{Spectral properties of Schr{\"o}dinger operators and scattering theory}, Ann. Sc. Norm. Super. Pisa. \textbf{2} (1975), 151-218.

\bibitem{AgHo76}
S. Agmon and L. H{\"o}rmander, \textit{Asymptotic properties of solutions of differential
equations with simple characteristics}, J. d'Anal. Math. \textbf{30} (1976),
1-38.

\bibitem{AndIsoMor1}
K. Ando, H. Isozaki and H. Morioka, \textit{Spectral properties of Schr{\"o}dinger operators on perturbed lattices}, Ann. Henri Poincar{\'e} \textbf{17} (2016), 2103-2171.

\bibitem{AndIsoMor2}
K. Ando, H. Isozaki and H. Morioka, \textit{Inverse scattering for Schr{\"o}dinger operators on perturbed lattices}, Ann. Henri Poincar{\'e} \textbf{19} (2018), 3397-3455.


\bibitem{BarCornStock}
J. M. Barbaroux, H. D. Cornean and E. Stockmeyer, \textit{Spectral gaps in graphene antidot lattices}, Integ. Eq. and Op. Theor. \textbf{89}, 631-646 (2017).

\bibitem{BarCornZalc}
J. M. Barbaroux, H. D. Cornean and S. Zalczer, \textit{Localization for gapped Dirac Hamiltonians with random perturbations: Applications to graphene antidot lattices}, Doc. Math. \textbf{24} (2019), 65–93.

\bibitem{Ba}
H. Baumg\"{a}rtel, \textit{Analytic perturbation theory for matrices and operators}. Operator Theory: Advances and Applications, 15. Birkh\"{a}user Verlag, Basel, 1985. 427 pp. ISBN: 3-7643-1664-0

\bibitem{Eidus}
D. M. Eidus, \textit{The principle of limit amplitude}, Russian Math. Survey \textbf{24} (1969), 97-167.

\bibitem{GueSieden}
J. C. Guenin and H. Siedentop, \textit{Dipoles in graphene have infinitely many bound states}, J. Math. Phys. \textbf{55} (12) : 122304 (2014).

\bibitem{HoYa19}
Y. Hong and C. Yang, \textit{Strong convergence for discrete nonlinear Schrödinger equations in the continuum limit} SIAM J. Math. Anal. \textbf{51} no. 2, (2019), 1297–1320.


\bibitem{HoVol1}
L. H{\"o}rmander, \textit{The Analysis of Linear Partial Differential Operators I, Distribution Theory and Fourier Analysis}, Springer Verlag, Berlin-Heidelberg-New York Tokyo (1980).

\bibitem{HoVol2}
L. H{\"o}rmander, \textit{The Analysis of Linear Partial Differential Operators II, Differential Operators of Constant Coefficients}, Springer Verlag, Berlin-Heidelberg-New York Tokyo (1983).

\bibitem{HoVol3}
L. H{\"o}rmander, \textit{The Analysis of Linear Partial Differential Operators III, Pseudodifferential operators}, Springer Verlag, Berlin-Heidelberg-New York Tokyo (1985).

\bibitem{IgZua09}
L. I. Ignat and E. Zuazua, \textit{Numerical dispersive schemes for the nonlinear Schr{\"o}dinger equation}, SIAM J. Numer. Anal. \textbf{47} (2009), 1366-1390.

\bibitem{Ito-Jensen} K. Ito and A. Jensen, \textit{Branching form of the resolvent at threshold for ultra-hyperbolic operators and discrete Laplacians}, J. Funct. Anal, {\bf277}, No. 4, 15 (2019), 965--993.

\bibitem{IkebeSaito}
T. Ikebe and Y. Saito, \textit{Limiting absorption method and absolute continuity for Schr{\"o}dinger operators}, J. Math. Kyoto Univ. \textbf{12} (1972), 513-542.


\bibitem{Jager}
W. J{\"a}ger, \textit{Ein gew{\"o}hnlicher Differentialoperator zweiter Ordnung f{\"u}r Funktionen mit Werten in einem HilbertRaum}, Math. Z. \textbf{113} (1970), 68-98.

\bibitem{Kato}
T. Kato, \textit{Perturbation Theory for Linear Operators}, 2nd ed., Springer Berlin-Heidelberg-New York (1980). 

\bibitem{KatoKuroda71}
T. Kato and S. T. Kuroda, \textit{The abstract theory of scattering}, Rock. Mt. J. Math. \textbf{1} (1971), 127-171.

\bibitem{Kuroda}
S. T. Kuroda, \textit{Scattering theory for differential operators, I, II}, J. Math. Soc. Japan \textbf{25}  (1973), 75-104,  222-234. 

\bibitem{Mourre}
E. Mourre, \textit{Absence of singular continuous spectrum for certain self-adjoint operators}, Commun. Math. Phys. \textbf{78} (1981), 391-408.

\bibitem{NakaTada}
S. Nakamura and Y. Tadano, \textit{On a continum limit of discrete Schr{\"o}dinger operators on square lattice}, J. Spectr. Theory, to appear, arXiv:1903.10656v1.

\bibitem{Schoenberg}
I. J. Schoenberg, \textit{Cardinal Spline Interpolation}, CBMS-NSF Regional Conference Series in Applied Mathematics, SIAM (1973)

\end{thebibliography}
\end{document}